\tikzset{snake it/.style={decorate, decoration=snake}}
\DeclareMathOperator{\BQP}{\mathsf{BQP}}
\DeclareMathOperator{\NP}{\mathsf{NP}}
\DeclareMathOperator{\SharpP}{\mathsf{\#P}}
\DeclareMathOperator{\NEXP}{\mathsf{NEXP}}
\DeclareMathOperator{\QMA}{\mathsf{QMA}}
\DeclareMathOperator{\QIP}{\mathsf{QIP}}
\DeclareMathOperator{\PostBQP}{\mathsf{PostBQP}}
\DeclareMathOperator{\LH}{\mathsf{LH}}
\newcommand{\poly}{\operatorname{poly}}
\newcommand{\polylog}{\operatorname{polylog}}
\newcommand{\loglog}{\operatorname{loglog}}
\newcommand{\id}{\mathrm{I}}
\newcommand{\supp}{\operatorname{supp}}
\newcommand{\loc}{\operatorname{loc}}
\newcommand{\err}{\operatorname{err}}
\newcommand{\FT}{\mathrm{FT}}
\newtheorem{theorem}{Theorem}[section]
\newtheorem{definition}[theorem]{Definition}
\newtheorem{remark}[theorem]{Remark}
\newtheorem{lemma}[theorem]{Lemma}
\newtheorem{fact}[theorem]{Fact}
\newtheorem{claim}[theorem]{Claim}
\newtheorem{corollary}[theorem]{Corollary}
\newtheorem{task}[theorem]{Task}
\newcommand{\anu}[1]{\todo[inline, color=red!30]{Anurag: #1}}
\title{Circuit-to-Hamiltonian from tensor networks and fault tolerance}
\author{}
\author{Anurag Anshu\thanks{School of Engineering and Applied Sciences, Harvard University, USA} \qquad Nikolas P. Breuckmann\thanks{School of Mathematics, University of Bristol, UK} \qquad Quynh T. Nguyen\thanks{School of Engineering and Applied Sciences and Harvard Quantum Initiative, Harvard University, USA}}
\date{}
\begin{document}

\maketitle

\begin{abstract}
    We define a map from an arbitrary quantum circuit to a local Hamiltonian whose ground state encodes the quantum computation.
    All previous  maps relied on the Feynman-Kitaev construction, which introduces an ancillary `clock register' to track the computational steps. 
    Our construction, on the other hand, relies on injective tensor networks with associated parent Hamiltonians, avoiding the introduction of a clock register. 
    This comes at the cost of the ground state containing only a noisy version of the quantum computation, with independent stochastic noise. 
    We can remedy this - making our construction robust - by using quantum fault tolerance.
    In addition to the stochastic noise, we show that any state with energy density exponentially small in the circuit depth encodes a noisy version of the quantum computation with adversarial noise. 
    We also show that any `combinatorial state' with energy density polynomially small in depth encodes the quantum computation with adversarial noise. 
    This serves as evidence that any state with energy density polynomially small in depth has a similar property.

    As applications, we give a new proof of the $\QMA$-completeness of the local Hamiltonian problem (with logarithmic locality) and show that contracting injective tensor networks to additive error is $\BQP$-hard. We also discuss the implication of our construction to the quantum PCP conjecture, combining with an observation that $\QMA$ verification can be done in logarithmic depth.

\end{abstract}

\section{Introduction}
The Feynman-Kitaev `clock based' mapping~\cite{kitaev2002classical} from quantum circuits to local Hamiltonians is the central tool bridging quantum complexity theory and quantum many-body physics. The mapping and its variants have been used to justify the hardness of computing the ground energy of natural local Hamiltonians \cite{3LocHamCompl,aharonov2009power,kempe2006complexity,locham2d,nagaj,breuckmann2014space}. It has been used to construct explicit local Hamiltonians with `complex' ground states - in terms of large entanglement entropy \cite{Irani10, AHLNSV14} or circuit depth \cite{NVY18}. Other important applications include the equivalence of adiabatic and circuit models \cite{aharonov2008adiabatic}, delegation of quantum computing \cite{gheorghiu2019verification} etc. However, a well known limitation of the Feynman-Kitaev mapping is the soundness. While quantum computations that output `accept' with probability (near) $1$ get mapped to (near) frustration-free local Hamiltonians, the quantum computations that output `reject' with high probability get mapped to local Hamiltonians with ground energy density $1/\poly(\text{number of gates})$. This serves as the main bottleneck to the quantum PCP conjecture \cite{aharonov2002quantum, aharonov2013guest}, which seeks a constant energy density in the rejecting case.

An alternative mapping of quantum computation to many-body systems was laid out in \cite{schuch2007computational} by using measurement-based quantum computing (MBQC). It was shown that running MBQC and post-selecting on `no correction' led to a tensor network which encoded the result of the quantum computation. However, this technique does not yield a desired circuit-to-Hamiltonian mapping due to two issues. First, the encoding tensor network may not be the ground state of any local Hamiltonian. Second, the tensor networks also capture quantum computation with post-selection, which leads to a class much larger than $\QMA$.

Our starting point is the observation that both the issues no longer exist if we consider the class of injective tensor networks: injectivity prevents us from post-selecting on events of very small probability and injective tensor networks also have a natural parent Hamiltonian. The price we pay is that the injective tensor network represents a noisy version of the quantum circuit. This is handled by replacing the circuit by its a fault-tolerant version. 

The details of the construction appear in \cref{sec:model}, where we use standard teleportation instead of measurement-based quantum computing. A high level overview is as follows, using a simple circuit $U_2U_1\ket{0}$ involving 1 qubit gates on $\ket{0}$. Introduce 5 qubits in the state $\ket{0}\otimes (\id\otimes U_1)\ket{\Phi_{\id}}\otimes (\id\otimes U_2)\ket{\Phi_{\id}}$, where $\ket{\Phi_{\id}}=\frac{1}{\sqrt{2}}(\ket{00}+\ket{11})$. Projecting qubits 1, 2 and 3, 4 with $\ketbra{\Phi_{\id}}$ would lead to the desired state $U_2U_1\ket{0}$ on qubit 5. This procedure defines a tensor network state known as PEPS~\cite{verstraete2008matrix}. However, this tensor network state is not the ground state of a local Hamiltonian, and the reason is that the map $\ketbra{\Phi_{\id}}$ is not injective. Instead, we replace $\ketbra{\Phi_{\id}}$ with the invertible map $\ketbra{\Phi_{\id}}+\delta(\id-\ketbra{\Phi_{\id}})$ for $\delta >0$. It can be verified that the last qubit is now a noisy version of the original circuit (with depolarizing noise of strength $O(\delta^2)$) and qubits 1, 2 and 3, 4 record the Pauli errors.
Importantly, this injective PEPS state admits a local and frustration-free `parent Hamiltonian' as desired. The described scheme applies to general quantum circuits, giving a mapping from any quantum circuit to a local Hamiltonian whose ground state represents the noisy quantum computation with stochastic iid noise with strength $O(\delta^2)$ per wire (see \Cref{sec:injectiveTN}.)

Our main technical contribution is a characterization of low-energy states of the parent Hamiltonian as adversarial computations of the quantum circuit. In particular, we consider an adversarial noise model where, in each layer of the circuit, a certain fraction of qubits are deviated arbitrarily by the adversary. Consider a quantum circuit $W$ of depth $D$ (that may be, for examples, a $\QMA$ verification circuit or a $\BQP$ circuit) and let its parent Hamiltonian be $H_\mathrm{parent}=\sum_{i=1}^m h_i$ (assume here $ 0 \leq h_i \leq 1$ for simplicity). We exhibit the following properties for low-energy states of $H_\mathrm{parent}$:
\begin{itemize}
    \item For a circuit of depth $D=O(\log |W|)$, any state $\ket{\psi}$ with energy density $e^{-\Omega(D \log D)}$, i.e.,
    \begin{align*}
        \frac{1}{m}\bra{\psi} H_\mathrm{parent} \ket{\psi} \leq e^{-\Omega(D \log D)},
    \end{align*}
    can be viewed as the output of the circuit with $O(\delta^2)$ fraction of adversarial noise per layer. See \Cref{subsec:exposound} for the precise statement and proof.
     \item Any combinatorial state with energy density (equal to the fraction of violated constraints) $\frac{1}{\poly(D)}$, i.e.,
     \begin{align*}
         \frac{1}{m}|\{i: \bra{\psi}h_{i}\ket{\psi}\neq 0\}| \leq \frac{1}{\poly(D)},
     \end{align*}
     can be viewed as the output of the circuit with $O(\delta^2)$ fraction of adversarial noise per layer. See \Cref{subsec:combsound} for the precise statement and proof.
\end{itemize}

Informally, such adversarial noise model arises naturally since each Hamiltonian term in $H_\mathrm{parent}$ is enforcing the application of a gate, and thus we expect the energy violation of a Hamiltonian term to be reflected as a fault at the corresponding location in the quantum circuit (in addition to the stochastic local noise of strength $O(\delta^2)$ already existing in the ground state). The above characterizations are simplified statements that neglect the stochastic noise by simply choosing $\delta=\frac{1}{\poly(D)}$ -- this keeps the fraction of adversarial noise per layer to be $\frac{1}{\poly(D)}$, which is an error budget low enough that the adversary cannot stop the whole computation. The main open question is that whether any states with energy density $1/\poly(D)$ can also be viewed as the output of adversarial noisy version. The second result above (on combinatorial states) is evidence in its favor. 

\begin{table}
\centering
\begin{tabular}{|m{1.1cm} | m{7cm}| m{7cm} | } 
  \hline
  &\textbf{Feynman-Kitaev construction \cite{kitaev2002classical}} & \textbf{Present construction} \\ 
  \hline
  Ground state& Superposition over partial computations of $W$ & Tensor network encoding a noisy version of $W$ with i.i.d noise per wire \\ 
  \hline
   Low-energy states& States with energy density $\frac{O(1)}{|W|^3}$ encode $W$ & Combinatorial states with $\frac{O(1)}{D}$ fraction violations encode a noisy version of~$W$ with adversarial noise (\Cref{thm:combin}). 
   \begin{itemize}
       \item States with energy density $e^{-\Omega(D\log D)}$ (for $D=O(\log |W|)$) encode a noisy version of $W$ with adversarial noise (\Cref{thm:energy}). 
 \end{itemize}
  \\
       \hline
     Limi-tation & There exists a combinatorial state with $\frac{O(1)}{|W|}$ fraction of violations containing no information about~$W$ (see~\Cref{remark:FKfail}). & There exists a combinatorial state with $\frac{O(1)}{D}$ fraction of
      violations contain no information about~$W$. 
   \\ 
  \hline
\end{tabular}
\caption{A comparison between the Feynman-Kitaev mapping and our construction for quantum circuit $W$ of depth $D$. Our main open question is that any state with energy density $\frac{1}{\poly(D)}$ encode noisy version of $W$ with adversarial noise. Since we can choose $D=O(\log|W|)$ in $\QMA$ protocols (\Cref{sec:shallow_circuits} and~\Cref{app:logdepth}), this serves as a link between polylog quantum PCP and adversarial quantum fault tolerance.}
\label{table:clock}
\end{table}

\vspace{0.1in}

\noindent {\bf Connection to quantum PCP conjecture:} The quantum PCP conjecture \cite{aharonov2002quantum, aharonov2013guest} states that it is $\QMA$-hard to decide if the ground energy density of a local Hamiltonian problem is less than a given number $a$ or more than $a+\Delta$ for a constant $\Delta$. A  `polylog weaker' version of this conjecture - $\QMA$ hardness of deciding that ground energy density is $\leq a$ or $>a+\frac{1}{\mathrm{polylog} n}$ - is also open (even when the locality is relaxed to be $\polylog n$). In the equivalent formulation in terms of probabilistic proof checking~\cite{aharonov2013guest}, this polylog-weaker quantum PCP conjecture is expressed as the (presumed) inclusion $\QMA \overset{?}{\subseteq} \mathsf{QPCP}[\polylog]$. See Appendix \ref{append:QMApoly} for a discussion on known soundness results.

Our attempt in this work is to link adversarial quantum fault tolerance with the above `polylog weaker' quantum PCP. At a high level, we expect such a connection due to the correspondence between Hamiltonians and quantum circuits \cite{kitaev2002classical} and the view that quantum PCP conjecture is about adversarial violations of local Hamiltonian terms. An issue with this is that quantum PCP conjecture expects soundness against constant fraction of violations, but in a depth-$D$ quantum circuit we would be able to deal with adversarial faults on at most a fraction 
$\frac{1}{D}$ of the gates (the threshold above which the adversary could irrecoverably corrupt an entire layer of gates). However, as shown in \Cref{sec:shallow_circuits} (and also \Cref{app:logdepth}), $\QMA$ verification can be achieved in logarithmic depth ($D=O(\log n)$; $n$ is the number of qubits in the $\QMA$ verifier circuit). Thus, if we seek the polylog quantum PCP, the connection with adversarial quantum fault tolerance can be more transparent.

Our result takes a step towards this connection by showing that combinatorial states with $\frac{1}{\text{poly}(D)}$ fraction violations encode a circuit with adversarial errors\footnote{This feature indeed holds in the classical Cook-Levin mapping, but fails in the Feynman-Kitaev mapping (see~\Cref{remark:FKfail}).}. Suppose $\frac{1}{\text{poly}(D)}$-energy density states in our construction also encode a circuit with adversarial error, which is our main open question. 
And suppose any $O(\log(n))$-depth $\QMA$ verifier can be transformed into a $\polylog(n)$-depth $\QMA$ verifier that is sound against $\frac{1}{\polylog(n)}$ fraction of adversarial errors in the circuit.\footnote{Note that we also need soundness against a superposition over adversarial errors - see~\Cref{sec:soundness}.}
Then the polylog weaker version of quantum PCP holds.  

Classical analogue of this line of argument was first presented in \cite{gal1995fault} using the classical Cook-Levin mapping. We discuss a version of this in \Cref{append:classicalPCP}. One may also wonder why we expect the route via quantum fault tolerance to be useful in quantum PCP conjecture. Quantum fault tolerance is well known to handle the no-cloning issue and successfully quantize classical fault tolerance. Given that no-cloning is the main barrier in quantizing Dinur's proof \cite{Dinur07} of the PCP theorem, one expects quantum fault tolerance to provide a way around the barrier.

\vspace{0.1in}
\noindent {\bf New proof of $\QMA$-completeness of local Hamiltonian:} The first application of quantum circuit-to-Hamiltonian mapping was Kitaev's proof of the $\QMA$-completeness of the local Hamiltonian problem.
In \Cref{sec:qma}, we apply our construction to give a new proof of this seminal result for the case of logarithmic-local Hamiltonians. In particular, we prove that determining if the ground energy density of a $O(\log n)$-local Hamiltonian family is less than a given number $a$ or more than $a+ \frac{1}{\poly(n)}$ is $\QMA$-complete. While we have not been able to prove the same statement for the constant locality case (see discussion below), we remark that our proof is completely independent of the Feynman-Kitaev clock construction. 

Our starting point is the observation that a constant-depth-overhead quantum fault tolerance scheme (for $\QMA$) against iid stochastic noise, combined with our construction, would yield the desired new proof of $\QMA$-completeness for $O(1)$-local Hamiltonians. Constant-depth-overhead fault tolerance is a very interesting question on its own since it is possible in classical computation~\cite{pippenger1985networks}, whereas its possibility in the quantum case is unknown. We make progress towards this question by constructing a fault tolerance scheme using a recent linear-distance quantum LDPC code family due to Leverrier and Zémor~\cite{leverrier2023decoding}. However, the constructed fault-tolerant circuit still requires a $O(\loglog n)$-depth overhead of noiseless classical operations. We get around this limitation by using a few layers of $O(\log n)$-local gates (hence increasing the parent Hamiltonian's locality to $O(\log n)$) and exploiting the structure of the $\QMA$ circuit. The proof relies significantly on the linear distance and a local parallel decoder of the Leverrier-Zémor code~\cite{leverrier2023decoding}.

\vspace{0.1in}

\noindent {\bf Complexity of injective tensor networks:}
Injective tensor networks constitute a more physical family of quantum states and have been shown to be efficiently preparable on a quantum computer~\cite{schwarz2012preparing, ge2016rapid} and contractable in classical quasi-polynomial time~\cite{schwarz2017approximating} under assumptions on the parent Hamiltonian spectral gap. However, the lack of the postselection ability makes it less clear how to characterize injective TN from a complexity-theoretic point of view. 

Combining our construction with existing quantum fault-tolerance schemes for local stochastic noise~\cite{aharonov1999fault}, we conclude that preparing injective TN states on a quantum computer is $\BQP$-hard. This can be seen as a complement to prior works~\cite{schwarz2012preparing,ge2016rapid}, that showed preparing injective TN states under spectral gap assumptions is in $\BQP$. Compared with the $\PostBQP$-hardness shown in~\cite{schuch2007computational}, the $\BQP$-hardness naturally reflects the non-postselecting nature of injective TN. Regarding the classical complexity of injective TN, our construction also implies that evaluating local observable expectation values on injective-TN states is $\BQP$-hard to $O(1)$-additive error. In addition, we show the same task for a non-local observable is $\SharpP$-hard to $O(1)$-multiplicative error.

\paragraph{Organization:} The remaining of the paper is organized as follows. We start by introducing injective tensor networks and our circuit-to-Hamiltonian mapping in~\Cref{sec:model}. In~\Cref{sec:background}, we introduce some preliminaries about Hamiltonian complexity and quantum fault tolerance. In~\Cref{sec:soundness} we prove our main structural results for the parent Hamiltonian, including characterizations of low-energy states and a spectral gap lower bound. We prove that $\QMA$ verification circuits can be assumed to be logarithmic depth in~\Cref{sec:shallow_circuits} (another proof is given in~\Cref{app:logdepth}). We give a new proof of the $\QMA$-completeness of log-local Hamiltonians in~\Cref{sec:qma}. We prove new computational hardness results for injective tensor networks in~\Cref{sec:bqp-hardness}. Finally, we discuss some open questions in~\Cref{sec:open}.

\section{The Model}\label{sec:model}
Let us first outline the general idea behind the construction:
given a quantum circuit $W$, we consider a tensor network associated with the implementation of $W$ (\Cref{sec:circ2TN}).
We make the tensor network injective by perturbing each of its projectors $P_i$ by some small amount $\delta$ (\Cref{sec:injectiveTN}), so that we can associate it with a parent Hamiltonian (\Cref{subsec:parentHam}).
However, these local perturbations are unwanted.
Crucially, we observe that they can be interpreted as Pauli-errors occurring during the execution of $W$.
Hence, we have to consider a fault-tolerant version of~$W$, which requires us to implement a quantum error correction protocol within the model itself. In~\Cref{subsec:bypass} we give an explanation why our construction can overcome the conceptual challenge of local indistinguishability in quantum circuit-to-Hamiltonian. We discuss related prior works in~\Cref{subsec:priorworks}.

\subsection{Notations}
Let the EPR states be $\ket{\Phi_I}= \frac{1}{\sqrt{2}}(\ket{00}+\ket{11})$, $\ket{\Phi_X}= (\id \otimes X) \ket{\Phi_I}, \ket{\Phi_{XZ}}= (\id \otimes XZ) \ket{\Phi_I}, \ket{\Phi_Z}= (\id \otimes Z) \ket{\Phi_I}$. Denote $\mathcal{P}=\{I,X,XZ,Z\}$.
For an operator $A$ in a Hilbert space with tensor product structure $\mathcal{H}=(\mathbb{C}^d)^{\otimes n}$, we denote by $\supp(A)$ the span of eigenvectors of $A$ with nonzero eigenvalues and by $\loc(A)$ the set of subsystems on which $A$ acts nontrivially.

\subsection{Quantum circuit to tensor network}\label{sec:circ2TN}
We will now discuss the definition of the tensor network $T$ associated to the circuit $W$. Let $n$ be the total number of qubits on which $W$ operates and $D$ its depth.
For simplicity and without loss of generality, let us assume that $W$ consists of 2-qubit gates arranged in a brickwork layout, see \Cref{fig:brickCirc}. 
The generalization to arbitrary circuits is straightforward.
\begin{figure}[h]
    \centering
    \begin{tikzpicture}[scale=0.4]
  \clip (-0.25,-0.5) rectangle (19.25,11.5);
  \foreach \x in {0,8,16,24} {
    \foreach \y in {0,4,8} {
      \draw[ultra thick] (\x+1,\y) rectangle (\x+2,\y+3);
      \draw[thick] (\x-2,\y+0.5) -- (\x+1,\y+0.5); 
      \draw[thick] (\x-2,\y+2.5) -- (\x+1,\y+2.5); 
    }
  }
  \foreach \x in {4,12,20} {
    \foreach \y in {-2,2,6,10} {
      \draw[ultra thick] (\x+1,\y) rectangle (\x+2,\y+3);
      \draw[thick] (\x-2,\y+0.5) -- (\x+1,\y+0.5); 
      \draw[thick] (\x-2,\y+2.5) -- (\x+1,\y+2.5); 
    }
  }
\end{tikzpicture}
    \caption{
        The circuit $W$ consisting of a collection of gates (black boxes).
        This layout suffices to implement an arbitrary quantum circuit. However, our construction applies to general circuit layouts.
    }
    \label{fig:brickCirc}
\end{figure}
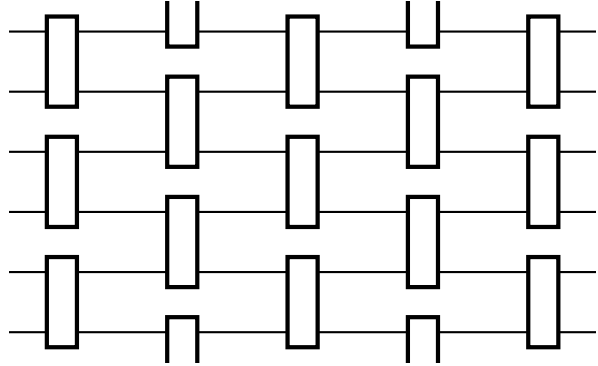
Consider a 2-qubit gate $U^{(\ell)}_{p,q}$ acting one qubits $p$ and $q$ at layer $\ell$, we assign a 4-qubit Choi state $\ket{\Phi_{U}}$ encoding of the gate as follows
\begin{equation}
    \ket{\Phi^{(\ell)}_{p,q}} = [I_{1,2}\otimes (U^{(\ell)}_{p,q})_{3,4}] \left( \ket{00}_{1,3} + \ket{11}_{1,3}  \right) \otimes \left( \ket{00}_{2,4} + \ket{11}_{2,4}  \right) /2.
\end{equation}
See \Cref{fig:state} for a diagrammatic representation of this state.

\begin{figure}[h]
    \centering
    \begin{tikzpicture}[scale=0.6]
      \path [draw=blue,snake it, thick] (-0.5,2.5) -- (1.5,2.5); 
      \path [draw=blue,snake it, thick] (-0.5,0.5) -- (1.5,0.5); 
      \draw[ultra thick] (0.9,0) rectangle (2.1,3) node[right] {$U^{(\ell)}_{p,q}$};
      \draw [fill=black] (-0.5,0.5) circle [radius=0.1] node[left] {$2$}; 
      \draw [fill=black] (-0.5,2.5) circle [radius=0.1] node[left] {$1$}; 
      \draw [fill=black] (1.5,0.5) circle [radius=0.1] node[right] {$4$}; 
      \draw [fill=black] (1.5,2.5) circle [radius=0.1] node[right] {$3$}; 
    \end{tikzpicture} 
    \caption{
        Representation of the state $\ket{\Phi^{(\ell)}_{p,q}}$.
        Qubits 1 and 3, as well as 2 and 4, are in the Bell state $\ket{\Phi_I}$, which is indicated by the blue wavy lines.
        The unitary $U^{(\ell)}_{p,q}$ is applied to qubits 3 and 4 (black box).
    }
    \label{fig:state}
\end{figure}
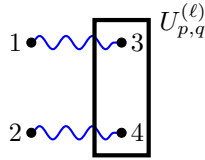

Our starting point is that the state $\ket{\Phi^{(\ell)}_{p,q}}$ allows implementing the gate $U^{(\ell)}_{p,q}$ via teleportation and postselection. For example, the application of gate $U^{(1)}_{p,q}$ on an input state $\ket{\xi}$ is simulated by projecting the joint system $\ket{\xi_{p,q}}\otimes \ket{\Phi^{(1)}_{p,q}}_{1,2}$ onto the EPR state $\ket{\Phi_I} =\left( \ket{00} + \ket{11} \right) \otimes \left( \ket{00} + \ket{11}  \right) /2$. More generally, a gate $U^{(t)}_{p,q}$ can be effected by applying the projector
\begin{equation}\label{eqn:projector}
    P= \ket{\Phi_I}\bra{\Phi_I}
\end{equation}
onto qubits $p,q$ of the input state to the gate and qubits 1, 2 of $\ket{\Phi^{(\ell)}_{p,q}}$, see \Cref{fig:brickTN}.

For brevity we will often denote $\ket{\Phi_U} = (\id \otimes U)\ket{\Phi_I}^{\otimes 2}$, where $U$ is a two-qubit gate acting on the second qubits of $\ket{\Phi_I}$, leaving the location in spacetime of $U$ implicit.

Extending the previous idea to the entire circuit, we can encode any $n$-qubit quantum circuit $W$ into a tensor network. In particular, we have $n$ qubits in the first column of the tensor network storing the input state, and the other columns storing the EPR encoding of the gates. The total number of qubits of the tensor network is $(2D+1)n$.
Define the $(2D+1)n$-qubit product state
\begin{equation}
    \ket{\Phi_{W,\xi}}= \ket{\xi} \otimes \bigotimes_{\ell, p,q} \ket{\Phi^{(\ell)}_{p,q}},
\end{equation}
where $\ket{\xi}$ is the $n$-qubit input state of the circuit $W$. For example, let $\ket{\xi}=\ket{0}^{\otimes n}$. 
Then applying the EPR projector $\Pi_W \triangleq \bigotimes_{\ell,p,q} P^{(\ell)}_{p,q} $ on $\ket{\Phi_{W,\xi}}$ results in the output state in the last column (up to normalization)
\begin{equation}
    \Pi_W \ket{\Phi_{W,\xi}} \propto \ket{\Phi_I} ^{\otimes nD} \otimes \left( W\ket{0}^{\otimes n} \right) .
\end{equation}
Tensor networks of this form are in general termed projected entangled pair states (PEPS).

\begin{figure}
    \centering
    \begin{tikzpicture}[scale=0.4]
  \clip (-0.25,-0.5) rectangle (20.73,11.5);
  \foreach \x in {0,8,16,24} {
    \foreach \y in {0,4,8} {
      \path [draw=blue,snake it, thick] (\x-0.5,\y+2.5) -- (\x+1.5,\y+2.5); 
      \path [draw=blue,snake it, thick] (\x-0.5,\y+0.5) -- (\x+1.5,\y+0.5); 
      \draw[ultra thick] (\x+1,\y) rectangle (\x+2,\y+3);
      \draw [fill=black] (\x-0.5,\y+0.5) circle [radius=0.1]; 
      \draw [fill=black] (\x-0.5,\y+2.5) circle [radius=0.1]; 
      \draw [fill=black] (\x+1.5,\y+0.5) circle [radius=0.1]; 
      \draw [fill=black] (\x+1.5,\y+2.5) circle [radius=0.1]; 
      \draw[gray,rounded corners,fill=gray, opacity=0.1] (\x+0.75,\y-0.25) rectangle (\x+4,\y+1.25); 
      \draw[gray,rounded corners,fill=gray, opacity=0.1] (\x+0.75,\y+1.75) rectangle (\x+4,\y+3.25); 
    }
  }
  \foreach \x in {4,12,20} {
    \foreach \y in {-2,2,6,10} {
      \path [draw=blue,snake it, thick] (\x-0.5,\y+2.5) -- (\x+1.5,\y+2.5); 
      \path [draw=blue,snake it, thick] (\x-0.5,\y+0.5) -- (\x+1.5,\y+0.5); 
      \draw[ultra thick] (\x+1,\y) rectangle (\x+2,\y+3);
      \draw [fill=black] (\x-0.5,\y+0.5) circle [radius=0.1]; 
      \draw [fill=black] (\x-0.5,\y+2.5) circle [radius=0.1]; 
      \draw [fill=black] (\x+1.5,\y+0.5) circle [radius=0.1]; 
      \draw [fill=black] (\x+1.5,\y+2.5) circle [radius=0.1]; 
      \draw[gray,rounded corners,fill=gray, opacity=0.1] (\x+0.75,\y-0.25) rectangle (\x+4,\y+1.25); 
      \draw[gray,rounded corners,fill=gray, opacity=0.1] (\x+0.75,\y+1.75) rectangle (\x+4,\y+3.25); 
    }
  }
\end{tikzpicture}
    \caption{
        The circuit $W$ (\Cref{fig:brickCirc}) converted into a tensor network.
        We introduce a Bell pair for every position in the circuit (black dots connected by a wavy line) and apply the unitary operation corresponding to the location in the circuit (cf.\ \Cref{fig:state}).
        We then apply projectors on pairs of qubits (gray boxes).
    }
    \label{fig:brickTN}
\end{figure}
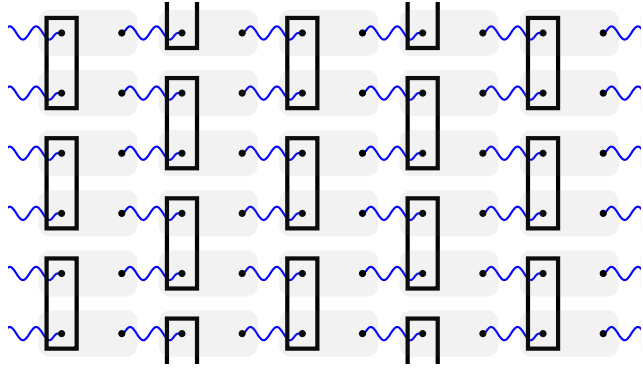

\subsection{Making the tensor network injective}\label{sec:injectiveTN}
We say that a tensor network is \emph{$\delta$-injective} when its local maps have singular values lower bounded by $\delta$. The tensor network defined in the previous section is non-injective since the projectors are singular. To make the tensor network injective, we follow the procedure in~\cite{anshu2022construction} and replace the projectors $P$ by a $\delta$-perturbation
\begin{equation}
    Q = \ket{\Phi_I} \bra{\Phi_I} + \delta \sum_{P \in \{X,XZ,Z\}} \ket{\Phi_P} \bra{\Phi_P}.
    \label{eqn:perturbed_projectors}
\end{equation}
Applying the invertible map $Q$ on every other pair of row-adjacent qubits in $\ket{\Phi_{W,\xi}} = \ket{\xi} \bigotimes_{\ell \in [D], p,q} \ket{\Phi^{(\ell)}_{p,q}}$ we obtain the injective PEPS state
\begin{equation}
    \ket{\Psi_{W,\xi}} \triangleq Q^{\otimes nD} \ket{\Phi_{W,\xi}} .
\end{equation}
We introduce several notations. Let $T$ be the number of gates, let $\ket{\Phi_{\Vec{P}}} = \bigotimes_{i=1}^T \ket{\Phi_{P_i}}$ for $\Vec{P} \in \mathcal{P}^{\otimes T}$ and let $|\Vec{P}|$ denote the number of nontrivial operators in $\Vec{P}$. 
Let $W_\ell =\otimes_{i \in \text{ layer } \ell} U_{i}$ and $\tilde{P}_\ell=\otimes_{i \in \text{ layer } \ell} P_{i}$ be the unitaries and the errors in the $\ell$-th layer of $W$. 
Abusing notation, we sometimes denote $U_i \in W_\ell$ and $P_i \in \Tilde{P}_\ell$ to mean that the unitaries and Pauli errors are in layer $\ell$.

The key observation is that the injective tensor network represents a noisy version of the quantum computation.

\begin{claim}\label{claim:gs}
The state $\ket{\Psi_{W,\xi}}$ can be expanded as $\ket{\Psi_{W,\xi}} \propto \sum_{\Vec{P} \in \mathcal{P}^{\otimes T}} \delta^{|\Vec{P}|} \ket{\Phi_{\Vec{P}}} \otimes  (U_{T} P_{T} \hdots U_1 P_1 \ket{\xi})$. 
\label{claim:iidnoise}
\end{claim}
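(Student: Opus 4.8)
The plan is to expand each perturbed projector $Q$ in its eigenbasis and track what each term does under teleportation. By definition~\eqref{eqn:perturbed_projectors}, $Q = \sum_{P \in \mathcal{P}} \delta^{[P \neq I]}\ket{\Phi_P}\bra{\Phi_P}$, where $[P\neq I]$ is $1$ if $P$ is nontrivial and $0$ otherwise. Hence $Q^{\otimes nD}$ applied to $\ket{\Phi_{W,\xi}}$ expands as $\sum_{\Vec{P}\in\mathcal{P}^{\otimes T}} \delta^{|\Vec{P}|}\, \bigl(\bigotimes_i \ketbra{\Phi_{P_i}}\bigr)\ket{\Phi_{W,\xi}}$, so it suffices to show that for a fixed error pattern $\Vec{P}$, projecting onto $\bigotimes_i \ket{\Phi_{P_i}}$ produces (up to a global scalar independent of $\Vec{P}$) the state $\ket{\Phi_{\Vec{P}}}\otimes (U_T P_T \cdots U_1 P_1 \ket{\xi})$ in the output column, with the $\ket{\Phi_{P_i}}$ recorded on the internal bond qubits.

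First I would recall the single-gate teleportation identity: for any two-qubit state $\ket{\eta}$ on qubits $p,q$ and any Pauli $P$ on two qubits, projecting $\ket{\eta}_{p,q}\otimes \ket{\Phi^{(\ell)}_{p,q}}_{1,2,3,4}$ onto $\ket{\Phi_P}_{p,q;1,2}$ yields, on qubits $3,4$, the state proportional to $U^{(\ell)}_{p,q} P \ket{\eta}$ — this is just the standard gate-teleportation/Choi-state computation, using $\bra{\Phi_P}_{A,B}(\ket{\psi}_A\otimes\ket{\cdot}_B)= \tfrac12 (\text{something})$ and the transpose trick $(\id\otimes M)\ket{\Phi_I} = (M^T\otimes \id)\ket{\Phi_I}$, together with $\overline{P}=P$ up to phase for Paulis. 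The scalar prefactor ($\tfrac12$ per projected Bell pair, times possibly a fixed phase) does not depend on which Pauli $P_i$ is chosen, which is exactly why it can be absorbed into the overall $\propto$.

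Then I would induct on the layers. The base case is the input column holding $\ket{\xi}$. For the inductive step, assume that after processing layers $1,\dots,\ell-1$ the output column carries $U_{\ell-1}^{\mathrm{(layer)}}P_{\ell-1}^{\mathrm{(layer)}}\cdots U_1 P_1\ket{\xi}$ (tensored with all the $\ket{\Phi_{P_i}}$ already produced on earlier bonds). Since the brickwork layout guarantees that the gates in a single layer act on disjoint pairs of qubits, the layer-$\ell$ projectors act on disjoint subsystems, so I can apply the single-gate identity in parallel across the layer: each projection onto $\ket{\Phi_{P_i}}$ for $U_i\in W_\ell$ both deposits $\ket{\Phi_{P_i}}$ on its bond qubits and left-multiplies the propagated state by $U_i P_i$. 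Collecting over the layer gives $W_\ell \tilde P_\ell$ applied to the running state, completing the induction; after layer $D$ the output column holds $U_T P_T\cdots U_1 P_1\ket{\xi}$ and every bond qubit pair holds the corresponding $\ket{\Phi_{P_i}}$, i.e.\ the full factor $\ket{\Phi_{\Vec{P}}}$.

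The only real subtlety — and the step I'd be most careful about — is bookkeeping of the ``identity'' bond qubits versus the ``error'' bond qubits and making sure the global normalization constant is genuinely independent of $\Vec{P}$: one must check that the teleportation scalar picked up at each projected pair is the same regardless of which of the four $\ket{\Phi_P}$ appears, and that commuting the Paulis through subsequent unitaries (which would conjugate them) is \emph{not} needed here because the claim records the $P_i$'s in the order $U_T P_T\cdots U_1 P_1$ interleaved with the gates rather than pushed to one side. Once indexing conventions for which physical qubits are the ``legs'' of $\ket{\Phi^{(\ell)}_{p,q}}$ are fixed consistently with \Cref{fig:brickTN}, the rest is a routine linearity-plus-induction argument.
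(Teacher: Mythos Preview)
Your proposal is correct and follows the same approach as the paper's proof: expand $Q^{\otimes nD}$ in the Bell basis and recognize each term $\bra{\Phi_{\Vec{P}}}\ket{\Phi_{W,\xi}}$ as a chain of gate teleportations yielding $U_T P_T\cdots U_1 P_1\ket{\xi}$ in the output column. The paper's own proof is a two-line sketch of exactly this argument, so you have simply filled in the details (single-gate identity, layer-by-layer induction, normalization independence) that the paper leaves implicit.
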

\begin{proof} Expanding $Q$ we have 
\begin{align}
    \ket{\Psi_{W,\xi}} &= \sum_{\Vec{P} \in \mathcal{P}^{\otimes T}} \delta^{|\Vec{P}|} \ket{\Phi_{\Vec{P}}}  \bra{\Phi_{\Vec{P}}}\ket{\Phi_{W,\xi}}.
\end{align}
Performing teleportation for each term in the summand, we find $\bra{\Phi_{\Vec{P}}}\ket{\Phi_{W,\xi}}\propto U_{T} P_{T} \hdots U_1 P_1  \ket{\xi}$ as the state in the rightmost column of the tensor network.
\end{proof}
In other words, $\ket{\Psi_{W,\xi}}$, up to normalization, contains a noisy quantum computation with purified local depolarizing channels. The local i.i.d.\ depolarizing noise rate is $p=\delta^2/(1+3\delta^2)$. Here `purified' means that the EPR states in the bulk of the tensor network record the occurred errors.

~\Cref{claim:iidnoise} can be alternatively written as $\ket{\Psi_{W,\xi}} \propto  \sum_{\Vec{P} \in \mathcal{P}^{\otimes T}} \delta^{|\Vec{P}|} \ket{\Phi_{\Vec{P}}} \otimes  (W_{D} \tilde{P}_{D} \hdots W_1 \tilde{P}_1 \ket{\xi})$. 

We can define a unitary
\begin{equation}
\label{eq:undounit}
    V = \sum_{\Vec{P} \in \mathcal{P}^{\otimes nD}} \ket{\Phi_{\Vec{P}}} \bra{\Phi_{\Vec{P}} } \otimes  (W_{D} \tilde{P}_{D} \hdots W_1 \tilde{P}_{1}),
\end{equation}
such that 
$$V^{\dagger}\ket{\Psi_{W,\xi}} \propto \sum_{\Vec{P} \in \mathcal{P}^{\otimes T}} \delta^{|\Vec{P}|} \ket{\Phi_{\Vec{P}}} \otimes   \ket{\xi}.$$
Note that the state $\sum_{\Vec{P} \in \mathcal{P}^{\otimes T}} \delta^{|\Vec{P}|} \ket{\Phi_{\Vec{P}}}$ is a simple product state since the noise is i.i.d local. Thus, when $\ket{\xi}=\ket{0}^{\otimes n}$ (which arises for computations in $\BQP$), the state $\ket{\Psi_{W,\xi}}$ can be prepared by a quantum circuit. This is similar to the Feynman-Kitaev history state~\cite{kitaev2002classical}, which can be prepared efficiently for quantum computations in $\BQP$.

In general, we are not restricted to choosing the same injectivity parameter $\delta$ across the whole circuit. In fact, some of our later results are proved by varying $\delta$ between locations in the circuit.

\subsection{The parent Hamiltonian}\label{subsec:parentHam}
The nice property of the injective tensor network state $\ket{\Psi_{W,\xi}}$ is that it is the unique ground state of a local Hamiltonian. In particular, we consider the $n(2D+1)$-qubit Hilbert space containing the PEPS state~$\ket{\Psi_{W,\xi}}$ corresponding to a circuit $W$.

Let $\Lambda = \delta \ket{\Phi_I} \bra{\Phi_I} + \sum_{p \in \{X, XZ,Z\}}  \ket{\Phi_p} \bra{\Phi_p}$, such that $Q \propto \Lambda^{-1}$.

\begin{definition}[Parent Hamiltonian]\label{clm:parent_hamiltonian}
Associate for each gate two-qubit gate $U$ in the circuit an 8-qubit Hamiltonian term $h_U=\Lambda^{\otimes 4}(\id - \ket{\Phi_U} \bra{\Phi_U})\Lambda^{\otimes 4}$. Furthermore, suppose the initial state $\ket{\xi}$ is the unique ground state of a frustration-free local Hamiltonian $H_{\xi}= \sum_{j}  g_j$. Then the unnormalized state $\Phi_{W,\xi}$ is the unique ground state of the frustration-free Hamiltonian $H_\mathrm{parent} = \sum_{j} \Lambda^{\otimes N(j)} g_j \Lambda^{\otimes N(j)} + \sum_{U \in W} h_U $, where $N(j)$ is the set of EPR locations that have intersecting support with $g_j$ (see~\Cref{fig:combin}).
We refer to the first set of terms as~$H_\mathrm{in}$ (input) and the second set as~$H_\mathrm{prop}$ (propagation).
\end{definition}

\begin{figure}
    \centering
    \includegraphics[width=0.7\textwidth]{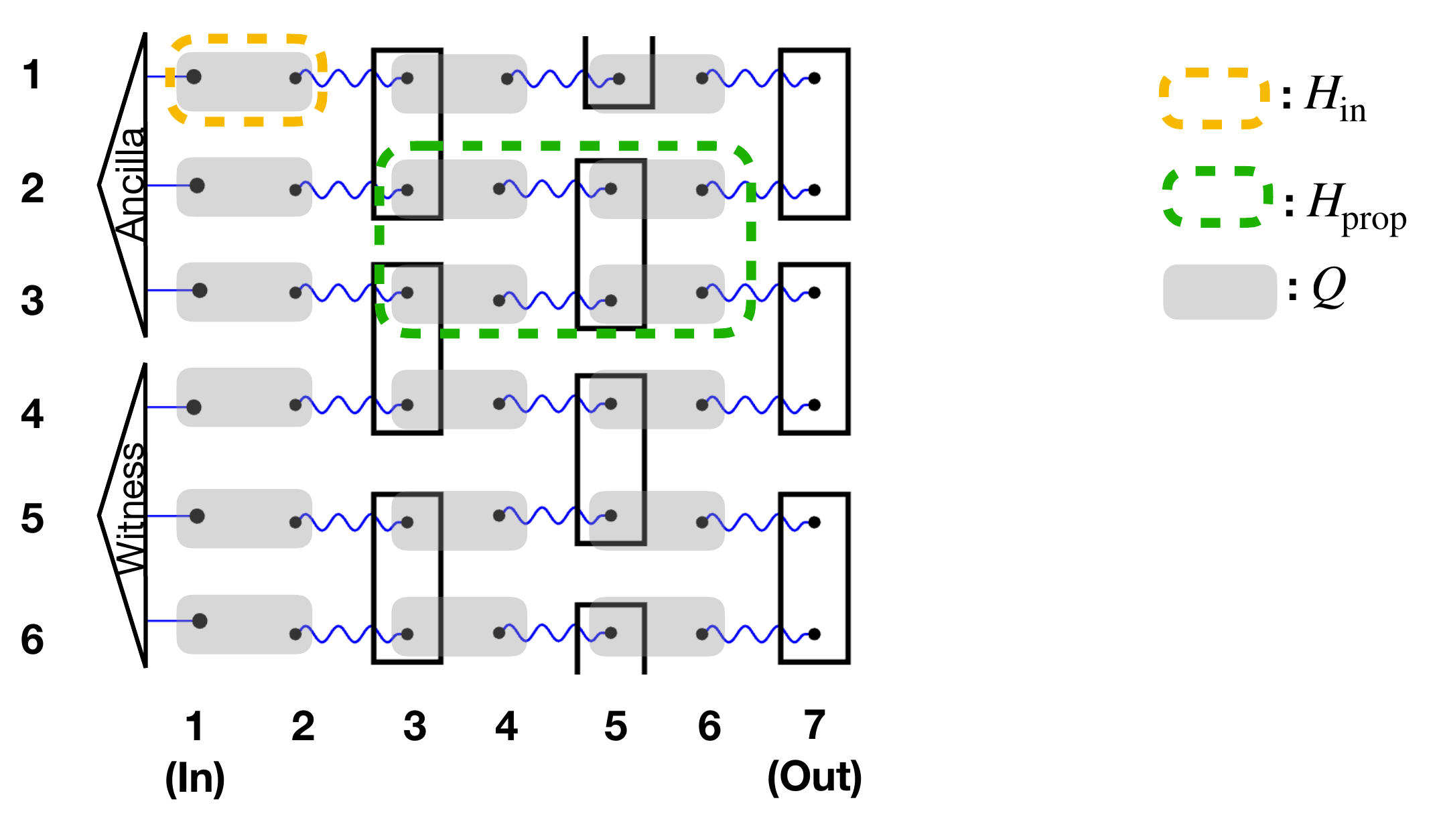}
    \caption{An injective PEPS encoding noisy quantum computation shown with $n=6$ qubits (black dots), of which $a=3$ are ancillas, and $D=3$ layers of two-qubit gates in the brickwork architecture. The computation goes from left to right, with qubits on column 1 being the input. \textbf{Gates} are encoded in 4-qubit Choi states (see~\Cref{fig:state}) placed on columns (2,3), (4,5), and so on. Applying the invertible map $Q$ (gray box) as defined in~\Cref{eqn:perturbed_projectors} generates a noisy computation on the last column (indexed 7). The qubit pairs where $Q$ is applied are called \emph{shifted} EPR locations. We refer to the last column of qubits in the PEPS as the \textit{output column}. \textbf{Noisy computation}: After $Q$ is applied, the output column can be interpreted as a noisy computation where for each layer of the circuit, the present noise pattern is specified by the EPR states at the shifted EPR locations (the word `shifted' is to avoid confusion with the original locations of the Choi state encodings: the shifted EPR locations are the same as shifting the original Choi state's locations one step to the left). Due to this correspondence, we refer to the first two columns (indexed 1,2) as the \textit{first layer}, the next two columns (indexed 3,4) as the \textit{second layer}, and so on. \textbf{Parent Hamiltonian}: A propagation term (dashed green) acts on $8$ qubits, while an initialization term (dashed yellow) acts on the first $2$ qubits and only on each ancilla row (indexed 1,2,3).}
    \label{fig:combin}
\end{figure}

An example of $H_\xi$ is $H_{\ket{0^n}} = \sum_{i=1}^{n} \ket{1}\bra{1}_i $ which ensures the input state is $\ket{\xi}= \ket{0}^{\otimes n}$. Or if we want $\ket{\xi}$ to be a codeword of a stabilizer code, we can take $H_\xi$ to be the code Hamiltonian.

In a $\QMA$ protocol, we do not fully  specify the initial state. Instead, the initial state is of the form $\ket{0^a}\ket{\xi}$, where $\ket{\xi}$ is any $(n-a)$-qubit witness coming from the prover. 
So $H_\mathrm{in} =  \sum_{j=1}^{a} \Lambda \ket{1}\bra{1}_j^\mathrm{in}  \Lambda$ has a ground space of degeneracy $2^{n-a}$ and so does~$H_\mathrm{parent}$. See \Cref{fig:combin} for an example of an injective PEPS and its parent Hamiltonian.
Later when we work with states of this form, we will continue denoting the ground states as $\ket{\Psi_{W,\xi}}$, leaving the ancillas $\ket{0^a}$ in the initial state implicit. Finally, similar to the Feynman-Kitaev construction, we can use an output check term $H_\mathrm{out} \triangleq \ket{0}\bra{0}^\mathrm{out}_{j}$ to verify qubit $j$ in the output.

\subsection{Bypassing the local indistinguishability issue}\label{subsec:bypass}

A conceptual challenge that any circuit-to-Hamiltonian construction must resolve is local indistinguishability. As discussed in \cite{Vidnotes20}, the argument is as follows. Consider a $n$-qubit quantum state $\ket{\psi}$ that is subject to either $\id_n$ unitary or the $Z_n$ unitary on the last qubit. It is possible that $\ket{\psi}$ and $Z_n\ket{\psi}$ are locally indistinguishable (such as in the context of CAT states). But then how can a local constraint detect the difference between the two actions of unitaries? The Feynman-Kitaev approach solves this problem by using the clock register - see \cite{Vidnotes20}.

In our context, noise plays a crucial role in handling the local indistinguishability issue and showing that local changes can be locally detected. Indeed, consider $h_{\id_n}$ and $h_{Z_n}$ as the two tensor network Hamiltonian terms corresponding to the two possible gates. Let $\pi_{\id_n}$ and $\pi_{Z_n}$ be their respective ground spaces. If $\ket{\psi}$ was subject to the $\id_n$ unitary, the corresponding tensor network state $\ket{\Psi}$ would be in the support of $\pi_{\id_n}$. Thus, we can lower bound the energy of $\ket{\Psi}$ with respect to $h_{Z_n}$ by upper bounding 
$$\|\pi_{Z_n}\ket{\Psi}\|=\|\pi_{Z_n}\pi_{\id_n}\ket{\Psi}\|\leq \|\pi_{Z_n}\pi_{\id_n}\|.$$
We argue that $\|\pi_{Z_n}\pi_{\id_n}\|\leq 1- \delta^6/2$ when $\delta <\frac{1}{4}$. For this, we will show that the overlap between any two vectors from the two subspaces is $\leq 1- \frac{\delta^6}{2}$.

We have the following characterization of the ground space of the propagation term $h_U$ for general one-qubit gate $U$ (the two-qubit case can be similarly derived).
\begin{claim} A general vector in the ground space of $h_U = \Lambda^{\otimes 2}(\id - \ket{\Phi_U}\bra{\Phi_U})\Lambda^{\otimes 2}$ can be written as 
$\sum_{\vec{P}=(P_1,P_2) \in \mathcal{P}^{\otimes 2}}\delta^{|\vec{P}|}\Tr[MP_1 U P_2]\ket{\Phi_{\vec{P}}}$, for arbitrary one-qubit operator $M$.
\end{claim}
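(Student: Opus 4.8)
The plan is to use invertibility of $\Lambda$ to reduce the statement to identifying the kernel of the ``ideal'' projector $\id-\ket{\Phi_U}\bra{\Phi_U}$, and then to evaluate that kernel, transported through $(\Lambda^{-1})^{\otimes 2}$, by a routine maximally-entangled-state contraction.

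Unpacking \Cref{clm:parent_hamiltonian} in the one-qubit-gate case, $h_U$ acts on four qubits: the two ``inner'' qubits carry the Choi state $\ket{\Phi_U}$, and each of the two ``outer'' qubits is joined to one inner qubit by a pair on which $Q$ (equivalently $\Lambda^{-1}$) is applied. Thus $\Lambda^{\otimes 2}$ means one $\Lambda$ on each outer--inner pair, while $\id-\ket{\Phi_U}\bra{\Phi_U}$ acts on the two inner qubits, padded by identity on the two outer ones. Since $\Lambda=\delta\ket{\Phi_I}\bra{\Phi_I}+\sum_{p\neq I}\ket{\Phi_p}\bra{\Phi_p}$ is positive-definite with spectrum $\{\delta,1\}$, it is invertible, with $\Lambda^{-1}=\sum_{P\in\mathcal P}\delta^{|P|-1}\ket{\Phi_P}\bra{\Phi_P}$ (i.e.\ $\Lambda^{-1}=Q/\delta$). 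As $h_U=(\Lambda^{\otimes 2})^\dagger(\id-\ket{\Phi_U}\bra{\Phi_U})\Lambda^{\otimes 2}\succeq 0$ is frustration-free, its ground space equals $\ker h_U$, and $h_U\ket v=0$ holds iff $\Lambda^{\otimes 2}\ket v\in\ker(\id-\ket{\Phi_U}\bra{\Phi_U})$, which — because of the padding — is $\{\,\ket\chi\otimes\ket{\Phi_U}:\ket\chi\text{ a state on the two outer qubits}\,\}$, a $4$-dimensional space. Hence the ground space is exactly $(\Lambda^{-1})^{\otimes 2}\{\ket\chi\otimes\ket{\Phi_U}\}$, automatically $4$-dimensional, so no separate dimension count is needed.

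It remains to expand $(\Lambda^{-1})^{\otimes 2}(\ket\chi\otimes\ket{\Phi_U})$ in the Bell basis. By $\Lambda^{-1}\ket{\Phi_P}=\delta^{|P|-1}\ket{\Phi_P}$, the coefficient of $\ket{\Phi_{P_1}}\otimes\ket{\Phi_{P_2}}$ is $\delta^{|P_1|+|P_2|-2}$ times the scalar obtained by contracting $\ket{\Phi_U}$ and $\ket\chi$ against $\bra{\Phi_{P_1}}$ and $\bra{\Phi_{P_2}}$ along matching qubits. Computing this scalar is a standard exercise with the transpose trick — using that each $P\in\{I,X,XZ,Z\}$ has real entries, so $P^*=P$ — and gives $\propto\Tr[MP_1UP_2]$, where $M$ is $\ket\chi$ read as a $2\times 2$ matrix and the constant is independent of $P_1,P_2,U,M$. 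Collecting the $\delta$-powers and absorbing all global constants into $M$ yields exactly $\sum_{(P_1,P_2)\in\mathcal P^{\otimes 2}}\delta^{|P_1|+|P_2|}\Tr[MP_1UP_2]\,\ket{\Phi_{P_1}}\otimes\ket{\Phi_{P_2}}$; and as $\ket\chi$ ranges over all two-qubit states, $M$ ranges over all one-qubit operators, so this family is precisely the ground space. (Consistency check: the weight $\delta^{|P_1|+|P_2|}\Tr[MP_1UP_2]$ is the ``noisy teleportation'' of the one-qubit message $M$ through $U$ with Pauli faults $P_1,P_2$, matching \Cref{claim:gs}. The two-qubit-gate case is identical with $\mathcal P^{\otimes 2}$ per side and a two-qubit $M$.)

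I do not expect a deep obstacle. The only genuine work is the index bookkeeping in the last step, together with fixing the convention for which physical qubits each factor of $h_U$ acts on: a wrong convention permutes $P_1,U,P_2$ inside the trace or transposes $U$, although symmetrizing the sum over $P_1,P_2$ absorbs the $P_1\leftrightarrow P_2$ ambiguity. One should also check that $\ket\chi\mapsto M$ is a bijection, so that the exhibited family is the full $4$-dimensional ground space and not a proper subspace.
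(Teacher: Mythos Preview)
Your proposal is correct and follows essentially the same approach as the paper: both invert $\Lambda^{\otimes 2}$ to identify $\ker h_U$ with $(\Lambda^{-1})^{\otimes 2}$ applied to the kernel of $\id-\ket{\Phi_U}\bra{\Phi_U}$, parametrize that kernel as $\ket{\psi_1}\ket{\Phi_U}\ket{\psi_2}$ (your $\ket{\chi}\otimes\ket{\Phi_U}$), expand in the Bell basis, and recognize the resulting coefficient as $\Tr[MP_1UP_2]$ with $M$ the operator corresponding to the outer state. Your version is slightly more careful about the dimension count and the bijection $\ket{\chi}\leftrightarrow M$, and you correctly flag the convention-dependence of the ordering inside the trace.
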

\begin{proof} The ground space of $ (\Lambda^{-1})^{\otimes 2} h_U (\Lambda^{-1})^{\otimes 2}$ is $\mathrm{span}\{\ket{\psi_1}\ket{\Phi_U}\ket{\psi_2}: \forall \ket{\psi_1},\ket{\psi_2} \in \mathbb{C}^{2} \}$. Thus the ground space of $h_U$ is
\begin{align*}
&\mathrm{span}_{\ket{\psi_1},\ket{\psi_2}} \{\Lambda^{\otimes 2} \ket{\psi_1}\ket{\Phi_U}\ket{\psi_2}\} \\
&=\mathrm{span}_{\ket{\psi_1},\ket{\psi_2}} \left\{\sum_{\Vec{P} \in \mathcal{P}^{\otimes 2}} \bra{\psi_1}P_1 U P_2 \ket{\psi_2} \ket{\Phi_{\Vec{P}}} \right\}\\
&=\mathrm{span}_{M \in \mathbb{C}^{2 \times 2}} \left\{\sum_{\Vec{P} \in \mathcal{P}^{\otimes 2}} \Tr[M P_1 U P_2] \ket{\Phi_{\Vec{P}}} \right\}. \qedhere
\end{align*}
\end{proof}

Applying the above claim, a general state in $\pi_{\id_n}$ can be written as 
$\sum_{\vec{P} \in \mathcal{P}^{\otimes 2}}\delta^{|\vec{P}|}\Tr{MP_1P_2}\ket{\Phi_{\vec{P}}},$
for arbitrary operator $M$ subject to the normalization condition $\sum_{\vec{P}\in \mathcal{P}^{\otimes 2}}\delta^{2|\vec{P}|}|\Tr(MP_1P_2)|^2=1$. Similarly, a general state in $\pi_{Z_n}$ can be written as $\sum_{\Vec{P}\in \mathcal{P}^{\otimes 2}}\delta^{|\vec{P}|}\Tr{ZNP_1ZP_2}\ket{\Phi_{\vec{P}}}$, for arbitrary operator $N$ (we add a Pauli Z in front of $N$ for later convenience) subject to the normalization condition $\sum_{\vec{P}\in \mathcal{P}^{\otimes 2}}\delta^{2|\vec{P}|}|\Tr(NP_1P_2)|^2=1$. It is clear that the two vectors must be different - no matrices $M,N$ satisfies $\Tr(MP_1P_2) = \Tr(ZNP_1ZP_2) = (-1)^{\text{Ind}(P_2\in{X,Y})}\Tr(NP_1P_2)$ for all Paulis $P_1,P_2$, where $\text{Ind}$ denotes the indicator variable. Otherwise, $M=(-1)^{\text{Ind}(P_2\in{X,Y})}N$ for all $P_2$, which forces $M=0$. We can also get a quantitative bound by arguing that the $\ell_1$ distance
$\sum_{\vec{P} \in \mathcal{P}^{\otimes 2}}\delta^{2|\vec{P}|}|\Tr((M-(-1)^{\text{Ind}(P_2\in{X,Y})}N)P_1P_2)|^2$ must be $\geq \delta^6$. For contradiction, suppose the opposite holds. Then for all $P_1,P_2$, $|\Tr((M-(-1)^{\text{Ind}(P_2\in{X,Y})}N)P_1P_2)|\leq \delta$. For a fixed $P_1P_2$, we can choose $P_1,P_2$ such that $P_2\in \{\id, Z\}$ as well as $P_2\in \{X, Y\}$. This means we have 
$$|\Tr((M\pm N)P_1P_2)|\leq \delta, \implies |\Tr(M P_1P_2)|\leq \delta.$$ The implication uses triangle inequality. This forces the normalization condition to be $$\sum_{\vec{P}\in \mathcal{P}^{\otimes 2}}\delta^{2|\vec{P}|}|\Tr(MP_1P_2)|^2\leq 16\delta^2 <1,$$ a contradiction.

\subsection{Connection with prior works}\label{subsec:priorworks}

A scheme related to ours is that of  Ref.~\cite{bacon2017sparse} in which the authors give a construction of quantum error-correcting subsystem codes with almost linear distance.
Their construction can be understood as a map from fault-tolerant Clifford circuits that facilitate check measurements to a set of non-commuting Pauli-check operators.
More concretely, each location in the circuit is associated with a qubit and each Clifford gate is associated with a Pauli operator that stabilizes the gate.
For example, the idling gate (wire) is stabilized by~$XX$ and~$ZZ$ operating on the in- and out-locations. 
The main difference with our setting is that we do not need to assume Clifford circuit. Furthermore, our Hamiltonian remains frustration-free, whereas the Hamiltonian in Ref~\cite{bacon2017sparse} is frustrated. Another difference is that we associate two qubits per circuit location that are projected onto an EPR state, cf.\ \Cref{fig:brickTN}.

In Ref.~\cite{bartlett2006simple} Bartlett and Rudolph show using PEPS that a fault-tolerant cluster state, which is a universal resource state for MBQC, can be robustly encoded into the ground state of a Hamiltonian consisting of planar, 2-local interaction terms.
They also note that the approximation error can be interpreted as stochastic Pauli-noise and that the energy gap of their construction is independent of system size.
The difference to our approach is that Bartlett and Rudolph use tensor networks to obtain a resource state that can be used for quantum computation via MBQC, whereas our scheme encodes a quantum computation into a tensor network.

In \cite{aharonov2022translationally} Aharonov and Irani consider a mapping of classical computation into a CSP, which we may think of as a classical local Hamiltonian.
More concretely, they consider a two-dimensional $L\times L$ grid with translation invariant constraints and show that approximating the ground state energy to an additive $\Theta(\sqrt[4]{L})$ is $\NEXP$-complete.
They do so by encoding a computation into a tiling problem.
The computation is fault tolerant by running the same computation several times in parallel to enforce a large cost for an incorrect computation.
In contrast, our model is fully quantum and thus requires the quantum fault tolerance theorem of Ref.~\cite{aharonov1999fault}.

\section{Background}\label{sec:background}

\subsection{Hamiltonian complexity}

Here, we give a brief introduction to the complexity class $\QMA$ and main lemmas used in this work.

\begin{definition} The class $\QMA_w[c,s]$ is the class of promise problems $A=(A_\mathrm{yes}, A_\mathrm{no})$ with the property that, for
every instance $x$, there exists a uniformly generated verifier quantum circuit $V_x$ with the following properties: $V_x$ is of size $\poly(|x|)$ and acts on an input state $\ket{0^{\otimes m}}$ together with a witness state $\ket{\xi}$ of size $w$ supplied by an all-powerful prover, with both $m,w=\poly(|x|)$. Upon measuring the decision qubit $o$, the verifier accepts if $o=1$, and rejects otherwise. If $x \in A_\mathrm{yes}$, then $\exists \ket{\xi}$ such that $\Pr[o=1] \geq c$ (completeness). If $x \in A_\mathrm{no}$, then $\forall \ket{\xi}$, $\Pr[o=1]\leq s$ (soundness), such that $c-s \geq 1/\poly(|x|)$.
\end{definition}

It is well-known that the parameters $c,s$ can be amplified, even without increasing the witness size.
\begin{lemma}[Weak $\QMA$ amplification~\cite{kitaev2002classical}] For any $r=\poly(|x|)$, $\QMA_w[2/3,1/3]=\QMA_{w'}[1-2^{-r}, 2^{-r}]$ where $w'=\poly(w)$.
\label{lem:amplify}  
\end{lemma}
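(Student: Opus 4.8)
The plan is to prove the two inclusions separately. The inclusion $\QMA_{w'}[1-2^{-r},2^{-r}] \subseteq \QMA_w[2/3,1/3]$ is trivial: assuming $r\ge 2$ (smaller $r$ being degenerate) we have $1-2^{-r}\ge 2/3$ and $2^{-r}\le 1/3$, so the very same verifier, with its (polynomial) witness unchanged, already meets the looser completeness/soundness thresholds. All the content is in the forward direction $\QMA_w[2/3,1/3] \subseteq \QMA_{w'}[1-2^{-r},2^{-r}]$, which I would establish by the standard parallel-repetition-plus-majority construction, being careful about soundness against an entangling prover.

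Concretely: given a verifier $V_x$ for a problem in $\QMA_w[2/3,1/3]$, build $V'_x$ which expects a witness spread over $k=\Theta(r)$ registers $R_1,\dots,R_k$, runs $V_x$ independently and in parallel on each $(R_i,\text{ancilla}_i)$, and accepts iff at least $\lceil k/2\rceil$ of the $k$ decision qubits come out $1$. Since $V_x$ is uniform and of size $\poly(|x|)$ and the threshold/majority step is a cheap classical circuit, $V'_x$ is uniform and of size $\poly(|x|)$, and $w'=kw=\Theta(rw)=\poly(|x|)$. Completeness is the easy half: in a yes-instance the prover supplies $\ket{\xi}^{\otimes k}$ for a good witness $\ket{\xi}$ of $V_x$, the $k$ decision bits are then independent with success probability $\ge 2/3$, and a Chernoff/Hoeffding bound gives acceptance probability $\ge 1-2^{-\Omega(k)}\ge 1-2^{-r}$ after fixing the constant in $k$.

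Soundness is the one step that needs care, since an adversarial prover may entangle the $k$ witness registers and hence one cannot appeal to independence directly. Fixing a no-instance, fold the product ancillas back in: running $V_x$ on $R_i$ and reading $o_i$ realizes a two-outcome POVM $\{\tilde E_i, I-\tilde E_i\}$ on $R_i$, and for every pure state $\ket\phi$ on $R_i$ we have $\bra\phi \tilde E_i\ket\phi=\Pr[V_x\text{ accepts }\ket\phi]\le s=\tfrac13$, so $0\le \tilde E_i\le \tfrac13 I$. The overall acceptance probability on a prover state $\rho$ is $\Tr[M\rho]$ with $M=\sum_{|b|\ge \lceil k/2\rceil}\bigotimes_{i}\tilde E_i^{\,b_i}$, where $\tilde E_i^{\,1}=\tilde E_i$ and $\tilde E_i^{\,0}=I-\tilde E_i$; note $0\le M\le I$. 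Because the $\tilde E_i$ act on disjoint registers they commute, so diagonalize them jointly: writing $\tilde E_i=\sum_a \lambda_{i,a}\Pi_{i,a}$ with $\lambda_{i,a}\in[0,\tfrac13]$ and $\sum_a\Pi_{i,a}=I$, one gets $M=\sum_{\vec a}\Pr\!\big[\sum_i Y_i\ge \lceil k/2\rceil\big]\,\bigotimes_i \Pi_{i,a_i}$ where the $Y_i$ are independent $\mathrm{Ber}(\lambda_{i,a_i})$. Since the $\bigotimes_i\Pi_{i,a_i}$ are orthogonal projectors summing to $I$, $\|M\|=\max_{\vec a}\Pr[\sum_i Y_i\ge \lceil k/2\rceil]$, and by stochastic domination ($\lambda_{i,a_i}\le \tfrac13$) this is at most $\Pr[\mathrm{Bin}(k,\tfrac13)\ge \lceil k/2\rceil]\le 2^{-\Omega(k)}$. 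Hence $\Tr[M\rho]\le \|M\|\le 2^{-\Omega(k)}\le 2^{-r}$ for every prover state, which finishes soundness and the lemma.

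The only genuine obstacle is the soundness step just sketched — making precise that entangling the repeated witnesses gives the prover no advantage, via the operator-norm bound $\|M\|\le \Pr[\mathrm{Bin}(k,s)\ge \lceil k/2\rceil]$; everything else is bookkeeping about circuit size and concentration. I would also add a remark that if one insists on keeping the witness length literally unchanged ($w'=w$) rather than merely polynomial, one invokes the in-place (Marriott--Watrous) amplification instead, but the repetition argument above already yields the polynomially bounded $w'$ claimed here.
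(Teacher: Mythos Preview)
Your proposal is correct and follows essentially the same approach the paper has in mind: the lemma is not proved in the paper but cited from Kitaev, and where the paper later invokes it (Section~5) it describes exactly the parallel-repetition-plus-threshold procedure you give, deferring the ``entanglement does not help'' soundness step to \cite[Lemma~14.1]{kitaev2002classical}. Your operator-norm argument via joint diagonalization of the commuting $\tilde E_i$ is a clean self-contained proof of that cited fact.
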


\begin{lemma}[Strong $\QMA$ amplification~\cite{marriott2005quantum}] For any $r = \poly(|x|)$, $\QMA_w[2/3,1/3]=\QMA_{w}[1-2^{-r}, 2^{-r}]$.
\label{lem:strongamplify}
\end{lemma}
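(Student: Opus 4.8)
The inclusion $\QMA_w[1-2^{-r},2^{-r}] \subseteq \QMA_w[2/3,1/3]$ is immediate for $r\geq 2$, so the content is the reverse inclusion. The plan is to reproduce the witness-preserving amplification of Marriott and Watrous. Fix a verifier $V=V_x$ for a problem in $\QMA_w[2/3,1/3]$, acting on an $m$-qubit ancilla register $\mathcal{A}$ (initialized to $\ket{0^m}$) and a $w$-qubit witness register $\mathcal{W}$, with decision qubit $o\in\mathcal{A}$. Define two projectors on $\mathcal{A}\otimes\mathcal{W}$: the \emph{initialization projector} $\Pi_{\mathrm{in}}=\ketbra{0^m}{0^m}_{\mathcal{A}}\otimes\id_{\mathcal{W}}$, and the \emph{acceptance projector} $\Pi_{\mathrm{acc}}=V^\dagger(\ketbra{1}{1}_o\otimes\id)V$. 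The maximal acceptance probability of $V$ over all witnesses equals the largest eigenvalue $\lambda$ of $\Pi_{\mathrm{in}}\Pi_{\mathrm{acc}}\Pi_{\mathrm{in}}$; by the promise, $\lambda\geq 2/3$ on yes-instances and $\lambda\leq 1/3$ on no-instances.

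Next I would invoke Jordan's lemma: any pair of projectors $\Pi_{\mathrm{in}},\Pi_{\mathrm{acc}}$ admits an orthogonal decomposition of the Hilbert space into subspaces of dimension $1$ and $2$ that are invariant under both. On a $1$-dimensional block each projector is $0$ or $1$; on a $2$-dimensional block $\Pi_{\mathrm{in}}$ and $\Pi_{\mathrm{acc}}$ are rank-one projectors onto unit vectors at some angle $\theta_j$, and $p_j:=\cos^2\theta_j$ is exactly the eigenvalue of $\Pi_{\mathrm{in}}\Pi_{\mathrm{acc}}\Pi_{\mathrm{in}}$ supported in that block. Since the blocks are jointly invariant, $\lambda=\max_j p_j$, so on no-instances $p_j\leq 1/3$ for \emph{every} block.

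The amplified verifier $V'$ works as follows: on input $\ket{0^{m'}}\ket{\xi}$ — the old ancilla, extra classical scratch registers, and the \emph{same} $w$-qubit witness $\ket{\xi}$ — perform the alternating sequence of projective measurements $\{\Pi_{\mathrm{in}},\id-\Pi_{\mathrm{in}}\}$, $\{\Pi_{\mathrm{acc}},\id-\Pi_{\mathrm{acc}}\}$, $\{\Pi_{\mathrm{in}},\id-\Pi_{\mathrm{in}}\}$, $\ldots$ for $k=\Theta(r)$ rounds (each implemented coherently using copies of $V,V^\dagger$ and fresh ancilla, outcomes recorded classically), and accept iff the fraction of "$1$" outcomes among the $\Pi_{\mathrm{acc}}$-measurements exceeds $1/2$. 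No new witness qubits are introduced, so the witness size remains $w$, and $V'$ has size $k\cdot\poly(|x|)=\poly(|x|)$. Because the first $\Pi_{\mathrm{in}}$-measurement is deterministic on the correctly initialized ancilla and the subsequent dynamics never leaves a Jordan block, the process restricted to a block is a classical Markov chain on $O(1)$ states whose $\Pi_{\mathrm{acc}}$-outcomes have long-run frequency $p_j$; the prover's witness induces a distribution over blocks, and the overall acceptance probability is the corresponding convex combination. A Chernoff-type concentration bound for this chain then yields: on yes-instances the prover places $\ket{\xi}$ in an optimal block ($p_j\geq 2/3$) and $V'$ accepts with probability $\geq 1-2^{-r}$; on no-instances every block has $p_j\leq 1/3$, so $V'$ accepts with probability $\leq 2^{-r}$ for every witness.

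I expect the main obstacle to be the concentration step: one must verify that the sequence of measurement outcomes inside a Jordan block, although not literally i.i.d., forms an ergodic Markov chain whose empirical acceptance frequency concentrates exponentially around $p_j$, and that this survives averaging over the prover's superposition across blocks (legitimate precisely because the blocks are jointly invariant, so distinct blocks never interfere). A cleaner alternative I might present instead replaces the sequential measurements by quantum phase estimation applied to the Jordan rotation $(2\Pi_{\mathrm{in}}-\id)(2\Pi_{\mathrm{acc}}-\id)$, whose eigenphases encode $\theta_j$; estimating $\theta_j$ to constant precision with failure probability $2^{-r}$ again needs only $\poly(r)$ ancilla and no extra witness qubits, and comparing the estimate to the threshold between $2/3$ and $1/3$ gives the claimed completeness and soundness.
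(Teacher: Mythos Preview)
The paper does not supply its own proof of this lemma; it is stated as a known result with a citation to Marriott and Watrous. Your proposal correctly reproduces the Marriott--Watrous witness-preserving amplification (Jordan decomposition of the pair $(\Pi_{\mathrm{in}},\Pi_{\mathrm{acc}})$, alternating projective measurements, and a Chernoff-type bound on the outcome sequence within each Jordan block), which is precisely the argument the citation points to, so there is nothing to compare.
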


\begin{definition}[$k$-Local Hamiltonian problem]
\textbf{Input}: $H_1,H_2,\hdots,H_T$ set of $T=\poly(n)$ Hermitian matrices with bounded spectral norm $\|H_i\| \leq 1$ acting on the Hilbert space of $n$ qubits. In addition, each term acts nontrivially on at most $k$ qubits and is described by $\poly(n)$ bits. Furthermore, we are given two real numbers $a, b$ (described by $\poly(n)$ bits) such that $b-a > 1/\poly(n)$. \textbf{Output}: Promised either the smallest eigenvalue of $H = H_1 + H_2 + \hdots H_T$ is smaller than $a$ or all eigenvalues are larger than $b$, decide which case it is. We denote this problem by $k$-$\LH[a,b]$, or sometimes, $k$-$\LH(b-a)$.
\end{definition}

The $k$-$\LH$ is in $\QMA$ for any $k=O(\log n)$ (see e.g., Theorem 1 in~\cite{aharonov2002quantum}). Furthermore, Kitaev showed in his seminal work~\cite{kitaev2002classical} that $5$-$\LH$ is $\QMA$-complete.

\begin{theorem}[Kitaev~\cite{kitaev2002classical}] 
\label{thm:kitaev}
Any $\QMA_w[c,s]$ protocol involving an $n$-qubit verifier circuit with $T=\poly(n)$ gates can be turned into a $5$-$\LH[a,b]$ on $\poly(n)$ qubits with $a=O((1-c)/T)$ and $b = \Omega((1-\sqrt{s})/T^3 )$.
\end{theorem}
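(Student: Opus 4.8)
The plan is to give the original Feynman--Kitaev construction, since \Cref{thm:kitaev} is precisely Kitaev's theorem. Take the $\QMA_w[c,s]$ verifier $V_x = U_T\cdots U_1$ acting on $m$ ancilla qubits and $w$ witness qubits, and adjoin a \emph{clock register} of $T$ qubits in the domain-wall encoding $\ket{t}_c = \ket{1^t 0^{T-t}}$, for a total of $\poly(n)$ qubits. I would define $H = H_\mathrm{clock} + H_\mathrm{in} + H_\mathrm{prop} + H_\mathrm{out}$, where $H_\mathrm{clock} = \sum_i \ketbra{0}{0}_{c_i}\otimes\ketbra{1}{1}_{c_{i+1}}$ penalizes non-unary clock strings, $H_\mathrm{in} = \sum_{j\in\mathrm{anc}}\ketbra{1}{1}_j\otimes\ketbra{0}{0}_{c_1}$ forces the ancillas to $\ket{0^m}$ at time $0$, $H_\mathrm{prop} = \sum_{t=1}^{T} \tfrac12\bigl(\ketbra{t}{t}_c+\ketbra{t-1}{t-1}_c - U_t\otimes\ketbra{t}{t-1}_c - U_t^\dagger\otimes\ketbra{t-1}{t}_c\bigr)$ enforces one step of $V_x$ per clock tick, and $H_\mathrm{out} = \ketbra{0}{0}_o\otimes\ketbra{T}{T}_c$ penalizes rejection. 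Each term touches at most three clock qubits and the two data qubits of a single gate, so $H$ is $5$-local, and $H\succeq 0$.

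\textbf{Completeness.} If $x\in A_\mathrm{yes}$, fix a witness $\ket{\xi}$ with $\Pr[\text{accept}]\ge c$ and form the history state $\ket{\eta_\xi} = \tfrac{1}{\sqrt{T+1}}\sum_{t=0}^{T}\ket{t}_c\otimes(U_t\cdots U_1)\ket{0^m}\ket{\xi}$. It is annihilated by $H_\mathrm{clock}$, $H_\mathrm{in}$, and $H_\mathrm{prop}$, while $\bra{\eta_\xi}H_\mathrm{out}\ket{\eta_\xi} = \tfrac{1}{T+1}\Pr[\text{reject}]\le\tfrac{1-c}{T+1}$, so $\lambda_{\min}(H)\le a:= \tfrac{1-c}{T+1}=O((1-c)/T)$.

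\textbf{Soundness.} Write $H = A + B$ with $A = H_\mathrm{out}$ and $B = H_\mathrm{clock}+H_\mathrm{in}+H_\mathrm{prop}$. First I would identify $\ker B$: inside the legal-clock subspace, conjugating by $\sum_t\ketbra{t}{t}_c\otimes(U_t\cdots U_1)$ turns $H_\mathrm{prop}$ into $\id_\mathrm{data}$ tensored with the path-graph Laplacian $\sum_{t=1}^{T}(\ket{t}-\ket{t-1})(\bra{t}-\bra{t-1})$, whose kernel is spanned by $\tfrac{1}{\sqrt{T+1}}\sum_t\ket{t}$; combined with $H_\mathrm{in}$ this gives $\ker B = \operatorname{span}\{\ket{\eta_\xi}\}$, and since $\ket{\xi}\mapsto\ket{\eta_\xi}$ is a linear isometry every unit vector of $\ker B$ equals $\ket{\eta_\xi}$ for some unit $\ket{\xi}$. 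For $x\in A_\mathrm{no}$, soundness of $V_x$ then gives $\bra{\eta_\xi}A\ket{\eta_\xi}=\tfrac{1}{T+1}\Pr[\text{reject}\mid\xi]\ge\tfrac{1-s}{T+1}$ for all unit $\ket{\psi}\in\ker B$. Next I would lower bound the gap of $B$ above $\ker B$: the path-graph Laplacian has smallest nonzero eigenvalue $2\bigl(1-\cos\tfrac{\pi}{T+1}\bigr)=\Omega(1/T^2)$, and $H_\mathrm{clock}$ (gap $\Omega(1)$) confines to the legal-clock subspace, so one application of Kitaev's geometric lemma to the pair $(H_\mathrm{clock},\,H_\mathrm{in}+H_\mathrm{prop})$ yields a gap $J=\Omega(1/T^2)$ for $B$. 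Finally I would apply the geometric lemma to $A+B$: with $B\succeq 0$ vanishing exactly on $\ker B$, $B\succeq J$ on $(\ker B)^\perp$, $\|A\|\le 1$, and $\bra{\psi}A\ket{\psi}\ge v$ for $\ket{\psi}\in\ker B$ with $v=\Omega((1-s)/T)$, one obtains $\lambda_{\min}(H)\ge b$ where $b = \Omega\bigl(\tfrac{vJ}{v+J+1}\bigr)=\Omega((1-s)/T^3)$; the $1-\sqrt{s}$ appearing in \Cref{thm:kitaev} instead of $1-s$ is an artifact of the coarser bookkeeping in Kitaev's original estimate and I would not attempt to tighten it. To obtain a genuine $5$-$\LH$ instance with promise gap $b-a=\Omega(1/T^3)$, I would first invoke \Cref{lem:amplify} to drive $1-c$ below $2^{-T}$, making $a$ negligible against $b$.

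The step I expect to be the bottleneck is soundness, and within it the two quantitative pieces: (i) the $\Omega(1/T^2)$ gap of $H_\mathrm{prop}$ above the history subspace, which needs the exact path-Laplacian spectrum together with an argument (a nested geometric-lemma application, or Kitaev's direct eigenvalue estimates) that adjoining $H_\mathrm{clock}$ and $H_\mathrm{in}$ costs only a constant factor; and (ii) the closing geometric-lemma step that folds in $H_\mathrm{out}$, where one must control the angle between $\ker B$ and the $\Omega(1/T^2)$-energy subspace of $B$ carefully enough not to lose more than one extra power of $T$ in the final bound.
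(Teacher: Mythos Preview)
The paper does not prove \Cref{thm:kitaev}; it is stated as background and attributed to~\cite{kitaev2002classical}. The only place the paper unpacks the construction is inside the proof of \Cref{claim:modifiedFK}, where it quotes Kitaev's Hamiltonian and refers to ``Equation 14.17 in~\cite{kitaev2002classical}'', revealing that Kitaev's soundness argument groups $A=H_\mathrm{in}+H_\mathrm{out}$ against $B=H_\mathrm{clock}+H_\mathrm{prop}$. Your sketch is the standard Feynman--Kitaev proof and is correct in outline, but your decomposition is the other natural one, $A=H_\mathrm{out}$ versus $B=H_\mathrm{clock}+H_\mathrm{in}+H_\mathrm{prop}$, and this shifts where the hard work sits.

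With Kitaev's grouping, $B$ restricted to the legal clock subspace is (after rotation) exactly $\id\otimes L$ with $L$ the path Laplacian, so $\mathrm{gap}(B)=\Omega(1/T^2)$ is immediate; the $\sqrt{s}$ then appears because the angle analysis must trade off $H_\mathrm{in}$ against $H_\mathrm{out}$ over history states with possibly bad ancillas. With your grouping, the angle step is cleaner (on $\ker B$ the ancillas are already correct, so you get $1-s$ rather than $1-\sqrt s$), but now you must show $\mathrm{gap}(B)=\Omega(1/T^2)$ with $H_\mathrm{in}$ inside $B$. Here ``one application of the geometric lemma'' does \emph{not} suffice: the angle between $\ker H_\mathrm{prop}$ (all history states) and $\ker H_\mathrm{in}$ satisfies $\sin^2\theta=\Theta(1/T)$, so the geometric lemma yields only $\Omega(1/T^2)\cdot\Omega(1/T)=\Omega(1/T^3)$ for the gap of $H_\mathrm{in}+H_\mathrm{prop}$, and your final bound would degrade to $\Omega((1-s)/T^4)$. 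What actually works is the direct spectrum computation you allude to: in the rotated basis the bad-ancilla sector is governed by $L+\mu\ket{0}\bra{0}$ on the clock (a path Laplacian with a Dirichlet boundary at one end), whose smallest eigenvalue is $2\bigl(1-\cos\frac{\pi}{2T+3}\bigr)=\Omega(1/T^2)$. So your route is valid and even slightly sharper, but the step you flagged as the bottleneck is genuinely the bottleneck, and the ``nested geometric lemma'' option you list would not close it---only the exact eigenvalue estimate does.
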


We will often simply write $\QMA$, $\LH$ when the parameters are unimportant or clear from context.

Next, we need the following lemmas in this work.

\begin{lemma}[Detectability lemma~\cite{anshu2016simple}] Let $\{Q_1,\hdots,Q_m\}$ be a set of projectors and $H = \sum_{i=1}^m  Q_i $. Assume that each $Q_i$ commutes with all but $g$ others. Given a state $\ket{\psi}$, define $\ket{\phi} := \prod_{i=1}^m (\id -Q_i) \ket{\psi}$, where the product is taken in any order, and let $e_\phi = \bra{\phi} H \ket{\phi}/ \|\phi\|^2$. Then
\begin{equation*}
    \left\| \phi \right\|^2 \leq \frac{1}{e_\phi/g^2 +1}.
\end{equation*}
\label{lem:DL}
\end{lemma}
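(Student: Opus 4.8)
\emph{Proof plan.} The plan is to reduce the statement to one clean operator inequality and then prove that inequality by tracking how much norm is lost as the projectors are applied one after another. Write $DL=(\id-Q_m)\cdots(\id-Q_1)$ for the product in its chosen order and $\phi=DL\,\psi$, and normalize $\|\psi\|=1$. Since each $Q_i$ is a projector, $e_\phi\|\phi\|^2=\sum_i\|Q_i\phi\|^2$, so clearing the denominator in $\|\phi\|^2\le(e_\phi/g^2+1)^{-1}$ shows the claim is equivalent to
\begin{equation*}
\|\phi\|^2+\frac{1}{g^2}\sum_{i=1}^m\|Q_i\phi\|^2\;\le\;1.
\end{equation*}
Now set $A_0=\id$ and $A_k=(\id-Q_k)A_{k-1}$, so that $A_m=DL$. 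Because $\id-Q_k$ is a projector, $\|A_k\psi\|^2=\|A_{k-1}\psi\|^2-\|Q_kA_{k-1}\psi\|^2$, and telescoping over $k$ gives the exact identity $1-\|\phi\|^2=\sum_{k=1}^m\|Q_kA_{k-1}\psi\|^2$. Thus everything comes down to showing that the ``output energy'' of $\phi$ is dominated by the ``intermediate energies'' picked up along the way:
\begin{equation*}
\sum_{i=1}^m\|Q_i\phi\|^2\;\le\;g^2\sum_{k=1}^m\|Q_kA_{k-1}\psi\|^2.
\end{equation*}

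The heart of the argument, and the step I expect to be the main obstacle, is a commutation bound of the form $\|Q_i\phi\|\le\sum_{k\in I_i}\|Q_kA_{k-1}\psi\|$ for each $i$, where $I_i$ is a set of at most $g$ neighbours of $i$ in the interaction graph (edge $\{i,k\}$ whenever $[Q_i,Q_k]\ne0$). To get it I would factor $DL=B_i(\id-Q_i)A_{i-1}$, with $B_i$ the product of the factors applied after $\id-Q_i$; since $Q_i$ commutes with all but at most $g$ of the $Q_k$, it commutes with all but at most $g$ of the factors of $B_i$, so $Q_iB_i=B_iQ_i+[Q_i,B_i]$ with $[Q_i,B_i]$ a sum of at most $g$ strings, one per non-commuting factor. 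Using $Q_i(\id-Q_i)=0$ to kill the $B_iQ_i$ part, bounding the ``passive'' projector strings by $1$ in norm, and estimating each surviving term by the intermediate energy $\|Q_kA_{k-1}\psi\|$ at the relevant location would give the bound. The delicate point is that naively expanding the remaining $[Q_i,Q_k]$ factors spawns new terms $\|Q_iA_{k-1}\psi\|$ that feed a recursion which, handled carelessly, blows the count up to $2^{g}$ rather than keeping it at $g$. I would avoid this by first greedily colouring the interaction graph with at most $g+1$ colours and regrouping $DL$ as a product of layer-projectors $\Pi_\ell=\prod_{k\in\text{colour }\ell}(\id-Q_k)$ — each a genuine projector since a colour class is mutually commuting — and then pushing $Q_i$ only past those $\Pi_\ell$ containing no neighbour of $i$; this isolates $O(g)$ layers and makes the final count come out with a single factor of $g$.

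Granting the commutation bound, the rest is Cauchy--Schwarz and double counting. From $\|Q_i\phi\|\le\sum_{k\in I_i}\|Q_kA_{k-1}\psi\|$ with $|I_i|\le g$ one gets $\|Q_i\phi\|^2\le g\sum_{k\in I_i}\|Q_kA_{k-1}\psi\|^2$; summing over $i$ and swapping the order of summation,
\begin{equation*}
\sum_{i=1}^m\|Q_i\phi\|^2\;\le\;g\sum_{k=1}^m\|Q_kA_{k-1}\psi\|^2\cdot\bigl|\{\,i:k\in I_i\,\}\bigr|\;\le\;g^2\sum_{k=1}^m\|Q_kA_{k-1}\psi\|^2,
\end{equation*}
since $k\in I_i$ forces $i$ to be a neighbour of $k$, of which there are at most $g$. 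Combining this with the telescoping identity and the reformulation above completes the proof.
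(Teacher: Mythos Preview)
The paper does not prove this lemma; it is quoted from \cite{anshu2016simple} as background. Your overall plan---telescoping $1-\|\phi\|^2=\sum_k\|Q_kA_{k-1}\psi\|^2$ and then bounding $\sum_i\|Q_i\phi\|^2$ by $g^2$ times the telescoping sum via a per-$i$ commutation estimate plus Cauchy--Schwarz and double counting---is exactly the standard proof, and the final two displayed steps are correct.

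The one genuine issue is your handling of the commutation step. Your fear of a $2^g$ blowup is unfounded, and the coloring workaround you propose to avoid it actually introduces a gap. To see the recursion stays linear: push $Q_i$ rightward through $A_m$ until it meets the largest-index neighbour $j_1>i$, giving $\|Q_i\phi\|\le\|Q_iA_{j_1}\psi\|$. Now write
\[
Q_iA_{j_1}=Q_i(\id-Q_{j_1})A_{j_1-1}=Q_iA_{j_1-1}-Q_iQ_{j_1}A_{j_1-1}.
\]
The second term satisfies $\|Q_iQ_{j_1}A_{j_1-1}\psi\|\le\|Q_{j_1}A_{j_1-1}\psi\|$, which is already a telescoping term. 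The first term is \emph{still $Q_i$ acting on a shorter product}---there is no branching into a new $Q_{j_1}$-recursion. Push $Q_i$ further right to the next neighbour $j_2$, peel off $\|Q_{j_2}A_{j_2-1}\psi\|$, and continue. After at most $g$ neighbours you reach $(\id-Q_i)$ and the remaining term vanishes. This yields directly $\|Q_i\phi\|\le\sum_{k\in I_i}\|Q_kA_{k-1}\psi\|$ with $I_i=\{\text{neighbours of }i\text{ with index}>i\}$, $|I_i|\le g$, for the \emph{given} order.

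By contrast, your coloring-and-regrouping step reorders the factors of $DL$, which changes $\phi$ itself; you would then only establish the bound for the particular layered order, not for the arbitrary fixed order the lemma asserts. Drop the coloring and use the direct push-through argument above; the rest of your proof then goes through verbatim.
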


\begin{lemma}[Quantum union bound~\cite{gao2015quantum}] Consider the same setting as in~\Cref{lem:DL}, but this time we do \emph{not} require each $Q_i$ to commute with at most $g$ others. It holds that
\begin{equation*}
    \| \phi \|^2 \geq 1- 4 \bra{\psi} H \ket{\psi}.
\end{equation*}
\label{lem:quantum-union}
\end{lemma}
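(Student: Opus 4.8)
The statement is the quantum union bound: writing $P_i := \id - Q_i$, $\ket\phi = P_m\cdots P_1\ket\psi$, $H = \sum_{i=1}^m Q_i$, and taking $\ket\psi$ a unit vector, we must show $\|\phi\|^2 \ge 1 - 4\bra\psi H\ket\psi$. The plan is a double telescoping argument over the sequential application of the $P_i$, closed off by a single use of Cauchy--Schwarz. Crucially, this should use nothing about commutation relations among the $Q_i$ (in contrast with the detectability lemma), so the order in which the product is taken is irrelevant and we simply fix one order $P_1, \dots, P_m$. Set $\ket{\phi_0} := \ket\psi$ and $\ket{\phi_k} := P_k\ket{\phi_{k-1}}$, so $\ket\phi = \ket{\phi_m}$, and introduce $a_k := \|Q_k\ket{\phi_{k-1}}\| \ge 0$ (the norm ``lost'' at step $k$) and $b_k := \|Q_k\ket\psi\| \ge 0$, noting $\sum_k b_k^2 = \bra\psi H\ket\psi$.

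First I would record two telescoping identities. Because $P_k = P_k^2 = \id - Q_k$, we get $\|\phi_k\|^2 = \bra{\phi_{k-1}}P_k\ket{\phi_{k-1}} = \|\phi_{k-1}\|^2 - a_k^2$, hence $\|\phi\|^2 = 1 - \sum_{k=1}^m a_k^2$. Similarly $\langle\psi|\phi_k\rangle = \langle\psi|\phi_{k-1}\rangle - c_k$ with $c_k := \bra\psi Q_k\ket{\phi_{k-1}}$, hence $\langle\psi|\phi\rangle = 1 - \sum_{k=1}^m c_k$; and since $Q_k$ is a projector, $|c_k| = |\langle Q_k\psi | Q_k\phi_{k-1}\rangle| \le a_k b_k$ by Cauchy--Schwarz.

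Next I would combine these through the manifestly nonnegative quantity $\| \ket\phi - \ket\psi \|^2 = \|\phi\|^2 + 1 - 2\operatorname{Re}\langle\psi|\phi\rangle$. Substituting the two identities gives $\| \ket\phi - \ket\psi \|^2 = -\sum_k a_k^2 + 2\sum_k \operatorname{Re}(c_k) \ge 0$, so $\sum_k a_k^2 \le 2\sum_k \operatorname{Re}(c_k) \le 2\sum_k |c_k| \le 2\sum_k a_k b_k \le 2\big(\sum_k a_k^2\big)^{1/2}\big(\sum_k b_k^2\big)^{1/2}$, the last step by Cauchy--Schwarz again. Cancelling the common factor $\big(\sum_k a_k^2\big)^{1/2}$ (the conclusion is trivial if it vanishes) yields $\sum_k a_k^2 \le 4\sum_k b_k^2 = 4\bra\psi H\ket\psi$, and plugging into $\|\phi\|^2 = 1 - \sum_k a_k^2$ finishes the proof.

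I do not anticipate a real obstacle: every step uses only that each $Q_k$ and $P_k$ is an orthogonal projector, never locality or bounded commutativity, so the phrase ``the product is taken in any order'' is harmless here. The only things needing a little care are the bookkeeping of the two telescopes and the sign of the cross term in $\| \ket\phi - \ket\psi \|^2$, so that the $a_k^2$ contributions land on the side that lets Cauchy--Schwarz close the loop; the inequality $\operatorname{Re}(c_k) \le |c_k|$ absorbs any $c_k$ with negative real part.
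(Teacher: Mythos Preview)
The paper does not give its own proof of this lemma; it simply cites it as a known result from \cite{gao2015quantum}. So there is nothing in the paper to compare against directly.

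Your argument is correct. The two telescopes $\|\phi_k\|^2 = \|\phi_{k-1}\|^2 - a_k^2$ and $\langle\psi|\phi_k\rangle = \langle\psi|\phi_{k-1}\rangle - c_k$ are exactly right, and combining them through the nonnegativity of $\|\ket\phi - \ket\psi\|^2$ to get $\sum_k a_k^2 \le 2\sum_k \operatorname{Re}(c_k)$, then closing with $|c_k| \le a_k b_k$ and Cauchy--Schwarz, is clean and uses only that each $Q_k$ is an orthogonal projector. This is essentially the standard proof of Gao's bound, so your write-up would serve perfectly well as the missing proof in the paper.
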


\begin{lemma}[Jordan's lemma~\cite{jordan1875essai}] Given two projectors $\Pi_1$, $\Pi_2$ acting on a $d$-dimensional complex vector space $\mathcal{H}$, there exists a change of basis such that $\mathcal{H}$ is decomposed as a direct sum of one- or two-dimensional  mutually orthogonal subspaces $\mathcal{H}= \bigoplus_i \mathcal{H}_i$, such that both the projectors leave the subspaces invariant. In other words, we can write $\Pi_1 =  \sum_{i} a_i \ket{u_i}\bra{u_i}$ and $\Pi_2 =  \sum_{i} b_i \ket{v_i}\bra{v_i}$, with $\ket{u_i},\ket{v_i} \in \mathcal{H}_i$ and $a_i,b_i  \in \{0,1\}$.
\label{lem:jordan}
\end{lemma}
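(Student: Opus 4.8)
The plan is to produce a single Hermitian operator that commutes with both $\Pi_1$ and $\Pi_2$, use its spectral decomposition to split $\mathcal{H}$ orthogonally into invariant subspaces, and then show that on each piece the pair $(\Pi_1,\Pi_2)$ can only take one of three standard forms.

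First I would set $A = \Pi_1 - \Pi_2$ (Hermitian) and $K = \Pi_1\Pi_2 - \Pi_2\Pi_1$. A direct computation gives $\{K,\Pi_1\} = \{K,\Pi_2\} = K$, hence $\{K,A\} = 0$, and since $[\Pi_1,A] = [\Pi_2,A] = -K$ we get $[\Pi_j, A^2] = [\Pi_j,A]A + A[\Pi_j,A] = -(KA+AK) = -\{K,A\} = 0$ for $j=1,2$. Because $0 \le \Pi_j \le \id$ we also have $-\id \le A \le \id$, so $A^2$ is Hermitian with spectrum in $[0,1]$ and commutes with both $\Pi_1$ and $\Pi_2$. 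I then decompose $\mathcal{H} = \bigoplus_\mu \mathcal{H}_\mu$ into eigenspaces of $A^2$, each invariant under both projectors.

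Second, I would treat each $\mathcal{H}_\mu$ by cases in the value of $\mu$. If $\mu = 0$, then $A=0$ on $\mathcal{H}_0$, so $\Pi_1 = \Pi_2$ there and any common eigenbasis gives one-dimensional blocks. If $\mu=1$, then $A$ restricts to an involution; on its $+1$ eigenspace $\bra{v}\Pi_1\ket{v} - \bra{v}\Pi_2\ket{v} = 1$ with both terms in $[0,1]$ forces $\Pi_1 = \id$, $\Pi_2 = 0$, and symmetrically on the $-1$ eigenspace — again one-dimensional blocks. If $0<\mu<1$, set $B = \Pi_1 + \Pi_2 - \id$ (Hermitian); one checks $AB = K = -BA$ and $A^2 + B^2 = \id$, so on $\mathcal{H}_\mu$ the operators $\tilde A = A/\sqrt\mu$ and $\tilde B = B/\sqrt{1-\mu}$ are anticommuting Hermitian involutions. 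I then split $\mathcal{H}_\mu$ into the $\pm1$ eigenspaces of $\tilde A$; since $\tilde B$ anticommutes with $\tilde A$ and squares to $\id$, it is an isometry swapping these two eigenspaces, so they have equal dimension, and picking an orthonormal basis $\{e_j\}$ of the $+1$ eigenspace the two-dimensional spaces $\mathrm{span}\{e_j, \tilde B e_j\}$ are mutually orthogonal and invariant under $\tilde A,\tilde B$, hence under $\Pi_1 = \tfrac12(\id + A + B)$ and $\Pi_2 = \tfrac12(\id - A + B)$. On each such block $\Pi_1$ and $\Pi_2$ are rank-one (each has trace $1$), so diagonalizing within the block yields the claimed $a_i,b_i\in\{0,1\}$ with $\ket{u_i},\ket{v_i}$ living in the same $\mathcal{H}_i$.

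The only genuinely structural point — hence the part I would be most careful about — is the claim that two anticommuting Hermitian involutions on a finite-dimensional space simultaneously block-diagonalize into $2\times 2$ Pauli-type blocks; the argument above (eigenspaces of $\tilde A$, transported by $\tilde B$) handles it, and everything else is routine operator algebra. A cosmetic alternative to the whole proof is to instead diagonalize $M = \Pi_1\Pi_2\Pi_1$ on $\mathrm{im}(\Pi_1)$, take the eigenvalue-$0$ and eigenvalue-$1$ eigenvectors as one-dimensional blocks, and pair each eigenvector $u$ with eigenvalue in $(0,1)$ with the normalized vector $\Pi_2 u$ (and separately run the same on $\ker(\Pi_1)$); I would use whichever bookkeeping turns out shortest.
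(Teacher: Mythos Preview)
The paper does not supply a proof of this lemma at all: it is stated as a classical result and cited to Jordan (1875), then used as a black box in the proof of \Cref{lem:bulk-prop}. So there is no ``paper's own proof'' to compare against.

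Your argument is a correct and complete proof. The algebra checks out: $\{K,\Pi_j\}=K$, $\{K,A\}=0$, and $[\Pi_j,A^2]=0$ all hold as you claim, and the identity $A^2+B^2=\id$ together with $AB=-BA$ is exactly what is needed to reduce each intermediate eigenspace to a representation of two anticommuting Hermitian involutions. Your handling of that last structural step (splitting by the eigenspaces of $\tilde A$ and transporting by $\tilde B$) is the standard way to see that such a pair decomposes into $2\times2$ Pauli blocks. The alternative you sketch at the end --- diagonalizing $\Pi_1\Pi_2\Pi_1$ on $\mathrm{im}(\Pi_1)$ --- is the more common textbook route and is indeed a bit shorter on bookkeeping, but your $A^2$ approach has the pleasant feature of producing the invariant decomposition of the whole space in one shot rather than piecing together $\mathrm{im}(\Pi_1)$ and $\ker(\Pi_1)$.
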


\begin{lemma}[Geometric lemma~\cite{kitaev2002classical}] 
\label{lem:geometric}
Let $A$, $B$ be nonnegative Hermitian operators and $\mathrm{g.s.}(A)$, $\mathrm{g.s.}(B)$ be their null subspaces such that the angle between them is $\theta >0$. Suppose further that no nonzero eigenvalue of $A$ or $B$ is smaller than $\gamma$. Then $A+B \geq \gamma (1-\cos \theta).$
\end{lemma}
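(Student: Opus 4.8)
The plan is to prove the Geometric Lemma by first using the spectral-gap hypothesis to replace $A$ and $B$ by the projectors onto the supports of their nonzero eigenspaces, and then invoking Jordan's lemma (\Cref{lem:jordan}) to diagonalize the resulting pair of projectors block by block.

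First I would let $\Pi_A$ be the orthogonal projector onto $\mathrm{g.s.}(A)$ (the null space of $A$) and $\Pi_A^\perp=\id-\Pi_A$ the projector onto the span of the nonzero eigenvectors of $A$, and define $\Pi_B,\Pi_B^\perp$ analogously. Since no nonzero eigenvalue of $A$ is smaller than $\gamma$, the spectral decomposition of $A$ gives $A\geq\gamma\Pi_A^\perp$, and likewise $B\geq\gamma\Pi_B^\perp$. Therefore
\[
A+B \;\geq\; \gamma\bigl(\Pi_A^\perp+\Pi_B^\perp\bigr) \;=\; \gamma\bigl(2\,\id-\Pi_A-\Pi_B\bigr),
\]
so it suffices to establish the operator inequality $\Pi_A+\Pi_B\leq(1+\cos\theta)\,\id$, where by definition of the angle between the two subspaces $\cos\theta=\|\Pi_A\Pi_B\|$.

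Next I would apply \Cref{lem:jordan} to the pair $\Pi_A,\Pi_B$, decomposing the Hilbert space as $\bigoplus_i\mathcal{H}_i$ into mutually orthogonal one- or two-dimensional subspaces invariant under both projectors, so that it is enough to verify $\Pi_A+\Pi_B\leq(1+\cos\theta)\,\id$ on each block. On a one-dimensional block each of $\Pi_A,\Pi_B$ acts as $0$ or $1$, and they cannot both act as $1$, since that would place a unit vector in $\mathrm{g.s.}(A)\cap\mathrm{g.s.}(B)$ and force $\theta=0$, contradicting $\theta>0$; hence there $\Pi_A+\Pi_B\leq\id$. On a two-dimensional block the analogous degenerate case — one projector equal to the identity on the block while the other is nonzero — again produces a shared unit vector and is excluded, so the remaining two-dimensional blocks have $\Pi_A=\ketbra{u_i}{u_i}$ and $\Pi_B=\ketbra{v_i}{v_i}$ genuinely rank one, with overlap $c_i:=|\braket{u_i}{v_i}|=\bigl\|\Pi_A\Pi_B\big|_{\mathcal{H}_i}\bigr\|\leq\|\Pi_A\Pi_B\|=\cos\theta$; the eigenvalues of $\Pi_A+\Pi_B$ restricted to $\mathcal{H}_i$ are $1\pm c_i\leq 1+\cos\theta$. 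Assembling the blocks yields $\Pi_A+\Pi_B\leq(1+\cos\theta)\,\id$, and hence $A+B\geq\gamma(1-\cos\theta)\,\id$.

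The only delicate point is the bookkeeping in the Jordan decomposition: one must check that the degenerate two-dimensional blocks (a projector acting as $0$ or as $\id$ there) are either harmless or ruled out by $\theta>0$, and that the supremum of the block overlaps $c_i$ is exactly $\cos\theta=\|\Pi_A\Pi_B\|$ rather than something larger. The first reduction step and the $1\pm c_i$ computation on a $2\times 2$ block are routine.
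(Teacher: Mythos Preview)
Your proof is correct and is essentially the standard argument from Kitaev's book~\cite{kitaev2002classical}. The paper itself does not supply a proof of this lemma; it merely states it with a citation, so there is nothing to compare against beyond noting that your reduction to $\Pi_A+\Pi_B\leq(1+\cos\theta)\,\id$ followed by the Jordan block analysis is exactly the classical route.
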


\subsection{Fault tolerance}

When defining our model in \Cref{sec:model}, we introduced perturbations to make the tensor network injective.
This ensures the existence of an associated Hamiltonian and avoids the model from becoming too powerful~\cite{schuch2007computational}.
Remarkably, the perturbations can be interpreted as undesired Pauli errors, see \Cref{claim:gs}.
These `errors' disturb our computation, leading to a degradation of the output, just as they would in a physical device.
We can remedy this problem by substituting the circuit~$W$ with a fault-tolerant version of itself~$\tilde{W}$, thereby guaranteeing robustness against the errors.
In this section we will briefly summarize some results of fault-tolerant quantum computing that we require for our construction.

\subsubsection{Quantum error correcting codes}
Quantum error correcting codes are subspaces of the full Hilbert space of $n$ bits.
Each quantum code has three parameters:
The number of logical qubits $k$ tells us that the code protects a state vector of $k$ qubits.
The number of physical qubits $n$ refers to the number of qubits into which the $k$ logical qubits are being encoded.
Finally, the distance $d$ refers to the minimum number of single-qubit Pauli errors that are needed to map one encoded state onto another.
In particular, a quantum code of distance $d$ can correct any error acting on less than $d/2$ of the physical qubits.
We refer to Ref.~\cite{terhal2015quantum} for details.

\paragraph{CSS codes} The most studied class of quantum codes are Calderbank-Shor-Steane (CSS) codes, which are specified by an $ r_X \times n$ matrix $H_X$, whose rows represent $X$-checks and a $k \times n$ matrix $L_X$ whose rows represent Pauli $X$-logicals. The $Z$ checks are $r_Z \times n$ matrix $H_Z = \ker \begin{pmatrix}
    H_X\\
    L_X
\end{pmatrix}$ and $Z$-logicals are $k \times n$ matrix $L_Z$. 
The codewords in the logical $Z$ basis are
\begin{align}
    \ket{v}_L  &= \sum_{ u \in \mathbb{F}_2^{r_X}} \ket{uH_X + v L_X} \qquad \text{(strings of 0 and 1)}
        \label{eq:CSScodeword}
\end{align}
For CSS codes, qubit-wise CNOTs between two code blocks apply logical CNOTs between corresponding pairs of logical qubits. Indeed,
\begin{align*}
    \ket{v}_L \ket{w}_L = &\sum_{ u \in \mathbb{F}_2^{r_X}} \ket{u H_X + v L_X} \sum_{u' \in \mathbb{F}_2^{r_X}} \ket{u' H_X + w L_X}\\
    \longrightarrow  &\sum_{ u \in \mathbb{F}_2^{r_X}} \ket{u H_X + v L_X} \sum_{u' \in \mathbb{F}_2^{r_X}} \ket{u' H_X + (v+w) L_X}\\
     = & \ket{v}_L \ket{v+w}_L.
\end{align*}

We note that the existence of quantum codes does not guarantee that quantum computing can be made robust against noise. Manipulating the encoded states via an error prone process leads to errors spreading and it is this spread of errors that needs to be controlled.

\subsubsection{Quantum fault tolerance}
In a seminal result, Shor showed that when any component of a quantum circuit, such as state preparation, gates and measurements, is replaced by a fault-tolerant version, it is possible to reduce errors under the assumption that the error rate per time step is polylogarithmically small in the length of the computation.
Aharonov and Ben-Or~\cite{aharonov1999fault} and Knill, Laflamme and Zurek~\cite{knill1998resilient} extended Shor's approach with a concatenation scheme.
The main idea is as follows:
At the top level each qubit is encoded into a quantum code~$\mathcal{C}_1$ using~$n_1$ physical qubits and with distance $d_1$.
Next, each physical qubit of~$\mathcal{C}_1$ is encoded further into a second code~$\mathcal{C}_2$ using~$n_2$ physical qubits and with distance~$d_2$.
This way, we have effectively a new code using~$n_1 n_2$ physical qubits and which has distance $d_1  d_2$.
Assuming that $\mathcal{C}_1$ and $\mathcal{C}_2$ come with a fault-tolerant set of circuit components, so does the concatenated code.
Crucially, taking $\mathcal{C}_1 = \mathcal{C}_2$ the failure probability of any circuit component in the top level is now bounded by $c \left(c p^2 \right)^2 = c^3 p^4$.
Continuing this process, if we concatenate the same code~$a$ times, the probability of failure of any top level component is bounded by $c^{-1}(c p)^{2^a}$.
Let $s^a$ be the the circuit size at the $a$th level of concatenation.
While the size of the circuit grows exponentially, the error is reduced double exponentially.
Hence, fixing some desired error rate $\epsilon = c^{-1}(c p)^{2^a}$ leads to $s^a = \Theta\left(\operatorname{polylog}\left( \frac{1}{\epsilon}\right)\right)$.
In summary, concatenation allows us to simulate a quantum circuit with component failure rate bounded by an arbitrarily small $\epsilon$ using components with error rate bounded by some constant error rate $p$, as long as the initial error rate is below a threshold value set by the combinatorial factor $c$.

\begin{theorem}[\cite{aharonov1999fault}, Theorem 12]\label{thm:quantumFT}
There exists a noise threshold $\eta_c > 0$ such that for any $\eta < \eta_c$, $\varepsilon > 0$ the following holds. For any $n$-qubit quantum circuit~$C$ with $s$ gates, $\ell$ locations, and depth $D$, there exists a quantum circuit $\tilde{C}$ of size $s \polylog(\ell/\varepsilon)$ (no measurements or classical operations are required) and depth $D \polylog(\ell/\varepsilon)$ operating on $n \polylog(\ell/\varepsilon)$ qubits such that in the presence of local depolarizing noise with error rate $\eta < \eta_c$, the encoded output of~$\tilde{C}$ is $\varepsilon$-close to that of~$C$.
\end{theorem}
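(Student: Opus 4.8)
The plan is to follow the code-concatenation strategy outlined above, making the recursion and its error analysis precise. First I would fix a small quantum code $\mathcal{C}$ (for concreteness the $[[7,1,3]]$ Steane code, with magic-state injection supplying the non-Clifford gate) together with a \emph{fault-tolerant gadget set}: fault-tolerant preparation of $\ket{0}_L$ and of the ancilla states needed for a universal gate set, transversal implementations of the Clifford gates, a fault-tolerant (Shor/Steane/Knill-type) error-correction gadget, and a fault-tolerant logical measurement. The property I need from each gadget is the standard ``errors don't spread'' guarantee: a single fault anywhere inside it leaves at most one corrupted physical qubit per code block, so that after the trailing error-correction step the encoded information is unharmed unless \emph{two} independent faults occur inside the gadget.

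Second, I would set up the one-level simulation. Given a circuit $C$, replace every qubit by a block of $n_1=7$ physical qubits, every gate location by the corresponding gadget, and insert an error-correction gadget after each one. Writing $p$ for the physical fault rate, the key lemma — proved by a union bound over \emph{pairs} of fault locations inside an ``extended rectangle'' (a gadget together with the adjacent error-correction blocks) — is that the effective logical fault rate of the simulated circuit is at most $c\,p^{2}$, where $c$ depends only on $\mathcal{C}$ and the gadget sizes (it counts the pairs of locations that can be jointly malignant). For local depolarizing (Pauli) noise the analysis reduces to counting fault patterns; coherent deviations are handled in the usual way, by expanding errors in the Pauli basis and tracking supports, exactly as in \cite{aharonov1999fault}.

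Third, I would iterate. Concatenating $\mathcal{C}$ with itself $a$ times and applying the one-level lemma recursively gives a level-$a$ logical fault rate bounded by $\tfrac{1}{c}(cp)^{2^{a}}$, which decays doubly exponentially provided $p<\eta_{c}:=1/c$. Each level multiplies the qubit count, the gate count and the depth by a constant (the gadget's size and depth), so after $a$ levels every overhead is $2^{O(a)}$. To make the total error over all $\ell$ locations at most $\varepsilon$ it suffices that the per-location logical fault rate be $\le\varepsilon/\ell$, i.e.\ $(cp)^{2^{a}}\le c\varepsilon/\ell$, which holds for $2^{a}=\Theta(\log(\ell/\varepsilon))$ and hence $a=\Theta(\log\log(\ell/\varepsilon))$; then $2^{O(a)}=\polylog(\ell/\varepsilon)$, yielding size $s\,\polylog(\ell/\varepsilon)$, depth $D\,\polylog(\ell/\varepsilon)$ and qubit count $n\,\polylog(\ell/\varepsilon)$, with the output being the encoded output of $C$. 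The ``no measurements or classical operations'' clause follows by realizing every error-correction step coherently — swapping the syndrome into fresh ancillas and applying the correction (and the magic-state correction) as a controlled-Pauli/Clifford unitary — and deferring any remaining measurement to the very end.

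The main obstacle is the rigorous error-analysis step: showing that the one-level simulation genuinely sends the fault rate from $p$ to $O(p^{2})$. This demands a gadget-by-gadget accounting of how faults conspire inside an extended rectangle, a proof that the designed gadgets are fault-tolerant in the single-fault sense, and control of error propagation through the non-Clifford part of the gate set \emph{without} intermediate measurement-and-feedback — precisely the technically heavy core of \cite{aharonov1999fault}; everything else is bookkeeping on top of it.
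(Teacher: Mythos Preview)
The paper does not prove this theorem; it is quoted as a background result from \cite{aharonov1999fault} and is preceded only by an informal paragraph sketching the concatenation idea (the same doubling-exponent recursion $c^{-1}(cp)^{2^a}$ you describe). Your proposal is a correct outline of the standard Aharonov--Ben-Or argument and matches that sketch, so there is nothing to compare against here beyond noting that your write-up is more detailed than what the paper itself provides.
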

The theorem above does assume all-to-all connectivity, i.e.\ gates can be applied on arbitrary sets of qubits.
We can also constrain the circuit to only operate locally on a $d$-dimensional grid of qubits, so that two qubit gates are only applied between neighbours on the grid.
Note that an arbitrary circuit can be turned into a $d$-dimensional circuit by introducing SWAP gates and ancilla qubits, leading to the following result for any $d \geq 1$.

\begin{corollary}[\cite{aharonov1999fault}, Theorem 13]\label{thm:quantumFTdDim}
There exists a noise threshold $\eta_c > 0$ such that for any $\eta < \eta_c$, $\varepsilon > 0$, and $d\geq 1$ the following holds. For any $d$-dimensional $n$-qubit quantum circuit~$C$ with $s$ gates, $\ell$ locations, and depth $D$, there exists a $d$-dimensional quantum circuit $\tilde{C}$ of size $s \polylog(\ell/\varepsilon)$ (no measurements or classical operations are required) and depth $D \polylog(\ell/\varepsilon)$ operating on $n \polylog(\ell/\varepsilon)$ qubits such that in the presence of local depolarizing noise with error rate $\eta < \eta_c$, the encoded output of~$\tilde{C}$ is $\varepsilon$-close to that of~$C$.
\end{corollary}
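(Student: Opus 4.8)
The plan is to reduce to Theorem~\ref{thm:quantumFT} (the all-to-all version) by observing that once the input circuit $C$ is itself $d$-dimensional, the concatenated-code construction underlying Theorem~\ref{thm:quantumFT} can be executed with geometrically local gates at only a $\polylog(\ell/\varepsilon)$ extra cost. Recall that that construction encodes each logical qubit of $C$ into a block of $m = \polylog(\ell/\varepsilon)$ physical qubits (the $a$-fold concatenation of a fixed base code, $a = \Theta(\loglog(\ell/\varepsilon))$) and replaces each component of $C$ by a fault-tolerant gadget acting on $O(1)$ blocks, interleaved with error-correction gadgets on single blocks together with $O(1)$ ancilla blocks. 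First I would fix a $d$-dimensional layout: place the block for logical qubit $v$ in a bounded patch near the position of $v$ in $C$, leaving room for its ancilla blocks in adjacent patches. Since $C$ is $d$-dimensional, every two-qubit gate of $C$ acts on logically adjacent qubits, so the corresponding gadget occupies two adjacent patches, i.e.\ a region of bounded (namely $\poly(m) = \polylog(\ell/\varepsilon)$) diameter.

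Second, I would implement each gadget with nearest-neighbor gates inside its region. A gadget originally specified with all-to-all connectivity among its $O(m)$ qubits is laid out on a $d$-dimensional subgrid and executed using SWAP networks (odd--even transposition routing) to bring interacting qubits adjacent, incurring a $\poly(m) = \polylog(\ell/\varepsilon)$ multiplicative overhead in the gadget's depth and size. The essential point is that these routing SWAPs must be folded into the \emph{definition} of the gadget before the threshold analysis is invoked, so that a faulty SWAP is counted among the faults the gadget is certified against. One then re-checks the standard gadget property used in the proof of Theorem~\ref{thm:quantumFT} (the exRec ``good gadget'' condition: at most one faulty location in a gadget produces an output correctable by the next-level decoder): this still holds because a single faulty SWAP propagates to at most two qubits within a block, and the concatenated code corrects a geometrically growing number of such localized faults level by level. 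The combinatorial threshold argument then goes through unchanged with a possibly smaller constant threshold $\eta_c' \le \eta_c$, and the $\polylog$ blow-ups from routing compose with those already in Theorem~\ref{thm:quantumFT} to give a $d$-dimensional $\tilde C$ of size $s\,\polylog(\ell/\varepsilon)$, depth $D\,\polylog(\ell/\varepsilon)$, on $n\,\polylog(\ell/\varepsilon)$ qubits. Elimination of measurements and classical control is inherited verbatim from Theorem~\ref{thm:quantumFT} and is already local.

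The main obstacle is precisely the interaction between geometric routing and the fault-tolerance threshold: the naive route---take the output of Theorem~\ref{thm:quantumFT} and post-compose with a SWAP network---fails, because those SWAPs are unprotected error locations. Making the argument work requires absorbing the routing into the gadgets and verifying that error propagation through the bounded-depth, bounded-size routing inside a block never escapes the correction radius at any concatenation level. The case $d=1$ is the most delicate, since even a transversal two-block gate or a single-block syndrome-extraction subcircuit needs $\Theta(m)$-depth routing to align distant qubits (and logically adjacent data blocks are only at distance $O(m)$, not immediately adjacent); but $m = \polylog(\ell/\varepsilon)$ keeps all such overheads polylogarithmic, and because $C$ is assumed $d$-dimensional no long-range routing between \emph{logical} qubits is ever needed. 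As an alternative that avoids routing altogether, one could take the base code to be a constant-size, $d$-locally-implementable code tiled across $d$ dimensions, at the price of a worse constant threshold; I would nonetheless present the SWAP-routing argument as the default, since it is agnostic to the base code and transparently tracks the parameters.
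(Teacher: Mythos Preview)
The paper does not actually give a proof of this corollary: it simply cites \cite{aharonov1999fault}, Theorem~13, preceded by the one-sentence remark that ``an arbitrary circuit can be turned into a $d$-dimensional circuit by introducing SWAP gates and ancilla qubits.'' Your proposal is therefore not competing with a proof in the paper but rather supplying the argument the paper defers to the reference.

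That said, your sketch is substantially correct and, in particular, you identify the key subtlety the paper's one-liner glosses over: one cannot simply take the all-to-all output of Theorem~\ref{thm:quantumFT} and append a SWAP network, because those SWAPs would be unprotected fault locations outside any gadget. Your remedy---folding the routing SWAPs into each gadget \emph{before} re-running the threshold analysis, and checking that a single faulty SWAP stays within the correction radius at every concatenation level---is exactly the right move and is what the Aharonov--Ben-Or proof does. Your treatment of the $d=1$ case and the observation that the assumption ``$C$ is $d$-dimensional'' eliminates long-range \emph{logical} routing are both on point. The only thing worth tightening is the quantitative bookkeeping: the routing overhead per gadget is $\poly(m)$ in size and depth with $m=\polylog(\ell/\varepsilon)$, but since the threshold constant now depends on the (fixed) gadget size after routing, you should be explicit that the new combinatorial factor $c'$ is still an absolute constant (depending on $d$ and the base code, not on $\ell$ or $\varepsilon$), so that the double-exponential suppression survives and a genuine constant threshold $\eta_c'>0$ exists.
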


\section{Soundness properties of the parent Hamiltonian}\label{sec:soundness}

We have seen that the ground state of the parent Hamiltonian $H_\mathrm{parent} = H_\mathrm{in} + H_\mathrm{prop}$ encodes a quantum computation with stochastic depolarizing noise. In this section,  we investigate how robust this encoding is. In particular, we consider combinatorial and low-energy states $H_\mathrm{parent}$ and show that such states of sufficiently low energy density still encode the quantum computation, but now with \emph{adversarial} noise. We define the adversarial computation model in~\Cref{sec:adv-noise}. We prove the main soundness theorems for combinatorial states in~\Cref{subsec:combsound} and low-energy states in~\Cref{subsec:exposound} (with lemmata in~\Cref{sec:deferred-proofs}). We prove a spectral gap lower bound for the parent Hamiltonian in~\Cref{subsec:spectral-gap}.
Finally in~\Cref{sec:advFT} we describe a simple parallel repetition trick to achieve (classical and quantum) fault tolerance against any inverse-polynomial fraction of adversarial noise per layer.

\subsection{Adversarially noisy computation}\label{sec:adv-noise}

Recall that we consider a circuit $W$ with initial state of the form $\ket{0^a}\ket{\xi}$, where $\ket{\xi}$ is any $(n-a)$-qubit representing the witness coming from the $\QMA$ prover (for $\BQP$ computations, $\ket{\xi}$ would be empty). Our starting point is the intuition that violated terms in $H_\mathrm{parent}$ should correspond to faults in the circuit. Informally, violated terms in~$H_\mathrm{in}$ should correspond to errors at a set of locations, denoted $S_0$, in the qubit initialization step, violated terms in the first layer of~$H_\mathrm{prop}$ should correspond to gate errors at locations, denoted $S_1$, in the circuit's first layer, and so on. The rest of this section is to make this connection between violated Hamiltonian terms and gate faults rigorous. However, these gate faults are adversarial in the sense that the faulty locations are chosen arbitrarily by the adversary. 
Thus, let us define the notion of adversarially noisy computations.

\begin{definition} 
Suppose $S = \{S_0,\hdots,S_D\}$, where $S_\ell \subseteq [n]$ for $0\leq \ell \leq D$, is a set of locations in a depth-$D$ $n$-qubit circuit. We define $\err(S)= \{ \Vec{E}  \in \mathcal{P}^{\otimes n(D+1)}: \loc(\Tilde{E}_\ell) \subseteq S_\ell, \, 0 \leq \ell \leq D  \}$ to be the set of Pauli errors supported within the set of locations $S$.
\end{definition}

\begin{definition}[Adversarial computations]\label{def:adversarial-noise} For any sets of locations $S_\ell \subseteq [n]$, for $0 \leq \ell \leq D$ in the circuit $W$, a state $\ket{\psi}$ is said to be an adversarial computation at locations $S=\{S_0,S_1,\hdots,S_D\}$ if 
\begin{align*}
    \ket{\psi} &\in  \mathrm{adv}(W,S) \triangleq \operatorname{span}\{\tilde{E}_{D} W_D  \hdots  \tilde{E}_{1} W_1  \tilde{E}_{0}\ket{0^a}\ket{\xi}: \forall \ket{\xi}, \Vec{E} \in \err(S) \}.
\end{align*}
We consider adversarial computations such that at most $\varepsilon n$ adversarial errors are present in the circuit.
In particular, we
say a state $\ket{\psi}$ is an $\varepsilon$-noisy state if $$\ket{\psi} \in \mathrm{adv}_\varepsilon (W) \triangleq \operatorname{span}\{\mathrm{adv}(W,S) : \sum_{\ell=0}^{D} |S_\ell| \leq \varepsilon n \} .$$ 
A mixed state $\rho$ is $\varepsilon$-noisy if it is a convex combination of $\varepsilon$-noisy pure states.
\end{definition}

Our main theorems are the following soundness results.

\begin{theorem}[Soundness]
\label{thm:energy} Suppose the depth $D = O(\log n)$ and consider any injectivity parameter $\delta= O(D^{-0.51})$.
For any state $\ket{\psi}$ with energy density $\frac{\delta^{200D}}{D+1}$ with respect to $H_\mathrm{parent}$, the reduced $\psi_\mathrm{out}$ in the output column is $\frac{1}{10}$-close in trace distance to a $400\delta^{2}D$-noisy mixed state.
\end{theorem}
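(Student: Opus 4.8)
The plan is to establish Theorem~\ref{thm:energy} in three stages, following the natural flow ``low energy $\Rightarrow$ low-energy under the detectability lemma $\Rightarrow$ combinatorial-like structure $\Rightarrow$ adversarial computation,'' but taking care to handle the fact that a global energy bound does not, by itself, localize the violations. First I would apply the detectability lemma (\Cref{lem:DL}) to the state $\ket{\psi}$ with respect to the projectors obtained from the (rescaled) parent Hamiltonian terms: since each $h_U$ acts on $8$ qubits and overlaps with only $O(1)$ other terms in the brickwork layout, $g = O(1)$, so the post-DL state $\ket{\phi} = \prod_U (\id - \Pi_{h_U})\ket{\psi}$ satisfies $\|\phi\|^2 \geq 1 - O(e_\psi/g^2)$, and we want to replace $\ket{\psi}$ by $\ket{\phi}$ at the cost of a trace distance $O(\sqrt{e_\psi})$. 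With $e_\psi = \delta^{200D}/(D+1)$ this cost is negligible compared to the target $\tfrac{1}{10}$. The point of passing to $\ket{\phi}$ is that it is (almost) in the simultaneous kernel of all propagation terms, i.e.\ it is (close to) a valid tensor-network state; equivalently, applying the inverse map $V^\dagger$ from~\eqref{eq:undounit} or locally inverting $\Lambda$, $\ket{\phi}$ should look like $\sum_{\vec P} \delta^{|\vec P|}\ket{\Phi_{\vec P}}\otimes(\text{something})$ on a large fraction of layers.

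Second, I would set up the correspondence between residual energy and adversarial fault locations layer by layer. The idea, echoing the introduction, is that each propagation term $h_U$ enforces the gate $U$, so a term that is violated ``by amount $\epsilon_U$'' can be absorbed as a Pauli fault of weight $1$ at that circuit location, possibly with an amplitude penalty controlled by the spectral structure of $h_U$ (the geometric lemma, \Cref{lem:geometric}, gives that nonzero eigenvalues of $h_U$ are bounded below by some $\gamma = \gamma(\delta) = \delta^{O(1)}$). Concretely I would (i) use Jordan's lemma / the characterization of $\mathrm{g.s.}(h_U)$ from \Cref{subsec:bypass} to write a general low-energy state on the $8$ qubits of $h_U$ as a superposition of a ground-space vector (``gate applied correctly'') and an orthogonal part, and (ii) argue that the orthogonal part, after teleportation, is supported on states where a nontrivial Pauli has been inserted at that location. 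Summing the energies, a state of energy density $e_\psi$ has total energy $\leq e_\psi \cdot m$, and since $\gamma^{-1} = \delta^{-O(1)}$ the total ``Pauli-fault weight'' one must introduce is at most $\sim e_\psi m/\gamma = \delta^{200D} \cdot m /(D+1)\delta^{O(1)}$. The circuit $\tilde W$ has $m = \Theta(nD)$ terms, so this is $\approx \delta^{200D - O(1)} n$, which for $\delta = O(D^{-0.51})$ and $D = O(\log n)$ is much smaller than the budget $400\delta^2 D \cdot n$ — in fact it leaves almost all of the budget unused, consistent with the statement that we are being very generous. I would then package the per-location faults into sets $S_\ell$ and show the resulting state lies in $\mathrm{adv}(\tilde W, S)$; the mixed-state conclusion comes from decohering over which locations were hit (the ``superposition over adversarial errors'' is handled by taking a convex combination after measuring a suitable register, or simply because $\mathrm{adv}_\varepsilon$ is defined via a span and then we trace out).

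Third, I would translate back to the \emph{output column} reduced state $\psi_\mathrm{out}$: the adversarial-computation structure is a statement about the full tensor-network state, and I would trace out all EPR/bulk registers. Since the bulk EPR registers record the errors, tracing them out exactly realizes the classical mixture over error patterns $\vec E \in \err(S)$, i.e.\ $\psi_\mathrm{out}$ is a convex combination of $\tilde E_D W_D \cdots \tilde E_0 \ket{0^a}\ket\xi$ states with $\sum_\ell |S_\ell| \leq 400\delta^2 D$. The accumulated trace-distance errors are: $O(\sqrt{e_\psi})$ from the detectability-lemma step, plus $O(\sqrt{e_\psi}/\gamma)$-type terms from approximating each low-energy block by (ground-space $\oplus$ single-Pauli-fault), plus normalization corrections of order $\delta^{O(1)}$ times small quantities; all of these are $\ll \tfrac{1}{10}$ for the stated parameter regime, so I would be somewhat cavalier about the exact constants and exponents (the $\delta^{200D}$ and $400$ are clearly slack).

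The main obstacle I anticipate is step two: converting a \emph{global} energy bound into \emph{local} structure without an a-priori bound on how many terms are violated. A state with tiny energy density could in principle spread a small total energy as a minuscule violation across \emph{every} term, so I cannot just say ``only $\varepsilon n$ terms are violated.'' The resolution I would pursue is the one hinted at by the detectability lemma: after applying $\prod(\id-\Pi_{h_U})$, the state is genuinely in the kernel of the propagation terms \emph{up to a globally small error}, and then I characterize exactly-kernel states (via the ground-space description in \Cref{subsec:bypass} and \Cref{claim:gs}, which gives a clean PEPS form) and perturb. Making the perturbation theory quantitative — showing that an $\ell_2$-error $\sqrt{e_\phi}$ away from the kernel can be realized by inserting Paulis of total weight $O(\sqrt{e_\phi}\cdot D/\gamma)$ while controlling the $\delta$-dependent normalization factors layer by layer (there are $D$ layers, each contributing a factor like $\delta^{-O(1)}$, which is where the $\delta^{-\Theta(D)}$ slack in $e_\psi$ gets eaten) — is the technical heart, and I would expect to need the geometric lemma and a careful induction on layers, with the depth restriction $D = O(\log n)$ ensuring $\delta^{-\Theta(D)} = \poly(n)$ stays manageable.
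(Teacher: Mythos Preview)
Your approach has a genuine gap at the first step, and it leads you away from the mechanism the paper actually uses. Applying the detectability-lemma operator $\prod_U(\id-\Pi_{h_U})$ once does \emph{not} put the state into (or near) the simultaneous kernel of all the $h_U$'s: after the sweep, $\ket{\phi}$ is annihilated only by the last projector applied, and the quantum union bound (which is what gives the lower bound $\|\phi\|^2\ge 1-4\bra\psi H\ket\psi$ you quote, not \Cref{lem:DL}) says nothing about the other terms. Getting a state genuinely in the ground space would require either iterating the DL operator many times---whose convergence is governed by the spectral gap, which in this paper is proved \emph{after} and \emph{using} the same machinery as \Cref{thm:energy}---or an independent gap bound you do not have. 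So ``characterize exactly-kernel states and perturb'' never gets off the ground.

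The paper's route is quite different from ``global energy $\Rightarrow$ combinatorial-like structure.'' It works entirely in the rotated basis $\ket{\psi'}=V^\dagger\ket\psi$ and exploits a \emph{backward lightcone propagation}. First, a Markov split: with threshold $\alpha=\delta^{50D}$, at most $\varepsilon n/\alpha$ terms are ``strongly violated'' (energy $>\alpha$); these become the adversarial fault locations $S$. The remaining terms are ``slightly violated,'' and three structural lemmas handle them: (i) for each last-layer term $h_j^{(D)}$, low energy forces the rotated marginal $\psi'^{(D)}_j$ to be close to the reference state $\ket{\phi_0}\propto \ket{\Phi_I}+\delta\sum_p\ket{\Phi_p}$; (ii) a bulk-propagation lemma: if $\psi'^{(\ell+1)}_{i,j}$ is $\eta$-close to $\ket{\phi_0}^{\otimes 2}$ and $h^{(\ell)}_{i,j}$ has energy $\le\alpha$, then $\psi'^{(\ell)}_{i,j}$ is $(\eta/\delta^8+\alpha/\delta^{16})$-close to $\ket{\phi_0}^{\otimes 2}$; (iii) a ``robust teleportation of $H_\mathrm{in}$'' lemma that, once the first layer is close to $\ket{\phi_0}$, transfers the input constraint to the output column. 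Iterating (ii) from layer $D$ back to layer $1$ along the forward lightcone of each input qubit is precisely where the $\delta^{-\Theta(D)}$ blowup enters (an amplification of $\delta^{-16}$ per layer), and the number of ``bad lightcones'' is bounded by $2^D\cdot\varepsilon n/\alpha$. Only after this structure is established does a Markov-inequality truncation on the Pauli weight in the EPR registers (your step three) give the $400\delta^2 D$-noisy mixed state with trace-distance loss $1/10$.

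Your intuition that the proof is ``induction on layers with $\delta^{-O(1)}$ loss per layer'' and that the final step is a tail truncation is correct, but the per-layer step is not a generic Jordan/geometric-lemma argument on individual $h_U$'s; it is the specific computation that $\bra{\phi_0}^{\otimes 2}V^\dagger h_U V\ket{\phi_0}^{\otimes 2}$ is again a nice local operator with $\ket{\phi_0}^{\otimes 2}$ as its unique ground state---this is what lets information about one layer constrain the previous one. That calculation (and its robust version via Jordan's lemma between two specific projectors) is the technical heart you are missing.
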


We also prove a ``combinatorial'' version.

\begin{theorem}[Combinatorial soundness]
\label{thm:combin} There exists a constant $\varepsilon_0$ such that the following holds. Consider any injectivity parameter $\delta = O(D^{-0.51})$ and any $10\delta \sqrt{D}< \varepsilon < \varepsilon_0$. Then for any state $\ket{\psi}$ that satisfies all but $\frac{\varepsilon}{D+1}$ fraction of terms in $H_\mathrm{parent}$, the reduced state~$\psi_\mathrm{out}$ in the output column is $e^{-99n}$-close in infidelity to an $8\varepsilon$-noisy mixed state.
\end{theorem}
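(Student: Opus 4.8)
\textbf{Proof proposal for Theorem~\ref{thm:combin}.}

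The plan is to follow the intuition that a violated Hamiltonian term signals a gate fault at that location, and to upgrade this from a heuristic into a structural decomposition of the state. First, I would set up the layered structure: partition the terms of $H_\mathrm{parent}$ into the input block $H_\mathrm{in}$ and the $D$ propagation blocks $H_\mathrm{prop}^{(\ell)}$, one per circuit layer. If a state $\ket{\psi}$ satisfies all but an $\frac{\varepsilon}{D+1}$ fraction of the $m$ terms, then by averaging there is a choice of location sets $S = \{S_0,\dots,S_D\}$ with $\sum_\ell |S_\ell| = O(\varepsilon n)$ such that $\ket{\psi}$ is frustration-free on every term of $H_\mathrm{parent}$ \emph{not} supported on $S$. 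The heart of the argument is then a characterization lemma: the subspace of states that are ground states of all propagation/input terms away from a set of locations $S$ is contained (up to a small correction controlled by $\delta$) in $\mathrm{adv}(W, S')$ for a set $S'$ not much larger than $S$. I would prove this layer by layer, moving from the input column rightward: using the ground-space characterization already given in the excerpt (a ground vector of $h_U$ has the form $\sum_{\vec P} \delta^{|\vec P|}\Tr[M P_1 U P_2]\ket{\Phi_{\vec P}}$, i.e. it is $\Lambda^{\otimes 4}$ applied to $\ket{\psi_1}\ket{\Phi_U}\ket{\psi_2}$), being frustration-free on a gate term forces the encoded state just after that gate to equal $U$ applied to the encoded state just before it, modulo the EPR-recorded Pauli. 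Composing these constraints across a full layer where no term is violated shows the encoded state propagates as $W_\ell \tilde P_\ell$ exactly; on the columns where terms \emph{are} violated, we have no constraint, which is precisely the freedom to insert an arbitrary adversarial Pauli $\tilde E_\ell$ at those locations. Running this from $\ell = 0$ to $D$ yields $\psi_\mathrm{out} \in \mathrm{adv}(W,S)$ and hence, by the counting bound on $\sum|S_\ell|$, that $\psi_\mathrm{out}$ is $O(\varepsilon)$-noisy.

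The subtlety that forces the quantitative ($e^{-99n}$-infidelity, the $10\delta\sqrt D < \varepsilon$ hypothesis, and the $8\varepsilon$ rather than $\varepsilon$) rather than exact statement is the stochastic $\delta$-noise that is baked into the construction even for the true ground state. A frustration-free state of $H_\mathrm{parent}$ away from $S$ is not literally $W$ applied to an input with adversarial faults at $S$; it carries, in addition, the i.i.d.\ depolarizing cloud of strength $O(\delta^2)$ per wire from Claim~\ref{claim:gs}. So I would argue in two steps: (i) show the reduced output state is close to $V$-conjugate (using the undo-unitary $V$ of \eqref{eq:undounit}) of a state of the form $\big(\sum_{\vec P}\delta^{|\vec P|}\ket{\Phi_{\vec P}}\big) \otimes (\text{something adversarial at }S)$, exactly as in the ground-state case but with the location set $S$ free; (ii) absorb the stochastic cloud into the adversarial budget. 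For step (ii), the point is that with $\delta \sqrt D < \varepsilon/10$, the probability that the i.i.d.\ noise corrupts more than, say, $2\varepsilon n$ total wires across all $D$ layers is exponentially small in $n$ (a Chernoff bound on $nD$ independent events each of probability $\approx \delta^2$, with mean $\delta^2 nD \ll \varepsilon n$), and conditioning on the complementary good event changes the state by $e^{-\Omega(n)}$ in trace distance. On the good event, the stochastic faults are themselves just extra Pauli errors on $\le 2\varepsilon n$ additional locations, which merge with the $S$-faults to give total budget $\le (1+\text{const})\varepsilon \le 8\varepsilon$; renaming, $\psi_\mathrm{out}$ is $8\varepsilon$-noisy up to $e^{-99n}$ infidelity. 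The constant $\varepsilon_0$ is whatever is needed to keep all these budgets below $1$ and to keep the $\Lambda$-perturbation estimates (singular values in $[\delta,1]$) from blowing up.

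The main obstacle I anticipate is the layer-by-layer propagation lemma when terms \emph{are} violated: one must be careful that violating a term does not destroy the ability to read off a well-defined encoded state on the columns adjacent to the violation, and that the ``arbitrary $M$'' freedom in the ground-space characterization assembles coherently across a layer into an honest Pauli-times-unitary action rather than some non-physical map. The clean way around this is to work with the $\Lambda$-rotated picture throughout: conjugating $H_\mathrm{parent}$ by $\bigotimes Q = \bigotimes \Lambda^{-1}$ (up to scalars) turns the propagation terms into genuine projectors $\id - \ket{\Phi_U}\bra{\Phi_U}$ acting on the unperturbed PEPS, where frustration-freeness away from $S$ literally means ``this pair of columns is the EPR/Choi state of $U$,'' and teleportation (as in the proof of Claim~\ref{claim:gs}) then gives the propagation identity on the nose; the adversarial Paulis at $S$ and the stochastic Paulis re-enter only when one rotates back by $\bigotimes \Lambda$, and at that point they are manifestly Pauli insertions. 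I would also need a quantum union bound / detectability-style estimate (Lemma~\ref{lem:DL}, Lemma~\ref{lem:quantum-union}) to pass from ``satisfies all but a fraction of terms'' to ``genuinely in the frustration-free subspace away from $S$ up to small error,'' since a combinatorial state need not be exactly annihilated even by the terms it ``satisfies'' in a coarse-grained sense — though if one interprets ``satisfies'' as $\bra{\psi}h_i\ket{\psi} = 0$ exactly (as the table in the introduction suggests, $\{i : \bra{\psi}h_i\ket{\psi}\neq 0\}$), this step is immediate and only the stochastic-cloud Chernoff argument remains quantitative.
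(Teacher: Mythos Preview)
Your overall strategy matches the paper's: apply $\Lambda^{\otimes nD}$ to pass to the unperturbed picture, use that exact satisfaction of $h_U$ forces the corresponding four qubits of $\ket{\psi'}$ to be literally $\ket{\Phi_U}$ (and similarly for input terms), expand the arbitrary residual $\ket{\psi''}$ on the violated locations in the Pauli--EPR basis to expose adversarial errors $\vec E\in\err(S)$, and then rotate back by $Q^{\otimes nD}$ to recover $\ket\psi$ as a superposition over stochastic Paulis $\vec P$ weighted by $\delta^{|\vec P|}$. Your closing observation that ``satisfies'' here means $\bra\psi h_i\ket\psi=0$ exactly, so that no detectability/union-bound step is needed, is also how the paper reads the hypothesis.

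The gap is in your step (ii). You invoke ``a Chernoff bound on $nD$ independent events each of probability $\approx\delta^2$,'' but after rotating back the distribution over $\vec P$ is \emph{not} i.i.d.: the probability of a given $\vec P$ is proportional to $\delta^{2|\vec P|}\bigl\|\sum_{\vec E}c_{\vec E}\,\widetilde W_{\vec P,\vec E}\ket{0^a}\ket{\xi_{\vec E}}\bigr\|^2$, and the norm of that adversarial superposition genuinely depends on $\vec P$ because different stochastic Pauli patterns rotate the $\vec E$-branches into different relative phases. The paper explicitly flags this and fixes it as follows. Let $R$ be the set of shifted EPR locations overlapping the violated-term supports $S$ (so $|R|\le 4\varepsilon n$) and $R^c$ the rest. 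On $R^c$ the arbitrary piece $\ket{\psi''}$ is absent, so $\|(\id_R\otimes\bra{\Phi_{\vec P}})\ket{\psi'}\|$ is the same for every $\vec P\in\mathcal P^{R^c}$ and the marginal distribution on $R^c$ \emph{is} i.i.d.; one then bounds the high-weight event via a projector $\Pi'$ supported on $R^c$, paying a $\delta^{-2|R|}$ operator-norm penalty for the uncontrolled $Q_R$ factors. The resulting bound is $\bra\psi\Pi\ket\psi\le e^{O(\log(1/\delta)\,\varepsilon n)}\cdot e^{-\varepsilon^2 n/(\delta^2 D)}$, and it is the need for the Chernoff exponent to beat this $\delta^{-O(\varepsilon n)}$ prefactor that forces the specific window $\delta=O(D^{-0.51})$ together with $\varepsilon>10\delta\sqrt D$. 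Your naive i.i.d.\ estimate never sees the prefactor, so as written it would not establish the theorem under the stated hypotheses.
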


\begin{remark} The theorem statements and proofs below are presented assuming all $n$ qubits are intialized at the beginning of the computation for simplicity. However, they can be readily adapted to the setting where qubits are initialized at varying times such as in quantum fault tolerance. In this case, $D$ is defined to be the longest elapse time between an output qubit and the initialization of any qubit causally connected to it.
\end{remark}

\subsection{Proof of \Cref{thm:combin} (Combinatorial soundness)}
\label{subsec:combsound}

\noindent \textbf{Proof idea:} The combinatorial state $\ket{\psi}$ has the property that the (unnormalized) state $\Lambda^{\otimes nD}\ket{\psi}$ has a nice form - $\left(\bigotimes_{i \notin S_0} \ket{0}_i  \right) \left(\bigotimes_{\loc(U) \notin S} \ket{\Phi_U}\right) \otimes \ket{\psi''}$. This means that we have the correct state $\ket{0}$ or $\ket{\Phi_U}$ corresponding to the satisfied Hamiltonian terms and an arbitrary state $\ket{\psi''}$ at the violated terms. If $\ket{\psi''}$ were of the form $\bigotimes_{j\in S}\ket{\Phi_{U_j}}$ for some 2-qubit unitaries $U_j$, then we could simply view the state $\ket{\psi}$ as encoding the circuit with iid noise on non-faulty locations, and adversarial noise at faulty locations. This would be a perfectly fine combination of stochastic error and small number of adversarial errors. But $\ket{\psi''}$ can be a superposition of the states of above form, which can arbitrarily correlate the noise at non-faulty locations! We appeal to the injectivity of the local maps $\Lambda$ to argue that despite this possible correlation of noise at non-faulty locations, the fraction of errors stays at $O(\delta^2)$ (with high probability). Thus a damaging situation, for example all the non-faulty locations experiencing a Pauli error, continues to occurs with very small probability.

\vspace{0.1in}

\noindent \textbf{Proof:} 
Consider a $\frac{\varepsilon}{D+1}$-combinatorial state $\ket{\psi}$ and let $S=\{S_0, S_1,\hdots, S_D\}$ be the sets of faulty locations in each layer of the circuit corresponding to the violated Hamiltonian terms in $H_\mathrm{parent}= H_\mathrm{in} + H_\mathrm{prop}$. Since $H_\mathrm{prop}$ consists of at most $nD$ terms and $H_\mathrm{in}$ consists of $a \leq n$ terms, it holds that $\sum_{\ell} |S_\ell| \leq  2 \varepsilon n$ (assuming circuit consists of two-qubit gates). Below, we refer to the last column of qubits in the tensor network as the \textit{output column}, the first two columns as the \textit{first layer}, the next two columns as the \textit{second layer}, and so on. Given a $n(2D+1)$-qubit PEPS state $\ket{\psi}$, we denote by $\psi^{(\ell)}_j$ the two-qubit reduced state on the $j$-th row of the $\ell$-th layer and by $\psi^\mathrm{out}_j$ the one-qubit reduced state on the $j$-th row of the output column.

\vspace{0.1in}

We first prove the following lemma which asserts that the reduced state on the output column of the combinatorial state $\ket{\psi}$ contains the result of a quantum computation with both stochastic noise \textit{and} adversarial noise. Later we will combine these two noise models into just adversarial noise.

\begin{claim} Let $\widetilde{W}_{\Vec{P},\Vec{E}}=\tilde{E}_{D} W_D \tilde{P}_{D}  \hdots \tilde{E}_{1} W_1 \tilde{P}_{1} \tilde{E}_{0}$ denote the erroneous circuit with Pauli errors $\Vec{P}$ and $\Vec{E}$. Suppose $\ket{\psi}$ is a normalized state which satisfies all but terms at locations $S$ in $H_\mathrm{parent}$. Then the reduced state on the output column is
\begin{equation}
    \psi_\mathrm{out} \propto  \sum_{\Vec{P} \in \mathcal{P}^{\otimes nD}}
    \delta^{2 |\Vec{P}|}  
    \left( 
 \sum_{\Vec{E} \in \err(S) }   c_{\Vec{E}} \widetilde{W}_{\Vec{P},\Vec{E}} (\ket{0^a}\otimes \ket{\xi_{\Vec{E}}}) \right) \left( 
 \sum_{\Vec{E} \in \err(S) } c_{\Vec{E}}   (\bra{0^a} \otimes \bra{\xi_{\Vec{E}}} ) \widetilde{W}^\dagger_{\Vec{P},\Vec{E}} \right),
\end{equation}
where $\ket{\xi_{\Vec{E}}}$ are normalized states and the real coefficients $c_{\Vec{E}}$ satisfy $\sum_{\Vec{E} \in \err(S)} c_{\Vec{E}}^2 =1$. In other words, the state $\ket{\psi}$ encodes a noisy computation where the errors come from two sources: (1) stochastic noise coming from the tensor network injectivity and (2) \textit{adversarial} errors coming from the energy violations. 
\end{claim}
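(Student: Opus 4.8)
The plan is to compute the reduced state on the output column directly from the structure of the combinatorial state. First I would apply the invertible map $\Lambda^{\otimes nD}$ to $\ket{\psi}$; since $\ket{\psi}$ satisfies all Hamiltonian terms except those at locations $S$, the state $\Lambda^{\otimes nD}\ket{\psi}$ must lie in the intersection of the ground spaces of $(\Lambda^{-1})^{\otimes N} h_U (\Lambda^{-1})^{\otimes N}$ over all non-faulty gates $U$ and of the unperturbed input terms at non-faulty qubits. By the characterization of these ground spaces used already in \Cref{subsec:bypass} (the ground space of $(\Lambda^{-1})^{\otimes 2} h_U (\Lambda^{-1})^{\otimes 2}$ is spanned by $\ket{\psi_1}\ket{\Phi_U}\ket{\psi_2}$), this forces
\begin{equation*}
    \Lambda^{\otimes nD}\ket{\psi} = \left(\bigotimes_{i \notin S_0} \ket{0}_i\right) \otimes \left(\bigotimes_{\loc(U)\cap S = \emptyset} \ket{\Phi_U}\right) \otimes \ket{\psi''},
\end{equation*}
where $\ket{\psi''}$ is an arbitrary (normalized, up to scalar) state on the qubits at the faulty locations $S$ (the $1,2$-qubits of the Choi states whose gates are near $S$, plus the input qubits in $S_0$). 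This is the key structural fact, and it is where injectivity of $\Lambda$ is essential — it is exactly what lets us conclude the state is a tensor product of the "correct" pieces and a free piece, rather than some entangled mess.

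Next I would expand $\ket{\psi''}$ in the EPR-Pauli basis on each faulty location: write $\ket{\psi''} = \sum_{\Vec{E}\in\err(S)} c_{\Vec E}\, \ket{\Phi_{\Vec E}} \otimes (\text{something})$, but more carefully, decompose $\ket{\psi''}$ over the basis $\{\ket{\Phi_{\vec E}}\}$ on the $1,2$-legs and some normalized states $\ket{\xi_{\vec E}}$ on the input qubits of $S_0$; collecting real coefficients $c_{\vec E}$ with $\sum c_{\vec E}^2 = 1$ (reality/positivity can be arranged by absorbing phases into $\ket{\xi_{\vec E}}$, or we simply allow complex $c_{\vec E}$ and note $\sum |c_{\vec E}|^2 = 1$ — I'd keep it as in the claim statement). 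Then I apply $Q^{\otimes nD} = (\Lambda^{-1})^{\otimes nD}$ (up to scalar) to recover $\ket{\psi}$ itself: each $Q$ acting on a correct $\ket{\Phi_U}$ leg produces, via \Cref{claim:gs}, the sum $\sum_{\vec P}\delta^{|\vec P|}\ket{\Phi_{\vec P}}\otimes(\text{Pauli insertions})$, i.e. the purified iid depolarizing noise $\Vec P$, while each $Q$ acting on a faulty-location leg that already carries $\ket{\Phi_{E_i}}$ just passes it through (up to the overall normalization constant). Teleporting through the network exactly as in \Cref{claim:gs} then yields the output-column state as $\sum_{\vec P}\delta^{|\vec P|}\sum_{\vec E\in\err(S)} c_{\vec E}\, \widetilde W_{\vec P,\vec E}(\ket{0^a}\otimes\ket{\xi_{\vec E}})$ in the rightmost column, tensored with the bulk EPR record $\ket{\Phi_{\vec P}}\otimes\ket{\Phi_{\vec E}}$.

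Finally, to get the reduced state $\psi_\mathrm{out}$ I trace out the bulk. The bulk EPR registers factorize into the $\vec P$-part (non-faulty locations) and the $\vec E$-part (faulty locations). Tracing out the $\vec P$-registers forces the bra and ket to share the same $\vec P$, contributing the weight $\delta^{2|\vec P|}$; tracing out the $\vec E$-registers on the faulty locations, however, does \emph{not} force bra and ket to share the same $\vec E$, because the coefficient vector $(c_{\vec E})$ can be a genuine superposition — so we keep the double sum $\sum_{\vec E,\vec E'} c_{\vec E}c_{\vec E'}(\cdots)$, which is exactly the outer-product form in the claim. This gives precisely the stated expression for $\psi_\mathrm{out}$. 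The main obstacle I anticipate is purely bookkeeping: carefully matching which physical qubits belong to "faulty locations" $S$ versus their neighbors (a gate near $S_\ell$ shares legs with qubits in and out of $S_\ell$), and making sure the teleportation identities compose correctly across layers when some legs carry fixed $\ket{\Phi_{E_i}}$ and others carry the summed-over $\ket{\Phi_{P_i}}$ — but this is the same teleportation calculation as in \Cref{claim:gs}, just with a partitioned index set, so no new idea is needed beyond the structural lemma above.
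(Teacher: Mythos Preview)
Your overall strategy matches the paper's: apply $\Lambda^{\otimes nD}$, deduce the tensor-product structure on the satisfied locations, expand the free part $\ket{\psi''}$, reapply $Q^{\otimes nD}$, teleport, and trace out the bulk. However, there is a real gap in your bookkeeping that, if followed literally, would give the wrong answer.

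The issue is your claim that after teleportation the bulk carries a record $\ket{\Phi_{\vec P}}\otimes\ket{\Phi_{\vec E}}$, and that tracing out the $\vec E$-registers ``does not force bra and ket to share the same $\vec E$ because the coefficient vector $(c_{\vec E})$ can be a genuine superposition.'' This reasoning is incorrect on its own terms: if the bulk really contained orthonormal registers $\ket{\Phi_{\vec E}}$, tracing them out would decohere the $\vec E$-sum and yield $\sum_{\vec P,\vec E}\delta^{2|\vec P|}|c_{\vec E}|^2(\cdots)$, which is diagonal in $\vec E$ and contradicts the claim.

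The correct picture (and the paper's) is that the bulk records only $\vec P$, never $\vec E$. The point is that $\ket{\psi''}$ lives on the full four-qubit Choi blocks of the faulty gates (not just the ``$1,2$-legs''), together with the faulty input qubits and the witness qubits, and one expands it in the basis $\{(\id\otimes E)\ket{\Phi_U}\}_E$ on each faulty Choi block (and $\{\ket{0},X\ket{0}\}$ on faulty input qubits). The error $E$ sits on the \emph{output} legs of the Choi state, so when $Q$ is applied to the adjacent shifted EPR pair and one projects onto $\ket{\Phi_P}$, the teleportation identity absorbs $E$ directly into the computation: the next column receives $E U P(\cdot)$, while the shifted EPR pair is left in $\ket{\Phi_{P}}$ \emph{regardless of $E$}. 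Hence after applying all the $Q$'s one obtains
\[
\ket{\psi}\;\propto\;\sum_{\vec P}\delta^{|\vec P|}\,\ket{\Phi_{\vec P}}\otimes\Bigl(\sum_{\vec E\in\err(S)}c_{\vec E}\,\widetilde W_{\vec P,\vec E}\,\ket{0^a}\ket{\xi_{\vec E}}\Bigr),
\]
with the bulk indexed by $\vec P$ alone. The $\vec E$-sum remains coherent in $\psi_{\mathrm{out}}$ simply because there is nothing in the bulk to decohere it. Once you correct this geometric point---where $\ket{\psi''}$ lives, which basis to expand it in, and what the bulk actually records---your plan goes through exactly as in the paper.
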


\begin{proof}
Let us analyze the terms in each of $H_\mathrm{in}$ and $H_\mathrm{prop}$. Consider the state $\ket{\psi'} \triangleq \frac{\Lambda^{\otimes nD} \ket{\psi}}{\|\Lambda^{\otimes nD} \ket{\psi}\|}$, where we recall that
\begin{equation}
    \Lambda= \delta \ket{\Phi_I} \bra{\Phi_I} +  \sum_{p \in \{X, XZ,Z\}}  \ket{\Phi_p} \bra{\Phi_p}.
\end{equation}
Consider a satisfied initialization term $h_j^\mathrm{in} = \Lambda ( \ket{1}\bra{1}_j) \Lambda $ (which acts on the $j$-th qubits of the first and second columns) in $H_\mathrm{in}$. Since $h_j^\mathrm{in} \ket{\psi}=0$, the reduced state of $\ket{\psi'}$ on the first column's qubit $i$ is $\ket{0}_i$. Similarly, for a satisfied propagation term $h_U=\Lambda^{\otimes 4}(\id - \ket{\Phi_U} \bra{\Phi_U})\Lambda^{\otimes 4}$ corresponding to a gate $U$ acting on qubits $i,j$ at time $t$, the reduced state  of $\ket{\psi'}$ on the 4 qubits of $\loc(U)$ must exactly be $\ket{\Phi_U}$. So it holds that 
\begin{equation}
    \ket{\psi'} = \left(\bigotimes_{i \notin S_0} \ket{0}_i  \right) \left(\bigotimes_{\loc(U) \notin S} \ket{\Phi_U}\right) \otimes \ket{\psi''},
    \label{eq:psi'}
\end{equation}
where $\ket{\psi''}$ is an arbitrary state supported on the remaining qubits (corresponding to the faulty locations in the circuit, including initialization, and the arbitrary witness state). The state $\ket{\psi''}$ can be further expressed in the orthonormal bases $\{\ket{0},X \ket{0}\}$ on the input qubits and $\{\ket{\Phi_U}, \id \otimes X \ket{\Phi_U}, \id 
 \otimes XZ \ket{\Phi_U}, \id \otimes Z \ket{\Phi_U}\}$ on the qubit pairs in the ``bulk'', such that
 \begin{equation}
     \ket{\psi'} =  \sum_{\Vec{E} \in \err(S) } c_{\Vec{E}} \left(\Tilde{E}_0 \ket{0^{a}} \ket{\xi_{\Vec{E}}} \right) \bigotimes_{\ell \in D} \left((\id \otimes \Tilde{E}_\ell )\bigotimes_{U \in W_\ell} \ket{\Phi_U}\right),
 \end{equation}
where $\ket{\xi_{\Vec{E}}}$ are normalized states and the coefficients $c_{\Vec{E}}$ are real (w.l.o.g) and satisfy $\sum_{\Vec{E} \in \err(S)} c_{\Vec{E}}^2=1$. Note that $c_{\Vec{E}}$ is nonzero only when $\Tilde{E}_0$ consists of only the Pauli operators $I$ and $X$.

Next, we undo the maps $\Lambda$ to obtain the original combinatorial state by applying the map $Q = \ket{\Phi_I} \bra{\Phi_I} + \delta \sum_{p \in \{X, XZ,Z\}}  \ket{\Phi_p} \bra{\Phi_p} \propto  \Lambda^{-1}$ (see~\Cref{eqn:perturbed_projectors}) on $\ket{\psi'}$. We have that
\begin{align}
    \ket{\psi} &\propto \ket{\chi} =  Q^{\otimes nD} \ket{\psi'} \\ & \propto \sum_{\Vec{P} \in \mathcal{P}^{\otimes nD}}
    \delta^{|\Vec{P}|}  \ket{\Phi_{\Vec{P}}} \bigotimes \left(\sum_{\Vec{E} \in \err(S) } c_{\Vec{E}}    \tilde{E}_{D} W_D \tilde{P}_{D}  \hdots \tilde{E}_{1} W_1 \tilde{P}_{1} \tilde{E}_{0}\ket{0^a}\ket{\xi_{\Vec{E}}} \right),
    \label{eq:chi}
\end{align}
where the last equality follows from linearly extending~\Cref{claim:iidnoise}.

Tracing out the bulk EPR states $\ket{\Phi_{\Vec{P}}}$ we obtain the statement of the claim.
\end{proof}

We now show that the normalized state $\psi_\mathrm{out}$ is exponentially close to an $(\alpha+2\varepsilon)$-noisy (mixed) state $\rho$ which is obtained by removing from $\ket{\chi}$ the summands $\Vec{P}$ whose weight is larger than $\alpha n$ and then normalizing properly, for some constant $\alpha$ to be specified shortly.

\begin{claim} Let $\Pi$ be the projector onto the hight-weight EPR-basis states in the bulk which contains at least $\alpha n$ nontrivial EPR states
\begin{equation}
    \Pi = \sum_{\Vec{P} \in \mathcal{P}^{\otimes nD}: |\Vec{P}| \geq \alpha n}   \ket{\Phi_{\Vec{P}}}\bra{\Phi_{\Vec{P}}}.
\end{equation}
If $\delta= O(D^{-0.51})$, then choosing $\alpha = 6 \varepsilon$ we can get the following bound as long as $\varepsilon = \Omega(D^{-0.01})$
    \begin{equation}
        \bra{\psi} \Pi \ket{\psi} \leq e^{-\Omega(n)}.
    \end{equation}
In other words, $\ket{\psi}$ is $e^{-\Omega(n)}$-close in fidelity to a $8\varepsilon$-noisy mixed state.
\end{claim}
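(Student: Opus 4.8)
The plan is to pass to the (unnormalized) state $\ket{\chi}$ and reduce the claim to a Chernoff-type tail bound for an essentially i.i.d.\ random variable. By the previous claim $\ket{\psi} \propto \ket{\chi} = Q^{\otimes nD}\ket{\psi'}$, and since $Q = \ket{\Phi_I}\bra{\Phi_I} + \delta\sum_{P\neq I}\ket{\Phi_P}\bra{\Phi_P}$ while $\{\ket{\Phi_P}\}_{P\in\mathcal{P}}$ is a complete orthonormal basis of each teleport pair, expanding gives
\[
  \ket{\chi} \;=\; \sum_{\vec P\in\mathcal{P}^{\otimes nD}} \delta^{|\vec P|}\,\ket{\Phi_{\vec P}}\otimes\ket{w_{\vec P}},\qquad \ket{w_{\vec P}}:=\bra{\Phi_{\vec P}}\ket{\psi'},
\]
where $\ket{w_{\vec P}}$ is the vector left on the output column. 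Because the $\ket{\Phi_{\vec P}}$ are orthonormal, $\langle\chi|\chi\rangle=\sum_{\vec P}\delta^{2|\vec P|}\|w_{\vec P}\|^2$ and $\langle\chi|\Pi|\chi\rangle=\sum_{|\vec P|\ge\alpha n}\delta^{2|\vec P|}\|w_{\vec P}\|^2$, so $\bra{\psi}\Pi\ket{\psi}=\langle\chi|\Pi|\chi\rangle/\langle\chi|\chi\rangle$ is \emph{exactly} the probability that $|\vec P|\ge\alpha n$ when $\vec P$ is drawn with weight proportional to $\delta^{2|\vec P|}\|w_{\vec P}\|^2$. I would then invoke the structure of $\ket{\psi'}$ from \eqref{eq:psi'}: it is a fixed product state $\ket{g}$ of $\ket{0}$'s and $\ket{\Phi_U}$'s on the qubits belonging to satisfied terms, tensored with a completely arbitrary $\ket{\psi''}$ on the qubits belonging to the faulty locations (and the witness), and $\sum_\ell|S_\ell|\le 2\varepsilon n$.

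Next I would split the $nD$ teleport pairs into the \emph{bulk} pairs $G$ --- those whose two qubits both lie inside $\ket{g}$ --- and the remaining $R$ pairs that touch $\ket{\psi''}$, with $|R|=O(\varepsilon n)$ and hence $|G|\ge nD-O(\varepsilon n)$. The mechanism that makes the bulk behave stochastically is gate teleportation: contracting $\bra{\Phi_{P}}$ against a half-EPR pair satisfies $\bra{\Phi_P}\big((\cdot)\otimes\ket{\Phi_I}\big)=\tfrac12\,P(\cdot)$, which is norm-preserving with a $P$-independent prefactor. Consequently $\bra{g}(Q^2)^{\otimes G}\ket{g}$ is a $\vec P_G$-independent constant of the form $c^{|G|}$, and the marginal law of $\vec P_G$ under the above distribution is (stochastically dominated by) i.i.d.\ $\mathrm{Cat}(1:\delta^2:\delta^2:\delta^2)^{\otimes|G|}$ --- each bulk pair independently carries a nontrivial Pauli with probability $\le 3\delta^2$. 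Since any vector in the range of $\Pi$ has $|\vec P|\ge\alpha n=6\varepsilon n$ while $|\vec P_R|\le|R|=O(\varepsilon n)$, it must have $|\vec P_G|\ge\Omega(\varepsilon n)$, with the constants arranged so this threshold is $\ge 4\varepsilon n$.

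Finally I would apply a multiplicative Chernoff bound to $|\vec P_G|$, which is stochastically below $\mathrm{Bin}(|G|,3\delta^2)$ with mean $\le 3\delta^2 nD$. The hypotheses $\delta=O(D^{-0.51})$ and $\varepsilon>10\delta\sqrt D$ force the mean to be $O(\delta\sqrt D)\cdot\varepsilon n\ll\varepsilon n$, so $\Pr[|\vec P_G|\ge\Omega(\varepsilon n)]\le e^{-\Omega(\varepsilon n)}$, which is $e^{-\Omega(n)}$ once $\varepsilon=\Omega(D^{-0.01})$. This same large multiplicative gap between the mean and $\varepsilon n$ is also what absorbs the $e^{O(\varepsilon n)}$-type combinatorial/normalization losses incurred in the boundary region --- a $\delta^{-O(|R|)}$ factor when lower-bounding $\langle\chi|\chi\rangle$ and a $3^{O(|R|)}$ factor from the tail sum --- leaving a net bound of $e^{-\Omega(n)}$. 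Deleting the $|\vec P|\ge\alpha n$ branch of $\ket{\chi}$ and renormalizing then yields a state $e^{-\Omega(n)}$-close in fidelity to $\psi_\mathrm{out}$ which, carrying at most $\alpha n$ bulk Pauli faults on top of the $\le 2\varepsilon n$ adversarial faults, is $8\varepsilon$-noisy.

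The step I expect to be the main obstacle is the middle one: cleanly establishing that the bulk Pauli pattern is genuinely (dominated by) an i.i.d.\ pattern with rate $O(\delta^2)$ despite the arbitrary $\ket{\psi''}$. The teleportation identity handles one contraction at a time, but one must check that the $\delta^{2|\vec P_R|}$ reweighting on the boundary pairs, the entanglement between $\ket{g}$'s boundary legs and $\ket{\psi''}$, and the uncontracted output-column legs neither break the independence of $\vec P_G$ from the rest nor inflate its rate, and that whatever slack they do introduce remains inside the $e^{-\Omega(\varepsilon n)}$ budget --- which is precisely why the injectivity parameter has to be as small as $D^{-0.51}$ and $\varepsilon$ has to exceed $10\delta\sqrt D$.
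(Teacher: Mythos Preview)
Your proposal is correct and follows essentially the same approach as the paper: split the $nD$ shifted EPR locations into those disjoint from the faulty set $S$ (your $G$, the paper's $R^c$) and the remainder $R$ with $|R|\le 4\varepsilon n$, observe that on $R^c$ the Pauli pattern is exactly i.i.d.\ with rate $3\delta^2/(1+3\delta^2)$, pay a $\delta^{-2|R|}$ normalization factor for the boundary, and apply Chernoff.

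On the step you flag as the main obstacle --- showing the bulk is genuinely i.i.d.\ despite the arbitrary $\ket{\psi''}$ --- the paper has a clean maneuver worth knowing. Rather than reasoning about marginals of the full distribution, define the partially undone state $\ket{\chi'}=Q_{R^c}\ket{\psi'}$ so that $\ket{\chi}=Q_R\ket{\chi'}$. Since $\Pi\preceq\Pi'$ (where $\Pi'$ thresholds only on $R^c$) and $\delta\cdot\id\preceq Q\preceq\id$ with $[\Pi',Q]=0$, one gets
\[
  \bra{\psi}\Pi\ket{\psi}\;\le\;\frac{1}{\delta^{2|R|}}\cdot\frac{\bra{\chi'}\Pi'\ket{\chi'}}{\braket{\chi'}{\chi'}}.
\]
Now the key point is immediate: since $R^c$ is by construction disjoint from the support of $\ket{\psi''}$, the norm $\|(\id_R\otimes\bra{\Phi_{\vec P}})\ket{\psi'}\|$ is the \emph{same} for every $\vec P\in\mathcal{P}^{R^c}$, so the ratio is exactly the i.i.d.\ tail probability, no stochastic domination needed. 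This bypasses your worry about entanglement between the boundary legs of $\ket{g}$ and $\ket{\psi''}$, and the $3^{O(|R|)}$ factor you anticipate never appears.
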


\begin{proof} Note that the distribution over $\Vec{P}$ is not simply the i.i.d. distribution $\delta^{|\Vec{P}|}$ since the linear combination over the adversarial error $\Vec{E}$ in~\Cref{eq:chi} can change the norm of the state in the output column. So we need a more careful analysis.

With a slight abuse of notation, we use $\{S_1,S_2,\hdots,S_D\}$ to denote the locations of the of the original Choi states (which encode the gates) and $S_0$ to denote input column qubits (which are ancilla qubits) that correspond to the violated Hamiltonian terms (see \Cref{fig:combin}). Recall that $|S| \leq 
2 \varepsilon n$ by assumption. Let $R^c$ be the shifted EPR locations that do \textit{not} overlap with $S$ and let $R$ be the rest of shifted EPR locations. Note that $|R| \leq 2|S| \leq 4\varepsilon n$ and $|R| + |R^c| = nD$. Consider the ``partially undone'' state $\ket{\chi'}= Q_{R^{c}} \ket{\psi'}$, in which we only apply $Q$ on $R^c$, such that $\ket{\chi} = Q_{R} \ket{\chi'}$. Let $\Pi'$ be the projector onto the high-weight EPR-basis states supported on $R^c$ defined as
 \begin{equation}
     \Pi' = \sum_{\Vec{P} \in \mathcal{P}^{R^c}: |\Vec{P}| \geq (\alpha - 4\varepsilon) n}   \ket{\Phi_{\Vec{P}}}\bra{\Phi_{\Vec{P}}}.
 \end{equation}
Note that $\Pi \preceq \Pi'$ and $\delta \cdot \id \preceq Q \preceq \id$. Furthermore, $\Pi'$ and $Q$ (and $\Lambda$) commute for being both diagonal in the EPR basis. So we have that
\begin{equation}
    \bra{\psi} \Pi \ket{\psi} \leq   \bra{\psi} \Pi' \ket{\psi}  = \frac{\bra{\chi'} Q_{R} \Pi' Q_{R} \ket{\chi'}}{\bra{\chi'} Q_{R} Q_{R} \ket{\chi'}} \leq \frac{1}{\delta^{2|R|}}\frac{ \bra{\chi'} \Pi' \ket{\chi'}}{\braket{\chi'}{\chi'}}.
    \label{eq:bound-pi}
\end{equation}
Substituting $\ket{\chi'}= Q_{R^{c}} \ket{\psi'}$ into the RHS we get
\begin{equation}
    \bra{\psi} \Pi \ket{\psi} \leq \frac{1}{\delta^{2|R|}} \frac{\sum_{\Vec{P} \in \mathcal{P}^{R^c}: |\Vec{P}| \geq (\alpha - 4\varepsilon)n } \delta^{2 |\Vec{P}|} \| (\id_{R} \otimes \bra{\Phi_{\Vec{P}}}) \ket{\psi'}\|^2  }{\sum_{\Vec{P} \in \mathcal{P}^{R^c}} \delta^{2 |\Vec{P}|} \|(\id_R \otimes \bra{\Phi_{\Vec{P}}}) \ket{\psi'}\|^2 },
\end{equation}
where
$\ket{\psi'}$ is defined in \Cref{eq:psi'} and repeated here for convenience
\begin{equation}
    \ket{\psi'} = \left(\bigotimes_{i \notin S_0} \ket{0}_i  \right) \left(\bigotimes_{\loc(U) \notin S} \ket{\Phi_U}\right) \otimes \ket{\psi''}.
\end{equation}
Observe that $(\id_R \otimes \bra{\Phi_{\Vec{P}}}) \ket{\psi'}$ has the same norm for any $\vec{P} \in \mathcal{P}^{R^c}$ since $\bra{\Phi_{\Vec{P}}}$ does not act on $\ket{\psi''}$. Therefore,
\begin{equation}
    \bra{\psi} \Pi \ket{\psi} \leq  \frac{1}{\delta^{2|R|}} \frac{\sum_{\Vec{P} \in \mathcal{P}^{R^c}: |\Vec{P}| \geq (\alpha - 4\varepsilon)n } \delta^{2 |\Vec{P}|} }{\sum_{\Vec{P} \in \mathcal{P}^{R^c}} \delta^{2 |\Vec{P}|}}.
    \label{eq:RHS}
\end{equation}
The RHS can now be straightforwardly bounded by the Chernoff bound.

\begin{fact} Let $X= X_1 + \hdots + X_N $ where $X_i \in \{0,1\}$ are i.i.d. binary random variables with $\mathbb{E}[X_i]=\mu$. Then $\Pr[ X \geq (1+\eta)\mu N ] \leq 2e^{-\eta^2 \mu N/3}$.
\end{fact}

We apply the Chernoff's bound with $N=|R^c|$, $\mu = 3\delta^2/(1+3\delta^2)$, and $\eta = (\alpha-4\varepsilon)/ \mu D -1$. Note that $|R| \leq 4 \varepsilon n$. Choosing $\alpha = 6\varepsilon$, assuming $\varepsilon \ll 1$ so that $nD/2 \leq |R^c| \leq nD$ and $\delta$ is sufficiently small such that $\eta \approx 2\varepsilon/\mu D $, we obtain the following bound on the RHS of \Cref{eq:RHS}
\begin{equation}
    \text{RHS of \Cref{eq:RHS} } \leq 2 e^{8 \log(1/\delta) \varepsilon n} e ^{-\varepsilon^2 n/ \delta^2 D}.
\end{equation}

The above bound to decays exponentially when $\varepsilon > 8 \log(1/\delta) \delta^2 D$. We can choose, say, $\delta = O(D^{-0.51})$. Then for any $\varepsilon > 10 \delta \sqrt{D}$, we obtain a bound of $e^{-99n}$ on $ \bra{\psi} \Pi \ket{\psi}$.
\end{proof}

\subsection{Proof of \cref{thm:energy} (Soundness)}
\label{subsec:exposound}

\vspace{0.1in}

\noindent \textbf{Proof idea:} We take inspiration from Kitaev's analysis where the clock Hamiltonian is analyzed in a suitable rotated basis. Here as well, we will carry out the proof in a ``rotated'' basis, which is defined by the $n(2D+1)$-qubit unitary $V$ in \Cref{eq:undounit}. 
In particular, we analyze the properties of a low-energy state $\ket{\psi}$ of $H_\mathrm{parent}=H_\mathrm{in} + H_\mathrm{prop}$ and its rotated version $\ket{\psi'}=V^\dagger \ket{\psi}$. Note that in the rotated basis, the ground states (\Cref{claim:gs}) are of the form
\begin{equation}
    V^\dagger \ket{\Psi_{W,\xi}} \propto \left(\ket{\Phi_I} + \delta\sum_{p\in \{X,XZ,Z\}}  \ket{\Phi_p} \right)^{\otimes nD} \ket{0^a}\ket{\xi}.
    \label{eq:groundstate}
\end{equation}
Our starting observation is based on a surprising effect - despite the fact that $H_\mathrm{in}$ enforces $\ket{0}^{\otimes n}$ on the first column, the state $\ket{0}^{\otimes n}$ appears on the last column in $V^\dagger \ket{\Psi_{W,\xi}}$. We view this as a \textit{teleportation of $H_\mathrm{in}$}, highlighting that its a noiseless teleportation under `zero energy' constraint, despite the tensor network performing noisy gate-by-gate teleportation. Given this, we focus on establishing two properties for low energy states: 
\begin{itemize}
    \item \textit{Robust teleportation of $H_\mathrm{in}$:} Upon rotating with $V$, the low energy states should look like $\ket{0}$ in most of the qubits (that do not include witness qubits) in the last column. This amounts to $H_{\mathrm{in}}$ effectively acting on the last column under the constraint of low energy.
    \item The number of Pauli errors is small enough in a low energy state.
\end{itemize}
The proof below carries both these properties.

\vspace{0.1in}

\noindent \textbf{Proof:} First we recall the notations from~\Cref{fig:combin}. We refer to the last column of qubits in the PEPS as the \textit{output column} and note that the layers of shifted EPR locations have a correspondence with circuit layers. In particular, the first two columns as the \textit{first layer}, the next two columns as the \textit{second layer}, and so on. The unitary $V$ can be interpreted as applying a noisy circuit on the output column conditioned on the noise pattern in the bulk. Given a $n(2D+1)$-qubit PEPS state $\ket{\psi}$, we denote by $\psi^{(\ell)}_j$ the two-qubit reduced state on the $j$-th row of the $\ell$-th layer and by $\psi^\mathrm{out}_j$ the one-qubit reduced state on the $j$-th row of the output column.

The advantage of working in the rotated basis is that we can employ the following lemmas, whose proofs are provided in \Cref{sec:deferred-proofs}. 

\begin{remark} W.l.o.g., we assume the last layer of the circuit consists of single-qubit identity gates.
\end{remark}

\begin{lemma}[Last layer]
\label{lem:last-col} For each $j \in [n]$, let $h^{(D)}_{j}$ be the $3$-qubit term in $H_\mathrm{prop}$ corresponding to the identity gate on qubit $j$. Furthermore, let $\ket{\phi_0}=  \frac{1}{\sqrt{1+3\delta^2}} (\ket{\Phi_I} + \delta \sum_{p\in \{X,XZ,Z\}}  \ket{\Phi_p})$.  If $\bra{\psi} h^{(D)}_j \ket{\psi} \leq \alpha$, then it holds that 
\begin{equation}
    \Tr(\psi_j^{'(D)} \ket{\phi_0}\bra{\phi_0}) \geq 1 - 4\alpha.
\end{equation} 
\end{lemma}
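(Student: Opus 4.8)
The plan is to prove the lemma by passing to the rotated basis defined by the unitary $V$ of \eqref{eq:undounit} and reducing the statement to a two-qubit spectral estimate. First I unpack \Cref{clm:parent_hamiltonian} for the last-layer identity gate on row $j$: the term $h^{(D)}_j$ acts on three qubits, namely the pair $(a,b)$ forming the shifted-EPR location of layer $D$ on row $j$ (the pair on which the invertible map $Q \propto \Lambda^{-1}$ is applied) together with the output-column qubit $c$ on that row, and equals $h^{(D)}_j = \Lambda_{ab}\,(\id - \ket{\Phi_I}\bra{\Phi_I}_{bc})\,\Lambda_{ab}$, the middle factor being the one-qubit instance of $\id - \ket{\Phi_U}\bra{\Phi_U}$. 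Since $\ket{\psi} = V\ket{\psi'}$, we have $\bra{\psi}h^{(D)}_j\ket{\psi} = \bra{\psi'}V^\dagger h^{(D)}_j V\ket{\psi'}$, so it is enough to understand $V^\dagger h^{(D)}_j V$.

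Next I would record that $\Lambda_{ab}$ commutes with $V$: both are diagonal in the Bell basis $\{\ket{\Phi_P}\}_{P\in\mathcal{P}}$ of the pair $(a,b)$, and $V$ touches this pair only through the rank-one projectors $\ket{\Phi_P}\bra{\Phi_P}_{ab}$ that decide which Pauli is fed into the output column. Hence $V^\dagger h^{(D)}_j V = \Lambda_{ab}\,\big(\id - V^\dagger \ket{\Phi_I}\bra{\Phi_I}_{bc}\, V\big)\,\Lambda_{ab}$, and the whole problem collapses to computing the conjugated Bell projector $V^\dagger \ket{\Phi_I}\bra{\Phi_I}_{bc}\, V$.

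The heart of the argument is the identity
\[
  V^\dagger\,\ket{\Phi_I}\bra{\Phi_I}_{bc}\,V \;=\; \ket{0}\bra{0}_a\otimes\ket{+}\bra{+}_b\otimes\id_{\mathrm{rest}}.
\]
Because the last layer is the identity, the only piece of the circuit inside $V$ that acts on $(a,b)$ is the teleportation (Pauli) correction of this gate; accordingly I factor $V = C\,V''$ with $C = \sum_{P\in\mathcal{P}}\ket{\Phi_P}\bra{\Phi_P}_{ab}\otimes P_c$ and $V''$ supported on the output column and the remaining bulk pairs, hence acting trivially on $(a,b)$. Then $V^\dagger \ket{\Phi_I}\bra{\Phi_I}_{bc}\, V = V''^\dagger\big(C^\dagger \ket{\Phi_I}\bra{\Phi_I}_{bc}\, C\big)V''$, so it suffices to evaluate $C^\dagger \ket{\Phi_I}\bra{\Phi_I}_{bc}\, C$. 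Expanding an arbitrary vector on $(a,b,c)$ in the Bell basis of $(a,b)$, applying $C$, projecting $(b,c)$ onto $\ket{\Phi_I}$, applying $C^\dagger$, and using the standard teleportation identities together with the fact that each $P\in\mathcal{P}$ is a real unitary (so $\overline{P}=P$ and $P^{\mathsf T}P = I$) and $\tfrac12\sum_{P\in\mathcal{P}}\ket{\Phi_P} = \ket{0}_a\ket{+}_b$, one obtains $C^\dagger \ket{\Phi_I}\bra{\Phi_I}_{bc}\, C = \ket{0}\bra{0}_a\otimes\ket{+}\bra{+}_b\otimes\id_c$; conjugating by $V''$ leaves the $(a,b)$ part untouched, which is the displayed identity. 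Conceptually this says that in the rotated frame the last-layer constraint depends only on the local noise pattern at $(a,b)$ and not on the (possibly locally indistinguishable) computational state on $c$; I expect this teleportation computation to be the main obstacle, mostly because of the Pauli transpose/phase bookkeeping and keeping the wiring of $V$ straight.

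What is left is a two-qubit spectral estimate. From the identity, $V^\dagger h^{(D)}_j V = \big[\Lambda_{ab}(\id - \ket{0}\bra{0}_a\otimes\ket{+}\bra{+}_b)\Lambda_{ab}\big]\otimes\id_{\mathrm{rest}}$, a positive operator whose kernel is $\operatorname{span}\{\ket{\phi_0}_{ab}\}\otimes\mathcal{H}_{\mathrm{rest}}$, since $\Lambda^{-1}\ket{0}_a\ket{+}_b \propto \ket{\phi_0}$ (indeed $Q\,\ket{0}_a\ket{+}_b = \tfrac{\sqrt{1+3\delta^2}}{2}\ket{\phi_0}$). To pin down the constant I would compute the least nonzero eigenvalue of the two-qubit operator $\Lambda(\id - \Pi_\omega)\Lambda$ with $\Pi_\omega = \ket{0}\bra{0}_a\otimes\ket{+}\bra{+}_b$: its nonzero spectrum equals that of $(\id - \Pi_\omega)\Lambda^2(\id - \Pi_\omega)$, and since $\Lambda^2 = \id - (1-\delta^2)\ket{\Phi_I}\bra{\Phi_I}$ and $\ket{\Phi_I}$ has squared overlap $\tfrac14$ with $\ket{0}_a\ket{+}_b$, these eigenvalues are $1$ (with multiplicity two) and $\tfrac14 + \tfrac34\delta^2$. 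Hence $V^\dagger h^{(D)}_j V \succeq \tfrac14\big(\id - \ket{\phi_0}\bra{\phi_0}_{ab}\big)$, and therefore
\[
  \alpha \;\ge\; \bra{\psi}h^{(D)}_j\ket{\psi} \;=\; \bra{\psi'}V^\dagger h^{(D)}_j V\ket{\psi'} \;\ge\; \tfrac14\Big(1 - \Tr\big(\psi_j^{'(D)}\,\ket{\phi_0}\bra{\phi_0}\big)\Big),
\]
which rearranges to $\Tr(\psi_j^{'(D)}\ket{\phi_0}\bra{\phi_0}) \ge 1 - 4\alpha$, as required. The one place to cross-check against the paper's conventions is the exact form of $C$, i.e.\ which Pauli $V$ feeds forward for a given Bell outcome; but since $V$ is constructed so that $V^\dagger\ket{\Psi_{W,\xi}}$ is the product state of \eqref{eq:groundstate}, any consistent choice makes the identity above go through.
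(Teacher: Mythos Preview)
Your proof is correct and follows essentially the same approach as the paper: both compute $V^\dagger h^{(D)}_j V$ explicitly, obtain the same two-qubit operator $\Lambda_{ab}\big(\id - \tfrac{1}{4}\sum_{p,p'}\ket{\Phi_p}\bra{\Phi_{p'}}\big)\Lambda_{ab}$ (your $\ket{0}\bra{0}_a\otimes\ket{+}\bra{+}_b$ is exactly the rank-one projector $\tfrac{1}{4}\sum_{p,p'}\ket{\Phi_p}\bra{\Phi_{p'}}$), and then use that its spectral gap above the one-dimensional kernel $\mathrm{span}\{\ket{\phi_0}\}$ is at least $1/4$. Your organization is slightly cleaner in that you commute $\Lambda_{ab}$ past $V$ first and you supply the explicit eigenvalue computation $\tfrac14+\tfrac34\delta^2$ that the paper merely asserts.
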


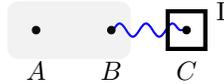
\begin{figure}[h]
    \centering
\begin{tikzpicture}[scale=0.5]
  \foreach \x in {0} {
    \foreach \y in {0} {
      \path [draw=blue,snake it, thick] (\x-0.5,\y+2.5) -- (\x+1.5,\y+2.5); 
      
      \draw[ultra thick] (\x+1,\y+2) rectangle (\x+2,\y+3) node[right] {$\id$}; 
      
      \draw [fill=black] (\x-0.5,\y+2.5) circle [radius=0.1] node[below=0.3] {$B$}; 
      
      \draw [fill=black] (\x+1.5,\y+2.5) circle [radius=0.1] node[below=0.3] {$C$}; 

      \draw [fill=black] (\x-2.5,\y+2.5) circle [radius=0.1] node[below=0.3] {$A$};  

    \draw[gray,rounded corners,fill=gray, opacity=0.1] (\x-3.25, \y+1.75) rectangle (\x,\y+3.25); 
    
    }
  }
\end{tikzpicture}
    \caption{Propagation term $h_j^{(D)} = \Lambda_{AB} (\id - \ket{\Phi_I} \bra{\Phi_I}_{BC}) \Lambda_{AB}$ corresponding to a single-qubit identity gate in the last layer. If a global state $\ket{\psi}$ has low energy with restpect to $h_j^{(D)}$, then~\Cref{lem:last-col} asserts that $V^\dagger \ket{\psi}$ is locally close to $\ket{\phi_0}$ (\Cref{eq:phi0}) on qubits $A,B$.}
    \label{fig:last-layer}
\end{figure}
In the proof of~\Cref{lem:last-col}, we will show that $V^\dagger h^{(D)}_j V$ is in fact a local Hamiltonian term, despite $V$ being global. In particular, denoting $h^{(D)}_{j} = \Lambda_{AB} (\id - \ket{\Phi_I} \bra{\Phi_I}_{BC}) \Lambda_{AB} $ where $A,B$, and $C$ denote the qubits acted upon by $h^{(D)}_{j}$ (see~\Cref{fig:last-layer}), then $V^\dagger h^{(D)}_{j} V$ is a $2$-local term acting on qubits $A,B$ in the rotated basis
\begin{equation}
    V^\dagger (h^{(D)}_{j})_{ABC} V = \Lambda_{AB} \left( \sum_{p \in \mathcal{P}} \ket{\Phi_p} \bra{\Phi_{p}}_{AB} - \frac{1}{4}\sum_{p,p' \in \mathcal{P}} \ket{\Phi_p} \bra{\Phi_{p'}}_{AB} \right) \Lambda_{AB}.
\end{equation}

The rotated propagation terms $V^\dagger h^{(\ell)}_{i,j} V$ for $\ell < D$ (corresponding to two-qubit gates acting on 
 qubits $i,j$ in layer $\ell$) are, however, generally non-local\footnote{We show in~\Cref{appendix:Clifford} that they are local if the associated gate is Clifford.}. Here, we will instead utilize the following lemma about a property of them in a certain subspace related to the state
 \begin{equation}
     \ket{\phi_0} \triangleq  \frac{1}{\sqrt{1+3\delta^2}} (\ket{\Phi_I} + \delta \sum_{p\in \{X,XZ,Z\}}  \ket{\Phi_p}).
     \label{eq:phi0}
 \end{equation}

\begin{lemma}[Bulk propagation] Consider a propagation term $h^{(\ell)}_{i,j}=  \Lambda^{\otimes 4} (\id - \ket{\Phi_U}\bra{\Phi_U} ) \Lambda^{\otimes 4} $, where $\ket{\Phi_U}$ is the Choi state encoding the two-qubit gate $U$ acting on qubits $i,j$ in layer $\ell < D$. It holds that
\begin{equation}
    \bra{\phi_0}^{\otimes 2} V^\dagger h^{(\ell)}_{i,j} V \ket{\phi_0}^{\otimes 2}  = \frac{16 \delta^4}{(1+3\delta^2)^2} \Lambda^{\otimes 2} \left( \sum_{\Vec{p} \in \mathcal{P}^{\otimes 2}} \ket{\Phi_{\Vec{p}}} \bra{\Phi_{\Vec{p}}}  - \frac{1}{16} \sum_{\Vec{p}, \Vec{q} \in \mathcal{P}^{\otimes 2}} \ket{\Phi_{\Vec{p}}} \bra{\Phi_{\Vec{q}}} \right)\Lambda^{\otimes 2},
\end{equation}
where $\ket{\phi_0}^{\otimes 2}$ acts on the shifted EPR locations $(i,j)$ in EPR layer $\ell+1$. Furthermore, a robust version of the previous statement also holds. Let $\ket{\psi}$ be a normalized state such that $\Tr( \psi'^{(\ell+1)}_{i,j}  \phi_0^{\otimes 2}) \geq 1 - \eta$ in  for some $\ell \leq D-1$. If additionally $\bra{\psi} h^{(\ell)}_{i,j} \ket{\psi} \leq \alpha$, then $\Tr ( \psi'^{(\ell,\ell+1)}_{i,j} \phi_0^{\otimes 4}) \geq 1 - \frac{\eta}{\delta^8} - \frac{\alpha}{\delta^{16}}$.
\label{lem:bulk-prop}
\end{lemma}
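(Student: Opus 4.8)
The claim has two parts: the exact operator identity, and its robust version, which follows from the identity together with two elementary facts about the operator appearing on the right-hand side.

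\medskip\noindent\emph{The exact identity.} Each of the four $\Lambda$'s in $h^{(\ell)}_{i,j}$ acts on a single one of the $nD$ EPR pairs and is diagonal in the EPR basis, while $V$ (\Cref{eq:undounit}) is block-diagonal in the EPR basis of those pairs and acts unitarily on the output column within each block; hence $V^\dagger\Lambda^{\otimes 4}V=\Lambda^{\otimes 4}$ and
\[
V^\dagger h^{(\ell)}_{i,j}V=\Lambda^{\otimes 4}\big(\id-V^\dagger\ket{\Phi_U}\bra{\Phi_U}V\big)\Lambda^{\otimes 4}.
\]
A one-line calculation gives $\Lambda\ket{\phi_0}=\tfrac{\delta}{\sqrt{1+3\delta^2}}\sum_{p\in\mathcal{P}}\ket{\Phi_p}$, so $\|\Lambda\ket{\phi_0}\|^2=\tfrac{4\delta^2}{1+3\delta^2}$. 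Sandwiching the display above between $\ket{\phi_0}^{\otimes 2}$ on the two outgoing EPR pairs (those in EPR layer $\ell+1$ on wires $i,j$) therefore pulls out the scalar $\tfrac{16\delta^4}{(1+3\delta^2)^2}$ from the $\id$ term and turns each outgoing factor $\bra{\phi_0}\Lambda$ into $\sum_{p\in\mathcal{P}}\bra{\Phi_p}$. Writing $\ket{G}:=\sum_{Q\in\mathcal{P}^{\otimes 2}}\ket{\Phi_Q}$, what remains to prove is the identity
\[
\bra{G}_{\mathrm{out}}\,V^\dagger\ket{\Phi_U}\bra{\Phi_U}V\,\ket{G}_{\mathrm{out}}=\ket{G}\bra{G}_{\mathrm{in}},
\]
which, using $\|G\|^2=16$, reassembles into exactly the operator claimed. \emph{This identity is the crux of the argument.} It is proved by the same teleportation bookkeeping as in \Cref{lem:last-col}: summing the two outgoing-pair Paulis (the two copies of $\sum_p\ket{\Phi_p}$) against the layer-$(\ell+1)$ errors that $V^\dagger$ undoes completes the teleportation through the Choi state $\ket{\Phi_U}$ ``in superposition'', and the gate $U$ carried by $\ket{\Phi_U}$ cancels against the $W_\ell^{-1}$ inside $V^\dagger$; the result is consequently independent of $U$ and supported only on the two incoming EPR pairs. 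The one place that requires care is keeping track of which qubit halves $\ket{\Phi_U}$ straddles — the ``right'' halves of the incoming pairs and the ``left'' halves of the outgoing pairs — so that the outgoing $\bra{\Phi_Q}$'s contract against the correct legs.

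\medskip\noindent\emph{The resulting operator.} Write the right-hand side as $c\,K$ with $c=\tfrac{16\delta^4}{(1+3\delta^2)^2}\ge\delta^4$ (valid for $\delta\le 1$) and $K=\Lambda^{\otimes 2}\big(\id-\tfrac1{16}\ket{G}\bra{G}\big)\Lambda^{\otimes 2}$. Since $\Lambda^{\otimes 2}\ket{\phi_0}^{\otimes 2}\propto\ket{G}$ and $\ket{G}$ spans the kernel of $\id-\tfrac1{16}\ket{G}\bra{G}$, the kernel of $K$ is exactly $\operatorname{span}\{\ket{\phi_0}^{\otimes 2}\}$. Moreover $K=(\Lambda^2)^{\otimes 2}-\tfrac1{16}\big(\Lambda^{\otimes 2}\ket{G}\big)\big(\Lambda^{\otimes 2}\ket{G}\big)^\dagger$ is a rank-one downward perturbation of $(\Lambda^2)^{\otimes 2}$, whose least eigenvalue is $\delta^4$, so by Cauchy interlacing the least nonzero eigenvalue of $K$ is at least $\delta^4$. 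Hence
\[
\bra{\phi_0}^{\otimes 2}_{\mathrm{out}}\,V^\dagger h^{(\ell)}_{i,j}V\,\ket{\phi_0}^{\otimes 2}_{\mathrm{out}}=c\,K\;\succeq\;\delta^{8}\big(\id-(\ket{\phi_0}\bra{\phi_0})^{\otimes 2}_{\mathrm{in}}\big).
\]

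\medskip\noindent\emph{Robustification.} Put $\ket{\psi'}=V^\dagger\ket{\psi}$ and let $\Pi_{\mathrm{out}},\Pi_{\mathrm{in}}$ be the projectors onto $\ket{\phi_0}^{\otimes 2}$ on the outgoing resp.\ incoming pairs, so the hypotheses read $\bra{\psi'}\Pi_{\mathrm{out}}\ket{\psi'}\ge 1-\eta$ and $\bra{\psi'}V^\dagger h^{(\ell)}_{i,j}V\ket{\psi'}\le\alpha$. Decompose $\ket{\psi'}=\ket{\phi_0}^{\otimes 2}_{\mathrm{out}}\ket{\mu}+\ket{\psi'_\perp}$ with $\ket{\psi'_\perp}\perp\operatorname{ran}\Pi_{\mathrm{out}}$, so $\|\psi'_\perp\|^2\le\eta$ and $\Pi_{\mathrm{out}}\ket{\psi'}=\ket{\phi_0}^{\otimes 2}_{\mathrm{out}}\ket{\mu}$. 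Using $\id-\Pi_{\mathrm{in}}\Pi_{\mathrm{out}}=(\id-\Pi_{\mathrm{out}})+\Pi_{\mathrm{out}}(\id-\Pi_{\mathrm{in}})$ and the fact that $\Pi_{\mathrm{in}}$ acts trivially on the outgoing pairs,
\[
1-\Tr\!\big(\psi'^{(\ell,\ell+1)}_{i,j}\,\phi_0^{\otimes 4}\big)=\big(1-\bra{\psi'}\Pi_{\mathrm{out}}\ket{\psi'}\big)+\bra{\mu}(\id-\Pi_{\mathrm{in}})\ket{\mu}\;\le\;\eta+\bra{\mu}(\id-\Pi_{\mathrm{in}})\ket{\mu}.
\]
To bound the last term, expand $\bra{\psi'}V^\dagger h^{(\ell)}_{i,j}V\ket{\psi'}$ along this decomposition; Cauchy--Schwarz for the positive-semidefinite form $A:=V^\dagger h^{(\ell)}_{i,j}V$, together with $\|A\|=\|h^{(\ell)}_{i,j}\|\le 1$ (which bounds the $\ket{\psi'_\perp}$-block by $\eta$), gives $\bra{\phi_0^{\otimes 2}_{\mathrm{out}}\mu}A\ket{\phi_0^{\otimes 2}_{\mathrm{out}}\mu}\le(\sqrt{\alpha}+\sqrt{\eta})^2$. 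By the previous paragraph this block equals $c\,K$ evaluated on $\ket{\mu}$ and is therefore $\ge\delta^{8}\bra{\mu}(\id-\Pi_{\mathrm{in}})\ket{\mu}$. Combining, $1-\Tr(\psi'^{(\ell,\ell+1)}_{i,j}\phi_0^{\otimes 4})=O\!\big((\eta+\alpha)/\delta^{8}\big)$, which yields the stated bound $\tfrac{\eta}{\delta^{8}}+\tfrac{\alpha}{\delta^{16}}$ once the harmless $O(1)$ constants are absorbed (there is ample room since $\delta=O(D^{-0.51})$ is small). As indicated, the only genuinely nontrivial step is the teleportation identity displayed in the first paragraph; everything else is finite-dimensional linear algebra and a single perturbation estimate.
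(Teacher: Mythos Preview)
Your derivation of the exact identity is correct and is essentially the paper's computation reorganized: you first commute $\Lambda^{\otimes 4}$ through $V$ (valid since both are block-diagonal in the bulk EPR basis) and then reduce to the ``teleportation identity'' $\bra{G}_{\mathrm{out}}V^\dagger\ket{\Phi_U}\bra{\Phi_U}V\ket{G}_{\mathrm{out}}=\ket{G}\bra{G}_{\mathrm{in}}$. The paper proves exactly this by an explicit expansion and the Pauli-completeness identity $\sum_{p_2,q_2\in\mathcal{P}}\Tr(p_1q_1^\top U^\dagger q_2^\top p_2 U)\,p_2^\top q_2=8\,U p_1 q_1^\top U^\dagger$, which collapses the output-column unitary to the identity independently of $U$. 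Your ``teleportation bookkeeping'' sentence is pointing at precisely this step; it is the only nontrivial one and you should spell it out rather than invoke it.

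For the robust version you take a genuinely different route from the paper. The paper introduces $\Pi_2$, the ground-space projector of $V^\dagger h_U V$, uses the spectral gap $\gamma(h_U)\ge\delta^8$ to get $\Tr(\Pi_2\psi')\ge 1-\alpha/\delta^8$, and then combines this with the exact identity via Jordan's lemma (the identity forces $|\langle u_i|v_i\rangle|^2\le 1-15\delta^8$ in each two-dimensional Jordan block). You instead decompose $\ket{\psi'}$ along $\Pi_{\mathrm{out}}$ and use Cauchy--Schwarz for the positive-semidefinite form $A$, together with your interlacing lower bound $cK\succeq\delta^8(\id-\Pi_{\mathrm{in}})$. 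This avoids Jordan's lemma entirely and is more elementary; it even yields a stronger $\alpha$-dependence ($\alpha/\delta^{8}$ instead of $\alpha/\delta^{16}$). The one cost is that your $\eta$-term comes out as $C\eta/\delta^{8}$ for some $C\approx 3$ rather than the stated $\eta/\delta^{8}$; the paper in fact first derives $\eta/(15\delta^{8})$ and then weakens it, so your constant is genuinely slightly worse there. As you note, in every downstream use (iterating $D$ times inside the proof of \Cref{thm:energy}) these $O(1)$ factors are absorbed into the choice of $\delta^{O(D)}$, so the discrepancy is immaterial for the applications, but it does mean you have not literally proved the inequality as stated.
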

Intuitively, the above lemma says that, if the qubits $i,j$ in layer $\ell+1$ of a slighly violated propagation Hamiltonian term are in the ``good'' state $\phi_0$, then the qubits $i,j$ in layer $\ell$ are also in the good state $\phi_0$.

\begin{figure}[h]
    \centering
\begin{tikzpicture}[scale=0.5]
  \foreach \x in {0} {
    \foreach \y in {0, 2} {
      \path [draw=blue,snake it, thick] (\x-0.5,\y+2.5) -- (\x+1.5,\y+2.5); 

      \draw [fill=black] (\x-0.5,\y+2.5) circle [radius=0.1]; 
      
      \draw [fill=black] (\x+1.5,\y+2.5) circle [radius=0.1];

      \draw [fill=black] (\x-2.5,\y+2.5) circle [radius=0.1] ;  
      
      \draw [fill=black] (\x+3.5,\y+2.5) circle [radius=0.1] ; 

    \draw[gray,rounded corners,fill=gray, opacity=0.1] (\x-3.25, \y+1.75) rectangle (\x,\y+3.25); 
    
      \draw[gray,rounded corners,fill=gray, opacity=0.1] (\x+0.75,\y+1.75) rectangle (\x+4,\y+3.25); 
    }
    
  }

      \draw[ultra thick] (1,2) rectangle (2,5) node[right] {$U$}; 

\end{tikzpicture}
    \caption{Propagation term $h_U$ corresponding to a two-qubit gate $U$ in the bulk of the circuit. According to~\Cref{lem:bulk-prop}, if a global state $\ket{\psi}$ has low energy with respect to $h_U$ and its rotated version $V^\dagger \ket{\psi}$ is locally close to $\ket{\phi_0}^{\otimes 2}$ on the EPR locations to the right, then $V^\dagger \ket{\psi}$ is close to $\ket{\phi_0}^{\otimes 4}$ on all 4 EPR  locations.}
    \label{fig:bulk-prop}
\end{figure}
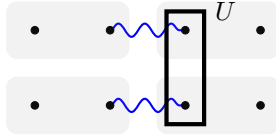

\begin{lemma}[Robust teleportation of $H_\mathrm{in}$]
\label{lem:H-teleport}
Consider an initialization term $h^\mathrm{in}_j=  \Lambda \Pi_j \Lambda $ in $H_\mathrm{in}$, where $\Pi_j$ is the input check on qubit $j$. It holds that $ \bra{\phi_0} V^\dagger h^\mathrm{in}_j V \ket{\phi_0}  = \frac{4\delta^2}{1+3\delta^2}  \Pi^\mathrm{out}_j$, where $\ket{\phi_0}$ acts on the first layer at EPR location $j$, and $\Pi^\mathrm{out}_j$ means $\Pi_j$ acts on the output column. Furthermore, a robust version of the previous statement also holds. Let $\ket{\psi}$ be a normalized state and $\ket{\psi'}$ be its rotated version, such that $\Tr( \psi'^{(1)}_{j} \ket{\phi_0}\bra{\phi_0}) \geq 1 - \eta$. If additionally $\bra{\psi'} V^\dagger h^\mathrm{in}_j V \ket{\psi'} \leq \alpha$, then $\Tr ( \psi' (\id - \Pi_j^\mathrm{out})) \geq 1 - \frac{\eta}{\delta^2} - \frac{\alpha}{\delta^{2}}$.
\end{lemma}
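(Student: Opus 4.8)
The plan is to establish the exact operator identity first and then perturb it. Throughout, let $A,B$ denote the two qubits of the first-layer shifted EPR location of row $j$, so that $h^\mathrm{in}_j=\Lambda_{AB}\,\Pi_j\,\Lambda_{AB}$ with $\Pi_j=\ketbra{1}$ acting on the input-column qubit $A$, and let $\mathrm{out}$ denote the output-column qubit of row $j$. I work in the rotated basis, i.e.\ with $V$ from \eqref{eq:undounit}.

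\emph{Step 1 (conjugation collapses to a $3$-qubit term).} Expand $V=\sum_{\vec P}\ketbra{\Phi_{\vec P}}\otimes W_{\vec P}$ with $W_{\vec P}=W_D\tilde P_D\cdots W_1\tilde P_1$ acting on the output register, and split $\vec P=(P,\vec R)$ with $P\in\mathcal P$ the value at $(A,B)$. Two observations: (i) $\Lambda_{AB}$ commutes with $V$ (both are block-diagonal with respect to the four EPR eigenspaces of $(A,B)$), so $V^\dagger h^\mathrm{in}_j V=\Lambda_{AB}(V^\dagger\Pi_j V)\Lambda_{AB}$; and (ii) for each fixed $\vec R$ the product $W_{(P,\vec R)}^\dagger W_{(Q,\vec R)}$ telescopes — all layer unitaries and all Paulis on wires $\neq j$ cancel — leaving exactly $(P^\dagger Q)_{\mathrm{out}}$. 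Using $\sum_{\vec R}\ketbra{\Phi_{\vec R}}=\id$ and the elementary identity $\bra{\Phi_P}\Pi_j\ket{\Phi_Q}=\tfrac12\bra{1}P^\dagger Q\ket{1}$, this gives $V^\dagger\Pi_j V=\tfrac12\sum_{P,Q\in\mathcal P}\bra{1}P^\dagger Q\ket{1}\,\ketbra{\Phi_P}{\Phi_Q}_{AB}\otimes(P^\dagger Q)_{\mathrm{out}}$, a genuine $3$-qubit operator. This is the ``teleportation of $H_\mathrm{in}$'' mechanism and parallels the computations behind \Cref{lem:last-col,lem:bulk-prop}; it is the step I expect to demand the most care, since one must pin down exactly which wire the surviving Pauli lands on and check that everything else telescopes.

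\emph{Step 2 (exact identity).} With $\Lambda_{AB}\ket{\Phi_P}=\lambda_P\ket{\Phi_P}$ ($\lambda_I=\delta$, $\lambda_{p\neq I}=1$) and $\ket{\phi_0}=(1+3\delta^2)^{-1/2}\sum_P\mu_P\ket{\Phi_P}$ ($\mu_I=1$, $\mu_{p\neq I}=\delta$), the coincidence $\lambda_P\mu_P=\delta$ for all $P$ collapses the weights: $\bra{\phi_0}V^\dagger h^\mathrm{in}_j V\ket{\phi_0}=\tfrac{\delta^2}{2(1+3\delta^2)}\sum_{P,Q\in\mathcal P}\bra{1}P^\dagger Q\ket{1}\,(P^\dagger Q)_{\mathrm{out}}$. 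Then $\sum_{P,Q\in\mathcal P}\bra{1}P^\dagger Q\ket{1}\,(P^\dagger Q)=8\ketbra{1}$: for fixed $P$ the set $\{P^\dagger Q\}_Q$ is $\mathcal P$ up to signs and $\bra{1}R\ket{1}R$ is sign-invariant, so the double sum equals $4\sum_{R\in\mathcal P}\bra{1}R\ket{1}R=4(\id-Z)=8\ketbra{1}$. Hence $\bra{\phi_0}V^\dagger h^\mathrm{in}_j V\ket{\phi_0}=\tfrac{4\delta^2}{1+3\delta^2}\ketbra{1}_{\mathrm{out}}=\tfrac{4\delta^2}{1+3\delta^2}\Pi^\mathrm{out}_j$. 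I would also record, from $\ket{\phi_0}\propto Q\ket{0+}_{AB}$ (with $\tfrac12\sum_P\ket{\Phi_P}=\ket{0+}_{AB}$ and $Q=\delta\Lambda^{-1}$), the two facts used next: $\Lambda_{AB}\ket{\phi_0}=\tfrac{2\delta}{\sqrt{1+3\delta^2}}\ket{0+}_{AB}$, and $\tilde\Pi:=V^\dagger\Pi_j V$ is a projector with $\tilde\Pi(\ket{0+}_{AB}\ket{\theta})=\ket{0+}_{AB}\otimes(\Pi^\mathrm{out}_j\ket{\theta})$ for all $\ket{\theta}$ on the complement of $(A,B)$.

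\emph{Step 3 (robust version).} Write $\ket{\psi'}=\ket{\phi_0}_{AB}\ket{\theta}+\ket{\epsilon}$ with $\ket{\theta}=\bra{\phi_0}_{AB}\ket{\psi'}$ and $\ket{\epsilon}=(\id-\ketbra{\phi_0})_{AB}\ket{\psi'}$, so $\|\epsilon\|^2=1-\bra{\psi'}(\ketbra{\phi_0})_{AB}\ket{\psi'}\le\eta$. Split $\bra{\psi'}\Pi^\mathrm{out}_j\ket{\psi'}=\bra{\psi'}(\ketbra{\phi_0})_{AB}\otimes\Pi^\mathrm{out}_j\ket{\psi'}+\bra{\psi'}(\id-\ketbra{\phi_0})_{AB}\otimes\Pi^\mathrm{out}_j\ket{\psi'}$; the second term is $\le\|\epsilon\|^2\le\eta$, and the first equals $\bra{\theta}\Pi^\mathrm{out}_j\ket{\theta}=\|\tilde\Pi(\ket{0+}_{AB}\ket{\theta})\|^2$ by Step 2. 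Since $\ket{0+}_{AB}\ket{\theta}=\tfrac{\sqrt{1+3\delta^2}}{2\delta}\Lambda_{AB}(\ket{\psi'}-\ket{\epsilon})$, $\|\Lambda_{AB}\|=1$, and $\tilde\Pi$ is a projector, the triangle inequality gives $\|\tilde\Pi(\ket{0+}_{AB}\ket{\theta})\|\le\tfrac{\sqrt{1+3\delta^2}}{2\delta}\big(\|\tilde\Pi\Lambda_{AB}\ket{\psi'}\|+\sqrt\eta\big)$, with $\|\tilde\Pi\Lambda_{AB}\ket{\psi'}\|^2=\bra{\psi'}V^\dagger h^\mathrm{in}_j V\ket{\psi'}=\bra{\psi}h^\mathrm{in}_j\ket{\psi}\le\alpha$. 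Combining, $\bra{\psi'}\Pi^\mathrm{out}_j\ket{\psi'}\le\tfrac{1+3\delta^2}{4\delta^2}(\sqrt\alpha+\sqrt\eta)^2+\eta\le\tfrac{\eta}{\delta^2}+\tfrac{\alpha}{\delta^2}$ in the theorems' regime of small $\delta$ (using $(\sqrt\alpha+\sqrt\eta)^2\le2(\alpha+\eta)$), i.e.\ $\Tr(\psi'(\id-\Pi^\mathrm{out}_j))\ge1-\tfrac{\eta}{\delta^2}-\tfrac{\alpha}{\delta^2}$. The only point where I would not expect the constant to come out verbatim is this last estimate; any such loss is harmless for the downstream chaining of the lemmas.
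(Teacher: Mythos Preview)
Your argument is correct. The derivation of the exact identity (Steps 1--2) is essentially the paper's computation, organized slightly differently: you first pull $\Lambda_{AB}$ outside using its commutation with $V$ and then use the ``$\lambda_P\mu_P=\delta$'' collapse, whereas the paper carries the $\delta^{2-|(p,p')|}$ weights through and collapses at the end; the double-sum identity $\sum_{P,Q}\bra{1}P^\dagger Q\ket{1}\,(P^\dagger Q)=8\ketbra{1}$ is the same step.

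For the robust part your route genuinely differs from the paper's. The paper defers to the Jordan's-lemma argument of \Cref{lem:bulk-prop}: set $\Pi_1=\ketbra{\phi_0}_{AB}$, $\Pi_2=$ ground-space projector of $V^\dagger h^\mathrm{in}_j V$, use the exact identity to bound the principal angles block-by-block, and combine with the two hypotheses. You instead exploit the explicit structure you already computed: that $\tilde\Pi:=V^\dagger\Pi_j V$ is a projector that \emph{preserves} $\ket{0+}_{AB}$ and acts as $\Pi^\mathrm{out}_j$ on the complement, together with $\Lambda_{AB}\ket{\phi_0}\propto\ket{0+}_{AB}$. This lets you bound $\bra{\theta}\Pi^\mathrm{out}_j\ket{\theta}$ by a triangle inequality on $\tilde\Pi\Lambda_{AB}(\ket{\psi'}-\ket{\epsilon})$, avoiding Jordan's lemma entirely. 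Your approach is more elementary and slightly more transparent here (it directly reuses the exact-identity computation rather than abstracting it into an angle bound); the paper's Jordan route has the advantage of being a uniform template that applies verbatim to all three lemmas. Your caveat about the constant is accurate: $\frac{1+3\delta^2}{4\delta^2}(\sqrt\alpha+\sqrt\eta)^2+\eta\le\frac{\alpha+\eta}{\delta^2}$ holds once $\delta$ is small, which is the only regime the lemma is ever invoked in, and the paper's Jordan argument likewise only recovers the stated constants after a similar weakening.
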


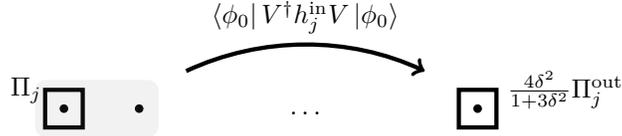
\begin{figure}[h]
    \centering
    \begin{tikzpicture}[scale=0.5]
          \draw[ultra thick] (-3,+2) rectangle (-2, 3) node[left=0.4] {$\Pi_j$}; 
          
          \draw [fill=black] (-0.5,+2.5) circle [radius=0.1]; 
    
          \draw [fill=black] (-2.5,+2.5) circle [radius=0.1];  
          
        \draw[gray,rounded corners,fill=gray, opacity=0.1] (-3.25, 1.75) rectangle (0,3.25); 
        
        \draw[ultra thick, ->] (0.75, 3.5) arc (115:65:7.5) node[midway, above=0] {$\bra{\phi_0} V^\dagger h^\mathrm{in}_j V \ket{\phi_0}$} node[midway, below=0.75] {$\hdots$};
        
        \draw[ultra thick] (8, 2) rectangle (9, 3) node[right=0] {$\frac{4\delta^2}{1+3\delta^2}\Pi_j^\mathrm{out}$}; 
        
        \draw [fill=black] (8.5,+2.5) circle [radius=0.1];    

    \end{tikzpicture}
    \caption{Initialization term $h^\mathrm{in}_j=  \Lambda \Pi_j \Lambda $ in $H_\mathrm{in}$ is teleported to the output column according to~\Cref{lem:H-teleport}.}
    \label{fig:enter-label}
\end{figure}

Therefore,~\Cref{lem:H-teleport} says that if the first layer is in the good state $\ket{\phi_0}$, then $H_\mathrm{in}$ is teleported to the output column.

We now proceed to prove \Cref{thm:energy}, here we assume depth $D= O(\log n)$ with a sufficiently small constant factor. 

\begin{proof}[Proof of \Cref{thm:energy}]
Consider a state $\ket{\psi}$ with energy density $\frac{\varepsilon}{D+1} $ such that $\bra{\psi} H_\mathrm{parent} \ket{\psi} \leq \varepsilon n$, where $\varepsilon= \delta^{200D}$. It follows that at most $\varepsilon n/\alpha $ terms in $H_\mathrm{parent}$ have energy greater than $\alpha$, for some value $\alpha$ to be specified later (we will choose $\alpha=\delta^{100D}$). We refer to these terms as ``strongly violated'' and the terms with energy smaller than $\alpha$ as ``slightly violated'' .
Let $\ket{\psi'}=V^\dagger \ket{\psi}$ be the rotated state.

We overview the proof. First, according to~\Cref{lem:last-col} $\psi'^{(D)}_j$ is close to $\ket{\phi_0}$ for many indices $j \in [n]$. Next, we repeatedly apply \Cref{lem:bulk-prop} to propagate the ``good'' states $\ket{\phi_0}$ to the first layer. Then we use \Cref{lem:H-teleport} to conclude that most of intialization terms in $H_\mathrm{in}$ get teleported (approximately) to the output column. This makes sure that most of the ancilla qubits are initialized (approximately) correctly to $\ket{0}$. Finally, we use a similar argument to \Cref{thm:combin}'s proof to combine the stochastic errors with the adversarial errors.

We now provide the proof details. We start by looking at the local reduced states on the last layer. For at least $(1-\varepsilon/\alpha)n$ many indices $j$, the propagation term $h^{(D)}_j$ in the last layer is slightly violated. So we invoke~\Cref{lem:last-col} to obtain $\Tr(\psi_j^{'(D)} \ket{\phi_0}\bra{\phi_0}) \geq 1 - 4\alpha$.

We ``propagate'' these good states from the last layer to the first layer. According to~\Cref{lem:bulk-prop}, a sufficient condition for $\psi^{'(1)}_j$ to be good is that all the Hamiltonian terms associated to gates in the forward lightcone of qubit $j$, have energy bounded by $\alpha$. We denote the forward lightcone of qubit $j$ by $\operatorname{LC}(j)$. Note that we only consider $D=O(\log n)$. Assume all propagation terms in $\operatorname{LC}(j)$ are slightly violated, then for any locations $r,s \in \operatorname{LC}(j)$ in the last layer we have $\Tr(\psi_{r,s}^{'(D)} (\ket{\phi_0}\bra{\phi_0}^{\otimes 2})) \geq 1 - 8\alpha$ due to the previous paragraph. Next, we repeatedly apply~\Cref{lem:bulk-prop} on the propagation terms in $\operatorname{LC}(j)$ to obtain that $\Tr(\psi^{'(1)}_j \ket{\phi_0}\bra{\phi_0}) \geq 1- \frac{\alpha}{\delta^{16D}}$ for sufficiently small $\delta$.
Thus, we can invoke the robust version of~\Cref{lem:H-teleport} to obtain $\Tr(\psi' \Pi_j^\mathrm{out}) \geq 1 - \frac{\alpha}{\delta^{16D+2}}$.

The number of locations whose forward lightcone is ``bad'' (i.e., it contains a strongly violated Hamiltonian term) is bounded above by $2^{D-1}\varepsilon n/\alpha $. Hence, $\Tr(\psi' \Pi_j^\mathrm{out}) \geq 1 - \frac{\alpha}{\delta^{16D+2}}$ for at least a fraction of $ 1- 2^{D-1}\varepsilon /\alpha$ of the qubits $j$. We refer to the initialization locations without this guarantee as ``strongly faulty'' initialization locations. Denote these locations as $S_0$, we have $|S_0| \leq 2^{D-1} \varepsilon n/\alpha$. Similarly, there are at most $2^{D-1}\varepsilon n/\alpha$ EPR locations where we do not have the guarantee $\Tr(\psi'^{(\ell)}_j \ket{\phi_0}\bra{\phi_0}) \geq 1- \frac{\alpha}{\delta^{16D}}$. 
We refer to them as ``strongly faulty'' gate locations in the circuit and denote $S=\{S_1,\hdots,S_D\}$. 

For each slightly faulty location $j$ at layer $1\leq \ell \leq D$, we have the following distance guarantee due to Fuchs--van de Graf inequality and by choosing, say, $\alpha = \delta^{50D}$
\begin{equation}
     \| \psi_{j}^{'(\ell)} - \ket{\phi_0}\bra{\phi_0} \|_1 \leq 2 \sqrt{\frac{\alpha}{\delta^{16D}}} \leq  \delta^{10D}.
\end{equation}
It follows that
\begin{align}
      \left| \Tr (\psi_{j}^{'(\ell)} \ket{\Phi_p}\bra{\Phi_p}) - |\bra{\phi_0} \ket{\Phi_p}|^2 \right| \leq    \delta^{10D}, \qquad p \in \mathcal{P}
      \label{eq:marginal1}
\end{align}
In other words, the Pauli errors at the slightly violated locations approximately follows the depolarizing channel with probability $p=\delta^2/(1+3\delta^2)$ for each of $X,Y,Z$ errors.

Similarly, we obtain the following bound at the (approximately) correctly initialized locations in the output column
\begin{align}
    \Tr (\psi_{j}^{'\mathrm{out}} \ket{0}\bra{0}) \geq    1- \delta^{10D}, 
          \label{eq:marginal2}
\end{align}

With $\alpha = \delta^{50D}$ and $\varepsilon=\delta^{200D}$ we also have the following bound on the total number of strongly faulty locations
\begin{equation}
    |S| = \sum_{\ell=0}^{D} |S_\ell| \leq 2^D \varepsilon n/\alpha  \leq \delta^{50D} n.
\end{equation}

Similar to the proof of~\Cref{thm:combin}, we denote by $\Tilde{E}_\ell$ the adversarial errors at locations $S_\ell$ coming from the strongly faulty locations and by $\Tilde{P}_\ell$ be the \textit{almost} local depolarizing noise coming from the slightly faulty locations.
We can expand $\ket{\psi'}$ as
\begin{equation}
    \ket{\psi'} = \sum_{\Vec{E}  \in \mathcal{P}^{S} \atop \Vec{P} \in \mathcal{P}^{S^c}} c_{\Vec{E},\Vec{P}}  \bigotimes_{1\leq \ell \leq D}\left(\ket{\Phi_{\Tilde{P}_\ell}} 
    \ket{\Phi_{\Tilde{E}_\ell}} \right)\bigotimes (\Tilde{E}_0 \otimes \Tilde{P}_0 \ket{0^a}) \otimes \ket{\xi_{\Vec{P},\Vec{E}}},
\end{equation}
where $\ket{\xi_{\Vec{P},\Vec{E}}}$ are normalized states and $c_{\Vec{E},\Vec{P}}$  are (w.l.o.g.) real coefficients such that $|c_{\Vec{E},\Vec{P}}|^2$ define a probability distribution whose local marginals on $S^c$ are constrained by Equations~\eqref{eq:marginal1}, \eqref{eq:marginal2}.

Finally, we can combine $\Vec{P}$ and $\Vec{E}$ together and treat them as adversarial errors by truncating the summands with  high-weight $\Vec{P}$. Let $\Pi$ be the projector onto  high-weight EPR states in $S^c$
 \begin{equation}
     \Pi = \sum_{\Vec{P} \in \mathcal{P}^{S^c}: |\Vec{P}| \geq (\beta-\delta^{50D}) n}   \ket{\Phi_{\Vec{P}}}\bra{\Phi_{\Vec{P}}},
 \end{equation}
where $\beta$ is a parameter to be specified. Below, we will truncate the high-weight $\Vec{P}$ in $\ket{\psi'}$ to obtain the state $\ket{\chi'} = \frac{(\id-\Pi) \ket{\psi} }{\|(\id-\Pi) \ket{\psi}\|}$, and our goal is to show $\ket{\chi'}$ is close to $\ket{\psi'}$. Note that $\ket{\chi'}$ only contains terms with at most $\beta n$ adversarial errors as desired.

Observe that $\mathbb{E}_{\Vec{P} \sim \ket{\psi'}}[|\Vec{P}|: \Vec{P} \in \mathcal{P}^{S^c}] \leq 3(\frac{\delta^2}{1+3\delta^2} + \delta^{10D})nD$ according to Equations~\eqref{eq:marginal1}, \eqref{eq:marginal2}. So using Markov's inequality we can bound
\begin{equation}
    \Tr(\Pi \ket{\psi'}\bra{\psi'}) \leq \frac{4 \delta^2 D}{\beta}.
\end{equation}
Assuming $\delta^2 D \ll 1$ and chosing $\beta = 400 \delta^2 D$ and using gentle measurement lemma we have
\begin{equation}
    \frac{1}{2}\|\ket{\chi'}\bra{\chi'} - \ket{\psi'}\bra{\psi'} \|_1 \leq \frac{1}{10}.
\end{equation}
The same trace distance bound holds on the unrotated states $V\ket{\psi'}$ and $V \ket{\chi'} \triangleq \ket{\chi}$, as well as their reduced states on the output column $\psi_\mathrm{out}$ and $\chi_\mathrm{out}$ due to unitary-invariance and monoticity of the trace distance:
\begin{equation}
    \frac{1}{2}\|\psi_\mathrm{out} - \chi_\mathrm{out}\| \leq \frac{1}{10}.
\end{equation}
The reduced state $\chi_\mathrm{out}$ is a $\beta$-noisy state
\begin{equation}
    \chi_\mathrm{out} = \sum_{\tilde{E}_1\,\hdots \tilde{E}_D: |\Vec{E}| \leq \beta n }  \left( \sum_{\tilde{E}_0: |\Vec{E}| \leq \beta n } c'_{\Vec{E}} \widetilde{W}_{\Vec{E}} (\Tilde{E}_0 \ket{0^a}) \otimes \ket{\xi_{\Vec{E}}} \right) \left( \sum_{\tilde{E}_0: |\Vec{E}| \leq \beta n } c'_{\Vec{E}} \widetilde{W}_{\Vec{E}} (\Tilde{E}_0 \ket{0^a}) \otimes \ket{\xi_{\Vec{E}}} \right) ^\dagger,
\end{equation}
where $\widetilde{W}_{\Vec{E}} \triangleq W_D \Tilde{E}_D\hdots W_1 \Tilde{E}_1 $. This concludes the proof of~\Cref{thm:energy}.
\end{proof}

\subsection{Analysis of $H_\mathrm{prop}$ and $H_\mathrm{in}$: deferred proofs}\label{sec:deferred-proofs}
For convenience, recall the change of basis 
\begin{equation}
    V = \sum_{\Vec{P} \in \mathcal{P}^{\otimes nD}} \ket{\Phi_{\Vec{P}}} \bra{\Phi_{\Vec{P}} } \otimes  (W_{D} \tilde{P}_{D} \hdots W_1 \tilde{P}_{1}),
\end{equation}
and the inverse local injective map
\begin{equation}
    \Lambda = \delta \ket{\Phi_I} \bra{\Phi_I} + \sum_{p \in \{X, XZ,Z\}}  \ket{\Phi_p} \bra{\Phi_p}.
\end{equation}

\subsubsection{Proof of~\Cref{lem:last-col} (Good states in last layer of $H_\mathrm{prop}$)}
As stated in the lemma, we assume the last layer of gates in the circuit are identity gates for simplicity in calculating the rotated $H_g$ terms.  We have that
\begin{align*}
    V^\dagger (h^{(D)}_{j})_{ABC} V &= V^\dagger \Lambda_{AB} (\id - \ket{\Phi_I} \bra{\Phi_I}_{BC}) \Lambda_{AB} V \\
    & = V^\dagger \left(\sum_{p,p' \in \mathcal{P}} \delta^{2-|(p,p')|} \ket{\Phi_p} \bra{\Phi_{p'}}_{AB} \otimes  \bra{\Phi_p}_{AB}(\id - \ket{\Phi_I} \bra{\Phi_I}_{BC}) \ket{\Phi_{p'}}_{AB} \right) V \\
    & = V^\dagger \left(\sum_{p,p' \in \mathcal{P}} \delta^{2-|(p,p')|} \ket{\Phi_p}\bra{\Phi_{p'}} \left( \mathds{1}_{p,p'} \id - \frac{1}{4}(p^* p'^\top)_C  \right) \right) V \\
    & =\sum_{p \in \mathcal{P}} \delta^{2-2 |p|} \ket{\Phi_p} \bra{\Phi_{p}}_{AB} \\
    & -\frac{1}{4} \sum_{\Vec{P} \in \mathcal{P}^{\otimes n(D-1)}} \sum_{p,p' \in \mathcal{P}} \delta^{2-|(p,p')|} \ket{\Phi_p} \bra{\Phi_{p'}}_{AB} \otimes \ket{\Phi_{\Vec{P}}} \bra{\Phi_{\Vec{P}}} \otimes \left((\widetilde{W}_{\Vec{P}}^{<D})^\dagger p^\dagger p^* p'^\top p' (\widetilde{W}_{\Vec{P}}^{<D}) \right) \\
    & = \sum_{p \in \mathcal{P}} \delta^{2-2|p|} \ket{\Phi_p} \bra{\Phi_{p}}_{AB} - \frac{1}{4}\sum_{p,p'} \delta^{|(p,p')|} \ket{\Phi_p} \bra{\Phi_{p'}}_{AB}\\
    & = \Lambda_{AB} \left( \sum_{p} \ket{\Phi_p} \bra{\Phi_{p}}_{AB} - \frac{1}{4}\sum_{p,p'} \ket{\Phi_p} \bra{\Phi_{p'}}_{AB} \right) \Lambda_{AB},
\end{align*}
where
$\sum_{\Vec{P} \in \mathcal{P}^{\otimes n(D-1)}}$ denotes the sum over the Pauli noise $\tilde{P}_{\ell}$ for $\ell < D$ and $\widetilde{W}_{\Vec{P}}^{<D} \triangleq W_{D-1} \Tilde{P}_{D-1} \hdots W_1 \Tilde{P}_1$.

Note that $V^\dagger h^{(D)}_{j} V$ has ground state $\ket{\phi_0}$ and spectral gap $\gamma \geq 1/4$, so
\begin{align*}
    \frac{1}{4} \Tr(\psi_j^{'(D)} (\id - \ket{\phi_0}\bra{\phi_0}) ) &\leq \Tr (\ket{\psi}\bra{\psi} h_j^{(D)}) \leq \alpha, \\
     \Tr(\psi_j^{'(D)} \ket{\phi_0}\bra{\phi_0}) &\geq 1 - 4\alpha.
\end{align*}

\subsubsection{Proof of~\Cref{lem:bulk-prop} (Bulk propagation of good states)}

We first prove the following claim, which is the ``noiseless'' version of \Cref{lem:bulk-prop}.

\begin{claim}
\label{claim:projector} Let $\ket{\phi_0}$ be defined as in~\Cref{eq:phi0}. Consider a propagation term $h_U$ corresponding to a two-qubit gate as shown in~\Cref{fig:bulk-prop}. Furthermore, define $\ket{\Phi_{\Vec{p}}} \triangleq \ket{\Phi_{p_1}} \ket{\Phi_{p_2}}$, for $\Vec{p} \in \mathcal{P}^{\otimes 2}$. It holds that
\begin{align}
    \bra{\phi_0}^{\otimes 2}  V^\dagger h_U V \ket{\phi_0}^{\otimes 2} & = \frac{16 \delta^4}{(1+3\delta^2)^2} \Lambda^{\otimes 2} \left( \sum_{\Vec{p} \in \mathcal{P}^{\otimes 2}} \ket{\Phi_{\Vec{p}}} \bra{\Phi_{\Vec{p}}}  - \frac{1}{16} \sum_{\Vec{p}, \Vec{q} \in \mathcal{P}^{\otimes 2}} \ket{\Phi_{\Vec{p}}} \bra{\Phi_{\Vec{q}}} \right) \Lambda^{\otimes 2},
    \label{eq:27463}
\end{align}
where $\ket{\phi_0}^{\otimes 2}$ ($\ket{\Phi_{\Vec{p}}}$) acts on the EPR locations to the right (left) of $h_U$ (see~\Cref{fig:bulk-prop}).
\end{claim}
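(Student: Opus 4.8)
The plan is to evaluate $\bra{\phi_0}^{\otimes2}V^\dagger h_U V\ket{\phi_0}^{\otimes2}$ by a direct computation in the Bell basis $\{\ket{\Phi_p}\}_{p\in\mathcal P}$, following the template already used for $V^\dagger h^{(D)}_j V$ in the proof of \Cref{lem:last-col}. Writing $\Lambda=\sum_{p\in\mathcal P}\delta^{1-|p|}\ket{\Phi_p}\bra{\Phi_p}$ (with $|I|=0$ and $|p|=1$ otherwise), split the eight-qubit term $h_U=\Lambda^{\otimes4}(\id-\ket{\Phi_U}\bra{\Phi_U})\Lambda^{\otimes4}$ into an identity piece $\Lambda^{\otimes4}\Lambda^{\otimes4}$ and a gate piece $-\Lambda^{\otimes4}\ket{\Phi_U}\bra{\Phi_U}\Lambda^{\otimes4}$, where the four $\Lambda$'s sit on the gate's two incoming and two outgoing shifted-EPR locations (the \emph{left} and \emph{right} legs of the claim) and $\ket{\Phi_U}$ is the Choi state of $U$, carrying the gate unitary.

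The identity piece is easy. Each $\Lambda$ is Bell-diagonal, hence commutes with the bulk projectors $\ket{\Phi_{\vec P}}\bra{\Phi_{\vec P}}$ in $V$, while the circuit factor $W_D\tilde P_D\cdots W_1\tilde P_1$ of $V$ acts only on the (disjoint) output column; so $V^\dagger(\Lambda^{\otimes4}\Lambda^{\otimes4})V$ is unchanged, and contracting the right legs against $\ket{\phi_0}$ (using $\Lambda\ket{\phi_0}\propto\sum_p\ket{\Phi_p}$, hence $\bra{\phi_0}\Lambda^2\ket{\phi_0}=\tfrac{4\delta^2}{1+3\delta^2}$ per right leg, and Bell completeness $\sum_p\ket{\Phi_p}\bra{\Phi_p}=\id$) leaves exactly $\tfrac{16\delta^4}{(1+3\delta^2)^2}\,\Lambda^{\otimes2}\big(\sum_{\vec p\in\mathcal P^{\otimes2}}\ket{\Phi_{\vec p}}\bra{\Phi_{\vec p}}\big)\Lambda^{\otimes2}$ on the left legs, i.e.\ the first term of the claim.

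The gate piece is the substantive part. Since $\Lambda$ still commutes through $V$, it remains to compute $V^\dagger\ket{\Phi_U}\bra{\Phi_U}V$, dress it with $\Lambda^{\otimes4}$ on each side, and contract the right legs. Plugging in $V=\sum_{\vec P}\ket{\Phi_{\vec P}}\bra{\Phi_{\vec P}}\otimes(W_D\tilde P_D\cdots W_1\tilde P_1)$ and resolving the Bell legs of $\ket{\Phi_U}$ through the teleportation identities (contractions of the form $(\bra{\Phi_p}\otimes\id)(\id\otimes\ket{\Phi_q})$ reduce to single-qubit operators), the unitary $U$ sitting inside $\ket{\Phi_U}$ cancels the matching $U$-factor of layer $\ell$ in the residual circuit carried by $V$ — precisely the cancellation that renders $V^\dagger h^{(D)}_j V$ local in \Cref{lem:last-col}. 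What is left is a sum, over a pair of Paulis on the left legs and over the bulk pattern $\vec P$ (and the output-column registers), of terms of the shape $\ket{\Phi_p}\bra{\Phi_{p'}}_{\mathrm{left}}\otimes(\ldots)_{\mathrm{right}}\otimes\ket{\Phi_{\vec P}}\bra{\Phi_{\vec P}}_{\mathrm{bulk}}\otimes(\ldots)_{\mathrm{out}}$; contracting the right legs with $\ket{\phi_0}$ — i.e.\ with the flat superposition $\propto\sum_q\ket{\Phi_q}$ after one $\Lambda$ — collapses all of the $\vec P$-, right-leg- and output-column-dependence to identities by Bell completeness, leaving exactly the flat rank-one operator $-\tfrac{1}{16}\sum_{\vec p,\vec q\in\mathcal P^{\otimes2}}\ket{\Phi_{\vec p}}\bra{\Phi_{\vec q}}$ on the left legs, again dressed with $\Lambda^{\otimes2}$ on each side and carrying the prefactor $\tfrac{16\delta^4}{(1+3\delta^2)^2}$. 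Summing the two pieces gives the claimed identity; as a consistency check, $\Lambda^{\otimes2}\ket{\phi_0}^{\otimes2}\propto\sum_{\vec p}\ket{\Phi_{\vec p}}$ lies in the kernel of $\id-\tfrac{1}{16}\sum_{\vec p,\vec q}\ket{\Phi_{\vec p}}\bra{\Phi_{\vec q}}$, so $\ket{\phi_0}^{\otimes2}$ on the left legs is a zero-energy state of the resulting operator, matching what \Cref{lem:bulk-prop} will need.

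The genuine obstacle is verifying this collapse cleanly — that $\bra{\phi_0}^{\otimes2}V^\dagger h_U V\ket{\phi_0}^{\otimes2}$ really is a four-qubit operator supported on the two left legs, with no leftover dependence on the bulk Bell states $\ket{\Phi_{\vec P}}$ or on the output column. This requires carefully bookkeeping which qubits the four $\Lambda$'s and the Choi projector $\ket{\Phi_U}\bra{\Phi_U}$ touch, matching each Pauli factor of $V$ and $V^\dagger$ to its EPR leg, confirming the $U$/$U^\dagger$ cancellation, and checking that every bulk factor pairs off once the right legs are fixed to the flat superposition: this is the locality-after-conjugation phenomenon of \Cref{lem:last-col}, now one layer deeper into the circuit. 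Everything else is routine Bell-basis and Pauli algebra, and no new idea beyond those of the last-layer computation should be needed.
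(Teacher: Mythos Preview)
Your proposal is correct and follows the same route as the paper: expand $h_U$ in the Bell basis, split into the diagonal $\Lambda^{\otimes4}\Lambda^{\otimes4}$ piece and the Choi-projector piece, and show that contracting the right legs against $\ket{\phi_0}$ collapses the bulk and output-column dependence. The one step you leave as ``Bell completeness'' is made explicit in the paper via the Pauli identity $\sum_{p_2,q_2\in\mathcal P}\Tr(p_1 q_1^\top U^\dagger q_2^\top p_2 U)\,p_2^\top q_2 = 8\,U p_1 q_1^\top U^\dagger$ (single-qubit version), which is what turns the output-column factor $(\widetilde W^{<\ell})^\dagger p_1^\top U^\dagger p_2^\top q_2 U q_1 \widetilde W^{<\ell}$ into $\id$ after the uniform $p_2,q_2$ sum; note this is a different mechanism from \Cref{lem:last-col}, where the gate is $\id$ and the collapse is simply $p^\dagger p^* p'^\top p'=\id$ rather than a $U$-cancellation.
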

The above claim implies that if $\psi$ fully satisfies $h_U$ and on one side of $h_U$ the state $V^\dagger \ket{\psi}$ is equal to $\ket{\phi_0}^{\otimes 2}$, then so it is on the other side since this is the unique ground state of the matrix to the RHS of~\Cref{eq:27463}.

\begin{proof}[Proof of Claim]
For simplicity, we prove the claim for Hamiltonian propagation terms corresponding to single-qubit gates. The generalization to the multi-qubit case is straightforward as explained later. Consider a gate $U$ in layer $\ell < D$ acting on qubit $j$. We refer to the Hamiltonian term corresponding to this gate as $h_U=\Lambda^{\otimes 2}_{AB,CD}(\id - \ket{\Phi_U} \bra{\Phi_U}_{BC})\Lambda^{\otimes 2}_{AB,CD}$, which acts on qubits $A,B$ (EPR layer $\ell$) and  $C,D$ (EPR layer $\ell+1$).

\begin{center}
\begin{tikzpicture}[scale=0.5]
  \foreach \x in {0} {
    \foreach \y in {0} {
      \path [draw=blue,snake it, thick] (\x-0.5,\y+2.5) -- (\x+1.5,\y+2.5); 
      
      \draw[ultra thick] (\x+1,\y+2) rectangle (\x+2,\y+3) node[right] {$U$}; 
      
      \draw [fill=black] (\x-0.5,\y+2.5) circle [radius=0.1] node[below=0.3] {$B$}; 
      
      \draw [fill=black] (\x+1.5,\y+2.5) circle [radius=0.1] node[below=0.3] {$C$};

      \draw [fill=black] (\x-2.5,\y+2.5) circle [radius=0.1] node[below=0.3] {$A$};  
      
      \draw [fill=black] (\x+3.5,\y+2.5) circle [radius=0.1] node[below=0.3] {$D$}; 

    \draw[gray,rounded corners,fill=gray, opacity=0.1] (\x-3.25, \y+1.75) rectangle (\x,\y+3.25); 
    
      \draw[gray,rounded corners,fill=gray, opacity=0.1] (\x+0.75,\y+1.75) rectangle (\x+4,\y+3.25); 
    }
  }
\end{tikzpicture}
\end{center}

Let $\ket{\Phi_{\Vec{p}}}_{AB,CD} \triangleq \ket{\Phi_{p_1}}_{AB}\ket{\Phi_{p_2}}_{CD}$, for $p_1,p_2 \in \mathcal{P}$. We will omit the system labels when they are clear from the context. The rotated term is of the form
\begin{equation*}
\begin{aligned}
    V^\dagger h_U V &= V^\dagger \Lambda^{\otimes 2}_{AB,CD} (\id - \ket{\Phi_U} \bra{\Phi_U}_{BC})  \Lambda^{\otimes 2}_{AB,CD} V \\
    & = V^\dagger \left( \sum_{\Vec{p},\Vec{q} \in \mathcal{P}^{\otimes 2}} \delta^{4-|(\Vec{p}, \Vec{q})|} \ket{\Phi_{\Vec{p}}} \bra{\Phi_{\Vec{q}}}_{AB,CD} \cdot  \bra{\Phi_{\Vec{p}}}(\id - \ket{\Phi_U} \bra{\Phi_U}_{BC}) \ket{\Phi_{\Vec{q}}} \right) V\\
    &= V^\dagger \left( \sum_{\Vec{p},\Vec{q} \in \mathcal{P}^{\otimes 2}} \delta^{4-|(\Vec{p}, \Vec{q})|} \ket{\Phi_{\Vec{p}}} \bra{\Phi_{\Vec{q}}}_{AB,CD} \cdot \left(\mathds{1}_{\Vec{p},\Vec{q}} - \frac{1}{8} \operatorname{Tr}(p_1^* q_1^\top U^\dagger q_2^\top p_2^* U ) \right) \right) V\\
    &= \sum_{\Vec{p} \in \mathcal{P}^{\otimes 2}} \delta^{4-2|\Vec{p}|} \ket{\Phi_{\Vec{p}}} \bra{\Phi_{\Vec{p}}} \\
    &- \frac{1}{8} \sum_{\Vec{P}\in \mathcal{P}^{\otimes n(\ell-1)}}  \sum_{\Vec{p},\Vec{q}} \delta^{4-|(\Vec{p}, \Vec{q})|} \ket{\Phi_{\Vec{p}}} \bra{\Phi_{\Vec{q}}}  \operatorname{Tr}(p_1^* q_1^\top U^\dagger q_2^\top p_2^* U ) \otimes \ket{\Phi_{\Vec{P}}} \bra{\Phi_{\Vec{P}}} \otimes (\widetilde{W}_{\Vec{P}}^{< \ell })^\dagger p_1^\dagger U^\dagger  p_2^\dagger q_2 U q_1 (\widetilde{W}_{\Vec{P}}^{< \ell }),
\end{aligned}
\end{equation*}
where $\widetilde{W}_{\Vec{P}}^{<\ell}\triangleq W_{\ell-1} \Tilde{P}_{\ell-1} \hdots W_1 \Tilde{P}_1$. Above, $\mathds{1}_{\Vec{p},\Vec{q}}$ denotes the Kronecker delta symbol. The sum $\sum_{\Vec{P}\in \mathcal{P}^{\otimes n(\ell-1)}} $ is over the Pauli noise $\Vec{P}$ in layers preceding the gate $U$. We can also drop the complex conjugate ``$*$'' because $\mathcal{P}=\{I,X,XZ,Z\}$ are real matrices.

Next, we project qubits $C,D$ onto $\ket{\phi_0}$. Doing so on the term $\sum_{\Vec{p}} \delta^{4-2|\Vec{p}|} \ket{\Phi_{\Vec{p}}} \bra{\Phi_{\Vec{p}}}_{AB,CD}$ in $V^\dagger h_U V$ yields the following two-qubit term acting on qubits $A,B$
\begin{equation}
    \frac{4\delta^2}{1+3\delta^2}\sum_{p_1 \in \mathcal{P}} \delta^{2-2|p_1|} \ket{\Phi_{p_1}} \bra{\Phi_{p_1}}_{AB}.
    \label{eq:7226}
\end{equation}
We analyze the second term in $V^\dagger h_U V$. For each summand $\Vec{P}$, projecting project qubits $C,D$ onto $\ket{\phi_0}$ gives
\begin{equation}
    \frac{\delta^2}{1+3\delta^2} \sum_{\Vec{p},\Vec{q}\in \mathcal{P}^{\otimes 2}} \delta^{2-|(p_1,q_1)|} \ket{\Phi_{p_1}} \bra{\Phi_{q_1}}  \operatorname{Tr}(p_1 q_1^\top U^\dagger q_2^\top p_2 U ) \otimes \ket{\Phi_{\Vec{P}}} \bra{\Phi_{\Vec{P}}}  \otimes (\widetilde{W}_{\Vec{P}}^{< \ell })^{\dagger} p_1^\top U^\dagger  p_2^\top q_2 U q_1 (\widetilde{W}_{\Vec{P}}^{< \ell }).
    \label{eq:462}
\end{equation}
Next, we apply the following identity
\begin{equation}
    \sum_{p_2,q_2 \in \mathcal{P}} \operatorname{Tr}(p_1 q_1^\top U^\dagger q_2^\top p_2 U ) p_2^\top q_2 = 8 U p_1 q_1^\top U^\dagger
\end{equation}
to simplify~\Cref{eq:462} to
\begin{equation}
    \frac{8\delta^2}{1+3\delta^2} \sum_{p_1,q_1} \delta^{2-|(p_1,q_1)|} \ket{\Phi_{p_1}} \bra{\Phi_{q_1}}  \otimes \ket{\Phi_{\Vec{P}}} \bra{\Phi_{\Vec{P}}}  \otimes \id.
\end{equation}
Overall, summing over $\Vec{P}$, the second term in $V^\dagger h_U V$ is equal to
\begin{equation}
     -\frac{\delta^2}{1+3\delta^2} \sum_{p_1,q_1} \delta^{2-|(p_1,q_1)|} \ket{\Phi_{p_1}} \bra{\Phi_{q_1}}.
\end{equation}
Combining this with~\Cref{eq:7226} we get
\begin{equation}
    \bra{\phi_0}_{CD} V^\dagger h_U V \ket{\phi_0}_{CD} = \frac{4\delta^2}{1+3\delta^2} \Lambda_{AB} \left( \sum_{p_1} \ket{\Phi_{p_1}} \bra{\Phi_{p_1}}_{AB} - \frac{1}{4}\sum_{p_1,q_1} \ket{\Phi_{p_1}} \bra{\Phi_{q_1}}_{AB} \right) \Lambda_{AB}.
\end{equation}
A completely similar analysis for two-qubit gates gives the lemma  statement.
\end{proof}

We now prove \Cref{lem:bulk-prop}.

\begin{proof}[Proof of~\Cref{lem:bulk-prop}]
    
Let $\Pi_1 = \id \otimes \phi_0^{\otimes 2}$ and $\Pi_2$ be the projector onto the ground space of $V^\dagger h_U V$. As a reminder, it is assumed that $\Tr(\Pi_1 \psi' ) \geq 1 - \eta $ and $\Tr(V^\dagger h_U V \psi') \leq \alpha$, and the goal is to show $\Tr(\Pi_2 \psi') \geq 1 - \frac{\eta}{\delta^8} - \frac{\alpha}{\delta^{16}}$.

According to~\Cref{claim:projector}, the operator $\bra{\phi_0}^{\otimes 2}  V^\dagger h_U V \ket{\phi_0}^{\otimes 2} $ has a spectral gap $\geq 15 \delta^8$ for sufficiently small $\delta$. Therefore,
\begin{equation}
    15 \delta^8 (\Pi_1 - \phi_0^{\otimes 4}) \leq \Pi_1 V^\dagger h_U V \Pi_1.
\end{equation}
However, observe the following inequality which follows from $\|h_U\| \leq 1$
\begin{equation}
 \Pi_1 V^\dagger h_U V \Pi_1 \leq (\Pi_1 - \Pi_1 \Pi_2 \Pi_1)
\end{equation}
Combining the previous inequalities we obtain
\begin{equation}
    \Pi_1 \Pi_2 \Pi_1 \leq  \Pi_1 - 15 \delta^8 (\Pi_1 - \phi_0^{\otimes 4}).
    \label{eq:jordan}
\end{equation}

Next, we apply Jordan's lemma to decompose $\Pi_1$ and $\Pi_2$ into $1\times 1$ and $2 \times 2$ blocks. Observe that \Cref{claim:projector} implies $\phi_0^{\otimes 4}$ is the \textit{unique} intersection of $\Pi_1$ and $\Pi_2$, as also evident from~\Cref{eq:jordan}. Consider two corresponding $2 \times 2$ blocks $\ket{u}\bra{u}$ in $\Pi_1$ and $\ket{v}\bra{u}$ in $\Pi_2$,~\Cref{eq:jordan} then implies that $|\braket{u}{v}|^2 \leq 1- 15 \delta^8$.

On the other hand, letting $\gamma \geq \delta^{8}$ be the spectral gap of $h_U$\footnote{We have $h_U^2 = \Lambda^{\otimes 4}(\id - \ket{\Phi_U}\bra{\Phi_U}) (\Lambda^2)^{\otimes 4} (\id - \ket{\Phi_U}\bra{\Phi_U})   \Lambda^{\otimes 4} \geq \delta^8 \Lambda^{\otimes 4}(\id - \ket{\Phi_U}\bra{\Phi_U}) \Lambda^{\otimes 4} = \delta^8 h_U.$}, we have
\begin{equation}
    \gamma  (\id - \Pi_2)  \leq V^\dagger h_U V  \Longrightarrow \Tr(\Pi_2 \psi') \geq 1 -  \frac{\alpha}{\delta^8}.
\end{equation}
The following expressions follows by writing the projectors $\Pi_1, \Pi_2$ according to Jordan's lemma $\Pi_1 = \phi_0^{\otimes 4} + \sum_{i} \ket{u_i}\bra{u_i}$ and $\Pi_2 = \phi_0^{\otimes 4} + \sum_{i} \ket{v_i}\bra{v_i}$
\begin{align}
    \Tr(\psi' \phi_0^{\otimes 4}) + \sum_{i} \Tr(\ket{u_i}\bra{u_i}  \psi') &\geq 1- \eta,\\
    \Tr(\psi' \phi_0^{\otimes 4}) + \sum_{i} \Tr(\ket{v_i}\bra{v_i}  \psi') &\geq 1- \frac{\alpha}{\delta^8}. 
\end{align}
Using $\ket{u_i}\bra{u_i} + \ket{v_i}\bra{v_i} \leq (1+|\braket{u_i}{v_i}|)P_i \leq  (2-15 \delta^8 ) P_i$, where $P_i$ is the projector onto the $2\times 2$ Jordan block $i$, we get
\begin{align}
    2\Tr(\psi' \phi_0^{\otimes 4}) + (2-15 \delta^8) \Tr(\sum_{i} P_i \psi' ) \geq 2 -\eta - \frac{\alpha}{\delta^8}
\end{align}
Using $\sum_i P_i + \phi_0^{\otimes 4} \leq  \id$ and rearranging we get
\begin{equation}
    \Tr(\psi' \phi_0^{\otimes 4}) \geq 1 - \frac{\eta}{15 \delta^8} - \frac{\alpha}{15 \delta^{16}}\geq 1 - \frac{\eta}{\delta^8} - \frac{\alpha}{\delta^{16}},
\end{equation}
This concludes the proof of~\Cref{lem:bulk-prop}.
\end{proof}

Let us mention that a generalized statement of \Cref{lem:bulk-prop} for $k$-local gates also holds.
\begin{lemma}
Consider a propagation term $h^{(\ell)}_{S}=  \Lambda^{\otimes 2k} (\id - \ket{\Phi_U}\bra{\Phi_U} ) \Lambda^{\otimes 2k} $, where $\ket{\Phi_U}$ is the EPR state encoding the $k$-qubit gate $U$ acting on qubit set $S$ of size $|S|=k$ in layer $\ell < D$. Let $\ket{\psi}$ be a normalized state and $\ket{\psi'}$ be its rotated version, such that $\Tr( \psi'^{(\ell+1)}_{S}  \phi_0^{\otimes k}) \geq 1 - \eta$. If additionally $\bra{\psi} h^{(\ell)}_{S} \ket{\psi} \leq \alpha$, then $\Tr ( \psi'^{(\ell,\ell+1)}_{S} \phi_0^{\otimes 2k}) \geq 1 - \frac{\eta}{\delta^{4k}} - \frac{\alpha}{\delta^{8k}}$.\footnote{More precisely, the overlap lower bound can be found to be $\Tr ( \psi'^{(\ell,\ell+1)}_{S} \phi_0^{\otimes 2k}) \geq 1 - \left(\frac{\eta}{\delta^{4k}} + \frac{\alpha}{\delta^{8k}}\right) \frac{(1+3\delta^2)^k}{4^k}$. But the RHS is at least $1 - \left(\frac{\eta}{\delta^{4k}} + \frac{\alpha}{\delta^{8k}}\right)$ when $\delta$ is sufficiently small.}
\label{lem:prop-k-local}
\end{lemma}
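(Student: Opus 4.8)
The plan is to imitate the two-step structure of the proof of \Cref{lem:bulk-prop}: first establish the ``noiseless'' identity for $\bra{\phi_0}^{\otimes k} V^\dagger h^{(\ell)}_{S} V \ket{\phi_0}^{\otimes k}$ that generalizes \Cref{claim:projector} to a $k$-qubit gate, and then feed it into the Jordan's-lemma perturbation argument, replacing the numerical constants and $\delta$-exponents of the $k=2$ case by their $k$-dependent analogues.

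First I would expand $\Lambda^{\otimes 2k} = \sum_{\Vec{p}, \Vec{q} \in \mathcal{P}^{\otimes 2k}} \delta^{2k - |(\Vec{p}, \Vec{q})|}\ket{\Phi_{\Vec{p}}}\bra{\Phi_{\Vec{q}}}\, \bra{\Phi_{\Vec{p}}}(\cdot)\ket{\Phi_{\Vec{q}}}$ and run exactly the algebra of the proof of \Cref{claim:projector}, now with $k$ EPR sites on the left of the gate and $k$ on the right. The overlaps $\bra{\Phi_{\Vec{p}}}(\id - \ket{\Phi_U}\bra{\Phi_U})\ket{\Phi_{\Vec{q}}}$ again split into a Kronecker-delta part and a $2^{-2k}\Tr(\cdots U^{\dagger} \cdots U)$ part, and after projecting the right-hand $k$ sites onto $\ket{\phi_0}^{\otimes k}$ the nonlocal conjugates $\widetilde W^{<\ell}_{\Vec{P}}$ collapse to the identity via the $k$-qubit Pauli identity $\sum_{\Vec{p}_{2}, \Vec{q}_{2} \in \mathcal{P}^{\otimes k}} \Tr(\Vec{p}_{1} \Vec{q}_{1}^{\top} U^{\dagger} \Vec{q}_{2}^{\top} \Vec{p}_{2} U)\, \Vec{p}_{2}^{\top} \Vec{q}_{2} = 8^{k}\, U \Vec{p}_{1} \Vec{q}_{1}^{\top} U^{\dagger}$ (a consequence of $\mathcal{P}^{\otimes k}$ being an orthogonal matrix basis, $\Vec{p}^{\top}\Vec{p} = \id$, and unitarity of $U$). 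The output of this computation should be the $k$-site operator
\begin{equation*}
    \bra{\phi_0}^{\otimes k} V^\dagger h^{(\ell)}_{S} V \ket{\phi_0}^{\otimes k} = \frac{(4\delta^2)^{k}}{(1+3\delta^2)^{k}}\, \Lambda^{\otimes k}\Bigl( \sum_{\Vec{p} \in \mathcal{P}^{\otimes k}} \ket{\Phi_{\Vec{p}}}\bra{\Phi_{\Vec{p}}} - \frac{1}{4^{k}}\sum_{\Vec{p}, \Vec{q} \in \mathcal{P}^{\otimes k}} \ket{\Phi_{\Vec{p}}}\bra{\Phi_{\Vec{q}}} \Bigr)\Lambda^{\otimes k},
\end{equation*}
acting on the layer-$\ell$ EPR sites. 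The bracket equals $\id - \ket{\hat w}\bra{\hat w}$ with $\ket{\hat w} \propto \sum_{\Vec{p}}\ket{\Phi_{\Vec{p}}}$, a projector of gap $1$; since $\Lambda^{\otimes k}$ has smallest eigenvalue $\delta^{k}$, the whole operator has unique null vector $\ket{\phi_0}^{\otimes k}$ and spectral gap at least $c_k \triangleq 4^{k}\delta^{4k}/(1+3\delta^2)^{k} \ge 2\delta^{4k}$ for $\delta$ small.

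Then I would re-run the perturbative step of \Cref{lem:bulk-prop} verbatim. With $\Pi_1 = \id \otimes \phi_0^{\otimes k}$ on the layer-$(\ell+1)$ sites and $\Pi_2$ the ground-space projector of $V^\dagger h^{(\ell)}_S V$, the gap bound above together with $\|h^{(\ell)}_S\| \le 1$ gives $\Pi_1 \Pi_2 \Pi_1 \le \Pi_1 - c_k(\Pi_1 - \phi_0^{\otimes 2k})$, so Jordan's lemma (with $\phi_0^{\otimes 2k}$ the unique common eigenvector) forces $|\braket{u_i}{v_i}|^2 \le 1 - c_k$ in every $2\times2$ block. The operator inequality $(h^{(\ell)}_S)^2 \ge \delta^{4k} h^{(\ell)}_S$, which follows from $(\Lambda^2)^{\otimes 2k} \ge \delta^{4k}\id$, gives $h^{(\ell)}_S$ a spectral gap $\ge \delta^{4k}$, hence $\Tr(\Pi_2 \psi') \ge 1 - \alpha/\delta^{4k}$. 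Writing $\Pi_1 = \phi_0^{\otimes 2k} + \sum_i \ket{u_i}\bra{u_i}$ and $\Pi_2 = \phi_0^{\otimes 2k} + \sum_i \ket{v_i}\bra{v_i}$, adding the two trace inequalities, bounding $\ket{u_i}\bra{u_i} + \ket{v_i}\bra{v_i} \le (1 + \sqrt{1-c_k})P_i$, and using $\sum_i P_i = \id - \phi_0^{\otimes 2k}$ with $1 - \sqrt{1-c_k} \ge c_k/2$ yields $\Tr(\psi' \phi_0^{\otimes 2k}) \ge 1 - \tfrac{2}{c_k}(\eta + \alpha/\delta^{4k}) \ge 1 - \eta/\delta^{4k} - \alpha/\delta^{8k}$; not discarding the $(1+3\delta^2)^{-k}$ factor inside $c_k$ recovers the sharper constant $(1+3\delta^2)^{k}/4^{k}$ quoted in the footnote.

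The main obstacle, as in the $k=2$ case, is bookkeeping rather than conceptual: carrying the $k$-fold tensor structure through the algebra of \Cref{claim:projector} (in particular verifying the $k$-qubit Pauli identity and that the nonlocal conjugates $\widetilde W^{<\ell}_{\Vec{P}}$ again collapse to the identity), and tracking the exponents of $\delta$ so that the final estimate reads $\eta/\delta^{4k} + \alpha/\delta^{8k}$. Everything that appeared as a numerical constant in the single-qubit and two-qubit cases already carried out in the excerpt becomes the corresponding power of $4$ or of $\delta$ indexed by $k$, so no idea beyond \Cref{lem:bulk-prop} is needed.
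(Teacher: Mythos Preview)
Your proposal is correct and matches the paper's own approach exactly: the paper does not spell out a separate proof but simply states the $k$-qubit generalization of \Cref{claim:projector} (your displayed identity for $\bra{\phi_0}^{\otimes k} V^\dagger h^{(\ell)}_S V \ket{\phi_0}^{\otimes k}$, with the same prefactor $4^k\delta^{2k}/(1+3\delta^2)^k$) and says to re-run the Jordan's-lemma argument of \Cref{lem:bulk-prop}, which is precisely what you outline. One cosmetic remark: in the last step you should write $\sum_i P_i + \phi_0^{\otimes 2k} \le \id$ rather than equality (there may be one-dimensional Jordan blocks outside both projectors), but only this inequality direction is used.
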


 Its proof follows the same line as in the 1-qubit and 2-qubit cases above, so we will not spell out the entire proof. In short, we first derive the rotated Hamiltonian term $V^\dagger h_U V$ when all $k$ EPR locations to the right of gate $U$ are projected to $\ket{\phi_0}^{\otimes k}$, generalizing \Cref{claim:projector}.
\begin{align}
    \bra{\phi_0}^{\otimes k}  V^\dagger h_U V \ket{\phi_0}^{\otimes k} & = \frac{4^k \delta^{2k}}{(1+3\delta^2)^k} \Lambda^{\otimes k} \left( \sum_{\Vec{p} \in \mathcal{P}^{\otimes k}} \ket{\Phi_{\Vec{p}}} \bra{\Phi_{\Vec{p}}}  - \frac{1}{4^k} \sum_{\Vec{p}, \Vec{q} \in \mathcal{P}^{\otimes k}} \ket{\Phi_{\Vec{p}}} \bra{\Phi_{\Vec{q}}} \right) \Lambda^{\otimes k}.
\end{align}
Then we can follow the derivation using Jordan's lemma in the preceding page to derive \Cref{lem:prop-k-local}.
 
\subsubsection{Proof of~\Cref{lem:H-teleport} (Teleportation of $H_\mathrm{in}$ to output column)}

We have the ``noiseless'' version
\begin{align}
     \bra{\phi_0} V^\dagger h^\mathrm{in}_j V \ket{\phi_0}& =\bra{\phi_0} \left( \sum_{p,p' \in \mathcal{P}} \delta^{2-|(p,p')|} \ket{\Phi_p} \bra{\Phi_{p'}} \otimes (p^\dagger p')_j^{\mathrm{out}}  \bra{\Phi_p} (\Pi_j \otimes \id) \ket{\Phi_{p'}}
    \right)\ket{\phi_0}\\
     & =\bra{\phi_0} \left( \sum_{p,p' \in \mathcal{P}} \delta^{2-|(p,p')|} \ket{\Phi_p} \bra{\Phi_{p'}} \otimes (p^\dagger p')_j^{\mathrm{out}}  \frac{1}{2}\Tr(p'^\top p^*  \Pi_j )
 \right)\ket{\phi_0}\\
 & = \frac{1}{1+3\delta^2} \sum_{p,p' \in \mathcal{P}} \delta^{2} (p^\dagger p')_j^{\mathrm{out}}  \frac{1}{2}\Tr(p'^\top p^*  \Pi_j ) \\
& = \frac{4 \delta^2}{1+3\delta^2} \Pi_j^\mathrm{out}.
\end{align}
The proof of the robust version is completely similar to that of~\Cref{lem:bulk-prop}.

A generalization of \Cref{lem:H-teleport} to $k$-local input check terms also hold, whose proof we do not spell out, but is entirely similar to the one-qubit case.
\begin{lemma}
    \label{lem:Hteleport-k-local}
Consider an initialization term $h^\mathrm{in}_S =  \Lambda \Pi_S \Lambda $ in $H_\mathrm{in}$, where $\Pi_S$ is the input check on an input qubit set $S$ of size $|S|=k$. Let $\ket{\psi}$ be a normalized state and $\ket{\psi'}$ be its rotated version, such that $\Tr( \psi'^{(1)}_{S} \ket{\phi_0}\bra{\phi_0}) \geq 1 - \eta$. If additionally $\bra{\psi'} V^\dagger h^\mathrm{in}_S V \ket{\psi'} \leq \alpha$, then $\Tr ( \psi' (\id - \Pi_S^\mathrm{out})) \geq 1 - \frac{\eta}{\delta^{2k}} - \frac{\alpha}{\delta^{2k}}$.
\end{lemma}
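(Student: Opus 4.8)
The strategy is to first establish the exact (``noiseless'') identity describing $V^\dagger h^\mathrm{in}_S V$ after projecting the relevant first-layer EPR locations onto $\ket{\phi_0}^{\otimes k}$, and then bootstrap to the robust statement by the same Jordan's-lemma argument used in the proof of~\Cref{lem:bulk-prop}. For the first step I would expand $\Lambda^{\otimes k}$ (acting on the $k$ first-layer EPR pairs at locations in $S$) in the EPR basis as $\sum_{\vec p \in \mathcal{P}^{\otimes k}} \delta^{k-|\vec p|}\ket{\Phi_{\vec p}}\bra{\Phi_{\vec p}}$, use the teleportation identity $\bra{\Phi_{\vec p}}(\Pi_S \otimes \id)\ket{\Phi_{\vec q}} = \tfrac{1}{2^k}\Tr(\vec q^{\top}\vec p^{*}\,\Pi_S)$ together with the fact that teleporting through these EPR pairs deposits a Pauli byproduct $(\vec p^{\dagger}\vec q)_S$ on the output column, and then — exactly as in the displayed computation in the proof of~\Cref{lem:H-teleport} — use the Pauli completeness relation $\sum_{p \in \mathcal{P}}\Tr(p^{\dagger}M)\,p = 2M$ on each wire to verify that the dependence on the bulk noise $\vec P$ coming from $V$ cancels. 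The outcome should be
\begin{equation*}
    \bra{\phi_0}^{\otimes k}\, V^\dagger h^\mathrm{in}_S V\, \ket{\phi_0}^{\otimes k} \;=\; \frac{(4\delta^2)^k}{(1+3\delta^2)^k}\,\Pi_S^\mathrm{out},
\end{equation*}
a positive operator of the form $c_k\,\Pi_S^\mathrm{out}$ with $c_k = \Theta(\delta^{2k})$, i.e.\ the $k$-fold tensor generalization of the one-qubit computation, with $\ket{\phi_0}$ as in~\Cref{eq:phi0}.

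For the robust version I would copy the Jordan's-lemma argument of~\Cref{lem:bulk-prop} verbatim. Set $\Pi_1 = \id \otimes (\phi_0^{\otimes k})_S$ (projector onto $\ket{\phi_0}^{\otimes k}$ at the first-layer EPR locations $S$) and let $\Pi_2$ be the projector onto the kernel of $V^\dagger h^\mathrm{in}_S V$. The noiseless identity gives $\Pi_1 V^\dagger h^\mathrm{in}_S V \Pi_1 = c_k\,(\Pi_S^\mathrm{out}) \otimes (\phi_0^{\otimes k})_S$, which has spectral gap $c_k = \Theta(\delta^{2k})$ above its kernel; meanwhile $\|h^\mathrm{in}_S\| \le 1$ gives $\Pi_1 V^\dagger h^\mathrm{in}_S V \Pi_1 \le \Pi_1 - \Pi_1 \Pi_2 \Pi_1$. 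Combining, $\Pi_1 \Pi_2 \Pi_1 \le \Pi_1 - c_k(\Pi_1 - \Pi_{12})$, where $\Pi_{12}$ is the (rank computed from~\Cref{lem:jordan}) intersection projector, equal to $(\phi_0^{\otimes k})_S$ tensored with the $(\id - \Pi_S)$-support on the output column; hence within each two-dimensional Jordan block the overlap of the corresponding $\Pi_1$- and $\Pi_2$-vectors is at most $1 - c_k$. Separately, since $(h^\mathrm{in}_S)^2 = \Lambda^{\otimes k}\Pi_S (\Lambda^2)^{\otimes k}\Pi_S \Lambda^{\otimes k} \ge \delta^{2k}\,h^\mathrm{in}_S$, the operator $V^\dagger h^\mathrm{in}_S V$ has spectral gap $\ge \delta^{2k}$ above $\Pi_2$, so $\Tr(\Pi_2 \psi') \ge 1 - \alpha/\delta^{2k}$.

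Finally, writing $\Pi_1 = \Pi_{12} + \sum_i \ket{u_i}\bra{u_i}$ and $\Pi_2 = \Pi_{12} + \sum_i \ket{v_i}\bra{v_i}$ in the Jordan basis, and using $\ket{u_i}\bra{u_i} + \ket{v_i}\bra{v_i} \le (1 + |\braket{u_i}{v_i}|)P_i \le (2 - c_k)P_i$ together with $\Tr(\Pi_1 \psi') \ge 1 - \eta$ and $\Tr(\Pi_2 \psi') \ge 1 - \alpha/\delta^{2k}$, the same arithmetic as in~\Cref{lem:bulk-prop} yields $\Tr(\Pi_{12}\psi') \ge 1 - \eta/\delta^{2k} - \alpha/\delta^{2k}$ (absorbing the $\Theta(1)$ constant $1/c_k \cdot \delta^{2k}$ into the bound for $\delta$ small). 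Since $\Pi_{12} \le \id \otimes (\id - \Pi_S^\mathrm{out})$, this gives $\Tr\!\big(\psi'(\id - \Pi_S^\mathrm{out})\big) \ge 1 - \eta/\delta^{2k} - \alpha/\delta^{2k}$, as claimed. The only genuinely new work is Step~1 — tracking the byproduct Paulis and transposes through $k$ simultaneous teleportations and checking the bulk-noise telescoping — while Step~2 is a direct transcription of the argument already carried out for~\Cref{lem:bulk-prop} and~\Cref{lem:prop-k-local}; the one point to keep in view is that $h^\mathrm{in}_S$ is flanked by $\Lambda^{\otimes k}$ rather than $\Lambda^{\otimes 2k}$, which is exactly why the denominators here are $\delta^{2k}$ and not $\delta^{4k},\delta^{8k}$.
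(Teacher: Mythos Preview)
Your proposal is correct and follows exactly the approach the paper indicates: the paper does not spell out a proof for this lemma but says it is ``entirely similar to the one-qubit case,'' and for the one-qubit robust version in turn says ``completely similar to that of~\Cref{lem:bulk-prop}.'' Your two-step plan (establish the noiseless identity $\bra{\phi_0}^{\otimes k} V^\dagger h^\mathrm{in}_S V \ket{\phi_0}^{\otimes k} = \frac{(4\delta^2)^k}{(1+3\delta^2)^k}\Pi_S^\mathrm{out}$, then run the Jordan's-lemma arithmetic from~\Cref{lem:bulk-prop}) is precisely what the paper has in mind, and your observation about why the exponents are $\delta^{2k}$ rather than $\delta^{4k},\delta^{8k}$ (single $\Lambda^{\otimes k}$ sandwich, and the projected operator is $c_k$ times a bare projector $\Pi_S^\mathrm{out}$ with no further $\Lambda$ factors) is the right structural point.

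One small quantitative caveat: if you literally transcribe the Jordan arithmetic with Jordan-angle gap $c_k=\Theta(\delta^{2k})$ and spectral gap $\gamma(h^\mathrm{in}_S)\ge\delta^{2k}$, the $\alpha$-term comes out as $\alpha/(c_k\gamma)=\Theta(\alpha/\delta^{4k})$, not $\alpha/\delta^{2k}$; your ``absorbing the constant'' remark does not cover this factor. The stated $\alpha/\delta^{2k}$ can be recovered by a slightly more direct argument exploiting that $\Pi_1=(\phi_0^{\otimes k})_S$ and $\Pi_S^\mathrm{out}$ commute (decompose $\ket{\psi'}=\Pi_1\ket{\psi'}+(\id-\Pi_1)\ket{\psi'}$ and bound cross terms), bypassing the spectral-gap step. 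This discrepancy is immaterial for every application in the paper, and the paper itself is informal about these exponents, so your write-up matches the paper's level of rigor.
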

This lemma is useful, for example, if we want to impose that the witness provided by the prover is a codeword of stabilizer code with low check weight, e.g., a quantum LDPC code.

\subsection{Spectral gap lowerbound}\label{subsec:spectral-gap}
The previous subsections characterize states with sufficiently low energy as encoding a noisy execution of the quantum circuit. We can also consider other basic properties of the parent Hamiltonian such as spectral gap. Here we give a lower bound on the spectral gap, which will be used to give a new proof of QMA-completeness of the local Hamiltonian problem in later sections.

\begin{theorem} Suppose that all gates in $H_\mathrm{prop}$ and input check terms in $H_\mathrm{in}$ have locality $k$. Then the spectral gap of $H_\mathrm{parent}$ is lower bounded by $\gamma = \delta^{8k(D+1)}/\poly(nD)$. More generally, if we vary the injectity parameter $\delta$ and gate locality in the PEPS, then the factor $\delta^{8kD}$ is replaced by the product of the injectivity parameters across the depth of the circuit, i.e., $\gamma= \frac{1}{\poly(nD)} \prod_{\ell=0}^D \delta_\ell^{8k_\ell}$, where $\delta_\ell$ and $k_\ell$ are the injectivity and gate locality in layer $\ell$ ($\ell=0$ corresponds to initializations).
\label{thm:spectral-gap}
\end{theorem}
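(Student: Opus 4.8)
The plan is to prove the spectral gap lower bound by a layer-by-layer projection argument that mirrors the structure of the soundness proofs in \Cref{subsec:exposound}. The key tool is the detectability lemma (\Cref{lem:DL}), or more directly a sequential application of the geometric lemma (\Cref{lem:geometric}), combined with the observation that after conjugating by the change-of-basis unitary $V$ (see~\Cref{eq:undounit}), the individual Hamiltonian terms — while not all local — have ground spaces that can be analyzed layer by layer via the ``good state'' $\ket{\phi_0}$ (\Cref{eq:phi0}). The individual rotated terms $V^\dagger h_U V$ have a spectral gap that I can read off from~\Cref{claim:projector} and the footnote $h_U^2 \geq \delta^{8k}h_U$: each propagation term for a $k$-local gate has spectral gap at least $\delta^{8k}$, and each input term has gap at least $\delta^{2k}$. (More carefully, the gap of $h_U$ itself is $\gamma_U \geq \delta^{8k}$ since the nonzero eigenvalues of $\Lambda^{\otimes 2k}(\id-\ket{\Phi_U}\bra{\Phi_U})\Lambda^{\otimes 2k}$ are all $\geq \delta^{8k}$.)

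First I would set up the argument in the rotated basis: since $V$ is unitary, the spectral gap of $H_\mathrm{parent}$ equals that of $V^\dagger H_\mathrm{parent} V = V^\dagger H_\mathrm{in} V + V^\dagger H_\mathrm{prop} V$, and the ground state is $V^\dagger\ket{\Psi_{W,\xi}}$, which by~\Cref{eq:groundstate} is a product state $\ket{\phi_0}^{\otimes nD}\otimes\ket{0^a}\ket{\xi}$ across the shifted EPR locations and the output column. The core estimate I need is: for any state $\ket{\psi}$ orthogonal to the ground space, $\bra{\psi}V^\dagger H_\mathrm{parent}V\ket{\psi} \geq \gamma$ with $\gamma$ as claimed. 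I would establish this contrapositively — if the energy is less than $\gamma$, then (by the robust lemmas \Cref{lem:last-col}, \Cref{lem:bulk-prop}, \Cref{lem:H-teleport}, in their $k$-local versions \Cref{lem:prop-k-local} and \Cref{lem:Hteleport-k-local}) the state has overlap close to $1$ with the unique ground state, hence cannot be orthogonal to it. Concretely: low energy $\le \gamma$ means every term has energy $\le \gamma$; starting from the last layer, \Cref{lem:last-col} gives $\ket{\phi_0}$-closeness on layer $D$ with error $O(\gamma)$; then I induct backwards through the $D$ layers using \Cref{lem:prop-k-local}, each step amplifying the error by a factor $1/\delta^{8k}$, so after $D$ layers the accumulated error on the first layer is $O(\gamma/\delta^{8kD})$; finally \Cref{lem:Hteleport-k-local} gives $\Tr(\psi'(\id-\Pi_S^\mathrm{out}))$ close to $1$ with one more factor $1/\delta^{2k}$. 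To make the accumulated error smaller than $1$ — forcing overlap with the ground space to be nonzero, contradicting orthogonality — and to extract an actual $1/\poly(nD)$ gap rather than just positivity, I would either (a) invoke the geometric lemma iteratively across the $D+1$ ``layer constraints,'' each individually gapped by $\Omega(\delta^{8k})$ after the projections, losing a $1/\poly(nD)$ factor from the angles at each of the $\poly(nD)$ terms, or (b) use the detectability lemma with the observation that each term commutes with all but $\poly(n)$ others (bounded-degree interaction graph, $g = \poly(n)$), which directly yields $\|\phi\|^2 \le 1/(e_\phi/g^2+1)$ and hence a gap of $e_\phi \ge g^2(\|\phi\|^{-2}-1)$; combined with the layer-propagation argument bounding $\|\phi\|^2$ away from $1$ by $1-\Omega(\delta^{8k(D+1)})$, this gives the stated $\gamma = \delta^{8k(D+1)}/\poly(nD)$.

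For the general statement with varying $\delta_\ell$ and $k_\ell$, the argument is identical except that the backward induction through layer $\ell$ multiplies the error by $1/\delta_\ell^{8k_\ell}$ (or $1/\delta_0^{2k_0}$ at the input), so the accumulated factor becomes $\prod_{\ell=0}^D \delta_\ell^{8k_\ell}$ — I would simply carry the layer-dependent parameters symbolically through the same chain of lemma applications.

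The main obstacle I anticipate is the non-locality of the rotated bulk propagation terms $V^\dagger h^{(\ell)}_{i,j} V$ for $\ell < D$: the clean detectability-lemma bound requires a bounded-degree \emph{commuting-up-to-$g$} structure, but the rotated terms mix over all Pauli-error configurations in preceding layers and are genuinely non-local. The resolution — and the technical heart of the proof — is that I never need to diagonalize the rotated terms globally; \Cref{lem:bulk-prop} already packages exactly the needed statement (``if the EPR locations to the right are close to $\ket{\phi_0}^{\otimes k}$ and the term has energy $\le\alpha$, then the locations to the left are close to $\ket{\phi_0}^{\otimes 2k}$''), so the whole spectral-gap estimate reduces to careful bookkeeping of how the closeness parameter $\eta$ degrades by $1/\delta^{8k}$ per layer over the $O(\log n)$ (or general $D$) depth, together with controlling the $\poly(nD)$ loss from combining the per-layer gaps. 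Getting the polynomial factor right — rather than an exponential-in-$D$ loss beyond the unavoidable $\prod_\ell\delta_\ell^{8k_\ell}$ — is where I would spend the most care, and using the detectability lemma on the \emph{unrotated} Hamiltonian (which genuinely is bounded-degree local) to get the $g^2$ denominator, while using the rotated-basis lemmas only to certify that low-energy states are close to the ground state, is the cleanest route.
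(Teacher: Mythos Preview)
Your proposal is essentially correct and follows the same approach as the paper: work in the rotated basis, use \Cref{lem:last-col} at the last layer, induct backwards via \Cref{lem:bulk-prop}/\Cref{lem:prop-k-local} accumulating a $\delta^{-8k}$ factor per layer, finish with \Cref{lem:H-teleport}/\Cref{lem:Hteleport-k-local}, and conclude that any state of energy below $\gamma$ has large overlap with the ground space (hence any eigenstate below $\gamma$ must lie in the ground space).

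The one place you over-engineer is the final combining step. You reach for the detectability lemma or an iterated geometric lemma to assemble the per-location closeness bounds into a global ground-space overlap. The paper's proof is simpler: in the rotated basis the relevant projectors $(\phi_0)_j^{(\ell)}$ and $(\id-\Pi_a^{\mathrm{out}})$ all \emph{commute}, so a plain union bound gives $\Tr\bigl(\psi'\,\phi_0^{\otimes nD}\prod_a(\id-\Pi_a^{\mathrm{out}})\bigr)\ge 1-1/\poly(nD)$ directly, and this product is exactly the ground-space projector of $V^\dagger H_{\mathrm{parent}}V$. That is where the $1/\poly(nD)$ factor comes from (summing $O(nD)$ per-location errors), with no need for the $g^2$ from detectability or the angle bookkeeping from the geometric lemma. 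Your route would still work, but it obscures the fact that the rotated ground-space projector factorizes into commuting pieces---which is precisely what the change of basis $V$ buys you.
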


\begin{proof}[Proof of \Cref{thm:spectral-gap}]
        We first consider the case $k=2$ and $\delta$ is fixed throughout the circuit. The proof is essentially a simpler version of \Cref{thm:energy}'s proof in \Cref{subsec:exposound}. The difference is that, there, we wanted to work with states of energy density $\delta^{16(D+1)}$ even when $D$ is constant; whereas now we are willing to consider states with energy density $\delta^{16(D+1)}/\poly(nD)$ , which scales as $1/\poly(n)$ even when $D$ is a constant. Hence we will be brief here.
        
        Consider any state $\ket{\psi}$ with energy $E= \bra{\psi} H_\mathrm{parent} \ket{\psi}<\delta^{16(D+1)}/\poly(nD) $. Our goal is to show that it has large overlap with the ground space -- a fact that will immediately imply a spectral gap lower bound.
        As before,
        let $\ket{\psi'}=V^\dagger \ket{\psi}$ be the rotated state.

        The proof proceeds as before. We start from~\Cref{lem:last-col} to show that $\ket{\psi'}$ has large overlap with $\ket{\phi_0}$ on all locations in the circuit's last layer,
        and then repeatedly applying~\Cref{lem:bulk-prop} (bulk propagation lemma) we find that $\Tr(\psi_j^{'(\ell)} \ket{\phi_0}\bra{\phi_0}) \geq 1 - E\delta^{-16 D} \geq 1 - \delta^{16}/\poly(nD)$ for \textit{any} location indexed by $j \in [n]$ and layer $\ell \in [D]$.
        
        Next, we invoke~\Cref{lem:H-teleport} (Hamiltonian teleportation lemma) to obtain $\Tr(\psi' (\id - \Pi_\mathrm{a}^\mathrm{out})) \geq 1 - 1/\poly(nD)$, which guarantees (approximately) correct initialization of ancilla qubits specified by the projectors $\Pi_a \in H_\mathrm{in}$. E.g., if ancilla qubits are intialized to $\ket{0}$, we use $\Pi_{a}=\ket{1}\bra{1}_a$, or if we want to verify the input is encoded in a stabilzer quantum code, then we use $\Pi_a = \frac{1}{2}(\id- S_a)$. Since all the projectors $ (\phi_0)_j^{(\ell)}$,  $\Pi_a^\mathrm{out}$ are commuting, we find via the union bound that
        \begin{align}
            \Tr(\psi' \underbrace{\phi_{0}^{\otimes nD} \prod_{a \in H_\mathrm{in}} (\id- \Pi_a^\mathrm{out})}_{\Pi_\mathrm{g.s.}}) \geq 1 - 1/\poly(nD).
        \end{align}
        Note that the polynomial in $\poly(nD)$ is different from before (it is smaller by a factor proportional to $nD$ and the number of terms in $H_\mathrm{in}$).
        In other words, any state $\ket{\psi'}= V^\dagger \ket{\psi}$ with energy below $\delta^{16(D+1)}/\poly(nD)$ has large overlap with the ground space of $V^{\dagger} H_\mathrm{parent} V$. This implies the stated spectral gap for $H_\mathrm{parent}$.

        The statement for $H_\mathrm{parent}$ with varying injectivity parameter and gate locality follows by straightforwardly generalizing from the above proof by using, layer by layer,~\Cref{lem:prop-k-local} and~\Cref{lem:Hteleport-k-local}, the generalized $k$-local versions of~\Cref{lem:bulk-prop} and~\Cref{lem:H-teleport}.
\end{proof}

\subsection{Adversarial fault tolerance against inverse-polynomial adversarial noise}\label{sec:advFT}

Here we note that a repetition argument similar to \cite{Chen20} (see~\Cref{append:QMApoly}) also suffices to protect a computation against an inverse-polynomial fraction of adversarial noise for any desired polynomial, at the cost of increasing the circuit size by a corresponding polynomial factor.

\vspace{0.1in}

\noindent{\bf Classical case:}

Given a circuit $C$ on $n$ bits with $T$ gates, lets us run the circuit in parallel $k$ times, for $k$ to be chosen shortly. Let $C_1,C_2, \ldots C_k$ be these runs of $C$. The repeated circuit has $kT$ gates. For a $\delta>0$, we would like to protect against $(kT)^{1-\delta}$ adversarial errors. Note that even if there was at most 1 error per $C_i$, the number of circuits with no error is $k-(kT)^{1-\delta}=k\left(1-\frac{(T)^{1-\delta}}{k^{\delta}}\right)$. Choosing $k= (100 T)^{\frac{1}{\delta}}$, we see that at least $0.99k$ circuits have no error and the output of the computation can be read by considering the majority value. 

In fact, we do not need to do fault tolerant majority computation. We simply put a Hamiltonian $H_{out} = \sum_i \ketbra{1}_i$ on the $k$ output bits. This Hamiltonian penalizes if most of the outputs are $1$. Further, note that for any constant $\delta$, this is a polynomial sized transformation.

\vspace{0.1in}

\noindent{\bf Quantum case:}

Identical argument works in the quantum case if the adversarial error does not occur in superposition and the quantum circuit computes the correct outcome with probability $0.9$. This happens in the case of combinatorial soundness, where the error locations are fixed. It is far from clear if general superposition over low weight errors can be handled. But at the same time, the low energy states may not admit an arbitrary superposition over errors. We leave this understanding for the future work.

\section{Verifying $\QMA$ via shallow circuits}
\label{sec:shallow_circuits}
As shown in~\Cref{sec:soundness}, the parent Hamiltonian robustness properties only depend on circuit depth, so it is desirable to restrict our attention to shallow circuits. Here we show that any QMA protocol can be replaced by one involving a constant depth quantum circuit followed a logarithmic depth classical circuit. The high-level idea is to first use the Feynman-Kitaev mapping to turn an arbitrary $\QMA$ protocol into a local Hamiltonian, and then construct a short-depth $\QMA$ circuit to measure the energy of the resulting Hamiltonian. For this, we need a low-degree version of the FK mapping.

\begin{claim}[Degree reduction for FK Hamiltonian]
\label{claim:modifiedFK} Any $\QMA$ protocol involving an $n$-qubit verifier circuit $V$ with $T=\poly(n)$ two-qubit gates can be mapped into a $5$-$\LH[a,b]$ on $\poly(n)$ qubits with $a=2^{-\poly(n)}$ and $b = a + 1/\poly(n)$. Furthermore, each qubit is involved in at most $7$ terms in the Hamiltonian.
\end{claim}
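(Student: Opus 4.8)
\textbf{Proof plan for \Cref{claim:modifiedFK}.} The idea is to run the ordinary Feynman--Kitaev construction of \Cref{thm:kitaev}, but on a lightly preprocessed verifier, so that the resulting $5$-local Hamiltonian has bounded degree. The two reasons the raw construction fails the degree bound are that a data qubit touched by $k$ gates sits in $k$ propagation terms, and that the first clock qubit would sit in one initialization term per ancilla; both are cured by the preprocessing. \emph{Step 1 (amplify).} Apply \Cref{lem:strongamplify} to replace the protocol by one with completeness $c=1-2^{-r}$ and soundness $s=2^{-r}$ for $r=\poly(n)$, keeping the witness size.

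\emph{Step 2 (rethread).} Rewrite the $n$-qubit, $T$-gate verifier $V$ as an equivalent circuit $V'$ in which every qubit is touched by at most three gates. For each wire $q$, let $t_1<\dots<t_{k_q}$ be the times at which $V$ applies a gate to $q$; introduce fresh qubits $q_0,q_1,\dots,q_{k_q}$, place the witness (resp.\ the ancilla $\ket{0}$) on the qubit $q_0$ of each witness (resp.\ ancilla) wire, and just before the gate at time $t_j$ insert $\mathrm{SWAP}(q_{j-1},q_j)$, so that the $j$-th gate on wire $q$ acts \emph{in place} on $q_j$ (together with the current qubit of the other wire it touches); finally linearize the whole thing into a gate sequence of length $T'=3T$. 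Since $\mathrm{SWAP}$ is $2$-local and $\sum_q k_q=2T$, the circuit $V'$ has $T'=\poly(n)$ two-qubit gates on $N=n+5T=\poly(n)$ qubits, its output qubit is the last qubit $q^{\ast}_{k}$ of the output wire's lineage and carries the same accept/reject bit as $V$, and each of its qubits is incident to at most three gates: $q_0$ only to $\mathrm{SWAP}(q_0,q_1)$; an interior $q_j$ to $\mathrm{SWAP}(q_{j-1},q_j)$, the $j$-th wire gate, and $\mathrm{SWAP}(q_j,q_{j+1})$; and $q_{k_q}$ to $\mathrm{SWAP}(q_{k_q-1},q_{k_q})$ and the $k_q$-th wire gate.

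\emph{Step 3 (bounded-degree FK).} Apply the Feynman--Kitaev mapping to $V'$ with a unary $T'$-qubit clock (legal states $\ket{1^{j}0^{T'-j}}$): the usual $5$-local propagation terms on clock qubits $\{t-1,t,t+1\}$ and the $\le 2$ data qubits of the $t$-th gate; the $2$-local clock-legality penalties $\ket{0}\bra{0}_{c}\otimes\ket{1}\bra{1}_{c+1}$; for each ancilla wire $q$, the $2$-local initialization penalty $\ket{1}\bra{1}_{q_0}\otimes\ket{0}\bra{0}_{c(q)}$, where $c(q)$ is the clock time of $\mathrm{SWAP}(q_0,q_1)$ (so the ancilla is checked exactly while bit $c(q)$ of the clock is still $0$, i.e.\ while $q_0$ still holds its input value); and the single $2$-local output penalty $\ket{0}\bra{0}_{q^{\ast}_{k}}\otimes\ket{1}\bra{1}_{T'}$. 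The standard Feynman--Kitaev analysis (via \Cref{lem:geometric}) goes through with only notational changes: on a yes-instance the history state is annihilated by every term except $H_\mathrm{out}$, on which it has energy $\tfrac{1-c}{T'+1}\le 2^{-r}=:a$; on a no-instance the ground energy is $\Omega\!\big(\tfrac{1-\sqrt{s}}{T'^{3}}\big)=\Omega(1/T'^{3})=:b$ since $1-\sqrt{s}\ge\tfrac12$, whence $b-a=1/\poly(n)$ and $a=2^{-\poly(n)}$. All terms are $\le 5$-local, and counting incidences: an interior data qubit lies in $\le 3$ propagation terms; an ancilla qubit in $1$ propagation term plus $1$ initialization term; a clock qubit $c$ in $\le 3$ propagation terms, $\le 2$ legality terms, at most one initialization term (only if the gate at time $c$ is an ancilla-loading $\mathrm{SWAP}$), and at most one output term (only if $c=T'$) --- at most $7$ terms in all.

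\textbf{Main obstacle.} The only step that is not automatic is making sure the initialization terms do not pile up on a single clock qubit; the rethreading of Step 2 is precisely what makes this possible, since it hands every ancilla a private gate --- hence a private clock time $c(q)$ --- to which its initialization check can be pinned. Once that is arranged, the locality/incidence bookkeeping and the energy estimates are exactly those of the classical Feynman--Kitaev argument applied to the blown-up circuit $V'$.
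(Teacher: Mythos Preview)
Your proposal is correct and follows the same strategy as the paper: amplify, then insert SWAPs so each data qubit is touched by at most three gates, then pin each ancilla's initialization check to the (now unique) clock time of its first gate so that no single clock qubit carries many initialization terms. Your per-wire rethreading (adding $O(T)$ fresh qubits and yielding $T'=3T$) is leaner than the paper's whole-block swapping (which uses $O(nT)$ qubits and $T'=O(nT)$), and your $2$-local init term $\ket{1}\bra{1}_{q_0}\otimes\ket{0}\bra{0}_{c(q)}$ is a harmless variant of the paper's $3$-local $\ket{10}\bra{10}_{t_i-1,t_i}\otimes\ket{1}\bra{1}_i$ (within the legal-clock subspace the former dominates the latter, so Kitaev's angle bound can only improve), but these are cosmetic differences within the same overall argument.
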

\begin{proof}
    W.l.o.g., we assume the circuit has been amplified by~\Cref{lem:amplify} or~\Cref{lem:strongamplify}, such that its completeness is $c=1-2^{-r}$ and $s = 2^{-r}$ with $r= \poly(n)$.
    
     We first recall the FK Hamiltonian~\cite{kitaev2002classical} here to observe that it is not sparse. For all $T\in \mathbb{N}$ and $t\leq T$, we define the unary clock states as $|u(t,T)\rangle= |1^{t}\rangle \otimes |0^{T-t}\rangle$. The clock qubits are index by $t \in [T]$ and the data qubits are indexed by $i \in [n]$. Let $m$ be the number of ancilla qubits, so that the witness has $n-m$ qubits. The FK Hamiltonian consists of four parts acting on a unary clock register and a data register: (1) initialization terms 
     \begin{align*}
         H_\mathrm{in} = \ket{0}\bra{0}_{t=0}\otimes \left(\sum_{i=1}^{m} \ket{1}\bra{1}_i \right),
         \label{eq:Hinit}
     \end{align*}
     (2) propagation terms (note there are no clock qubits $-1$ and $T+1$)
     \begin{align*}
         H_\mathrm{prop} = \frac{1}{2} \left(\sum_{t=1}^{T} (\ket{100}\bra{100} + \ket{110}\bra{110})_{t-1,t,t+1}
         - \ket{110}\bra{100_{t-1,t,t+1}}\otimes U_t - \ket{100}\bra{110}_{t-1,t,t+1} \otimes U_t^\dagger \right),
     \end{align*}
     (3) clock validity terms 
     \begin{align*}
         H_\mathrm{clock} = \sum_{t=1}^T \ket{01}\bra{01}_{t-1,t},
     \end{align*}
     (4) and output check term
     \begin{align*}
         H_\mathrm{out} = \ket{1}\bra{1}_T \otimes \ket{0}\bra{0}_1.
     \end{align*}
     As it is, the FK Hamiltonian has high degree due to the $t=0$ clock qubit, which participates in $m$ terms in $H_\mathrm{in}$ and the data qubits, which participate in as many terms in $H_\mathrm{prop}$ as the number of nontrivial gates acting on the qubit. 

     \vspace{0.1in}
     
     \begin{remark}
     Here, we justify the claim in Table \ref{table:clock}
     that  there is a combinatorial state that violates a $O(\frac{1}{T})$ fraction of terms. For example, the state $\ket{0100\ldots}_{\mathrm{clock}}\otimes \ket{0}^{\otimes n}_{\mathrm{data}}$ contains a fixed invalid clock configuration and hence satisfies all the terms in the Feynman-Kitaev Hamiltonian, except $2$ terms from $H_{\text{clock}}$.
     \label{remark:FKfail}
     \end{remark}

    \vspace{0.1in}

     We reduce the degree of data qubits by transforming $V$ into a new circuit $V'$ that acts on $n'=n T$ qubits divided into $T$ $n$-qubit blocks. After applying the first gate in $V$ on the first qubit block, we apply $n$ SWAP gates to swap the first and second blocks. Then, the second gate in $V$ is applied on the second block of $V'$, and so on. This way, the qubits in $V'$ are acted on by at most $3$ nontrivial gates. The number of nontrivial gates in $V'$ is $T'=O(nT)$.

     We reduce the degree of the $t=0$ clock qubit by observing that the initialization of ancilla qubit $i$ only need to be verified right before the first gate acting on it. Let  $t_i \in [T]$ be this gate, then we use the following initialization term (note there are no clock qubits $-1$ and $T+1$)
     \begin{equation}
         H_{\mathrm{in},i} = \ket{10}\bra{10}_{t_i-1,t_i} \otimes \ket{1}\bra{1}_i.
         \label{eq:modifiedHinit}
     \end{equation}
     
     Applying this modified FK mapping (with modified $H_\mathrm{in}$) to the circuit $V'$ we obtain a 5-local Hamiltonian $H_\mathrm{FK}$ in which each qubit involves in at most $7$ terms.
    The energy analysis in~\cite{kitaev2002classical} still applies for this modified construction. Indeed, according to~\cite{kitaev2002classical}, $H_\mathrm{prop} + H_\mathrm{clock}$ has ground states of the form
     \begin{equation}
         |\Psi\rangle:=\frac{1}{\sqrt{T'+1}}\sum_{t=0}^{T'}|u(t,T')\rangle\otimes U_t\cdots U_1 \ket{\psi}, \text{ for any } \ket{\psi} \in (\mathbb{C}^{2})^{n'}
     \end{equation}
     
     In the completeness case, setting $\ket{\psi}=|0^{m'}\rangle \ket{\xi}$,
     where $\ket{\xi}$ is the witness that $V'$ accepts with probability $c$,
     gives an energy of $a = O((1-c)/T')= 2^{-\poly(n')}$. 
     
     In the soundness case,
     the main step of the proof is Equation 14.17 in~\cite{kitaev2002classical} in which the author bounds $\max_{\ket{\psi}} \bra{\Psi} \Pi_1 \ket{\Psi}$ where $\Pi_1$ is the projector onto the nullspace of $H_\mathrm{in} +H_\mathrm{out}$. However, it can be seen that modifying $H_\mathrm{in}$ as in \Cref{eq:modifiedHinit} does not change this quantity which remains to be
    \begin{equation}
        \bra{\Psi} \Pi_1 \ket{\Psi} = 1 - \frac{1}{T'+1} \left( \bra{\psi}( \sum_{i=1}^{m'} \ket{1}\bra{1}_i ) \ket{\psi} + \bra{\psi} V'^{\dagger}\ket{0}\bra{0}_1 V' \ket{\psi} \right)
    \end{equation}
     by noting that $U_1^\dagger \hdots U_{t_i-1}^\dagger (\ket{1}\bra{1}_i) U_{t_i-1} \hdots U_1 = \ket{1}\bra{1}_i$ for any $i$. Therefore, according to~\cite{kitaev2002classical}, any state has energy no smaller than $b = \Omega((1-\sqrt{s})/T'^3)= 1/\poly(n')$. 
\end{proof}

\begin{claim}[Log-depth $\QMA$] Any $\QMA$ protocol involving an $n$-qubit verifier circuit $V$ with $T=\poly(n)$ two-qubit gates can be converted into a $O(\log n)$-depth $\QMA$ protocol on $\poly(n)$ qubits, whose completeness is $1-2^{-r}$ and soundness is $2^{-r}$ with $r=\poly(n)$. More specifically, the $O(\log n)$-depth circuit involves a constant-depth quantum circuit that ends with computational basis measurements, followed by a $O(\log n)$-depth classical circuit.
\end{claim}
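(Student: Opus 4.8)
The plan is to combine the degree-reduced Feynman--Kitaev Hamiltonian from \Cref{claim:modifiedFK} with a constant-depth quantum measurement of its energy, deferring the remaining work to a classical circuit. First I would invoke \Cref{claim:modifiedFK} to obtain a $5$-local Hamiltonian $H_\mathrm{FK} = \sum_{i=1}^M h_i$ on $N = \poly(n)$ qubits, with $\|h_i\| \le 1$, with a yes/no promise gap $[a, a + 1/\poly(N)]$, and — crucially — with the property that each qubit participates in at most $7$ terms. Since the interaction (hyper)graph then has bounded degree (each term touches $5$ qubits, each qubit touches $\le 7$ terms), the terms can be partitioned by a greedy/Vizing-type edge-coloring argument into $O(1)$ color classes $\mathcal{C}_1, \dots, \mathcal{C}_g$ with $g = O(1)$, such that within each class the terms act on pairwise-disjoint qubit sets. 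This is the key structural point that makes constant depth possible.

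Next I would describe the verifier circuit. The prover sends the witness $\ket{\xi}$ together with the clock/ancilla registers (or the verifier prepares ancillas); the verifier picks a uniformly random term index $i \in [M]$ — equivalently, picks a random color class and a random term within it — and measures the $5$-local observable $h_i$ (after the standard rescaling $h_i \mapsto (\id - h_i)/2$ or similar so it becomes a projector-like POVM with outcome probabilities encoding $\bra{\psi} h_i \ket{\psi}$). Because $h_i$ acts on only $5$ qubits, the measurement is implemented by a constant-depth quantum circuit: a constant-size unitary on those $5$ qubits followed by a computational-basis measurement. To make the \emph{selection} of $i$ also shallow, note that choosing a random term and routing the corresponding $5$ qubits to a measurement gadget is exactly the kind of operation that can be done with $O(\log N)$-depth classical control of which qubits to measure — but in fact, since all terms within a color class are disjoint, the verifier can simply measure \emph{every} term in one randomly chosen color class $\mathcal{C}_j$ simultaneously in constant quantum depth (all the $5$-qubit gadgets act on disjoint qubits), record all the classical outcomes, and then \emph{classically} select one term's outcome at random and compute the accept/reject decision. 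The classical post-processing — sampling an index, looking up which bits correspond to that term, and evaluating a simple predicate — is a $\poly(N)$-size bounded-fan-in classical circuit, hence of depth $O(\log N)$.

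Finally I would verify completeness and soundness. In the yes case, the FK history state $\ket{\Psi}$ with the good witness has $\bra{\Psi} H_\mathrm{FK} \ket{\Psi} \le a$, so the expected measured energy is $\le a/M$ per random term; feeding this through the standard gap-amplification/thresholding (and, if needed, \Cref{lem:strongamplify} to push completeness to $1 - 2^{-r}$ and soundness to $2^{-r}$ without blowing up the witness) gives the claimed parameters — the amplification is done by parallel repetition of the constant-depth measurement across disjoint copies, which keeps the quantum part constant-depth and only adds $O(\log N)$ classical depth for the majority/threshold computation. In the no case, every state has $\bra{\psi} H_\mathrm{FK} \ket{\psi} \ge a + 1/\poly(N)$, giving a correspondingly larger rejection probability. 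I expect the main obstacle to be bookkeeping the depth budget carefully: one must check that (i) the color-class decomposition really is $O(1)$ and does not interact badly with the clock register's structure, (ii) the amplification step does not secretly require $\omega(\log N)$ classical depth (threshold of $\poly(N)$ bits is $O(\log N)$-depth, so this is fine but needs to be stated), and (iii) the interface between the constant-depth quantum part and the $O(\log N)$-depth classical part is legitimate in the $\QMA$ model — i.e., that measuring in the computational basis and then doing classical computation is without loss of generality, which it is since the decision qubit can be taken as the output of the classical circuit.
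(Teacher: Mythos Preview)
Your proposal is correct and reaches the same conclusion, but the paper takes a different route for the base protocol. Rather than sampling a single random term, the paper's verifier checks \emph{all} $M$ terms at once: it introduces $M$ ancilla qubits, partitions the terms into $L=O(1)$ color classes as you do, and in each of the $L$ rounds applies $\mathrm{C}_{h_i}\mathrm{NOT}$ gates (flipping ancilla $i$ iff the state lies in $\mathrm{supp}(h_i)$) for all $i$ in that class simultaneously. After measuring all ancillas it accepts iff the outcome is $0^M$, i.e.\ it computes $\overline{\mathrm{OR}}$ in $O(\log M)$ classical depth. The acceptance probability is then exactly $\|\mathrm{DL}\,\ket{\xi}\|^2$ where $\mathrm{DL} = \prod_\ell \prod_{i \in \mathcal{C}_\ell}(\id - h_i)$ is the detectability-lemma operator; completeness follows from the quantum union bound (\Cref{lem:quantum-union}) and soundness from the detectability lemma (\Cref{lem:DL}), after which weak parallel amplification finishes as in your sketch. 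Your approach sidesteps \Cref{lem:DL} and \Cref{lem:quantum-union} entirely by reducing to single-term energy sampling, which is more elementary and does not rely on the terms being projectors; the paper's approach buys a cleaner single-shot statement (accept probability is a closed-form norm) at the cost of invoking those two lemmas.

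One small caution: your passing reference to \Cref{lem:strongamplify} is misleading here, since Marriott--Watrous amplification is inherently sequential and would destroy the $O(\log n)$ depth bound. You correctly switch to parallel repetition (\Cref{lem:amplify}) in the next sentence, so the argument goes through, but you should simply drop the mention of strong amplification. Also, ``random color class then random term within it'' is only uniform over $[M]$ if the classes have equal size; either equalize them by padding with trivial terms or sample $i$ directly and note that the log-depth multiplexer needed to read out bit $i$ is already absorbed into your $O(\log N)$ classical budget.
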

\begin{proof} 
Given any $\QMA$ protocol $V_0$ (with $n_0$ qubits including the size of the witness and number of gates $T_0=\poly(n_0)$), we first convert it into the low-degree FK Hamiltonian using~\Cref{claim:modifiedFK}. The Hamiltonian $H_\mathrm{FK}=  \sum_{i=1}^{m} h_i$ acts on $n=\Theta(n_0 T_0)$ qubits, contains $m= \Theta(n_0 T_0) =\Theta(n)$ projectors that are at most $5$-local, and has a promise gap of $b-a=\Omega(m^{-3})$. Each qubit participates in at most $7$ terms $h_i$.

Next, we construct a constant-depth circuit $V$ extracting the satisfiability of the Hamiltonian terms in $H_\mathrm{FK}$. The circuit consists of $m$ ancillas initialized to $\ket{0}$. Upon receiving an $n$-qubit witness state $\ket{\xi}$, $V$ applies unitaries of the form $\mathrm{C_{h_i}NOT}=(\id  - h_i) \otimes I_i +  h_i \otimes X_i$, which, conditioned on the reduced state of $\ket{\xi}$ being in $\mathrm{supp}(h_i)$, flip ancilla qubit $i$. In particular, consider the decomposition of the terms $h_i$ into $L=O(1)$ groups, $H_1,\hdots,H_L$ such that the terms in each group are pairwise non-overlapping. Let $\Pi_{\ell} = \bigotimes_{i: h_i \in H_\ell} (\id - h_i) $. The layer $\ell \in [L]$ of $V$ is $V_\ell = \prod_{i: h_i \in H_\ell} \mathrm{C_{h_i}NOT}$. After applying $V_Q \triangleq V_L \hdots V_{2}V_1$, we measure the ancillas in the $Z$ basis to get a bitstring $x \in \{0,1\}^{m}$, and compute the OR function on $x$ and output $\overline{\mathrm{OR}(x)}$. The OR function on $m$ bits can be computed by a $O(\log m)$-depth Boolean circuit\footnote{A log-depth OR circuit is as follows: the first layer computes pairwise OR's $(x_1 \vee x_2)$, $(x_3 \vee x_4)$, $\hdots$, the second layer similarly computes OR pairwise on the output of the first layer, and so on.}, which can in turn be made reversible with constant space overhead~\cite[Section 3.2.5]{nielsen2000quantum}.

The circuit $V$ output $1$ if and only if the ancillas are measured in the all-zeros string, which happens with probability $\Pr[x=0^m]= \left| \bra{0^m} V_Q  \ket{\xi} \otimes \ket{0^m} \right|^2=  \Tr(\mathrm{DL}^\dagger\mathrm{DL} \ket{\xi} \bra{\xi} )$ where $\mathrm{DL} \triangleq \Pi_{L} \hdots \Pi_{2}\Pi_1$ is the detectability lemma operator~\cite{anshu2016simple}.

Below we show that, if $V_0$ accepts, then $V$ accepts $1-2^{-\poly(n)}$ and if $V_0$ rejects then $V$ accepts with $1-\Omega(n^{-2})$. In addition, the soundness can be depth-efficiently improved to $2^{-\poly(n)}$.

\paragraph{Completeness} The prover sends the witness state $\ket{\xi}$ such that $\bra{\xi} H_\mathrm{FK} \ket{\xi} \leq 2^{-\poly(n)}$ (the case of mixed state witness follows by linear extension). Also $\bra{\xi}h_i \ket{\xi} \leq 2^{-\poly(n)}$ for any $i \in [m]$. Using the quantum union bound (\Cref{lem:quantum-union}) on $ H_\mathrm{FK}$ and $\ket{\xi}$ we can bound
\begin{equation}
    1- \Tr(\mathrm{DL}^\dagger\mathrm{DL} \ket{\xi} \bra{\xi} ) \leq 4 \sum_{i} \bra{\xi} h_i \ket{\xi} \leq 2^{-\poly(n)}.
\end{equation}
So $V$ outputs $1$ with probability at least $c = 1- 2^{-\poly(n)}$.

\paragraph{Soundness} According to~\Cref{claim:modifiedFK}, for any state $\ket{\xi}$ we have $\bra{\xi} H_\mathrm{FK} \ket{\xi} \geq  \Omega(n^{-3})$. Observe that the terms in $H_\mathrm{FK}$ are projectors and each of them overlaps with at most $g=34$ others, so we can apply the detectability lemma (\Cref{lem:DL}) on $H_\mathrm{FK}$ and $\ket{\xi}$
\begin{equation}
     \Tr(\mathrm{DL}^\dagger\mathrm{DL} \ket{\xi} \bra{\xi} ) \leq \frac{1}{ \Omega(n^{-3}) +1} \leq 1 - \Omega(n^{-3}) = s.
\end{equation}

Finally, soundness can be amplified to $2^{-\poly(n)}$ while keeping the depth logarithmic via the weak amplification procedure in~\Cref{lem:amplify}.
In particular, this procedure~\cite{kitaev2002classical} works by using $q=\poly(n)$ copies of $V$ in parallel. The prover is expected to send $q$ copies of an accepting state. We perform OR on the $q$ decision bits of the copies in depth $O(\log q)=O(\log n)$. It is a standard fact that we can assume w.l.o.g. the prover sends an unentangled state between these $q$ copies (e.g., see~\cite[Lemma 14.1]{kitaev2002classical}). Thus, a simple application of Chernoff's bound achieves the amplified soundness whenever $q$ is a sufficiently large polynomial in $n$.

\end{proof}

We note that the technique in~\cite{rosgen2007distinguishing}, where the author studies the $\QIP$-hardness of distinguishing log-depth circuits, also gives a log-depth verification procedure for $\QMA$, see~\Cref{app:logdepth}. However, their quantum circuit is necessarily logarithmic-depth due to the use of $n$-qubit controlled SWAP gates. This is to be compared with our log-depth construction, where the quantum circuit is constant-depth and followed by log-depth classical circuit. This could be a useful feature for fault tolerance protocols and possible implications for the quantum PCP conjecture that we discuss in this work. The work \cite{JiWu09} also gives a constant depth circuit followed by the threshold majority computation, which can be used as a verification protocol in $\QMA$.

\section{Proof of $\QMA$-hardness of local Hamiltonian problem}\label{sec:qma}
The goal of this section is to apply our circuit-to-Hamiltonian mapping to give a new proof for the $\QMA$-hardness of the local Hamiltonian problem, from first principles and without using the Feynman-Kitaev clock construction. The proof will be somewhat similar to Kitaev's original proof~\cite{kitaev2002classical}: we first lower bound the spectral gap of the circuit Hamiltonian (without output Hamiltonian check terms), and then convert this spectral gap into the $\QMA$ promise gap when adding the output Hamiltonian check terms. However, the extra component in our proof is to construct a fault-tolerant version of the $\QMA$ circuit such that the spectral gap is lower bounded.

\subsection{Applying the construction to $\QMA$ circuits}
We apply the parent Hamiltonian construction to convert any $\QMA$ protocol $W$ into a local Hamiltonian as follows. First, we consider the fault-tolerant version of the verifier cicuit (against local iid depolarizing noise), denoted $W_\FT$ on $n_\FT$ qubits and with depth $D_\FT$. The fault-tolerant verifier expects an encoded witness $\ket{\xi}_L$ from the prover encoded in a quantum error-correcting code. Before running $W_\FT$ on the witness, it may run a tester circuit, $W_\mathrm{test}$, e.g., stabilizer measurements, to verify that $\ket{\xi}_L$ is a valid codeword. Alternatively, if the QECC has low-weight checks, we can instead introduce a code Hamiltonian into the parent Hamiltonian construction and omit this test of the verifier. Specifically, let $\{S_j\}_j$
be the QECC stabilizer generators, then we may add the following Hamiltonian into $H_\mathrm{parent}$
\begin{align}
    H_\mathrm{stab} = \sum_{\text{stabilizer } j} \Lambda^{\otimes N(j)} \frac{1-S_j}{2} \Lambda^{\otimes N(j)},
\end{align}
where $N(j)$ is the set of EPRs in the PEPS's first layer that have intersecting support with the stabilizer check $S_j$.

We also introduce a frustration-free Hamiltonian $H_\mathrm{out}=\sum_{i=1}^p h^\mathrm{out}_i $ acting on the last column of qubits in order to penalize the output bit(s) of the $\QMA$ circuit, where we expect every term in $H_\mathrm{out}$ is satisfied. We assume the terms $h^\mathrm{out}_i$ are commuting projectors. The locality and number of terms in $H_\mathrm{out}$ may vary and could depend on the fault tolerance scheme being used.
For example, we could logically decode and store the $\QMA$ answer qubit into a physical qubit at the end of $W_\FT$, so that we can use a single $1$-local Hamiltonian term $H_\mathrm{out} = \ket{0}\bra{0}^\mathrm{out}$ on the PEPS's output column. 
But later we will also consider the case $p=\poly(n)$ and $h_{i}^\mathrm{out}$ is $O(\log n)$-local. The reason for this generality in our definition of $H_\mathrm{out}$ (as compared to the single-term $H_\mathrm{out}$ in Kitaev's proof) is because the depth $D_{\FT}$ of the fault-tolerant QMA circuit will turn out to determine the promise gap. In particular, we will want to keep the depth $D_{\FT}$ small, which is usually a challenge in quantum fault tolerance. On the other hand, the $\QMA$ verification circuit used in our proof will involve a majority computation on $p=\poly(n)$ binary measurement outcomes, which can be offloaded to $H_\mathrm{out}$ to save the circuit depth.

The total Hamiltonian is
\begin{align}
    H_\mathrm{total} = \underbrace{H_\mathrm{in} + H_\mathrm{prop} +  \text{(optionally $H_\mathrm{stab}$)}}_{H_\mathrm{parent}} + \mathbf{C} H_\mathrm{out},
    \label{eq:Htotal}
\end{align}
where $\mathbf{C}$ is an energy scaling factor (which will be taken to be proportional to the spectral gap of $H_\mathrm{parent}$). The origin for this scaling factor will be explained later in the proof of~\Cref{thm:qma}.
Note that $H_\mathrm{total}$ has implicit dependence on the injectivity parameters, which we can vary across the PEPS.

The following claim asserts how spectral gap of the parent Hamiltonian and fault tolerance can be combined to prove $\QMA$-hardness.

\begin{claim}[$\QMA$-hardness from a fault-tolerant verifier]
\label{thm:qma} Consider a (noiseless) $\QMA$ verifier circuit $W$ where $p$ commuting binary projective measurements are performed at the end, such that either (completeness) there exists a witness input state such that all of the $p$ measurements return $1$ with probability at least $c$, or (soundness) for any witness state, with probability at least $1-s$, one of the measurements returns $0$.
Suppose there exists a fault-tolerant version $W_\FT$ such that its PEPS parent Hamiltonian ($H_\mathrm{parent}$ in~\Cref{eq:Htotal}) has spectral gap at least $\gamma$ and the probability of a logical error in the PEPS ground state is at most $\delta_L$, such that $\max\{1-c, \delta_L\} < \frac{1-s}{16p^2}$. Let $H_\mathrm{total} = H_\mathrm{parent}+ \gamma H_\mathrm{out}$, where $H_\mathrm{out}$ has $p$ terms corresponding to the $p$ output measurements in $W_\FT$. Then, the following holds:
\begin{itemize}
    \item In the completeness case, $H_\mathrm{total}$ has an eigenvalue smaller than $\frac{\gamma(1-s)}{8p}$.
    \item In the soundness case, all eigenvalues of $H_\mathrm{total}$ are at least $\frac{\gamma(1-s)}{4p}$.
\end{itemize}
In other words, determining the ground energy of $H_\mathrm{total}$ to precision $\frac{\gamma(1-s)}{8p}$ is $\QMA$-hard.
\end{claim}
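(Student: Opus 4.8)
The plan is to follow the template of Kitaev's $\QMA$-completeness proof: take the spectral gap of the ``circuit part'' $H_\mathrm{parent}$ (which is a hypothesis here) and convert it into a $\QMA$ promise gap by adding the output penalties $\gamma H_\mathrm{out}$ and invoking the geometric lemma (\Cref{lem:geometric}). Two preliminary observations organize everything. First, since each $h^\mathrm{out}_i=\ket{0}\bra{0}^\mathrm{out}_i$ is a projector and the $h^\mathrm{out}_i$ mutually commute, $\ker H_\mathrm{out}=\operatorname{range}(\Pi_\mathrm{out})$ where $\Pi_\mathrm{out}=\bigotimes_{i=1}^{p}\ket{1}\bra{1}^\mathrm{out}_i$ projects the $p$ output qubits of the output column onto the all-accept string; moreover the nonzero eigenvalues of $H_\mathrm{out}$ are integers, hence $\ge 1$. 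Second, by \Cref{claim:iidnoise} and the product structure of $\ket{\Phi_{W_\FT,\xi}}$, the map $\ket{\xi}\mapsto\ket{\Psi_{W_\FT,\xi}}$ is linear with norm independent of $\ket{\xi}$, so (after folding a codeword tester into $W_\FT$ or adding $H_\mathrm{stab}$) the ground space of $H_\mathrm{parent}$ is the image of a scaled isometry of the encoded-witness space; in particular every normalized ground state equals $\ket{\Psi_{W_\FT,\hat\xi}}$ (suitably normalized) for a single encoded witness $\hat\xi$, and $\|\Pi_\mathrm{out}\ket{\Psi_{W_\FT,\hat\xi}}\|^2$ equals exactly the acceptance probability of the noisy fault-tolerant run of $W_\FT$ on $\hat\xi$, which by the logical-error hypothesis is within $\delta_L$ of the ideal acceptance probability of $W$ on $\hat\xi$.

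For completeness, I would let $\hat\xi$ be the encoding of a witness accepted by $W$ with probability $\ge c$ and take $\ket{\Psi}=\ket{\Psi_{W_\FT,\hat\xi}}$. By \Cref{clm:parent_hamiltonian} this is a zero-energy state of $H_\mathrm{parent}$ (the $H_\mathrm{stab}$ terms are satisfied since $\hat\xi$ is a codeword), so $\bra{\Psi}H_\mathrm{total}\ket{\Psi}=\gamma\bra{\Psi}H_\mathrm{out}\ket{\Psi}$. The noisy fault-tolerant computation accepts with probability at least $c-\delta_L$, hence for each $i$, $\bra{\Psi}h^\mathrm{out}_i\ket{\Psi}=\Pr[\text{output }i=0]\le(1-c)+\delta_L\le 2\max\{1-c,\delta_L\}<(1-s)/(8p^2)$; summing over $i$ gives $\bra{\Psi}H_\mathrm{out}\ket{\Psi}<(1-s)/(8p)$, so $\bra{\Psi}H_\mathrm{total}\ket{\Psi}<\gamma(1-s)/(8p)$, the claimed upper bound on the ground energy.

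For soundness I would apply \Cref{lem:geometric} with $A=H_\mathrm{parent}$, $B=\gamma H_\mathrm{out}$: both are nonnegative, $A$ has no nonzero eigenvalue below $\gamma$ by the spectral-gap hypothesis, and $B$ has none below $\gamma$ by the integrality above. It remains to bound $1-\cos\theta$ for $\theta$ the angle between $\ker A$ and $\ker B=\operatorname{range}(\Pi_\mathrm{out})$. For any normalized $\ket{\Psi}\in\ker A$, write $\ket{\Psi}=\ket{\Psi_{W_\FT,\hat\xi}}$; then $\sup_{\ket{\phi}\in\ker B,\ \|\phi\|=1}|\braket{\Psi}{\phi}|=\|\Pi_\mathrm{out}\ket{\Psi}\|\le\sqrt{s+\delta_L}$ using soundness of $W$ and the logical-error hypothesis, so $\cos\theta\le\sqrt{s+\delta_L}$. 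A short computation using $\delta_L<(1-s)/(16p^2)$ and $p\ge 1$ gives $s+\delta_L\le\bigl(1-\tfrac{1-s}{4p}\bigr)^2$, hence $1-\cos\theta\ge 1-\sqrt{s+\delta_L}\ge(1-s)/(4p)$, and $H_\mathrm{total}=A+B\ge\gamma(1-\cos\theta)\ge\gamma(1-s)/(4p)$. Since the completeness and soundness thresholds differ by $\gamma(1-s)/(8p)$, and the map from a $\QMA$ instance to the triple $(H_\mathrm{total},\,\gamma(1-s)/(8p),\,\gamma(1-s)/(4p))$ is polynomial-time computable (the fault-tolerant circuit $W_\FT$, the parent Hamiltonian, and $H_\mathrm{out}$ are all explicit), this yields the stated $\QMA$-hardness.

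The point requiring the most care is the degenerate ground space of $H_\mathrm{parent}$: one must verify that imposing $H_\mathrm{stab}$ (equivalently, running the stabilizer tester inside $W_\FT$) makes \emph{every} low-energy state an encoding of some witness, so that the fault-tolerance guarantee (logical error $\le\delta_L$) and the $\QMA$ soundness of $W$ apply uniformly and not only to honestly prepared states, and that the PEPS output column faithfully records the noisy logical output distribution --- this is where \Cref{claim:iidnoise} and the fault-tolerance theorem (\Cref{thm:quantumFT}) are genuinely used. By contrast, the two quantitative inputs to this claim --- the $\gamma$ spectral-gap lower bound (\Cref{thm:spectral-gap}) and the existence of a shallow fault-tolerant $W_\FT$ with small logical error rate --- are where the real difficulty of the broader argument resides, but they are assumed here rather than proved.
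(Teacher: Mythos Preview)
Your proposal is correct and follows essentially the same approach as the paper: completeness via evaluating $H_\mathrm{out}$ on the PEPS ground state with an accepting encoded witness, and soundness via the geometric lemma applied to $H_\mathrm{parent}$ and $\gamma H_\mathrm{out}$. The only cosmetic difference is in the soundness angle bound: the paper lower-bounds $\sin^2\theta$ via $\Tr(H_\mathrm{out}\psi)\ge 1-s-\delta_L$ together with $\|H_\mathrm{out}\|\le p$, whereas you upper-bound $\cos\theta=\|\Pi_\mathrm{out}\psi\|\le\sqrt{s+\delta_L}$ directly and then verify $1-\sqrt{s+\delta_L}\ge(1-s)/(4p)$; both arrive at the same conclusion.
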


\begin{proof}
    In the completeness case, we consider the state $\ket{\Psi_{W_\FT, \xi_L}}$, a normalized PEPS ground state state of $H_\mathrm{parent}$ with an accepting encoded witness $\xi_L$. This state has zero energy on $H_\mathrm{parent}$. Furthermore, due to fault tolerance, all of the output measurements return 1 with probability at least $c-\delta_L $. Thus, the energy of $H_\mathrm{out}$ on $\ket{\Psi_{W_\FT, \xi_L}}$ is at most $p \gamma (1- c + \delta_L)  \leq \gamma(1-s)/8p$.

    In the soundness case, we first lower bound the angle between the ground spaces of $H_\mathrm{out}$ and $H_\mathrm{parent}$. Let $\ket{\Psi_{W_\FT, \xi_L}}$ be any ground state of $H_\mathrm{parent}$ and $\theta$ be the angle such that $\cos \theta =  \| \Pi_{\mathrm{g.s.}(H_\mathrm{out})} \ket{\Psi_{W_\FT, \xi_L}}\|$. By the fault tolerance of $W_\FT$ and the soundness $s$ of $W$ we have that
    \begin{align}
        \Tr(H_\mathrm{out} \ket{\Psi_{W_\FT, \xi_L}}\bra{\Psi_{W_\FT, \xi_L}}) \geq (1-s-\delta_L) \geq \frac{1}{2} (1- s),
    \end{align}
    where the last inequality holds whenever $s < 1/3$ and $ \delta_L < 1/3$.
    
    On the other hand, we have $\Tr(H_\mathrm{out} \ket{\Psi_{W_\FT, \xi_L}}\bra{\Psi_{W_\FT, \xi_L}}) \leq (\sin \theta )^2 \|H_\mathrm{out}\| \leq p(\sin \theta )^2$. Thus, $1 - \cos \theta \geq \frac{1-s}{4p}$. Applying the geometric lemma (\Cref{lem:geometric}) we find that
    \begin{align}
        H_\mathrm{total} = H_\mathrm{parent} + \gamma H_\mathrm{out} \geq  \frac{1-s}{4p} \gamma .
    \end{align}
    We note that the condition $1-c < \frac{1-s}{16p^2}$ is a simple technical requirement that can be satisfied by preprocessing the original (noiseless) circuit $W$ by $\QMA$ amplification, while the condition $\delta_L < \frac{1-s}{16p^2}$ just means that the noise needs to be sufficiently suppressed. Finally, the reason for the energy scaling factor $\gamma$ in front of $H_\mathrm{out}$ is because even in the completeness case, the logical error rate $\delta_L$ could create a large penalty on $H_\mathrm{out}$, so we set the energy scale of $\gamma$ to bring this penalty below the spectral gap.
\end{proof}

\paragraph{Fault tolerance depth overhead and parent Hamiltonian spectral gap.}
To apply~\Cref{thm:qma} to prove the $\QMA$-hardness of the local Hamiltonian problem, we need to devise a fault-tolerant $\QMA$ circuit and lower bound the spectral gap $\gamma$ of the associated parent Hamiltonian. For a circuit $W_\FT$ composed of 2-qubit gates, of depth $D_\FT$, on $n_\FT=\poly(n)$ qubits,~\Cref{thm:spectral-gap} states that $\gamma \geq \delta^{16D_\FT}/\poly(n_\FT, D_\FT,\delta^{-1})$, which is exponentially small in the depth. So it is desirable to have $D_\FT=O(\log n_\FT)=O(\log n)$. To start, we saw in~\Cref{sec:shallow_circuits} that the original (noiseless) verifier circuit $W$ can indeed be assumed to have depth $D=O(\log n)$. This was proved by using a low-degree interacting variant of the Feynman-Kitaev Hamiltonian to convert any $\QMA$ protocol into an equivalent protocol in which the verifier consists of a constant-depth quantum circuit followed by a log-depth classical circuit. However, we will avoid this circuit in this section because our goal here is to give a $\QMA$-completeness proof that is independent of the Feynman-Kitaev mapping. Instead, we observe another way to make $\QMA$ protocols log-depth, using the technique in Rosgen's work~\cite{rosgen2007distinguishing} that proved $\QIP$-completeness of the log-depth circuit distinguishability problem.

\begin{claim}[Log-depth $\QMA$ using SWAP tests]
    Any $\QMA$ protocol can be turned into a logarithmic depth circuit as shown in~\Cref{fig:claim6.3}, with completeness $c=1 - 2^{-\poly (n)}$ and soundness $s=1-1/n^\alpha$, for a positive constant $\alpha \leq 3$.
    \label{claim:rosgen}
\end{claim}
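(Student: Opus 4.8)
The plan is to follow Rosgen's approach~\cite{rosgen2007distinguishing}, adapting the $\QIP$-completeness construction for log-depth circuit distinguishability to the $\QMA$ setting. The starting point is that any $\QMA$ verifier circuit $V$ of depth $D = \poly(n)$ can be written as a sequence of layers $V_D \cdots V_1$, and we want to "parallelize" this sequential composition. The key trick is a teleportation-through-SWAP-test gadget: instead of applying $V_{\ell+1}$ directly to the output of $V_\ell$, we prepare, for each layer $\ell$, an independent register holding $V_\ell \cdots V_1 \ket{0^m}\ket{\xi}$ applied to a \emph{fresh} copy of the (claimed) input — except of course the prover cannot be trusted to provide consistent copies. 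So instead we set up $D$ registers where register $\ell$ is prepared (in constant depth relative to the gate structure, since each $V_\ell$ is a single layer of $2$-qubit gates) and then use SWAP tests between consecutive registers to check consistency: register $\ell+1$ should equal $V_{\ell+1}$ applied to register $\ell$. First I would have the verifier, given the $D$ prover-supplied registers, apply $V_{\ell+1}^\dagger$ to register $\ell+1$ and SWAP-test it against register $\ell$ for each $\ell$; all these SWAP tests act on disjoint registers and can be done in parallel, hence in the depth of a single SWAP test, which is $O(\log n)$ because each SWAP test compares $\poly(n)$-qubit registers (the $n$-qubit controlled-SWAP is what forces logarithmic depth — this is the source of the "$O(\log n)$" and is exactly the feature the text contrasts with its own constant-quantum-depth construction).

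The next step is the soundness/completeness analysis. In the completeness case, the honest prover supplies register $\ell$ holding exactly $V_\ell \cdots V_1 \ket{0^m}\ket{\xi}$ for an accepting $\ket{\xi}$; every SWAP test passes with certainty (the two compared states are identical after applying $V_{\ell+1}^\dagger$), and the final register, when measured, accepts with probability $\geq c$. After $\QMA$ amplification (\Cref{lem:amplify} or~\Cref{lem:strongamplify}) on the original protocol we can take $c = 1 - 2^{-\poly(n)}$, and since all the SWAP tests pass deterministically the overall completeness is $1 - 2^{-\poly(n)}$. In the soundness case, the point is a hybrid/telescoping argument: if all SWAP tests pass with high probability, then register $\ell+1$ is close (in a suitable sense — fidelity or the SWAP-test acceptance probability, which bounds $\tfrac12 + \tfrac12 |\langle \cdot | \cdot \rangle|^2$ type quantities) to $V_{\ell+1}$ applied to register $\ell$, so by chaining these $D$ approximate equalities the final register is close to $V_D \cdots V_1$ applied to register $1$, which is some genuine (prover-chosen) input state; and on any genuine input the original soundness $s$ of $V$ bounds the acceptance probability. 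The catch is that each SWAP test only gives a \emph{weak} ($1/\poly$) signal — a SWAP test on states with overlap $1-\epsilon$ rejects with probability only $\sim \epsilon/2$ — and we cannot afford to repeat SWAP tests sequentially (that would blow up the depth), nor run many parallel copies and take majority without re-introducing the consistency problem. This is why the claimed soundness is only $s = 1 - 1/n^\alpha$ for some $\alpha \le 3$ rather than a constant: the $1/\poly$ promise gap of the SWAP-test-based check propagates through, and the exponent $\alpha$ comes from tracking how the $D = \poly(n)$ layers and the single-SWAP-test sensitivity combine (roughly, a union bound over $D$ layers against a per-layer sensitivity of order the original promise gap).

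I expect the main obstacle to be making the soundness analysis quantitatively tight enough to land at $s \le 1 - n^{-\alpha}$ with $\alpha \le 3$ while keeping the depth genuinely $O(\log n)$. The difficulty is the tension between (i) the weak per-SWAP-test signal, (ii) the need to avoid sequential repetition or majority voting over independently-prover-supplied copies, and (iii) the telescoping over $D = \poly(n)$ layers, each of which can degrade the inferred consistency. One clean way to organize this: pick the layer $\ell^\star$ at which the prover's register $\ell^\star$ most "cheats" relative to $V_{\ell^\star}$ applied to register $\ell^\star - 1$; either this cheat is large, in which case the $\ell^\star$-th SWAP test catches it with $\Omega(1/\poly)$ probability, or all cheats are small and the telescoped error from register $1$ to register $D$ is still $o(1)$, so the final measurement accepts with probability at most $s + o(1) = 1 - \Omega(1/\poly)$. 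Combining the two cases and using that the verifier runs \emph{all} checks (the SWAP tests and the final measurement) and accepts only if all succeed yields the stated soundness; the precise polynomial, and hence $\alpha$, falls out of bookkeeping the SWAP-test sensitivity against the $\poly(n)$ depth and the original promise gap, and need not be optimized for the downstream application since \Cref{thm:qma} only needs \emph{some} fixed polynomial gap.
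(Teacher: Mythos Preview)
Your high-level strategy matches the paper's: parallelize the verifier via SWAP tests between prover-supplied snapshots of the computation, then telescope fidelity bounds for soundness. However, your concrete construction has a gap. You propose one register per time step and assert that the SWAP tests between consecutive registers ``act on disjoint registers and can be done in parallel.'' This is false: register~$\ell$ appears in both the test with register~$\ell-1$ and the test with register~$\ell+1$. Worse, if you apply $V_{\ell+1}^\dagger$ to every register~$\ell+1$ in parallel, then register~$\ell$ (for $\ell\ge 1$) ends up holding $V_\ell^\dagger\ket{\psi_\ell}=\ket{\psi_{\ell-1}}$, which is right for comparing against register~$\ell-1$ but \emph{wrong} for comparing against register~$\ell+1$ (which now holds $\ket{\psi_\ell}$). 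Even the honest prover would be rejected.

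The paper's construction avoids this by having the prover send \emph{two} copies of each intermediate snapshot: the witness is $\ket{\psi_0}\ket{\psi_1}^{\otimes 2}\cdots\ket{\psi_{T-1}}^{\otimes 2}\ket{\psi_T}$. A first round of SWAP tests checks that the two copies of each $\ket{\psi_t}$ agree (genuinely disjoint pairs); then $U_{t+1}$ is applied to one copy of $\ket{\psi_t}$; then a second round of SWAP tests compares $U_{t+1}\ket{\psi_t}$ against a copy of $\ket{\psi_{t+1}}$ (again disjoint within the round, since each copy is touched by at most one second-round test). The soundness analysis combines the SWAP-test fidelity bound, the gentle measurement lemma (to control the disturbance between rounds), and the triangle inequality across the $T$ steps to get $\|\xi_T - W\xi_0 W^\dagger\|_1 = O(T^{1-\beta/2})$ when all tests pass with probability $\ge 1-T^{-\beta}$; taking $\beta=3$ yields $s=1-T^{-3}$. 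Your completeness argument and your telescoping soundness sketch are both fine once the two-copy structure is in place.
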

\begin{figure}
    \centering
    \includegraphics[width=0.99\textwidth]{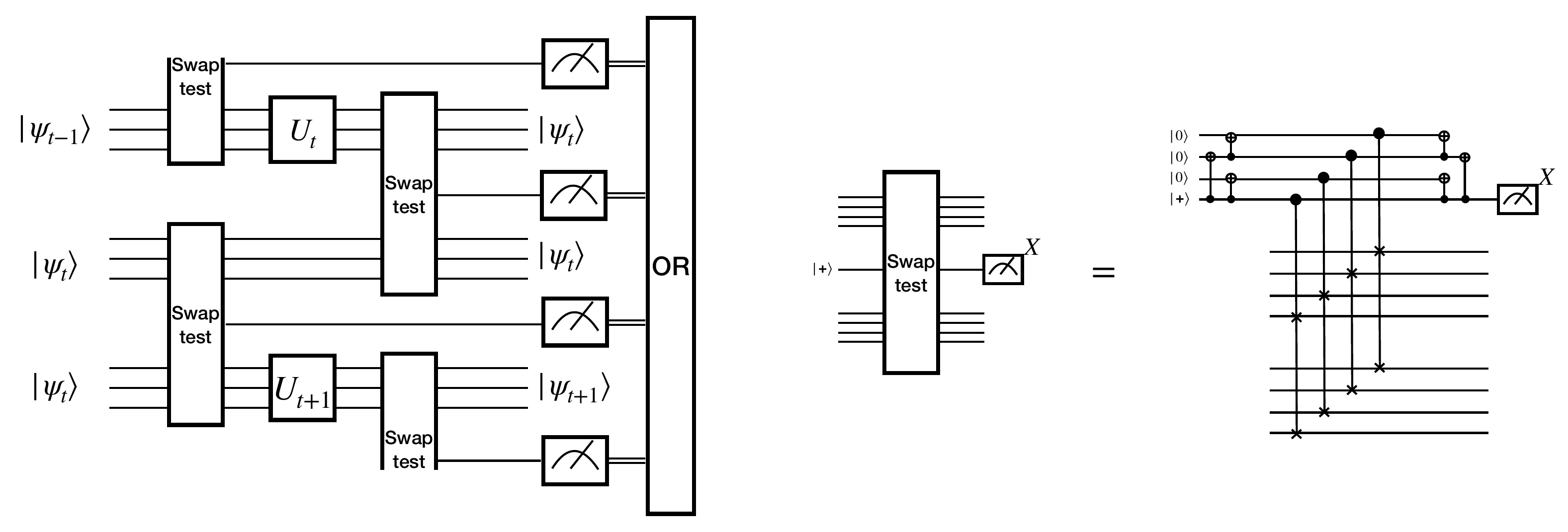}
    \caption{\textit{Left}: Parallelized $\QMA$ protocol with SWAP tests in \Cref{claim:rosgen}. The verifier expects all measurements to output 1. \textit{Right}: Log-depth implementation of SWAP test.}
    \label{fig:claim6.3}
\end{figure}

We summarize here the idea of this circuit and refer to \Cref{app:logdepth} for the proof. Following the idea in Rosgen's construction~\cite{rosgen2007distinguishing}, we convert a circuit $W=U_T\hdots U_2 U_1$ to a parallel version. In the new parallelized circuit, we expect the honest prover to provide a witness of the form
\begin{align}
\ket{\xi} = \ket{\psi_0}\ket{\psi_1}^{\otimes 2}\hdots \ket{\psi_{T-1}}^{\otimes 2}\ket{\psi_T},
\label{eq:rosgen-history}
\end{align}
where $\ket{\psi_t}= U_t\hdots U_1 \ket{\psi_0}$. The verifier then verifies consistency between time steps via SWAP tests. It first performs SWAP tests between the two registers $\ket{\psi_t}\ket{\psi_t}$. Then it applies $U_{t+1}$ on a register $\ket{\psi_t}$ and performs SWAP tests between $U_{t+1}\ket{\psi_t}$ and $\ket{\psi_{t+1}}$. The verifier accepts if and only if all SWAP tests accept and the final measurement on the first qubit of $\ket{\psi}_T$ accepts. Since SWAP test can be done in log depth, this circuit is log depth. See \Cref{fig:claim6.3}.
Standard weak amplification procedure (\Cref{lem:amplify}) can be used to boost the soundness to $2^{-\poly (n)}$ while maintaining log depth.

There are two simple ways to make this log-depth circuit $W$ in \Cref{fig:claim6.3} fault-tolerant. These methods give rise to quasi-polynomial promise gap by applying~\Cref{thm:qma}. The first method is to simply set the injectivity parameter $\delta = 1/\poly(nD)$ and use the original unencoded log-depth circuit $W_\FT=W$, i.e., no fault tolerance protocol is needed. We use a single $1$-local Hamiltonian term $H_\mathrm{out} = \ket{0}\bra{0}^\mathrm{out}$ to check $W_\FT$'s output qubit. This leads to $2^{-O(\log^2 n)}$ promise gap. The second method is to set $\delta= \Omega(1)$ and use the Aharonov-Ben-Or fault tolerance scheme, which gives the circuit $W_\FT$ with polylog depth and width overheads. We set $\delta$ such that the noise rate is below the threshold. The polylog depth overhead leads to $2^{-\polylog(n)}$ promise gap. Thus, we have just shown that the local Hamiltonian problem with $n^{-O(\log n)}$ promise gap is $\QMA$-hard.

To achieve the desired $1/\poly(n)$ promise gap, we need to construct a fault-tolerant $\QMA$ circuit $W_\FT$ with logarithmic depth. This hints towards the need for a quantum fault tolerance scheme with constant depth overhead (including noisy classical computation). This is indeed possible in the classical setting~\cite{von1956probabilistic, dobrushin1977upper, pippenger1985networks} for any classical circuits. To our knowledge, it is unknown if this is possible quantumly without assuming that classical computation is free and noiseless. However, note that we only need such a scheme for some log-depth $\QMA$ circuit.

\subsection{$\QMA$-completeness of log-local Hamiltonian problem}
The previous subsection discussed how a fault-tolerant log-depth $\QMA$ circuit combined with \Cref{thm:qma} would give the desired proof of $\QMA$-completeness of $O(1)$-$\LH(1/\poly(n))$. Here, we construct such a fault-tolerant version of the $\QMA$ circuit from \Cref{claim:rosgen} at the cost of using a few layer of log-local gates. This suffices to prove $\QMA$-completeness of $O(\log n)$-$\LH(1/\poly n)$. With some slight modification we can also prove $\QMA$-hardness of $O(1)$-$\LH (n^{-O(\loglog n)})$. Notably, the proofs rely on a recent linear-distance quantum code family~\cite{leverrier2023decoding} and its local parallel decoder.

We will use a low-density parity-check CSS code family that admits a parallel decoder defined below. Below $|\cdot|_R$ denotes Pauli weight up to stabilizers.

\begin{definition}[QECC with parallel decoder]
\label{def:parallel-decoder} Let $\mathcal{Q}$ be an LDPC CSS code specified by parity check matrices $H_X \in \mathbb{F}_2^{r_X \times n}$ and $H_Z \in \mathbb{F}_2^{r_Z \times n}$. Let $e = (e_X,e_Z) \in \mathbb{F}_2^{2n}$ be a data error and $\sigma=(\sigma_X,\sigma_Z)=(H_Z e_X, H_X e_Z) \in \mathbb{F}_2^{r_X+r_Z}$ be the corresponding syndrome. A decoder for $\mathcal{Q}$ is $(\alpha,A,B)$-suppressing, for $\alpha < 1$, if there exists a constant $A$ such that, for $P\in \{X,Z\}$, whenever $
    |e_P|  \leq An$,
the decoder from given input $\sigma_{P}$ runs a computation in constant depth $B$ (using 2-bounded gates) to compute a correction $f_P \in \mathbb{F}_2^n$ such that $|e_P+f_P| \leq \alpha |e_P|$.
\end{definition}

This requirement is satisfied by the following family of constant-rate and linear-distance quantum LDPC codes and other good quantum LDPC code families with sufficient expansion~\cite{leverrier2023decoding, gu2023single}. Constant rate is not a necessary condition for us as we will use only one logical qubit per block. For QECC that encodes more than one logical qubits per block, we ignore all but one qubit which will be used in the computation.
 
\begin{lemma}[Parallel decoding Leverrier-Zémor code~\cite{leverrier2023decoding}]
\label{thm:LZcode} Consider the asymptotically good quantum Tanner code family in~\cite{leverrier2022quantum}. There exists a constant $A$ such that the parallel small-set flip algorithm with $k$ iterations (Algorithms 1 and 3 in~\cite{leverrier2023decoding}) is $(2^{-\Omega(k)}, A, O(k))$-suppressing for any positive number $k$. In particular, when $k=O(\log n)$, where $n$ is the codeblock size, the error suppression is $\alpha=0$.
\end{lemma}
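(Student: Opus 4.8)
The plan is to recognize this lemma as essentially a repackaging of the parallel-decoding analysis of Leverrier and Zémor~\cite{leverrier2023decoding} for the good quantum Tanner codes of~\cite{leverrier2022quantum}, so that the real work is to match their guarantees to the three parameters $(\alpha, A, B)$ of \Cref{def:parallel-decoder}. Since the code is CSS, $X$-type and $Z$-type errors are corrected by symmetric, independent procedures, so I would fix $P = X$ throughout. The constant $A$ will be the decoding-radius constant coming from the code's \emph{linear confinement} property: there exist constants $A, c_1 > 0$ such that every (reduced) error $e$ with $|e|_R \leq An$ has syndrome weight $|\sigma(e)| \geq c_1 |e|_R$. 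This is the property that simultaneously fixes $A$ and underlies the entire decoding argument, and it is already established in~\cite{leverrier2023decoding}.

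Next I would invoke the central estimate behind their decoding theorem in its \emph{parallel} form (Algorithm~3): one round of the parallel small-set-flip procedure contracts the syndrome weight by a constant factor, $|\sigma^{(t+1)}| \leq \rho\,|\sigma^{(t)}|$ for some fixed $\rho \in (0,1)$, provided the running error is still inside the decoding radius, and moreover the running error remains inside that radius. Iterating $k$ rounds gives $|\sigma^{(k)}| \leq \rho^k |\sigma^{(0)}|$, and combining with linear confinement ($|e^{(k)}|_R \leq c_1^{-1}|\sigma^{(k)}|$) and the trivial bound $|\sigma^{(0)}| \leq \Delta\,|e_X|$ (with $\Delta = O(1)$ the maximal check weight, finite since the code is LDPC) yields residual weight $|e^{(0)} + f|_R \leq (\Delta/c_1)\,\rho^k\,|e_X| = 2^{-\Omega(k)}\,|e_X|$, where $f$ is the accumulated correction. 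This is exactly the $\alpha = 2^{-\Omega(k)}$ suppression claimed.

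For the depth bound $B = O(k)$ I would argue that each round is a constant-depth local computation: because the Tanner graph has bounded degree, recomputing the syndrome bits touched by the previous flips, enumerating the $O(1)$ candidate small sets inside each generator's neighborhood and testing whether flipping one decreases the syndrome, and resolving conflicts among overlapping generators (via a fixed proper coloring of the bounded-degree conflict graph, or the deterministic tie-break already used in~\cite{leverrier2023decoding}) are all constant-depth with $2$-bounded gates. Hence $k$ rounds have total depth $O(k)$, matching \Cref{def:parallel-decoder}. For the $k = O(\log n)$ corollary I would take $k \geq C\log_{1/\rho} n$ with $C$ depending only on $A, \Delta, c_1$, so that for any $e_X$ with $|e_X| \leq An$ we get $|e^{(k)}|_R \leq (\Delta/c_1)\,\rho^k A n < 1$; since $|e^{(k)}|_R$ is a nonnegative integer it must be $0$, i.e.\ the residual is a stabilizer and $\alpha = 0$.

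The main obstacle is not in this bookkeeping but in making sure the per-round geometric contraction really survives in the \emph{parallel} setting rather than only the sequential one, since simultaneous small-set flips at overlapping generators can interfere and in principle undo each other's progress; fortunately this is precisely what the Algorithm~3 analysis of Leverrier--Zémor establishes, so I would be careful to cite that version (and its conflict-resolution/scheduling step) rather than the sequential small-set-flip analysis, and to double-check that their scheduling step is genuinely implementable in constant depth with $2$-bounded gates as \Cref{def:parallel-decoder} requires.
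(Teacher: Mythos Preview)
Your proposal is correct and matches the intended argument: the paper does not give its own proof of this lemma but cites it directly from~\cite{leverrier2023decoding}, so your reconstruction of the linear-confinement-plus-geometric-syndrome-contraction argument (with the bounded-degree observation giving $O(1)$ depth per parallel round) is exactly the content being invoked.
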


We note that the above definition was stated for adversarial errors and hence we would need a code with linear distance. So $A$ can be understood to be some constant smaller than the code's relative distance. It is possible, and more standard, to have a weaker definition for stochastic noise by requiring~\Cref{def:parallel-decoder} to hold only with high probability.
Since our later analysis readily generalizes to the stochastic noise model via a union bound on the entire circuit, we will stick with~\Cref{def:parallel-decoder} and uses the linear distance code from~\Cref{thm:LZcode} in our later analysis for simplicity.

\begin{theorem}\label{thm:log-local-qma} The problem of deciding whether the ground energy density of $O(\log n)$-local Hamiltonians is $\leq a$ or $\geq a +  1/\poly(n)$, for some given number $a$, is $\QMA$-complete.
\end{theorem}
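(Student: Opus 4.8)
The plan is to instantiate \Cref{thm:qma} with a fault-tolerant $\QMA$ verifier whose parent Hamiltonian is $O(\log n)$-local and has inverse-polynomially small spectral gap. Membership in $\QMA$ is immediate since $O(\log n)$-$\LH$ is in $\QMA$. For hardness, I start from the log-depth $\QMA$ verifier $W$ of \Cref{claim:rosgen}, which has depth $D = O(\log n)$, completeness $1 - 2^{-\poly(n)}$, and soundness $s = 1 - 1/n^{O(1)}$, and whose acceptance is the AND of $p = \poly(n)$ commuting binary measurements (the SWAP tests plus the final output check). The crucial point is that in \Cref{thm:qma} the promise gap is $\tfrac{\gamma(1-s)}{8p}$, so with $1-s = 1/\poly(n)$ and $p = \poly(n)$ it suffices to exhibit a fault-tolerant $W_\FT$ whose parent-Hamiltonian spectral gap is $\gamma = 1/\poly(n)$, i.e.\ whose causal depth $D_\FT$ (in the sense of the \Cref{thm:energy} remark) is $O(\log n)$, together with logical error rate $\delta_L < \tfrac{1-s}{16 p^2} = 1/\poly(n)$ in the PEPS ground state.

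The main work is the construction of $W_\FT$. I would encode each logical qubit of $W$ into one block of the asymptotically good quantum Tanner / Leverrier--Zémor LDPC code of \Cref{thm:LZcode}, using its constant-depth parallel small-set-flip decoder. First I replace the verification of codeword validity of the prover's witness by the code Hamiltonian term $H_\mathrm{stab} = \sum_j \Lambda^{\otimes N(j)} \tfrac{1-S_j}{2}\Lambda^{\otimes N(j)}$, whose locality is $O(1)$ because the code is LDPC; by the $k$-local generalizations \Cref{lem:prop-k-local} and \Cref{lem:Hteleport-k-local} this only changes the spectral gap by a $\poly$ factor. Each two-qubit gate of $W$ is implemented transversally (qubitwise CNOTs between two code blocks, as in the CSS identity in the excerpt); single-qubit Cliffords and idle wires are transversal as well. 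After each gate layer I must insert one round of syndrome extraction followed by a correction step. Syndrome extraction is constant depth (LDPC), but applying the parallel decoder is where the $O(\loglog n)$ classical-depth issue arises: running $k = O(\log n)$ iterations of small-set flip drives the residual error to $0$ (the $\alpha = 0$ regime of \Cref{thm:LZcode}), but naively this is $O(\log n)$ classical depth per layer, hence $O(\log^2 n)$ total. To keep the causal depth $O(\log n)$ I exploit that (i) each iteration of the parallel decoder is constant depth, and (ii) we only need the \emph{cumulative} correction to be applied before the output measurement — so I can defer corrections, tracking the Pauli frame through the transversal Clifford gates (Pauli-frame propagation is free), and apply the single-qubit identity gates of the assumed last layer to absorb them; the $O(\loglog n)$-depth residual classical computation is then folded into a few layers of $O(\log n)$-local gates right before the output, exactly as flagged in the introduction. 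The number of parallel repetitions / SWAP tests $p$ stays $\poly(n)$, and $H_\mathrm{out}$ consists of the $p$ corresponding $O(\log n)$-local projective measurement terms (each reads out one logical qubit through an $O(\log n)$-weight logical operator of the code), which is why the total Hamiltonian is $O(\log n)$-local rather than $O(1)$-local.

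With $W_\FT$ in hand I verify the hypotheses of \Cref{thm:qma}: the parent Hamiltonian $H_\mathrm{parent} = H_\mathrm{in} + H_\mathrm{prop} + H_\mathrm{stab}$ has spectral gap $\gamma \geq \delta^{\Theta(D_\FT)}/\poly(n) = 1/\poly(n)$ by \Cref{thm:spectral-gap} (taking $\delta = \Theta(1)$ below the fault-tolerance threshold of \Cref{thm:quantumFT}, and using the varying-locality version of \Cref{thm:spectral-gap} for the few $O(\log n)$-local layers), and the logical error probability $\delta_L$ in the PEPS ground state is $e^{-\Omega(\mathrm{dist})} = 2^{-\poly(n)}$ because, by \Cref{claim:iidnoise}, the ground state carries i.i.d.\ depolarizing noise of rate $O(\delta^2)$ per wire, which the linear-distance code with its decoder corrects except on an exponentially unlikely error pattern (a standard percolation/union-bound argument over the $\poly(n)$ circuit locations). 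Setting $H_\mathrm{total} = H_\mathrm{parent} + \gamma H_\mathrm{out}$ and applying \Cref{thm:qma} gives, in the yes-case, an eigenvalue below $\tfrac{\gamma(1-s)}{8p}$ and, in the no-case, all eigenvalues above $\tfrac{\gamma(1-s)}{4p}$; dividing by the number of terms $m = \poly(n)$ turns this into an energy-\emph{density} gap of $1/\poly(n)$, completing the $\QMA$-hardness proof. The hard part is genuinely the fault-tolerance construction — specifically controlling the classical decoding depth so that the causal depth stays $O(\log n)$ while only paying $O(\log n)$ locality, rather than $O(\log^2 n)$ depth or super-logarithmic locality; everything downstream is bookkeeping with the lemmas already proved.
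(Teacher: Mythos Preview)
Your high-level architecture matches the paper's: start from the log-depth verifier of \Cref{claim:rosgen}, encode each qubit in a Leverrier--Z\'emor block, invoke \Cref{thm:spectral-gap} and \Cref{thm:qma}. But the heart of the construction --- how to implement the logical gates of $W$ while keeping $D_\FT=O(\log n)$ --- has a genuine gap.

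You assert that ``single-qubit Cliffords and idle wires are transversal as well'' and that each two-qubit gate of $W$ is implemented transversally. For a generic CSS code, and in particular for the quantum Tanner codes of \Cref{thm:LZcode}, only CNOT is transversal; Hadamard is not, and you never say how to realize the non-Clifford gate (CCZ or equivalent) that $W$ must contain. Relatedly, your Pauli-frame deferral trick cannot push corrections through a non-Clifford layer, so the ``fold the residual classical computation into a few $O(\log n)$-local layers before the output'' step does not go through as stated. This is exactly the obstruction the paper works to overcome, and it is where the $O(\log n)$-locality is actually spent --- not on decoding.

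The paper's fix is different from yours on both points. For error correction it runs only a \emph{constant} number of small-set-flip iterations per round (the $\alpha=1/10$ regime of \Cref{thm:LZcode}), so the EC gadget is genuinely constant-depth with $2$-local gates; no deferral is needed. For the non-transversal gates (H, CZ, CCZ) it uses gate teleportation: the required magic states $\ket{0}_L,\ket{+}_L,CZ_L\ket{++}_L,CCZ_L\ket{+++}_L$ are enforced by $O(m)$-local \emph{initialization} terms in $H_\mathrm{in}$, and the adaptive measurement-plus-correction is packaged into a single $O(m)$-local unitary $U_\mathrm{teleport}$ (fault-tolerant because the decoding map is robust to low-weight input noise and the applied correction is transversal). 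With block size $m=O(\log n)$ this yields $O(\log n)$-local terms appearing in only $O(1)$ layers, whence $\gamma=1/\poly(n)$ by the varying-locality clause of \Cref{thm:spectral-gap}. (A minor point: with $m=O(\log n)$ the logical error rate is $\delta_L=1/\poly(n)$, not $2^{-\poly(n)}$; this is still enough for \Cref{thm:qma}.)
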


\begin{proof}
    As noted in~\Cref{sec:background}, the log-local Hamiltonian problem is in $\QMA$. We here show it is $\QMA$-hard.
    Let $n$ and $D=O(\log n)$ be the width and depth of the circuit $W$ from~\Cref{claim:rosgen}. Our goal is to construct a fault-tolerant version of the circuit $W$ in~\Cref{fig:claim6.3} while keeping the depth $O(\log n)$, and then apply~\Cref{thm:qma}. As discussed in the previous subsection, we can offload the measurements at the end of the circuit to the output Hamiltonian terms in $H_\mathrm{out}$, so we don't need to make that part fault-tolerant.

    For each qubit in $W$, we simulate it by an instance of the linear-distance CSS QECC family in~\Cref{thm:LZcode} of block size $m=O(\log n D)=O(\log n)$. Without loss of generality, we assume the gates $U_1,\hdots U_T$ in $W$ are CCZ, Hadamard, and CNOT. The SWAP tests can also be \emph{exactly} implemented with this gate set. We will use the following gadgets to fault-tolerantly simulate the circuit $W$ in \Cref{fig:claim6.3} using a circuit $W_\FT$ of width $n_\FT=O(n \log n)$ and depth $D_\FT=O(D)=O(\log n)$:
    \begin{enumerate}
        \item (Offline) Logical state preparations of $\ket{0}_L$, $\ket{+}_L$, $\mathrm{CZ}_{L}\ket{++}_L$, $\mathrm{CCZ}_L\ket{+++}_L$.
        \item Error correction gadget that incurs constant space and time overheads per round.
        \item Logical CZ, CCZ, and Hadamard gates via gate teleportation.
        \item Transversal logical CNOT gate.
    \end{enumerate}
    \textbf{Proof overview}: The important observation is that, apart from the log-depth CNOT gates part inside the SWAP test and the global OR computation at the end (see~\Cref{fig:claim6.3}), the verifier circuit $W$ is constant depth. The rest of the proof shows we can achieve gadget 2 for constant noise rate, while gadget 4 is trivially true as we are using CSS codes (see~\Cref{sec:background}). However, we do not know how to achieve gadgets 1 and 3 with constant noise rate without incurring super-constant depth overheads, so we instead use $O(m)$-local (i.e., $O(\log n)$-local) gates to perform gadgets 1 and 3 in ``one go''. Although these gates correspond to $O(\log n)$-local Hamiltonian terms, there are only $O(1)$ layers of them. Thus the spectral gap of $H_\mathrm{parent}$ can still be lower bounded by $1/\poly(n)$ by applying~\Cref{thm:spectral-gap} for circuits with varying gate localities. Finally we show that the fault tolerance of $W_\FT$ allows converting this spectral gap to $\QMA$ promise gap.

    We now explain in more details how to achieve each gadget. 
    
    \textbf{Gadget 4} is trivial for CSS codes.
    
    \textbf{Gadget 1.} 
The quantum state that we want to initialize is a logical code word $\ket{\psi}_L$ of size $m=O(\log n)$. For this, we simply use a $2m$-local initialization Hamiltonian term $h_\mathrm{in} = \Lambda^{\otimes m} (\id - \ket{\psi}\bra{\psi}_L) \Lambda^{\otimes m}$ in the injective PEPS construction. In the quantum circuit encoded in the injective PEPS ground state of $H_\mathrm{parent}$ (recall \Cref{sec:injectiveTN}), this corresponds to a perfect initialization of the ancilla state $\ket{\psi}_L$ followed by iid depolarizing noise on each qubit in $\ket{\psi}_L$. Thus, we avoid the need for a fault tolerance state preparation.

   \textbf{Gadget 2.} We use the linear-distance QECC with a local parallel decoder from~\Cref{thm:LZcode}, with suppresion parameters $(\alpha,A,B)=(2^{-\Omega(k)}, A, O(k))$ as defined in~\Cref{def:parallel-decoder}. For concreteness we choose $k$ to be a small constant such that the suppresion factor $\alpha=1/10$. Accordingly, $A$ and $B$ are constants independent of the code instance size.

   Suppose the input state $\ket{\psi}$ of the gadget deviates from a codeword by $\varepsilon m$ Pauli errors, where $\varepsilon < A/3$, so that the the guarantees of the~\Cref{thm:LZcode}'s decoder apply. As usual for CSS codes, we first show how to deal with $X$ errors, the same argument applies for $Z$ errors. Recall from~\Cref{sec:model} that the depolarizing noise rate is $\eta=3\delta^2/(1+3\delta^2)$, where $\delta$ is the PEPS injectivity parameter. We choose $\delta$ to be a sufficiently small constant such that the noise rate per gate in the circuit satisfies $\eta \ll \frac{A}{B2^B}$.
   We show that a \emph{noisy} implementation of the depth-$B$ circuit, denoted $\mathsf{EC}$, that performs~\Cref{thm:LZcode}'s parallel decoding algorithm still perform well to keep the number of Pauli errors under controlled throughout the computation. More specifically, $\mathsf{EC}$ initializes $O(Bm)$ ancilla qubits, performs reversibly the depth-$B$ computation of~\Cref{thm:LZcode}'s algorithm and writes each $i$-th bit of the X correction vector $f_X \in \mathbb{F}_2^m$ into an $i$-th ancilla qubit, and then performs the Pauli X correction transversally controlled on these $m$ ancilla qubits. The ancilla qubits used for reversible computation are uncomputed and discarded. $\mathsf{EC}$ is followed by a similar circuit that performs $Z$ error correction. All operations are subject to depolarizing noise of rate $\eta$ specified above.
   First, suppose the circuit $\mathsf{EC}$ had no faulty gates, then according to~\Cref{thm:LZcode} the residual error would be reduced to below $ \alpha \varepsilon m = \varepsilon m /10$. When including the random noise in the gates of $\mathsf{EC}$ back, we can apply the Chernoff's bound on the $O(mB)$ gates in $\mathsf{EC}$ and find that with probability $1-e^{-\Omega(mB)}$, there are at most $O(B) \cdot \eta m$ faulty gates within $\mathsf{EC}$. These faults cause at most $O(B) 2^B \eta m \triangleq \gamma m/2$ extra errors in the output state. Therefore, the number of $X$ errors in the output state is below $\varepsilon m /10 + \gamma m/2$. The same argument applies for $Z$ errors. We choose the PEPS injectivity parameter such that $\gamma < A/30$.
   
   In summary, if the input state of the gadget contains $\varepsilon m$ Pauli errors where $\varepsilon < A/3$, then with probability $1-e^{-\Omega(Bm)}$, the error count is suppressed to $\varepsilon m/5 + \gamma m < Am/15 + \gamma m < Am/10$ in the output state\footnote{This argument gives a short proof of the single-shot property (resilience to measurement errors) of the parallel small-set flip decoder for the quantum Tanner code, as also shown in~\cite[Theorem 1.3]{gu2023single}. This is because syndrome measurement error is a special case of gate faults in the error correction gadget circuit. Thus, our argument here shows that the $O(1)$-iteration parallel algorithm of~\cite{leverrier2023decoding} is single-shot (see Definition 1.1 in~\cite{gu2023single}). The single-shot property of the $O(\log m)$-iteration version of the algorithm also follows by repeating the argument $O(\log m)$ times. In particular, if the initial error weight is $\varepsilon m$, then after applying the $\mathsf{EC}$ circuit $O(\log m)$ times, the remaining error is $\gamma m (1 + \frac{1}{5} +\hdots + (\frac{1}{5})^{O(\log m)} ) +(\frac{1}{5})^{O(\log m)} \varepsilon m < 2\gamma m$.}. We will insert an EC gadget in between every two computational steps. This error suppression creates room for up to $(1/3-1/10)Am$ new errors to happen in the computation in between two EC gadgets. Finally, we choose the code block size to be $m=O(\log n)$ with a sufficiently large constant in front of $\log m$, such that each EC gadget only fails with probability $e^{-\Omega(mB)} = 1/\poly(W_\FT)$, allowing us to union bound over the entire circuit $W_\FT$. 

    \textbf{Gadget 3.} We use the gate teleportation circuits in \Cref{fig:teleport}. Here, we introduce the CZ gate into the already-complete gate set $\{\mathrm{CCZ},\mathrm{H}\}$ for convenience because CZ is used as a subroutine in the gate teleportation circuits for H and CCZ.
        \begin{figure}
        \centering
        \includegraphics[width=0.95\textwidth]{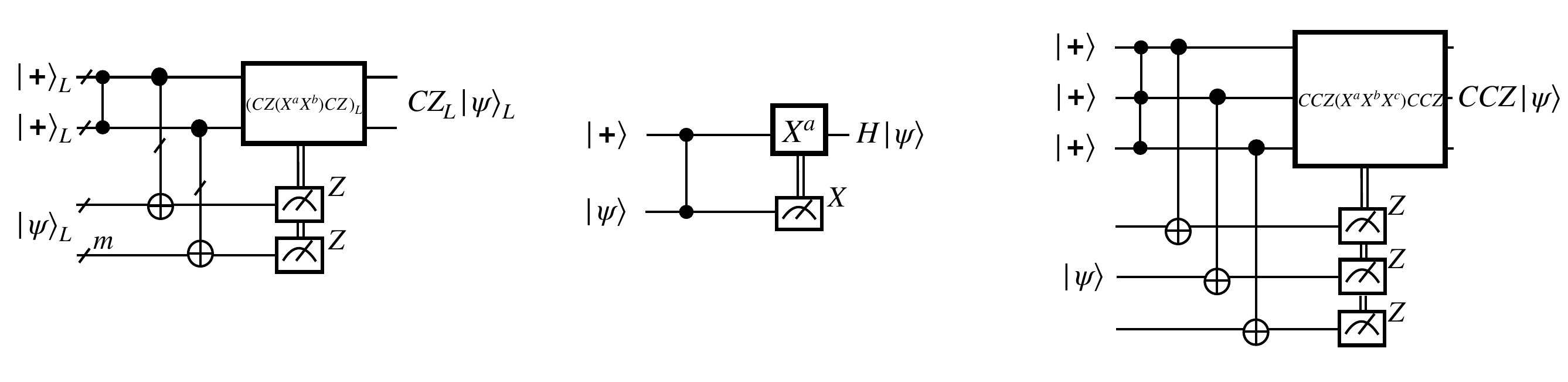}
        \caption{\textit{Left}: CZ gate teleportation using the ancilla state $CZ\ket{++}$. The correction $CZ(X^a X^b)CZ$ is a Pauli operation. \textit{Middle}: Hadamard gate teleportation. \textit{Right}: CCZ gate teleportation. The correction $CCZ(X^a X^b X^c)CCZ$ is a product of Pauli operators and CZ. For example, $CCZ(X\otimes \id \otimes \id)CCZ= X \otimes CZ$. The logical X/Z measurement can be transversally done in the physical X/Z basis for CSS codes.
        }
        \label{fig:teleport}
    \end{figure}

    We explain how to fault-tolerantly perform these circuits using log-local gates. As an illustration, consider the CZ gate teleportation circuit in~\Cref{fig:teleport}. The logical ancilla state $\mathrm{CZ}_L\ket{++}_L$ can be prepared using gadget 1, which corresponds to a $4m$-local initialization term $h^{CZ}_\mathrm{in} = \Lambda^{\otimes 2m}(\id - \mathrm{CZ}_L\ket{++}_L\bra{++}_L \mathrm{CZ}_L^\dagger) \Lambda^{\otimes 2m}$ in the parent Hamiltonian. Next, the logical CNOT gates are applied transversally with 2-qubit physical gates. Finally, we use a $4m$-qubit gate $U^{CZ}_\mathrm{teleport}$ to perform, in a single step, both the logical measurement and the logical Pauli correction. It has the following form
    \begin{align}
        U^{CZ}_\mathrm{teleport} = \sum_{x \in \{0,1\}^{2m}} \ket{x}\bra{x} \otimes f(x),
    \end{align}
     where $f: \{0,1\}^{2m} \mapsto \mathcal{P}^{\otimes 2 m} $ is the logical decoding function that maps a $2m$-bit word $x$ (the measurement outcomes) to a $2m$-qubit Pauli string corresponding to the associated logical Pauli correction. Note that the depth of the gadget is 3: initialization, transversal CNOT, $U^{CZ}_\mathrm{teleport}$.
    
\begin{figure}
    \centering
\includegraphics[width=0.99\textwidth]{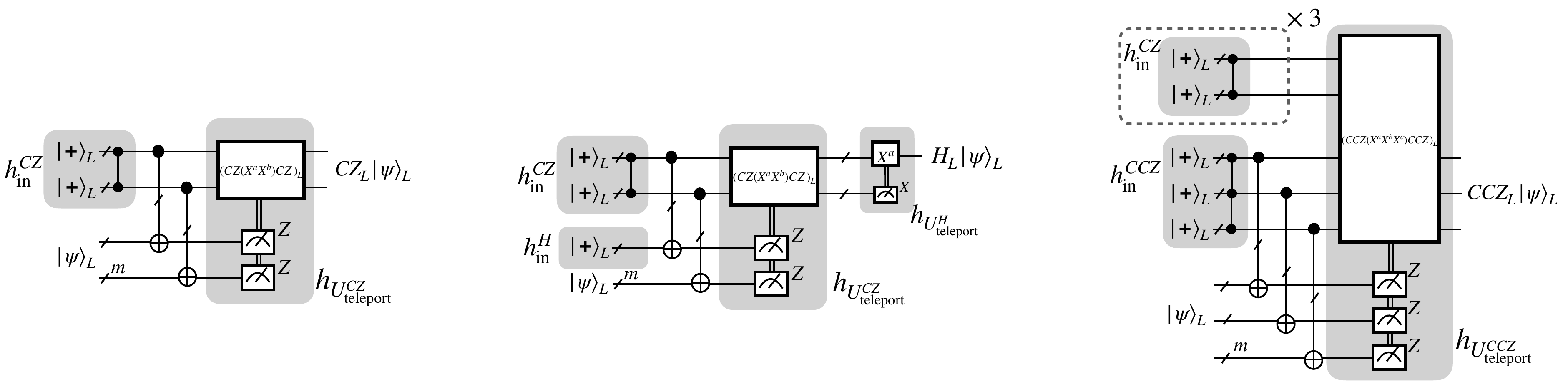}
    \caption{Fault-tolerant realizations of the encoded versions of the teleportation circuits in~\Cref{fig:teleport} using $O(m)$-local gates and Hamiltonian terms. The Hadamard circuit uses CZ as a subroutine. The CCZ circuit adaptively uses up to 3 CZ's within its Clifford correction.}
    \label{fig:log-local}
\end{figure}
    
    The gate $U^{CZ}_\mathrm{teleport}$ corresponds to a $16m$-local propagation Hamiltonian term $h_{U^{CZ}_\mathrm{teleport}} = \Lambda^{\otimes 8m} (\id - \ket{\Phi_{U^{CZ}_\mathrm{teleport}}}\bra{\Phi_{U^{CZ}_\mathrm{teleport}}} ) \Lambda^{\otimes 8m}$, where $\ket{\Phi_{U^{CZ}_\mathrm{teleport}}}= U^{CZ}_\mathrm{teleport} \ket{\Phi_I}^{\otimes 4m}$. In the quantum circuit encoded in the injective PEPS ground state of $H_\mathrm{parent}$, this corresponds to a layer of iid depolarizing noise on the $4m$ input qubits, followed by $U^{CZ}_\mathrm{teleport}$. Importantly, this gate is fault-tolerant despite being nonlocal. This is because the function $f$ is robust to input noise: for a linear-distance QECC, flipping a sufficiently small fraction of bits in $x$ leaves $f(x)$ unchanged. Therefore, any Pauli error of sufficiently small Pauli weight (up to a constant fraction of $m$) preceding $U^{CZ}_\mathrm{teleport}$ does not cause a logical error in $f(x)$ and is also not spread out by the transversal Pauli $f(x)$.
    
    We can similarly fault-tolerantly realize the Hadamard and CCZ gate teleportation circuits (which use the CZ gate as a subroutine) at the physical level by using $O(m)$-local propagation Hamiltonian terms. This is straightforward for the Hadamard circuit, see middle panel of~\Cref{fig:log-local}. The CCZ circuit is slightly more involved since it \emph{adaptively} uses up to 3 CZ's within the Clifford correction $CCZ(X^a X^b X^c)CCZ$. For example, $CCZ(X\otimes \id \otimes \id)CCZ= X \otimes CZ$. So we need to prepare 3 ancilla states $CZ_L \ket{++}_L$ in addition to the magic state $CCZ_L\ket{+++}_L$. Accordingly, we use a $12m$-qubit gate $U^{CCZ}_\mathrm{teleport}$ as shown in the right panel of~\Cref{fig:log-local}. An explicit formula for $U^{CCZ}_\mathrm{teleport}$ is rather lengthy, so we do not include the details here. 
    However, as for $U^{CZ}_\mathrm{teleport}$, this gate can be seen to be fault-tolerant because it consists of transversal gates (including Paulis, CNOTs, and possibly SWAPs) controlled on the values of logical decoding functions that are robust to input noise.

    \paragraph{Combining the gadgets.} The fault-tolerant circuit $W_\FT$ is obtained by replacing the ideal gates in $W$ (\Cref{fig:claim6.3}) by the described gadgets, except that $W_\FT$ does not include the final OR computation on the SWAP test outcomes and the output qubit in the final time step register $\ket{\psi_T}$ (recall~\Cref{eq:rosgen-history}). Instead, we offload this computation to $H_\mathrm{out}$ as explained later. Since each gadget has constant depth, $W_\FT$ has depth $D_\FT= O(D) = O(\log n)$. Furthermore, by the linear distance of the QECC and Chernoff’s bound we find that each gadget fails with probability at most $e^{-\Omega(m)}$. Thus, choosing $m=O(\log n)$ (with a sufficiently large constant in front of $\log n$) and using the union bound, we obtain that the logical error rate in $W_\FT$ is at most $\delta_L = 1/\poly(n)$ for any large polynomial $\poly(n)$ of choice. The parent Hamiltonian $H_\mathrm{parent}$ corresponding to $W_\FT$ has $O(\log n)$ locality and injectivy $\delta$ everywhere, where $\delta$ is chosen to be a sufficiently small constant such that the analysis in gadget 2 holds.
    
    The main observation is that, apart from the log-depth transversal CNOTs part within each SWAP test, $W_\FT$ is constant-depth. This allows us to obtain a $1/\poly(n)$ lower bound on the spectral gap of $H_\mathrm{parent}$.

    \begin{claim} The parent Hamiltonian corresponding to $W_\FT$ has spectral gap at least $\gamma = 1/\poly(n)$.
    \end{claim}
    \begin{proof}
        We apply the version of~\Cref{thm:spectral-gap} that allows gate locality to vary in the circuit.~\Cref{thm:spectral-gap} states that the spectral gap is lowerbounded by $\gamma = \frac{1}{\poly(n_\FT D_\FT)} \prod_{\ell=0}^{D_\FT} \delta^{8k_\ell}$, where $k_\ell$ is the gate locality at layer $\ell$ of $W_\FT$. But $W_\FT$ has depth $D_\FT= O(\log n)$ and only a constant number of layers have $O(\log n)$ locality. Thus, $\gamma = 1/\poly(n)$.
    \end{proof}
    
    Next, we describe how to offload the global computation on the measurement outcomes at the end of~\Cref{fig:claim6.3} to the output Hamiltonian terms. This computation expects that all of the logical measurements at the end of $W_\FT$ to output logical value 1. However, instead of including these logical measurements and global OR computation, we use the following output check Hamiltonian to check each of the measurement outcomes
\begin{align}
    H_\mathrm{out} =  \sum_{\text{measured block } j}  \Pi_j^\mathrm{out},
\end{align}
    which consists of $p=O(n)$ identical $O(\log n)$-local projector terms $\Pi_j^\mathrm{out}$ acting on the last qubit column in the tensor network. The projectors have the form 
    \begin{align*}
        \Pi = \mathrm{span}\{ E\ket{0_L}: E \text{ is correctable Pauli}\},
    \end{align*}
    i.e., $E$ has weight smaller than half the QECC distance.
    
    Altogether, we have the following $O(\log n)$-local Hamiltonian on $O(n_\FT D_\FT)= O(n \log n)$ qubits:
    \begin{align}
        H_\mathrm{total} = H_\mathrm{parent} + \gamma H_\mathrm{out}.
    \end{align}
    Finally, we apply~\Cref{thm:qma} to conclude that deciding the ground energy of $H_\mathrm{total}$ to precision $\frac{\gamma(1-s)}{8p} = \frac{1}{\poly (n)}$ is $\QMA$-hard. Note that the requirement $\max\{1-c, \delta_L\} < \frac{1-s}{16p^2}$ for~\Cref{thm:qma} to hold can be satisfied by choosing the QECC block size to be sufficiently large (recall from~\Cref{claim:rosgen} that the completeness is $c=1-2^{-\poly(n)}$ and $s= 1-1/n^\alpha$ for a constant $\alpha \leq 3$). This concludes the proof of~\Cref{thm:log-local-qma}.

\end{proof}

For the case of $O(1)$-local Hamiltonian, we can also prove $\QMA$-hardness with an inverse superpolynomial promise gap.

\begin{theorem}\label{thm:loglogqma} The problem of deciding whether the ground energy density of $O(1)$-local Hamiltonians is $\leq a$ or  $\geq a+ n^{-O(\loglog n)}$, for some given number $a$, is $\QMA$-hard.
\end{theorem}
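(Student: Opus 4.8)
The plan is to re-run the argument of \Cref{thm:log-local-qma} essentially verbatim, but to eliminate the one source of $O(\log n)$-locality in that construction at the cost of an $O(\loglog n)$ multiplicative blow-up in circuit depth. Recall that in \Cref{thm:log-local-qma} the only reason the parent Hamiltonian had super-constant-locality terms was that two families of gadgets were performed ``in one go'' by operations acting on an entire $m$-qubit Leverrier--Z\'emor code block, with $m=O(\log n)$: (i) the preparation of the encoded ancillas $\ket{0}_L,\ket{+}_L,\mathrm{CZ}_L\ket{++}_L,\mathrm{CCZ}_L\ket{+++}_L$ (gadget~1), and (ii) the combined ``decode-and-correct'' gates $U_\mathrm{teleport}$ used in the logical gate teleportations (gadget~3); the output check $H_\mathrm{out}$ was likewise $O(\log n)$-local. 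Everything else in $W_\FT$ --- transversal CNOTs, the constant-depth error-correction gadget of \Cref{thm:LZcode}, and the SWAP-test skeleton of \Cref{claim:rosgen} --- already uses only $O(1)$-local gates. So it suffices to re-implement the two gadget families and the output check by circuits of $O(1)$-local gates of depth $O(\log m)=O(\loglog n)$, which multiplies $D_\FT=O(\log n)$ by $O(\loglog n)$.

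\noindent\textbf{Unfolding the $O(\log n)$-local pieces.} For (ii): since $U_\mathrm{teleport}=\sum_x \ket{x}\bra{x}\otimes f(x)$ where $f$ is the logical decoding map, and by \Cref{def:parallel-decoder}--\Cref{thm:LZcode} the parallel small-set-flip decoder computes $f$ in small depth using $2$-bounded gates, we simply replace $U_\mathrm{teleport}$ by the circuit that coherently computes $f(x)$ into fresh ancillas, applies the corresponding transversal Pauli (and, for $\mathrm{CCZ}$, the adaptive Clifford) correction, and uncomputes the ancillas; fault tolerance is preserved exactly as in \Cref{thm:log-local-qma} because $f$ is insensitive to $o(m)$-weight input errors and a transversal correction cannot spread them. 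For (i): $\ket{0}_L$ and $\ket{+}_L$ are prepared by the standard ``start from a trivial stabilizer state, coherently extract all $X$- (resp.\ $Z$-) check syndromes, decode in $O(\log m)$ iterations to obtain exact correction ($\alpha=0$ by \Cref{thm:LZcode}), apply the transversal correction, uncompute the syndrome ancillas'' routine, which is constant-locality and depth $O(\log m)$; the two magic states are obtained by injecting physical $\mathrm{CZ}\ket{++}$ and $\mathrm{CCZ}\ket{+++}$ states prepared on $O(1)$ physical qubits through a short encoded teleportation, followed by a round of the exact-correction decoder to drive the injected error (of weight $O(1)$, well below the linear correction radius) to zero. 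Finally, append to $W_\FT$ a constant-locality depth-$O(\log m)$ decoding circuit on each measured code block so that $H_\mathrm{out}=\sum_j \ket{0}\bra{0}^\mathrm{out}_j$ becomes $1$-local; because the backward lightcone of each output bit lies in a single $O(\log n)$-qubit block, a Chernoff bound shows each such term is violated with probability only $1/\poly(n)$ in the completeness case, which is precisely what \Cref{thm:qma} requires.

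\noindent\textbf{Putting it together.} The resulting $W_\FT$ is an $O(1)$-local circuit on $n_\FT=O(n\log n)$ qubits of depth $D_\FT=O(\log n)\cdot O(\loglog n)=O(\log n\,\loglog n)$ with logical error rate $\delta_L=1/\poly(n)$ for any polynomial of our choosing (by taking the block size $m=O(\log n)$ with a large enough constant and union-bounding over $W_\FT$). Fixing the injectivity parameter to a sufficiently small constant $\delta=\Theta(1)$ everywhere, \Cref{thm:spectral-gap} (in its varying-locality form, with all $k_\ell=O(1)$) gives $\gamma=\tfrac{1}{\poly(n_\FT D_\FT)}\prod_{\ell=0}^{D_\FT}\delta^{8k_\ell}=2^{-O(\log n\,\loglog n)}=n^{-O(\loglog n)}$ as a spectral-gap lower bound for $H_\mathrm{parent}$, and then \Cref{thm:qma} (with $p=O(n)$ output terms, $c=1-2^{-\poly(n)}$, $s=1-n^{-\alpha}$ from \Cref{claim:rosgen}, so that $\max\{1-c,\delta_L\}<\tfrac{1-s}{16p^2}$ after a mild choice of $m$) certifies $\QMA$-hardness of deciding the ground energy of $H_\mathrm{total}=H_\mathrm{parent}+\gamma H_\mathrm{out}$ to precision $\tfrac{\gamma(1-s)}{8p}=n^{-O(\loglog n)}$, i.e.\ of $O(1)$-$\LH$ with promise gap $n^{-O(\loglog n)}$.

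\noindent\textbf{Main obstacle.} The delicate step is the fault-tolerant preparation of the encoded $\mathrm{CCZ}$ magic state by a constant-locality circuit of depth only $O(\loglog n)$: magic-state preparation is normally performed by distillation, which is not obviously low-depth, so one must argue carefully that injection of an $O(1)$-qubit physical magic state followed by a single round of the exact-correction parallel decoder of \Cref{thm:LZcode} genuinely yields an encoded ancilla whose residual error (including faults inside the injection and decoding sub-circuits) is $o(m)$-weight with probability $1-1/\poly(n)$ --- here the linear distance of the Leverrier--Z\'emor code and the $\alpha=0$ regime of its $O(\log m)$-iteration decoder are doing the work. A secondary bookkeeping point is to track the error budget through the now-deeper gadgets so that per-block failure probabilities remain $1/\poly(n)$ and the union bound over $W_\FT$ still closes, and to confirm that the output-decoding terms in $H_\mathrm{out}$ are satisfied with high enough probability in the completeness case to meet the hypothesis of \Cref{thm:qma}.
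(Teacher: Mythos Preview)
Your approach is genuinely different from the paper's, and the difference matters. The paper does \emph{not} attempt to make gadgets~1 and~3 fault-tolerant at constant noise rate; instead it exploits the freedom to \emph{vary the injectivity parameter} across the PEPS. In the $O(\loglog n)$ layers that implement the ancilla preparations, the teleportation decode-and-correct steps, and the final output decoding, the paper sets $\delta'=1/\poly(n)$, so those layers are effectively noiseless and the bare (non-fault-tolerant) circuits suffice. Everywhere else $\delta=\Theta(1)$. The spectral-gap formula in \Cref{thm:spectral-gap} then gives $\gamma=\frac{1}{\poly(n)}\prod_\ell \delta_\ell^{O(1)}$, and the $O(\loglog n)$ low-injectivity layers contribute the factor $n^{-O(\loglog n)}$. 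Your route keeps $\delta=\Theta(1)$ everywhere and instead tries to pay with extra depth, getting the same exponent from $\delta^{O(D_\FT)}$ with $D_\FT=O(\log n\,\loglog n)$.

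The gap in your plan is the magic-state preparation, precisely the point you flagged as the obstacle. With constant injectivity, every wire --- including the $O(1)$ physical qubits carrying $\mathrm{CCZ}\ket{+++}$ --- suffers depolarizing noise at rate $\Theta(\delta^2)$ before any gate acts. A single-qubit Pauli error on that unencoded resource state is \emph{not} an ``injected error of weight $O(1)$'' that the decoder can remove: once you teleport or encode the physical state into the code block, such an error becomes a \emph{logical} error on the encoded ancilla, indistinguishable from a valid (but wrong) codeword. The exact-correction decoder of \Cref{thm:LZcode} corrects low-weight \emph{physical} errors relative to the nearest codeword; it cannot repair a state that already sits at the wrong codeword. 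Consequently each magic-state gadget fails logically with probability $\Theta(\delta^2)$, and across the $\Theta(T)$ gate teleportations in $W$ the logical error rate $\delta_L$ is $\Theta(1)$, so the hypothesis $\delta_L<\frac{1-s}{16p^2}$ of \Cref{thm:qma} is not met. (Repairing this within your framework would require a genuine distillation stage, which you do not include.) The paper's variable-injectivity trick sidesteps the problem entirely: in the preparation layers the noise rate is $1/\poly(n)$, so the physical resource states are clean with overwhelming probability and no distillation is needed.
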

\begin{proof}
    We use essentially the construction in the proof of~\Cref{thm:log-local-qma}. The difference is that to implement gadgets 1 and 3, we make use of the ability to vary the injectivity in the parent Hamiltonian construction. This allows us to use $O(1)$-local gates and thus $O(1)$-local parent Hamiltonian.
    In particular, in the state preparation and CZ/CCZ/Hadamard gate teleportation gadgets, we simply use the `bare' non-fault-tolerant circuit to implement them and set the injectivity within these gadgets to be $\delta'=1/\poly(n)$, making them effectively noiseless. Importantly, in the absence of noise, these gadgets can be done in depth logarithmic in the code block size, i.e., $O(\log m)=O(\loglog(n))$.
    For example, logical Z measurement and logical measurement outcome decoding in the CZ and CCZ teleportation circuits can be noiselessly done by `copying' the state in Z basis using CNOTs to workspace ancillas initialized to $0$'s, perform a classical computation that decodes the logical information in $O(\log m)$ depth, and apply the Pauli/Clifford correction circuit controlled (gate-by-gate) on the decoded logical value\footnote{More specficially, after `copying' the Z basis CSS codewords onto workspace ancillas, we first use the $k=O(\log m)$-iteration decoder in \Cref{thm:LZcode} to remove all errors present in the input state $\ket{\psi}$ and the ancilla logical patch. Now, the workspace ancillas are in a superposition of CSS codeword bit strings \Cref{eq:CSScodeword}. Performing the modulo 2 dot product between such a bit string and the bitstring representing logical Pauli Z then yields the logical value of the qubit. The dot product can be done in $O(\log m)$ depth. The decoded logical value is then `copied' in $O(\log m)$ depth into $\poly(m)$ bits that is used to controlledly perform the logical correction circuits $CZ(X^a X^b)CZ$ and $CCZ(X^aX^bX^c)CCZ$. A similar procedure can be done for the logical X measurement in Hadamard gate teleportation since we are using CSS codes.}. The logical ancillas in the state injection step can also be prepared non-fault-tolerantly by $O(\log m)$-depth circuits since Clifford circuits are parallelizable~\cite{broadbent2009parallelizing} to log-depth. The parameter injectivity in other locations of the circuit $W_\FT$ remains constant $\delta=\Theta(1)$. 

        Finally, also using injectivity $\delta'=1/\poly(n)$ we can decode (using the `bare' circuits) the SWAP test measurements and the $\QMA$ decision measurement and write each decoded outcome onto a physical qubit. To verify the measurement outcomes, we use the 1-local Hamiltonian $H_\mathrm{out}= \sum_{\text{measurement } j} \ket{0}\bra{0}_j$. By choosing the code block size $m=O(\log n)$ to be sufficiently large and the injectivity $\delta'=1/\poly(n)$ to be sufficiently small in the gate teleportation circuit and final logical measurement decoding (and $\delta=\Theta(1)$ at everywhere else), we can guarantee $W_\FT$ has logical error rate at most $1/n^{\beta}$ for any constant $\beta$ of choice. Next we apply \Cref{thm:spectral-gap} and find that the spectral gap of $H_\mathrm{parent}$ is at least $\gamma = \frac{1}{\poly(n_\FT D_\FT)} \prod_{\ell=0}^{D_\FT} \delta^{O(1)}$, where $\delta_\ell$ is the injectivity at layer $\ell$ of $W_\FT$. But $W_\FT$ has depth $D_\FT= O(\log n)$ and only $O(\log m)=O(\loglog n)$ layers of it use injectivity $\delta'=1/\poly(n)$, so $\gamma = n^{-O(\loglog n)}$. The total Hamiltonian $H_\mathrm{total} = H_\mathrm{parent}+ \gamma H_\mathrm{out}$, which is $O(1)$-local, can then be seen to have a $\QMA$ promise gap of $n^{-O(\loglog n)}$ by~\Cref{thm:qma}.
\end{proof}

\section{Computational complexity of injective PEPS}\label{sec:bqp-hardness}
We now discuss a hardness result on the creation and contraction of certain tensor network states that follows from our construction.
A summary of our results can be found in \Cref{tab:PEPScomplexity}.

\begin{definition}[PEPS]
    A \emph{projected entangled pair state (PEPS)} is any (unnormalized) state that can be obtained by the following procedure:
    consider a graph and associate to each vertex $v$ as many $D$-dimensional spins as there are edges incident to $v$.
    Assume that the spins associated to the end points of an edge form maximally entangled states $\ket{\mathrm{EPR}_D}=\sum_{i=1}^D \ket{i}\ket{i}$.
    The PEPS is obtained by applying a linear map $P_v : \mathbb{C}^D \otimes \dotsb \otimes \mathbb{C}^D \to \mathbb{C}^d$ at each vertex~$v$. Without affecting the computational complexity, we further allow the virtual states to be any maximally entangled states of the form $(\id \otimes U)\ket{\mathrm{EPR}_D}$. We can also assume $\|P_v\| \leq 1$. 
\end{definition}

In \cite{schuch2007computational} it was shown that preparing PEPS as a quantum state is $\PostBQP$-hard, where $\PostBQP$ is a large complexity class that contains $\QMA$. The idea of the proof is that measurement-based quantum computation with the power to post-select on the measurement outcomes reduces to preparing a PEPS.
The power to post-select on the outcomes of a quantum computation is due to the fact that the local maps $P_v$ are allowed to be non-invertible.
Hence, it is natural to ask what happens when we reduce the power of preparing arbitrary PEPS by removing the ability to post-select.
We do this by considering a subclass of tensor networks called \emph{injective PEPS}~\cite{schuch2010peps}.

\begin{definition}[Injective PEPS] A PEPS on $n$ spins is \emph{$\delta(n)$-injective} if the local maps $P_v$ are non-singular matrices with  singular values bounded from below by $\Omega(\delta(n))$.
\end{definition}

Our construction gives the following hardness result on the preparation of injective PEPS.
\begin{theorem}\label{thm:PEPS_BQP} Preparing constant-injective PEPS states in two or higher dimensions with bond dimension $D \geq 4$ and physical dimension $d \geq 4$ allows solving $\BQP$-hard problems.
\end{theorem}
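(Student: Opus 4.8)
The plan is to combine the circuit-to-Hamiltonian construction of \Cref{sec:model} with the quantum fault-tolerance theorem (\Cref{thm:quantumFT}) and \Cref{claim:iidnoise}. Starting from an arbitrary $\BQP$ language, let $C$ be a polynomial-size quantum circuit deciding it with bounded error on input $\ket{0}^{\otimes n}$. First I would replace $C$ by its fault-tolerant version $\tilde{C}$ from \Cref{thm:quantumFT}: for any constant depolarizing rate $\eta$ below the threshold $\eta_c$, $\tilde{C}$ has polynomially many qubits/gates and produces an encoded output that is (say) $\tfrac{1}{10}$-close to that of $C$ even under rate-$\eta$ local depolarizing noise. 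I would then fix the injectivity parameter $\delta=\Theta(1)$ small enough that the per-wire depolarizing rate $p=\delta^2/(1+3\delta^2)$ of \Cref{claim:iidnoise} satisfies $p<\eta_c$ (this is a \emph{constant}, independent of $n$, which is what makes the PEPS constant-injective).

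Next I would apply the circuit-to-tensor-network map of \Cref{sec:circ2TN,sec:injectiveTN} to $\tilde{C}$ with this constant $\delta$, obtaining the injective PEPS $\ket{\Psi_{\tilde C,\,0^n}}$. By \Cref{claim:iidnoise}, up to normalization this state is $\sum_{\vec P}\delta^{|\vec P|}\ket{\Phi_{\vec P}}\otimes(U_T P_T\cdots U_1 P_1\ket{0}^{\otimes n})$, i.e.\ tracing out the bulk EPR registers leaves the output column of $\ket{\Psi_{\tilde C,\,0^n}}$ in exactly the state one gets by running $\tilde C$ under i.i.d.\ depolarizing noise of rate $p<\eta_c$. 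Hence, by the guarantee of \Cref{thm:quantumFT}, the reduced state on the (logical) output qubits of $\tilde C$ is $\tfrac{1}{10}$-close to the noiseless output of $C$. So given the ability to prepare $\ket{\Psi_{\tilde C,\,0^n}}$, one measures the output column qubits, classically decodes the quantum error-correcting code to recover the logical decision bit, and thereby decides the $\BQP$ language with bounded error — establishing $\BQP$-hardness of preparing this family of constant-injective PEPS. A brickwork circuit lives naturally on a 2D grid, giving bond dimension $D=4$ and physical dimension $d=4$ (a single qubit per site together with the bond legs); for higher dimensions one pads with trivial legs, and \Cref{thm:quantumFTdDim} handles the geometric-locality constraint if one wants a genuinely $2$D fault-tolerant circuit.

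The main obstacle, and the only place requiring care, is matching the noise models: \Cref{claim:iidnoise} produces \emph{purified} i.i.d.\ depolarizing noise with a \emph{specific} rate $p$ tied to $\delta$, whereas \Cref{thm:quantumFT} is stated for local depolarizing noise at \emph{any} rate below $\eta_c$ — so one must check (i) that the purified channel traced down to the output is literally the depolarizing channel (it is, since $\ketbra{\Phi_I}+\delta\sum_{P}\ketbra{\Phi_P}$ applied to a maximally entangled pair and teleported yields a depolarizing mixture on the wire), and (ii) that $\delta$ can be taken constant while keeping $p<\eta_c$, which is immediate. A secondary point is that one is free to choose $\delta$ (hence the injectivity) uniformly across the network, so the resulting PEPS is $\Omega(1)$-injective as claimed, and that the output decoding is a purely classical polynomial-time post-processing step, so no extra quantum power is smuggled in. I would also remark that the same argument with $\delta=1/\poly(n)$ recovers the (weaker) statement without invoking fault tolerance, but at the cost of only polynomially-small injectivity, which is why the constant-injective version genuinely needs \Cref{thm:quantumFT}.
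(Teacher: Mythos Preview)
Your proposal is correct and follows essentially the same approach as the paper: use \Cref{claim:iidnoise} to identify the output column of the constant-$\delta$ injective PEPS with the output of $\tilde C$ under i.i.d.\ depolarizing noise of rate $\delta^2/(1+3\delta^2)$, choose $\delta$ a small enough constant to be below threshold, and invoke \Cref{thm:quantumFT} (with \Cref{thm:quantumFTdDim} for the 2D case). Your write-up is in fact more detailed than the paper's own proof---the explicit discussion of decoding the encoded output and the remark on the $\delta=1/\poly(n)$ alternative are not spelled out there---but the argument is the same.
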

\begin{proof}
\Cref{claim:gs} tells us that we can encode a noisy quantum computation suffering from i.i.d.\ depolarizing noise with constant error probability $\delta^2/(1+3\delta^2)$ into a $\delta$-injective PEPS state.
Choosing $\delta$ to be a sufficiently small constant, the quantum fault-tolerance threshold theorem (\Cref{thm:quantumFT}) states that we can $\varepsilon$-approximate any noise-free circuit~$C$ with a noisy circuit~$\tilde{C}$ with polylogarithmic overhead in~$1/\varepsilon$. 
Note that the threshold theorem holds even when the circuit connectivity is restricted to one dimension (\Cref{thm:quantumFTdDim}), so the computational hardness persists on two-dimensional injective PEPS.
\end{proof}
The above $(\mathsf{state})\BQP$-hardness result can be understood as a complement to previous works in efficient quantum algorithms for preparing injective PEPS under assumptions on the parent Hamiltonian spectral gap~\cite{schwarz2012preparing, ge2016rapid}. 
We leave it as an open question to obtain tight upper bound on the complexity of preparing injective PEPS states (an upper bound is $(\mathsf{state})\PostBQP$ due to~\cite{schuch2007computational}).

We next discuss the classical complexity of injective PEPS. PEPS is conceived as an efficient classical description of quantum states and an important application is contracting a PEPS in order to evaluate the value of a given observable.

\begin{task}[PEPS observable contraction]
\label{task}
Given a PEPS describing an unnormalized state $\ket{\psi}$ and a local observable $O$,
calculate the normalized expectation value $\frac{\bra{\psi} O \ket{\psi}}{\braket{\psi}}$.
\end{task}

Ref.~\cite{schwarz2017approximating} gave a quasi-polynomial time classical algorithn to contract injective PEPS under assumptions on the parent Hamiltonian spectral gap. The complexity of injective PEPS has also been implicitly studied in \cite{haferkamp2020contracting}, where the authors showed that random PEPS, whose local maps are i.i.d.\ Gaussian are $\SharpP$-hard to contract to exponential additive precision. 
However, the random PEPS ensemble of \cite{haferkamp2020contracting} has injectivity $1/\Omega(\poly (n))$ with high probability, and their result does not necessarily indicate that a $\SharpP$-hard PEPS instance would have constant injectivity. Similarly, the $\SharpP$-hard PEPS instances in~\cite{schuch2007computational} can be seen to be non-injective, even after blocking\footnote{This is because their local maps have the form $\ket{0}\bra{0}$, which will remain being rank-1 after blocking.}~\cite{schuch2010peps}. Using the construction in this work, we obtain the following hardness results for contracting PEPS with \textit{constant} injectivity.

\begin{theorem} 
\label{thm:contraction-bqp}
    For constant-injective PEPS states in two or higher dimensions with bond dimension $D \geq 4$ and physical dimension $d \geq 4$, evaluating local observable expectation values to $O(1)$ additive error is $\BQP$-hard. 
\end{theorem}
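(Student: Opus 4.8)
The plan is to reduce from the $\BQP$-hardness of preparing constant-injective PEPS (Theorem~\ref{thm:PEPS_BQP}) by observing that a single-qubit observable expectation value on the output column of our construction already decides a $\BQP$ language with an $\Omega(1)$ gap. Concretely, take an arbitrary promise problem in $\BQP$ and let $C$ be a bounded-error circuit deciding it with acceptance probability $\geq 2/3$ on yes-instances and $\leq 1/3$ on no-instances; by standard amplification we may assume the gap is, say, $1 - 2^{-n}$ versus $2^{-n}$. Apply the fault-tolerance theorem (\Cref{thm:quantumFT}, or its geometrically-local version \Cref{thm:quantumFTdDim}) to obtain a noise-resilient circuit $\tilde C$ of $\poly$ size and depth such that, under i.i.d.\ depolarizing noise of rate $\eta$ below threshold, the decoded output of $\tilde C$ is $\varepsilon$-close to that of $C$ for a small constant $\varepsilon$. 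Choose the PEPS injectivity parameter $\delta$ to be a sufficiently small constant so that the induced noise rate $p = \delta^2/(1+3\delta^2)$ is below the fault-tolerance threshold. By \Cref{claim:gs}, the injective PEPS $\ket{\Psi_{\tilde C,\,0^n}}$ encodes exactly the noisy execution of $\tilde C$ on its output column, with i.i.d.\ depolarizing noise of rate $p$ per wire.

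The key step is then to identify the observable. Place the decoded $\BQP$ answer qubit (we can append a constant-depth logical decoding at the end of $\tilde C$, which only slightly increases the constant depth overhead) at a fixed location in the output column, and let $O = \ket{1}\bra{1}$ on that qubit. Then $\bra{\Psi_{\tilde C,0^n}} O \ket{\Psi_{\tilde C,0^n}} / \braket{\Psi_{\tilde C,0^n}}$ equals the acceptance probability of the noisy (hence, by fault tolerance, essentially the noiseless) computation, which is $\geq 1 - \varepsilon - 2^{-n}$ on yes-instances and $\leq \varepsilon + 2^{-n}$ on no-instances. Choosing $\varepsilon < 1/4$ gives an $\Omega(1)$ additive gap, so any algorithm computing \Cref{task} to additive error, say, $1/10$ decides the original $\BQP$ problem. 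Finally one checks the parameter bookkeeping: the construction uses Choi states with bond dimension $D = 4$ (two EPR pairs per circuit location, matching \Cref{fig:brickTN}) and physical dimension $d = 4$ after appropriate blocking of qubit pairs, and the geometry is a two-dimensional grid, matching the theorem's hypotheses $D \geq 4$, $d \geq 4$, dimension $\geq 2$; the injectivity is the constant $\delta$ chosen above.

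The main obstacle—though it is really a matter of care rather than a genuine difficulty—is ensuring that reading out a \emph{local} observable genuinely suffices, i.e.\ that the $\BQP$ answer can be made to live on $O(1)$ qubits of the output column rather than requiring a global decoding or majority vote over many output qubits. This is handled exactly as in the discussion preceding \Cref{thm:qma}: append a constant-depth (or, if necessary, $O(\log)$-depth, which is still fine here since we only need hardness, not a promise-gap lower bound) logical decoding circuit to $\tilde C$ so that the logical answer qubit is decoded into a single physical qubit before the PEPS encoding is applied. One should also double-check that the normalization $\braket{\Psi_{\tilde C,0^n}}$ is not exponentially small in a way that would spoil additive-error estimation: it is not, because from \Cref{claim:iidnoise} the squared norm is $(1+3\delta^2)^{T}$ times the norm of a normalized state, and dividing by it is precisely what \Cref{task} asks for, so the normalized expectation value is exactly the (noisy) acceptance probability and lies in $[0,1]$. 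The remaining steps—amplification of the $\BQP$ circuit, invoking \Cref{thm:quantumFTdDim} for the 2D-local case, and the constant-counting for $D,d$—are routine.
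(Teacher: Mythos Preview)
Your proposal is correct and follows essentially the same route as the paper: encode a fault-tolerant $\BQP$ circuit into the injective PEPS, append a decoding step so the answer lands on a single physical qubit of the output column, and read it off with a one-local observable. The paper is slightly more specific about the decoding step, pointing to the fault-tolerant decoding procedure for concatenated codes (Knill--Laflamme) that brings the encoded output back to a physical qubit with only a bounded constant error rate; your ``append a constant-depth (or $O(\log)$-depth) logical decoding'' is the same idea, and your caveat about this being the main point of care matches exactly where the paper spends its one nontrivial sentence.
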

\begin{proof}
    The $\BQP$-hardness, similar to \Cref{thm:PEPS_BQP}, follows from encoding a noisy $\BQP$ computation  $C$ into our injective PEPS $\ket{\Psi}$ and invoking the threshold theorem. The difference is that at the end of the fault-tolerant circuit $\tilde{C}$ that simulates the noiseless circuit $C$ in~\Cref{thm:quantumFT}, we further perform a fault-tolerant decoding circuit that transforms the encoded output into a physical output state. For concatenated-code fault tolerance, this procedure is described in Section 4 of \cite{knill1996concatenated} (also see Section 6 of~\cite{gottesman2014overhead}). This decoding circuit results in a physical error rate per physical qubit of the output state which is bounded by some constant value. We encode the entire fault-tolerant circuit, including the fault-tolerant decoding part, into our injective PEPS with noise rate below the threshold. Then evaluating to $O(1)$-additive error the Pauli-$Z$ expectation on the first qubit of the PEPS output column decides the $\BQP$ computation $C$.
\end{proof}

\begin{table}[h]
    \centering
    \bgroup
    \def\arraystretch{1.5}%
    \begin{tabular}{|c|c|c|}
         \hline 
         \textbf{Task} & \textbf{PEPS} & \textbf{Injective PEPS}\\
         \hline
         State preparation & $\PostBQP$-complete & $\BQP$-hard \\
         \hline 
         Multiplicative-error contraction & $\SharpP$-complete & $\SharpP$-complete$^*$ \\
         \hline
         Additive-error contraction & $\BQP$-hard & $\BQP$-hard \\
         \hline
    \end{tabular}
    \egroup
    \caption{Computational complexity of general PEPS~\cite{schuch2007computational} and constant-injective PEPS. $^*$The $\SharpP$-hardness of injective PEPS requires a specific non-local observable in~\Cref{thm:sharpP}.}
    \label{tab:PEPScomplexity}
\end{table}

If we instead consider $O(1)$-multiplicative error expectation value evaluation of PEPO non-local observables in \Cref{task}, then we obtain a classical hardness matching that of general non-injective PEPS~\cite{schuch2007computational}. Is is a simple observation that \textit{exact} observable evaluation of \Cref{task} for general PEPS and PEPO observables\footnote{\emph{Projected entangled pair operators (PEPO)} are an efficiently describable family of operators, which can be thought of as the operator version of PEPS, i.e., local maps are $P_v : \mathbb{C}^D \otimes \dotsb \otimes \mathbb{C}^D \to \mathbb{C}^{d \times d}$.} is in $\SharpP$. For this, we reduce this task to norm evaluation of PEPS, which was shown to be in $\SharpP$ in~\cite{schuch2007computational}. Observe that $\bra{\psi}O \ket{\psi}= (\bra{\psi}(O+ \id)(O + \id) \ket{\psi} - \bra{\psi}O O \ket{\psi} - \bra{\psi} \ket{\psi})/2$. Each of $(O + \id) \ket{\psi}$, $O \ket{\psi}$, and $\ket{\psi}$ are PEPS states since $O$ is a PEPO. Thus, evaluating $\bra{\psi}O \ket{\psi}/\bra{\psi}\ket{\psi}$ is in $\SharpP$.

In order for our construction to go through for multiplicative errors, we have to constrain the type of observable to be describable as a \emph{tree tensor network (TTN)} which is a PEPO defined on a tree graph~\cite{ran2020tensor}.

\begin{theorem}
\label{thm:sharpP}
   For constant-injective PEPS states in two or higher dimensions with bond dimension $D \geq 4$ and physical dimension $d \geq 4$, evaluating the expectation value of a tree tensor network observable to $O(1)$-multiplicative error is $\SharpP$-hard. 
\end{theorem}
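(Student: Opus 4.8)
The plan is to reduce from the problem of approximating, to within a constant multiplicative factor, the acceptance probability $p_C := \bra{0^n} C^{\dagger}\bigl(\ket{1}\!\bra{1}^{\mathrm{out}}\bigr) C \ket{0^n}$ of a polynomial-size quantum circuit $C$ on $n$ qubits; that a constant-factor multiplicative estimate of a quantum circuit's output probability is already $\SharpP$-hard is standard (one pads $C$ so that $p_C$ equals, up to a known power of two, the value of a $\SharpP$ function that is bounded away from $0$, and recovers its bits by invoking the estimator on polynomially many padded instances). Applying the circuit-to-tensor-network map of \Cref{sec:model} to $C$ with fixed input $\ket{\xi}=\ket{0^n}$ and a \emph{constant} injectivity parameter $\delta=\Theta(1)$, \Cref{claim:gs} yields a two-dimensional, $\delta$-injective PEPS with bond dimension $4$ and physical dimension $4$ whose unnormalized expansion is
\[
    \ket{\Psi}\;\propto\;\sum_{\vec P\in\mathcal P^{\otimes T}}\delta^{|\vec P|}\,\ket{\Phi_{\vec P}}\otimes\bigl(U_{T}P_{T}\cdots U_{1}P_{1}\ket{0^n}\bigr),
\]
where $T=\poly(n)$ bulk Bell pairs $\ket{\Phi_{\vec P}}$ record the Pauli faults. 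Because the bulk Bell states are orthonormal and teleportation succeeds with an amplitude independent of the fault pattern $\vec P$, the norm $\braket{\Psi}{\Psi}$ is exactly and efficiently computable, equal to an explicit $\delta$-dependent constant raised to the power $T$.

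Next I would take the non-local observable
\[
    O\;=\;\Bigl(\bigotimes_{\text{bulk Bell pair}}\ket{\Phi_I}\!\bra{\Phi_I}\Bigr)\otimes\ket{1}\!\bra{1}^{\mathrm{out}},
\]
a tensor product of local rank-one projectors with $\|O\|\le 1$, which is trivially a tree tensor network of bond dimension $O(1)$. Since $\braket{\Phi_{Q_i}}{\Phi_I}\,\braket{\Phi_I}{\Phi_{P_i}}=\mathds{1}[P_i=Q_i=I]$ for each bulk pair, inserting $O$ between $\bra{\Psi}$ and $\ket{\Psi}$ annihilates every term except $\vec P=\vec Q=\vec I$ --- it ``post-selects away'' all tensor-network noise --- leaving exactly the \emph{noiseless} amplitude $p_C$ up to the (uniform) teleportation factor, which cancels in the ratio:
\[
    \frac{\bra{\Psi}O\ket{\Psi}}{\braket{\Psi}{\Psi}}\;=\;\frac{p_C}{(1+3\delta^2)^{T}}.
\]
Since $(1+3\delta^2)^{T}$ is a known, efficiently computable number, a solution to \Cref{task} for this injective PEPS and this TTN observable, to constant multiplicative error, yields a constant-multiplicative-error estimate of $p_C$, proving $\SharpP$-hardness. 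The statement for spatial dimension $\ge 3$ follows by embedding the two-dimensional PEPS into a higher-dimensional lattice with trivial tensors on the extra bonds, and the regime bond/physical dimension $\ge 4$ by padding the local Hilbert spaces.

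Two remarks indicate why a genuinely non-local observable is indispensable and where the real work lies. In $\ket{\Psi}$ the ``noiseless signal'' $p_C$ is only an $\approx(1+3\delta^2)^{-T}$ fraction of the total weight, so a \emph{local} observable reveals only the $\Theta(1)$-depolarized computation, which is merely $\BQP$-hard (\Cref{thm:contraction-bqp}); the product projector onto $\ket{\Phi_I}$ over all bulk Bell pairs is precisely the operation that removes the tensor-network noise and exposes the $\SharpP$-hard amplitude --- here observable non-locality plays the role that the non-invertible \emph{local maps} play in the $\PostBQP$-hardness of general (non-injective) PEPS of~\cite{schuch2007computational}. Consequently the main obstacle is not the tensor-network side but importing the correct hardness input: establishing that a \emph{constant} multiplicative approximation of a quantum circuit's output probability --- as opposed to an exact value, or merely deciding positivity --- is $\SharpP$-hard, which is exactly the bit-extraction/amplification step flagged above. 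A minor point to verify is that under the brickwork qubit-to-site grouping the Bell-pair projectors making up $O$ form a bounded-degree forest on the physical lattice, so that $O$ is honestly expressible as a tree tensor network rather than a general PEPO.
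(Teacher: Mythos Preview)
Your argument is correct and takes a genuinely different route from the paper's. The paper encodes into the injective PEPS a \emph{fault-tolerant} version $\tilde C$ of the quantum-sum circuit (concatenated codes, logical error $e^{-\Omega(n)}$), so that the depolarizing noise coming from injectivity is absorbed by fault tolerance; its tree tensor network observable is the nontrivial \emph{recursive majority} over the $\Theta(\log n)$ concatenation levels of one logical block, which reads out the encoded answer and yields an expectation multiplicatively close to $S^{2}$ (additive error $e^{-\Omega(n)}$ becomes multiplicative because $S$ is an integer). You instead encode the bare circuit and let the observable do the post-selection: the product of $\ket{\Phi_I}\!\bra{\Phi_I}$ projectors over all bulk EPR locations kills every noisy branch in \Cref{claim:gs} and leaves exactly $p_C/(1+3\delta^{2})^{T}$ with an efficiently computable denominator. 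Your observable is a tensor product of single-site projectors, i.e.\ a TTN of bond dimension~$1$; taken at face value this even addresses the ``product observable'' case of the open question the paper raises right after this theorem. What the paper's route buys is an observable supported on only one code block and a demonstration that fault tolerance alone suffices, in line with the paper's theme; what your route buys is a much shorter proof with no fault tolerance and a formally stronger conclusion. The one place to firm up is the hardness input: rather than the padding sketch, it is cleanest (as the paper does) to take the quantum-sum circuit so that $|\bra{0^n}C\ket{0^n}|^{2}\propto S^{2}$ with $S\in\mathbb{Z}$, append a reversible $\overline{\mathrm{OR}}$ to turn this into a single output-qubit probability, and invoke the standard fact that $O(1)$-multiplicative estimation of $S^{2}$ is $\SharpP$-hard under Turing reduction.
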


\begin{proof}
    We encode the ``quantum sum'' problem\footnote{Ref.~\cite{schuch2007computational} instead used the ``classical sum'' $\sum_x f(x)$, whose multiplicative error estimation is significantly easier than $\SharpP$. So strictly speaking, $\SharpP$-hardness of multiplicative error estimation of general PEPS contraction does not follow from Ref.~\cite{schuch2007computational}.}, well-known in the random circuit sampling literature~\cite{hangleiter2023computational}, into a fault-tolerant quantum circuit that exponentially suppresses the local depolarizing noise:
    \begin{equation}
        S = \sum_{x \in \{0,1\}^n} (-1)^{f(x)}, \qquad f: \{0,1\}^n \mapsto \{0,1\}.
    \end{equation}
    It is well known that
    $O(1)$-multiplicative error approximation of $S^2$ remains $\SharpP$-hard (under Turing reduction) and can be converted into $O(1)$-multiplicative error estimation of the amplitude $|\bra{0^n}C\ket{0^n}|^2=S^2$ of a quantum circuit $C$~\cite{hangleiter2023computational}.

    This can be further simplified to a measurement on one qubit as follows. Let $W$ be a reversible circuit that computes the inverse OR function, using additional ancillas initialized to a computational basis state $\ket{0^m}_\mathrm{anc}$, and then writing the result on the $(n+1)$-st qubit.
    Note the identity $(\id\otimes \bra{0^m}_\mathrm{anc})W^{\dagger}\ketbra{0}_{n+1}W(\id\otimes \ket{0^m}_\mathrm{anc}) = \ketbra{0}^n$. This implies - defining $C'=WC$ and abbreviating $\ket{0^n}\otimes \ket{0^m}_\mathrm{anc}$ as $\ket{0^{n+m}}$ - that $|\bra{0} C' \ket{0^{n+m}}|^2=|\bra{0^n}C\ket{0^n}|^2$.
    
    Next, we use a concatenated-code fault-tolerant circuit $\Tilde{C}$ from \Cref{thm:quantumFT} that approximates $C'$ to additive error $\varepsilon=e^{-\Omega(n)}$, which implies that $|\bra{\Bar{0}} \Tilde{C} \ket{\Bar{0}^{n+m}}|^2$ approximates $|\bra{0} C' \ket{0^{n+m}}|^2=S^2$ to $O(1)$-multiplicative error. Unlike in the proof of \Cref{thm:contraction-bqp}, we cannot afford to perform the noisy fault-tolerant decoding circuit, as doing so would no longer guarantee a multiplicative-closeness between $|\bra{\Bar{0}} \Tilde{C} \ket{\Bar{0}^{n+m}}|^2$ and $S^2$. Instead, we directly read out the logical information in the output of the noisy execution of $\tilde{C}$ by performing a recursive majority vote on the first logical qubit register (cf. Lemma 9 in~\cite{aharonov1999fault}). 
    The observable that represents this majority vote is a tree tensor network. 
    For concreteness, suppose $\Tilde{C}$ is constructed from concatenating $L$ levels of the $[[7,1,3]]$-Steane code, then this recursive majority vote means that we first take the majority in each block of size $7$, then we take the majority, of 7 such majority bits, and so on for~$L$ levels, to give one output bit. Here, $L=\Theta(\log n)$ for the desired  $\exp(-\Theta(n))$ error suppression, which means each logical code block consists of $\poly(n)$ physical qubits. This recursive majority vote can readily be encoded in an $L$-level tree tensor network operator $O$ as illustrated in \Cref{fig:recursive-maj}. Thus, evaluating the expectation value of $O$ to multiplicative error on our injective PEPS gives a multiplicative error estimation of $S^2$.
\begin{figure}
    \centering
    \includegraphics[width=0.66\textwidth]{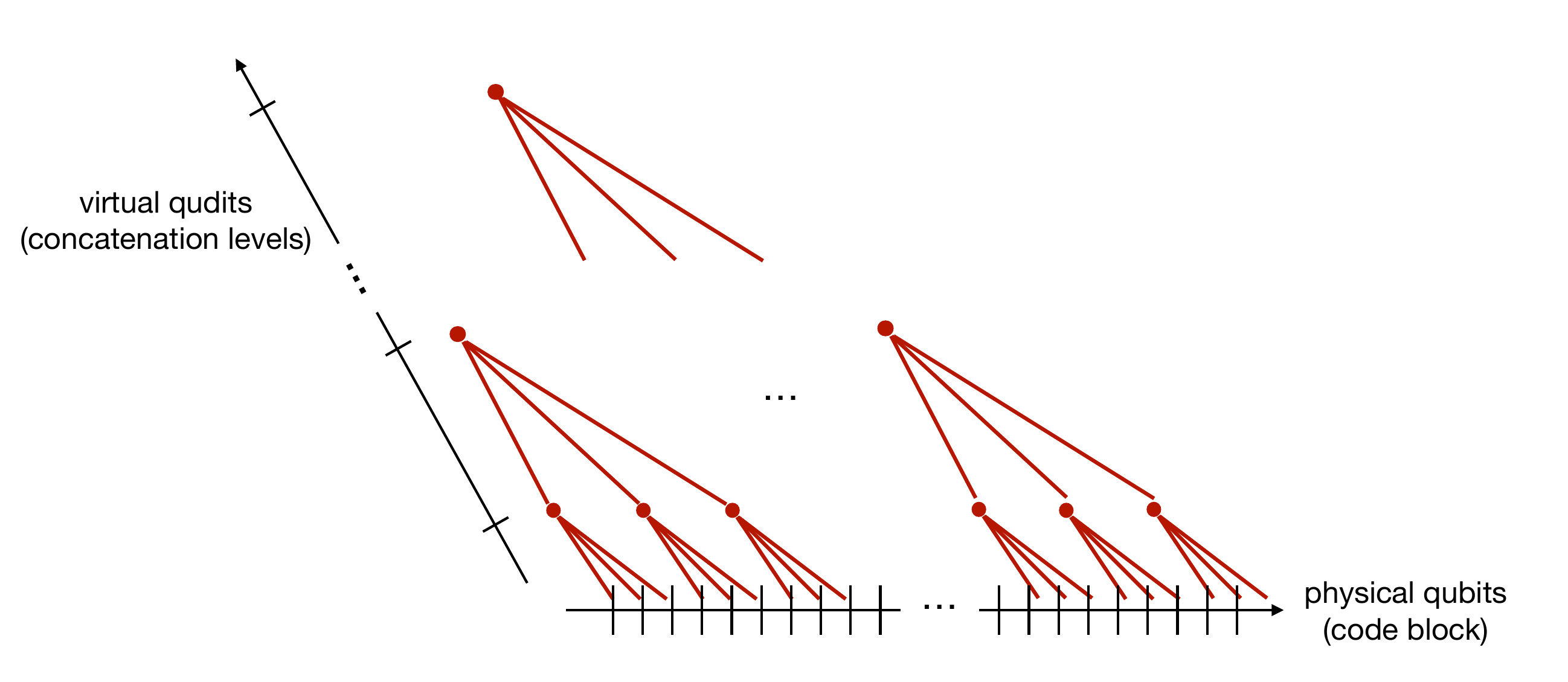}
    \caption{The recursive-majority readout of a codeblock is encoded into a tree tensor network observable.}
    \label{fig:recursive-maj}
\end{figure}
\end{proof}

Since tree tensor networks are themselves easy to contract (similar to an MPO), we conclude from \Cref{thm:sharpP} that the $\SharpP$ hardness must be arising from the injective PEPS itself. Proving the same theorem with local - or product - observables is an interesting open question.

\section{Open questions}\label{sec:open}

This work brings up a series of relevant open questions.

\begin{itemize}
    \item Our main question is if we can achieve a soundness of $1/\poly(D)$. In our proof of \Cref{thm:energy}, two steps are needed - robust teleportation of $H_{in}$ and a small number of Pauli errors in the low energy states. 
    The challenging part is the robust teleportation of $H_\mathrm{in}$, which we do not know how to achieve when the energy is $1/\poly(D)$. It turns out that we can enforce low Pauli errors by adding new Hamiltonian terms with $\poly(D)$ locality.
    Specifically, for each local region $A$ of $\frac{D}{\delta^4}$ EPR locations we can define a Hamiltonian term that penalizes $\geq 10\delta^2\cdot\frac{D}{\delta^4}$ Pauli errors, i.e.,
    \begin{equation*}
     h_{A}^\mathrm{low} = \sum_{\Vec{P} \in \mathcal{P}^{A}: |\Vec{P}| \geq 10\delta^2\cdot\frac{D}{\delta^4}}   \ket{\Phi_{\Vec{P}}}\bra{\Phi_{\Vec{P}}}.
    \end{equation*}
    Since $\delta$ is chosen $\frac{1}{\poly(D)}$, there new Hamiltonian terms are $\poly(D)$ local, which is $\polylog(n)$ when $D=\polylog(n)$.
    In addition, these Hamiltonian terms are unchanged under the unitary $V$ defined in~\Cref{sec:injectiveTN}. Now, lets choose a collection of regions $A_1,\ldots A_m$ (with $m=O(n\delta^4)$ such that each of $nD$ EPRs is involved in $O(1)$ regions) and add the following Hamiltonian to the existing Hamiltonian $H^\mathrm{low}=\frac{1}{m}\sum_ih^\mathrm{low}_{A_i}$. Note that $H^\mathrm{low}$ does not change the ground state energy density too much:
    The ground state achieves energy density of at most $e^{-27 D/\delta^2}$ by Chernoff bound.
    On the other hand, any state with energy density at most $\frac{1}{100D^2}$ has the property that it has at most $O(D\delta^2+\frac{1}{D})$ fraction of Pauli errors with probability $1-1/D$. For this, consider the `fraction of Pauli errors' operator in local regions $\frac{\delta^4}{D}\sum_{j\in A_i}(\id - \Phi_{I_j})$. Note that 
    $\frac{\delta^4}{D}\sum_{j\in A_i}(\id - \Phi_{I_j})\preceq 10\delta^2 (\id-h^\mathrm{low}_{A_i})+h^\mathrm{low}_{A_i}\preceq 10\delta^2 \id +h^\mathrm{low}_{A_i}$. Thus, the
    total fraction of Pauli errors satisfies
    $$\frac{1}{nD}\sum_{j}(\id - \Phi_{I_j})\preceq O(1)\frac{1}{m}\sum_{i=1}^m (\frac{\delta^4}{D}\sum_{j\in A_i}(\id - \Phi_{I_j}))\preceq O(1)\cdot 10\delta^2 \id +\frac{O(1)}{m}\sum_{i=1}^mh^\mathrm{low}_{A_i}.$$
    Thus, the expected fraction of Pauli errors in such low energy states is $O(\delta^2 + 1/D^2)$. The claim follows from Markov's inequality.

    \item In the introduction and~\Cref{append:classicalPCP}, we outline a connection between polylog-PCP and adversarial fault tolerance in the classical setting. It is expected that adversarial fault tolerance may use good classical codes, but we do not see a clear use of local decodability. Could classical polylog-PCP be achieved without strong reliance on local decodability? 
    \item Can the depth of $\BQP$ circuits be reduced to polylogarithmic in the input size? This does not follow from the depth reduction of $\QMA$ due to the presence of witness. Thus, the heart of the question is if the ground state of the tensor network Hamiltonian can be prepared in low depth when witness is absent. One possibility is to run an adiabatic algorithm tuning $\delta$ from $1$ to a smaller value. The spectral gap in this process is likely small -- we can show a spectral gap lower bound of $\Omega(e^{O(D)}/\mathrm{poly}(nD))$ in \Cref{thm:spectral-gap}. But suppose that we go ahead and tune $\delta$ adiabatically for small duration, can we argue that we end up in a low energy state of the parent Hamiltonian (not necessarily the ground state as in standard adiabatic computation)? If that is the case, we would still encode the answer to the computation if we started from a fault-tolerant circuit.

\end{itemize}

\subsection{Acknowledgements}

We thank Fernando Brand\~{a}o, Steve Flammia, Sunny He, Yeongwoo Hwang, Zeph Landau, Spiros Michalakis, Chinmay Nirkhe, Chris Pattison, Mehdi Soleimanifar and Umesh Vazirani for helpful discussions. We especially thank Steve Flammia for pointing us to the reference \cite{bartlett2006simple}, Chinmay Nirkhe for pointing us to \cite{Chen20}, and Sunny He and Chris Pattison for explaining to us the parallel decoder in~\cite{leverrier2023decoding, gu2023single}. AA and QTN acknowledge support through the NSF Award No. 2238836. AA acknowledges support through the NSF award QCIS-FF: Quantum Computing \& Information Science Faculty Fellow at Harvard University (NSF 2013303). QTN acknowledges support through the Harvard Quantum Initiative PhD fellowship. 

\bibliographystyle{unsrt}
\bibliography{bibliography.bib}

\appendix
\section{Adversarial fault tolerance and polylog-weaker classical PCP}
\label{append:classicalPCP}
Here we discuss the connection between adversarial fault tolerance and polylog-weaker classical PCP \cite{gal1995fault}. We take any $\NP$-hard classical Constraint Satisfaction Problem (CSP) $C
_1 = \frac{1}{m}\sum_iC_{1,i}$ on $n$ bits in which each constraint acts on a constant number of bits and each bit participates in a constant number of constraints (for example, a 2D Ising model).

We can verify whether $C_1$ is satisfiable in logarithmic depth via a similar circuit as in~\Cref{sec:shallow_circuits}. Specifically, the verifier (1) adds $m$ ancilla bits ($i$th bit corresponding to the $i$th constraint) initialized to 0, (2) asks prover for the satisfying solution, and for each constraint, (3) flips the corresponding ancilla bit if the corresponding constraint was violated. Step (2) can be done in $O(1)$ depth since the constraints can be divided into $O(1)$ groups such that each group contains only non-intersecting constraints. Once this is done, we can run an $\bar{\text{OR}}$ function in $O(\log n)$ depth on the ancilla bits to accept or reject.

\begin{figure}[h]
    \centering
    \begin{tikzpicture}[scale=0.4]
  \clip (-0.65,-0.5) rectangle (19.65,11.5);
  \foreach \x in {0,8,16,24} {
    \foreach \y in {0,4,8} {
      \draw[ultra thick] (\x+1,\y) rectangle (\x+2,\y+3);
      \draw[thick] (\x-2,\y+0.5) -- (\x+1,\y+0.5); 
      \node at (\x-0.5,\y+0.5){\small \textcolor{red}{x}};
      \draw[thick] (\x-2,\y+2.5) -- (\x+1,\y+2.5); 
      \node at (\x-0.5,\y+2.5){\small \textcolor{red}{x}};
    }
  }
  \foreach \x in {4,12,20} {
    \foreach \y in {-2,2,6,10} {
      \draw[ultra thick] (\x+1,\y) rectangle (\x+2,\y+3);
      \draw[thick] (\x-2,\y+0.5) -- (\x+1,\y+0.5); 
      \node at (\x-0.5,\y+0.5){\small \textcolor{red}{x}};
      \draw[thick] (\x-2,\y+2.5) -- (\x+1,\y+2.5); 
      \node at (\x-0.5,\y+2.5){\small \textcolor{red}{x}};
    }
  }
\end{tikzpicture}
    \caption{
        In Cook-Levin transformation from a classical circuit to a classical CSP, one places a binary variable on each wire (red `x') and enforces a (local) consistency constraint on $4$ variables that are input and output to a gate. Fix an assignment to the variables.
        If the assignment satisfies a local consistency constraint, then we can view it as a correct execution of the gate. On the other hand, if the assignment violates a local consistency constraint, then we can view it as an error in the computation. 
    }
    \label{fig:cooklevin}
\end{figure}
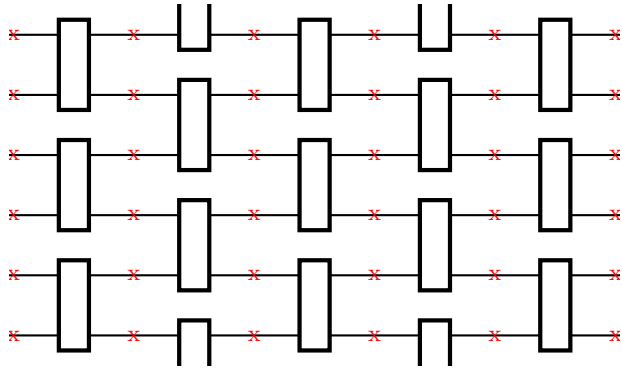

If such $O(\log n)$ depth verification circuit can be transformed into a $\polylog(n)$ depth circuit sound against $\frac{1}{\polylog(n)}$ fraction adversarial errors within the circuits, then applying the Cook-Levin transformation (\Cref{fig:cooklevin}) on this fault-tolerant circuit gives us a CSP $C_2$ in which $\frac{1}{\polylog(n)}$ energy density assignments still encode the accept/reject answer of $C_1$. The crucial observation here is that any violated constraint in $C_2$ can be viewed as an adversarial error on the circuit (\Cref{fig:cooklevin}). Note that we only require adversarial fault tolerance for $\NP$ protocols, which could be different (possibly easier to achieve) from universal adversarially fault-tolerant computation. The fault-tolerant circuit will produce the output encoded in a length $\Omega\left(\frac{n}{\polylog(n)}\right)$ repetition code.\footnote{Given the error budget with adversary, the output cannot be encoded in a code of length smaller than $O\left(\frac{n}{\polylog(n)}\right)$.} We can fault-tolerantly verify the logical output bit by local checks. Let $C_\mathrm{out}$ be the CSP that realizes this check and penalizes the rejecting encoded output in $C_2$.

We claim that the CSP $\frac{1}{2}C_2 +  \frac{1}{2}C_\mathrm{out}$ has a promise gap of $\frac{1}{\polylog(n)}$. In the yes case where $C_1$ is satisfiable, the above fault-tolerant computation accepts for some witness and hence $C_2$ as well as $C_\mathrm{out}$ are satisfiable. In the no case where $C_1$ is unsatisfiable, let $x$ be an assignment to the variables in $C_2$ that have energy at most $\frac{1}{\polylog(n)}$. This assignment encodes the fault-tolerant computation above with $\frac{1}{\polylog(n)}$ fraction of adversarial errors. By assumption on fault-tolerant computation, the logical output bit of this computation should still be an encoding of logical $1$. Hence the penalty from $C_\mathrm{out}$ is $\frac{1}{\polylog(n)}$ due to the distance of the repetition code encoding the logical output bit. Thus, the energy of $\frac{1}{2}C_2 +  \frac{1}{2}C_\mathrm{out}$ is at least $\frac{1}{\polylog(n)}$.

This concludes our claim that adversarial classical fault tolerance with polynomial depth overhead implies a `polylog weaker' version of classical PCP theorem.

\section{QMA-hardness of ground energy estimation up to any inverse-polynomial accuracy}
\label{append:QMApoly}

It can be shown that it is $\QMA$ hard to decide whether the ground energy density is $\leq a$ or $>a+\frac{1}{n^c}$ for any constant $c>0$, by a simple padding trick~\cite{Chen20}. Let $H$ be a $\QMA$-hard local Hamiltonian (family) with energy density promise gap $\frac{1}{n^3}$~\cite{kitaev2002classical}, then consider the Hamiltonian $H' =\frac{1}{m}(H\otimes \id^{\otimes m-1} + \hdots +\id^{\otimes m-1} \otimes H)$ on $N=m n$ qubits. Observe that $H'$ inherits the energy density promise gap $\frac{1}{n^{3}}$ from $H$. By choosing $m=n^{\frac{3}{c} - 1}$ to be a sufficiently large polynomial in $n$, this promise gap is $\frac{1}{N^{c}}$.

We highlight a related aspect of this construction - all inverse-poly energy states of the padded Hamiltonian encode the answer to the quantum computation. For this, given a quantum circuit $C$ (without witness for convenience), lets consider the corresponding Feynman-Kitaev Hamiltonian $H_{FK}$ (without the output term) with spectral gap $\gamma=\frac{\Theta(1)}{n^3}$. The ground state $\ket{\psi}_{FK}$ is the history state and measuring its output qubit $O$ gives the answer to the quantum computation with probability $0.99$ (which can be ensured by padding enough identity gates). Lets take $H' =\frac{1}{m}(H_{FK}\otimes \id^{\otimes m-1} + \hdots +\id^{\otimes m-1} \otimes H_{FK})$ on $N=m n$ qubits, with $m=n^{\frac{3}{c} - 1}$ as before. Any state with energy $\epsilon \leq N^{-c}$ is a superposition over tensor product of ground state (interpreted as no error) and low energy states (interpreted as error). More precisely,

$$\ket{\rho} = \sum_{S\in [m], |S|\leq \epsilon m} \alpha_S \otimes_{i\notin S}\ket{\psi}_{i, FK} \otimes_{i\in S}\ket{\phi_i},$$
where $\ket{\phi_i}$ are orthogonal to $\ket{\psi}_{FK}$. Let $H_{out} = \frac{1}{m}\sum_i \ketbra{1}_{O_i}$ be the Hamiltonian that penalizes the outputs being $1$. We show below that - despite superposition - if C outputs $0$ with probability $0.99$ then $\bra{\rho}H_{out} \ket{\rho} \leq 0.2$. Similarly, we show that if $C$ outputs $1$ with probability $0.99$, then $\bra{\rho}H_{out} \ket{\rho} \geq 0.7$. Thus, the answer to the quantum computation can be read-off from the energy measurement.

For this, suppose C outputs $0$ with probability $0.99$. Then, 
$$\bra{\rho} H_{out} \ket{\rho} = \sum_{S,S'\in [m], |S|, |S'|\leq \epsilon m} \alpha^*_{S'}\alpha_S \left[\otimes_{i\notin S}\bra{\psi}_{i, FK} \otimes_{i\in S}\bra{\phi_i} (H_{out}) \otimes_{i\notin S}\ket{\psi}_{i, FK} \otimes_{i\in S}\ket{\phi_i}\right].$$
If $S=S'$, the term in square brackets is $\leq (1-\epsilon)*0.01 + \epsilon$ (since only $\epsilon$ fraction of locations can be seen as error which can output $1$) and if $S\neq S'$, then the terms matter only when $S$ and $S'$ differ at exactly one index (since terms in $H_{out}$ only look at one index). In that case, suppose $S$ and $S'$ differ at index $j$. Then all terms of $H_{out}$ except $\frac{1}{m}\ketbra{1}_j$ evaluate to $0$. This non-zero term can be evaluated as $\frac{1}{m}\bra{\psi}\ket{1}\bra{1}\ket{\phi_j} \leq \frac{\sqrt{0.01}}{m} \leq \frac{0.1}{m}$. Thus, $\bra{\rho} H_{out} \ket{\rho}$ is at most the max eigenvalue of a matrix whose diagonal entries are at most $0.01+0.99\epsilon$ and off diagonal entries are at most $\frac{0.1}{m}$; and each row has at most $m-1$ off diagonal entries. By Greshgorin disc theorem, the largest eigenvalue is at most $0.1+0.99\epsilon \leq 0.2$ 

Similar argument can be made when $C$ outputs $1$ with probability $0.99$. This completes the proof.

\section{Locality of Clifford propagation terms in the rotated basis}\label{appendix:Clifford}

We show that the rotated propagation term $V^\dagger h_{i,j}^{(\ell)} V$ associated with a Clifford gate $U$ (acting on qubits $i,j$ in layer $\ell <D$) remains local. For simplicity, we first prove this for single-qubit gate. The generalization to the two-qubit case is straightforward. Below we refer to this term as $h_U$ for brevity. 

Recall $h_U=\Lambda^{\otimes 2}_{AB,CD}(\id - \ket{\Phi_U} \bra{\Phi_U}_{BC})\Lambda^{\otimes 2}_{AB,CD}$ acting on qubits $A,B,C,D$ as shown below.

\begin{center}
\begin{tikzpicture}[scale=0.5]
  \foreach \x in {0} {
    \foreach \y in {0} {
      \path [draw=blue,snake it, thick] (\x-0.5,\y+2.5) -- (\x+1.5,\y+2.5); 
      
      \draw[ultra thick] (\x+1,\y+2) rectangle (\x+2,\y+3) node[right] {$U$}; 
      
      \draw [fill=black] (\x-0.5,\y+2.5) circle [radius=0.1] node[below=0.3] {$B$}; 
      
      \draw [fill=black] (\x+1.5,\y+2.5) circle [radius=0.1] node[below=0.3] {$C$};

      \draw [fill=black] (\x-2.5,\y+2.5) circle [radius=0.1] node[below=0.3] {$A$};  
      
      \draw [fill=black] (\x+3.5,\y+2.5) circle [radius=0.1] node[below=0.3] {$D$}; 

    \draw[gray,rounded corners,fill=gray, opacity=0.1] (\x-3.25, \y+1.75) rectangle (\x,\y+3.25); 
    
      \draw[gray,rounded corners,fill=gray, opacity=0.1] (\x+0.75,\y+1.75) rectangle (\x+4,\y+3.25); 
    }
  }
\end{tikzpicture}
\end{center}

As in the main text, let $\ket{\Phi_{\Vec{p}}}_{AB,CD} \triangleq \ket{\Phi_{p_1}}_{AB}\ket{\Phi_{p_2}}_{CD}$ and we will omit the system labels when they are clear from the context.
The rotated term is of the form
\begin{equation*}
\begin{aligned}
    V^\dagger h_U V &= V^\dagger \Lambda^{\otimes 2}_{AB,CD} (\id - \ket{\Phi_U} \bra{\Phi_U}_{BC})  \Lambda^{\otimes 2}_{AB,CD} V \\
    & = V^\dagger \left( \sum_{\Vec{p},\Vec{q} \in \mathcal{P}^{\otimes 2}} \delta^{4-|(\Vec{p}, \Vec{q})|} \ket{\Phi_{\Vec{p}}} \bra{\Phi_{\Vec{q}}}_{AB,CD} \otimes  \bra{\Phi_{\Vec{p}}}(\id - \ket{\Phi_U} \bra{\Phi_U}_{BC}) \ket{\Phi_{\Vec{q}}} \right) V\\
    & = V^\dagger \left( \sum_{\Vec{p},\Vec{q} \in \mathcal{P}^{\otimes 2}} \delta^{4-|(\Vec{p}, \Vec{q})|} \ket{\Phi_{\Vec{p}}} \bra{\Phi_{\Vec{q}}}_{AB,CD} \otimes  \left( \mathds{1}_{\Vec{p},\Vec{q}} - \frac{1}{8} \operatorname{Tr}(p_1^* q_1^\top U^\dagger q_2^\top p_2^* U ) \right)  \right) V \\
    &= \sum_{\Vec{p}} \delta^{4-2|\Vec{p}|} \ket{\Phi_{\Vec{p}}} \bra{\Phi_{\Vec{p}}} \\
    &- \frac{1}{8} \sum_{\Vec{P}\in \mathcal{P}^{\otimes n(\ell-1)}}  \sum_{\Vec{p},\Vec{q}} \delta^{4-|(\Vec{p}, \Vec{q})|} \ket{\Phi_{\Vec{p}}} \bra{\Phi_{\Vec{q}}}  \operatorname{Tr}(p_1^* q_1^\top U^\dagger q_2^\top p_2^* U ) \otimes \ket{\Phi_{\Vec{P}}} \bra{\Phi_{\Vec{P}}} \otimes (\widetilde{W}_{\Vec{P}}^{< \ell })^\dagger p_1^\dagger U^\dagger  p_2^\dagger q_2 U q_1 (\widetilde{W}_{\Vec{P}}^{< \ell }),
\end{aligned}
\end{equation*}
where $\widetilde{W}_{\Vec{P}}^{<\ell}\triangleq W_{\ell-1} \Tilde{P}_{\ell-1} \hdots W_1 \Tilde{P}_1$. Above, $\mathds{1}_{\Vec{p},\Vec{q}}$ denotes the Kronecker delta symbol. The sum $\sum_{\Vec{P}\in \mathcal{P}^{\otimes n(\ell-1)}} $ is over the Pauli noise $\Vec{P}$ in layers preceding the gate $U$. We can also drop the complex conjugate ``$*$'' because $\mathcal{P}=\{I,X,XZ,Z\}$ are real matrices.

Since $U$ is a Clifford operator, the second term above is nonzero if and only if $U^\dagger q_2^\top p_2^* U  = \alpha p_1^* q_1^\top  $ for $\alpha \in \{\pm 1,\pm i\}$. We denote $\Vec{p} 
\overset{U}{\thicksim} \Vec{q}$ if this is the case, leaving the phase $\alpha$ implicit. This notation suggests that the phase $\alpha$ does not show up in $V^\dagger h_U V$. Indeed, it can be verified that
\begin{align*}
    p_1^\top U^\dagger  p_2^\top q_2 U q_1 \operatorname{Tr}(p_1 q_1^\top U^\dagger q_2^\top p_2 U )
    &=  p_1^\top (\alpha^* q_1 p_1^\top) q_1 \operatorname{Tr}(\alpha p_1 q_1^\top p_1 q_1^\top ) \\
    &= \alpha^* \alpha p_1^\top q_1 p_1^\top q_1 \operatorname{Tr}(p_1^\top q_1 p_1^\top q_1) \\
    &= 2 \id.
\end{align*}
Thus,
\begin{align*}
    V^\dagger h_U V &= \sum_{\Vec{p} \in \mathcal{P}^{\otimes 2}} \delta^{4-2|\Vec{p}|} \ket{\Phi_{\Vec{p}}} \bra{\Phi_{\Vec{p}}}  - \frac{1}{4} \sum_{\Vec{p}\overset{U}{\thicksim}\Vec{q}} \delta^{4-|(\Vec{p}, \Vec{q})|} \ket{\Phi_{\Vec{p}}} \bra{\Phi_{\Vec{q}}} \\
    & = \Lambda^{\otimes 2} \left( \sum_{\Vec{p} \in \mathcal{P}^{\otimes 2} } \ket{\Phi_{\Vec{p}}} \bra{\Phi_{\Vec{p}}}  - \frac{1}{4} \sum_{\Vec{p}\overset{U}{\thicksim}\Vec{q}} \ket{\Phi_{\Vec{p}}} \bra{\Phi_{\Vec{q}}} \right) \Lambda^{\otimes 2},
\end{align*}
which is a $4$-qubit operator.

A completely similar analysis for two-qubit gates gives
\begin{align*}
    V^\dagger h_U V & = \Lambda^{\otimes 4} \left( \sum_{\Vec{p} \in \mathcal{P}^{\otimes 4}} \ket{\Phi_{\Vec{p}}} \bra{\Phi_{\Vec{p}}}  - \frac{1}{16} \sum_{\Vec{p}\overset{U}{\thicksim}\Vec{q}} \ket{\Phi_{\Vec{p}}} \bra{\Phi_{\Vec{q}}} \right) \Lambda^{\otimes 4},
\end{align*}
where in this case $\Vec{p}\overset{U}{\thicksim}\Vec{q}$ means $ U^\dagger (q_2^\top p_2) \otimes (q_4^\top p_4) U \propto (q_1^\top p_1) \otimes (q_3^\top p_3)$.

\section{Log-depth $\QMA$ verification using SWAP tests}\label{app:logdepth}
We prove~\Cref{claim:rosgen} in this appendix.

Following the idea in Rosgen's construction~\cite{rosgen2007distinguishing} in proving the $\QIP$-completeness of distinguishing short-depth quantum circuits, we convert a circuit $W$ to a parallel version, denoted $W_1$. In the new parallelized circuit $W_1$, we expect the $\QMA$ prover to provide a witness of the form
\begin{align}
\ket{\xi} = \ket{\psi_0}\ket{\psi_1}^{\otimes 2}\hdots \ket{\psi_{T-1}}^{\otimes 2}\ket{\psi_T},
\label{eq:rosgen-witness}
\end{align}
where $\ket{\psi_t}= U_t\hdots U_1 \ket{\psi_0}$. The verifier can verify consistency between time steps via SWAP tests. It first performs SWAP tests between the two registers $\ket{\psi_t}\ket{\psi_t}$. Then it applies $U_{t+1}$ on a register $\ket{\psi_t}$ and performs SWAP tests between $U_{t+1}\ket{\psi_t}$ and $\ket{\psi_{t+1}}$. Since SWAP test can be done in log depth, $W_1$ is log depth. See \Cref{fig:rosgen}.

Note that a malicious prover could send an entangled state $\ket{\xi}$ across the registers. We will, without too much ambiguity, continue referring to the registers by the states they are expected to hold. The performance of the SWAP test on a general bipartite input state is given by the following fact.

\begin{fact}[Lemma 5.1 in~\cite{rosgen2007distinguishing}] Let $\rho_{AB}$ be a state on two registers $A,B$, then the SWAP test on these two registers accepts with probability at most $\frac{1}{2}(1+ F(\rho_A,\rho_B))$, where $F(\sigma_1,\sigma_2) = \|\sqrt{\sigma_1}\sqrt{\sigma_2}\|_1$ is the fidelity.
\label{fact:swaptest}
\end{fact}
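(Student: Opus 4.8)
Write $\rho_{AB}$ for the joint input and $\rho_A,\rho_B$ for its marginals; note $\mathcal{H}_A$ and $\mathcal{H}_B$ have equal dimension, since the $\mathrm{SWAP}$ operation is defined between them. The standard analysis of the SWAP test (ancilla prepared in $\ket{+}$, controlled-$\mathrm{SWAP}$, Hadamard, measure the ancilla) gives acceptance probability exactly $\tfrac12\bigl(1+\Tr(\mathrm{SWAP}\cdot\rho_{AB})\bigr)$; since $\mathrm{SWAP}$ and $\rho_{AB}$ are both Hermitian, $\Tr(\mathrm{SWAP}\cdot\rho_{AB})$ is real. Hence the claim is equivalent to the inequality $\Tr(\mathrm{SWAP}\cdot\rho_{AB})\le F(\rho_A,\rho_B)$, which is what I would prove.

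The plan is to reduce this inequality to Uhlmann's theorem via a purification. Let $\ket{\Psi}\in\mathcal{H}_A\otimes\mathcal{H}_B\otimes\mathcal{H}_R$ be any purification of $\rho_{AB}$ and set $\ket{\Psi'}=(\mathrm{SWAP}_{AB}\otimes\id_R)\ket{\Psi}$, so that $\Tr(\mathrm{SWAP}\cdot\rho_{AB})=\bra{\Psi}\mathrm{SWAP}_{AB}\ket{\Psi}=\braket{\Psi}{\Psi'}$. Two observations finish it: (i) the reduced state of $\ket{\Psi}$ on $\mathcal{H}_A$ is $\rho_A$, so $\ket{\Psi}$ is a purification of $\rho_A$ with reference system $\mathcal{H}_B\otimes\mathcal{H}_R$; (ii) conjugating by $\mathrm{SWAP}_{AB}$ exchanges the $A$ and $B$ marginals, so the reduced state of $\ket{\Psi'}$ on $\mathcal{H}_A$ is $\rho_B$, i.e.\ $\ket{\Psi'}$ is a purification of $\rho_B$ with the \emph{same} reference system $\mathcal{H}_B\otimes\mathcal{H}_R$ (which clearly has dimension at least $\max\{\mathrm{rank}(\rho_A),\mathrm{rank}(\rho_B)\}$, so Uhlmann's theorem applies). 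Uhlmann then gives $\lvert\braket{\Psi}{\Psi'}\rvert\le F(\rho_A,\rho_B)$, and combining with the reality of $\braket{\Psi}{\Psi'}$ yields $\Tr(\mathrm{SWAP}\cdot\rho_{AB})\le\lvert\braket{\Psi}{\Psi'}\rvert\le F(\rho_A,\rho_B)$, as desired.

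I expect the only delicate points — not really obstacles — to be the bookkeeping of which state $\ket{\Psi}$ and $\ket{\Psi'}$ purify and on which subsystem, and the verification that the chosen reference system is large enough to invoke Uhlmann's theorem in the form "$F(\rho,\sigma)=\max_{\ket{\phi_\sigma}}\lvert\braket{\phi_\rho}{\phi_\sigma}\rvert$ for a fixed purification $\ket{\phi_\rho}$". As a self-contained alternative that avoids quoting Uhlmann, one can first treat pure $\rho_{AB}=\ket{\psi}\bra{\psi}$ directly: writing $\ket{\psi}=\sum_{ij}C_{ij}\ket{i}_A\ket{j}_B$ one checks $\Tr(\mathrm{SWAP}\cdot\ket{\psi}\bra{\psi})=\Tr(C\overline{C})$, and a singular value decomposition $C=U\Sigma V^\dagger$ identifies $\sqrt{\rho_A}=U\Sigma U^\dagger$, $\sqrt{\rho_B}=\overline{V}\Sigma V^{T}$, $F(\rho_A,\rho_B)=\lVert\Sigma Y\Sigma\rVert_1$ with $Y=U^\dagger\overline{V}$ unitary, and $\Tr(C\overline{C})=\Tr(\Sigma Y\Sigma\overline{Y})$; then matrix H\"older ($\lvert\Tr(MW)\rvert\le\lVert M\rVert_1\lVert W\rVert_\infty$ with $M=\Sigma Y\Sigma$, $W=\overline{Y}$) gives $\Tr(C\overline{C})\le\lVert\Sigma Y\Sigma\rVert_1=F(\rho_A,\rho_B)$, and the general mixed case follows from linearity of the acceptance probability in $\rho_{AB}$ and joint concavity of the fidelity. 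I would present the purification/Uhlmann argument as the main proof and mention the SVD computation only as a remark.
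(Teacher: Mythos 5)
The paper does not prove this statement at all: it is stated as a Fact and attributed to Lemma~5.1 of Rosgen's paper \cite{rosgen2007distinguishing}, with no argument given. So there is no in-paper proof to compare against; the relevant question is only whether your proof is correct on its own terms.

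Your purification/Uhlmann argument is correct. The key steps all check out: the acceptance probability of the SWAP test is $\tfrac12\bigl(1+\Tr(\mathrm{SWAP}\cdot\rho_{AB})\bigr)$ with $\Tr(\mathrm{SWAP}\cdot\rho_{AB})$ real; $\ket{\Psi}$ and $\ket{\Psi'}=(\mathrm{SWAP}_{AB}\otimes\id_R)\ket{\Psi}$ are purifications of $\rho_A$ and $\rho_B$ respectively over the common reference system $\mathcal{H}_B\otimes\mathcal{H}_R$; and the ``easy'' direction of Uhlmann's theorem (that for any two purifications over a common reference system, $\lvert\braket{\Psi}{\Psi'}\rvert\le F$) then yields the claim. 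Note you only need this easy inequality, which holds for any reference system; the dimension-counting caveat is only needed for the \emph{equality} $F=\max_{\ket{\phi_\sigma}}\lvert\braket{\phi_\rho}{\phi_\sigma}\rvert$, which you do not use, so your worry about the reference system being large enough is not actually a constraint here. The alternative SVD computation for the pure case is also correct (the identity $\Tr(C\overline{C})=\Tr(\Sigma Y\Sigma\overline{Y})$ with $Y=U^\dagger\overline{V}$ follows after a transpose-and-cyclic rearrangement), as is the extension to the mixed case via linearity of $\Tr(\mathrm{SWAP}\cdot\rho_{AB})$ and joint concavity of fidelity. Since the paper omits the proof, either of your two routes would serve as a valid self-contained justification; the Uhlmann route is shorter and more conceptual, while the SVD route has the minor advantage of being elementary and not invoking Uhlmann as a black box.
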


\begin{figure}
    \centering
    \includegraphics[width=0.99\textwidth]{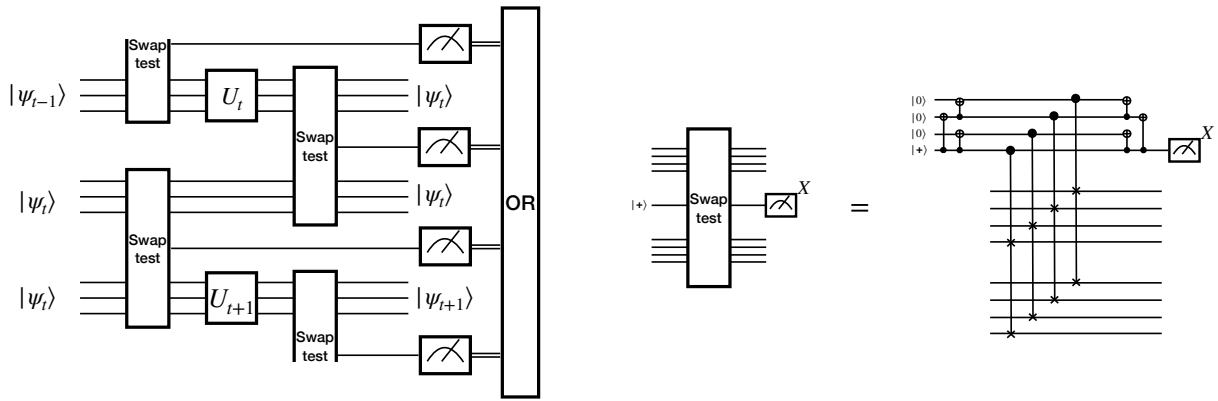}
    \caption{Parallelizing $\QMA$ protocol with SWAP tests. Figure repeated from the main text.}
    \label{fig:rosgen}
\end{figure}

The verifier accepts if and only if the SWAP tests and the decision measurement on the register $\ket{\psi_T}$ all return $1$. We assume the original verifier circuit $W$ has been amplified to completeness $c=1-2^{-\poly(T)}$ and soundness $s = 2^{-\poly(T)}$. 
We have the following claim.

\begin{claim} If the original circuit $W$ accepts with probability at least $c=1-2^{-\poly(T)}$ on some input state, then $W_1$ also accepts with probability at least $c_1=c$ on some input state. On the other hand, if $W$ accepts with probability at most $s$ on any input state, then $W_1$ accepts with probability at most $s_1=1-1/T^{3}$.
\end{claim}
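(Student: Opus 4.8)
The plan is to prove completeness directly and soundness by contraposition, using \Cref{fact:swaptest} to turn SWAP-test acceptance into fidelity bounds on reduced states and then chaining those bounds along the time-ordered registers.

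For \textbf{completeness}, I would take the honest witness $\ket\xi=\ket{\psi_0}\ket{\psi_1}^{\otimes2}\cdots\ket{\psi_{T-1}}^{\otimes2}\ket{\psi_T}$, where $\ket{\psi_0}=\ket{0^m}\ket{\xi_0}$ with $\ket{\xi_0}$ a witness accepted by $W$ with probability $\ge c$ and $\ket{\psi_t}=U_t\cdots U_1\ket{\psi_0}$, and observe that every SWAP test compares two copies of an identical pure state --- either the two registers holding $\ket{\psi_t}$, or $U_{t+1}\ket{\psi_t}$ against $\ket{\psi_{t+1}}=U_{t+1}\ket{\psi_t}$ --- and hence accepts with probability $1$, while the decision measurement on the register $\ket{\psi_T}$ accepts with probability equal to $W$'s acceptance probability on $\ket{\xi_0}$. (The ancilla block of the $\psi_0$ register is pinned to $\ket{0^m}$ either by having the verifier prepare it or by an added constant-depth computational-basis check; this is harmless in both directions.) This gives $c_1=c$.

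For \textbf{soundness} I would argue the contrapositive: assume $W_1$ accepts some, possibly globally entangled, witness $\rho$ with probability $\ge 1-\epsilon$. Since acceptance is the conjunction of the $K=2T-1=O(T)$ SWAP-test outcomes together with the decision outcome, each of these individual measurements returns $1$ with probability $\ge 1-\epsilon$. For a SWAP test on registers $A,B$, \Cref{fact:swaptest} gives $\tfrac12(1+F(\rho_A,\rho_B))\ge 1-\epsilon$, hence $F(\rho_A,\rho_B)\ge 1-2\epsilon$ for the reduced states of $\rho$, which by Fuchs--van de Graaff yields $\tfrac12\|\rho_A-\rho_B\|_1\le\sqrt{1-F^2}\le 2\sqrt\epsilon$. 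These pairwise bounds form a chain of registers $\psi_0\to\psi_1\to\cdots\to\psi_T$ interleaved with $U_1,\dots,U_T$ (the two copies of each intermediate $\psi_t$ linked by a consistency SWAP test, successive layers by a propagation SWAP test after applying $U_{t+1}$). Using the triangle inequality and unitary invariance of the trace norm to push each $U_t$ through, I would obtain $\tfrac12\|\mu_T-\rho_T\|_1\le K\cdot 2\sqrt\epsilon=O(T\sqrt\epsilon)$, where $\rho_T$ is the reduced state of $\rho$ on the output register and $\mu_T=U_T\cdots U_1\,\rho_0\,U_1^\dagger\cdots U_T^\dagger$ is the state $W$ produces from the reduced state $\rho_0$ of the $\psi_0$ register. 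Since the decision accepts $\rho_T$ with probability $\ge 1-\epsilon$ and $|\Tr(\Pi_\mathrm{acc}(\rho_T-\mu_T))|\le\tfrac12\|\rho_T-\mu_T\|_1$ for the accepting projector $\Pi_\mathrm{acc}$, the original verifier $W$ would accept the legal input $\rho_0$ (ancilla block $\ket{0^m}$) with probability $\Tr(\Pi_\mathrm{acc}\mu_T)\ge 1-\epsilon-O(T\sqrt\epsilon)=1-O(T\sqrt\epsilon)$; by soundness of $W$ this is $\le s$, so picking $\epsilon=c_0/T^2$ for a small enough absolute constant $c_0$ forces $1-O(T\sqrt\epsilon)>s$ (recall $s\le 2^{-\poly(T)}$), a contradiction. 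Hence $W_1$ accepts with probability at most $1-c_0/T^2\le 1-1/T^3$ for all large enough $T$ (small $T$ being trivial), i.e.\ $s_1=1-1/T^3$.

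The step I expect to be the main obstacle is making the soundness chain fully rigorous: controlling the accumulation of the $\Theta(T)$ square-root errors, and in particular the fact that in the natural layout (following \cite{rosgen2007distinguishing}) a copy of $\psi_t$ is touched by two SWAP tests, so the tests are not on disjoint registers and the reduced-state argument must account for prior disturbance. I would handle this either by adding a constant number of extra copies of each $\ket{\psi_t}$ to restore disjointness (a polynomial blowup of the witness), or by a gentle-measurement/hybrid argument --- conditioned on the first test passing with probability $\ge 1-\epsilon$, the affected register is perturbed by only $O(\sqrt\epsilon)$ in trace distance --- either of which only affects the constants absorbed into the $O(T\sqrt\epsilon)$ bound. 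Everything else is routine once \Cref{fact:swaptest} and the metric/unitary-invariance properties of the trace distance are in hand.
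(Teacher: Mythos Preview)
Your proposal is correct and follows essentially the same approach as the paper's proof: convert SWAP-test acceptance into fidelity/trace-distance bounds via \Cref{fact:swaptest}, chain them along the $T$ time steps using the triangle inequality, and compare to the soundness of $W$. The paper handles the disturbance issue you flag (a copy of $\psi_t$ being touched by two successive SWAP tests) exactly via the gentle-measurement route you suggest, and its quantitative bookkeeping with $\beta=3$ matches your $\epsilon=\Theta(1/T^2)$ choice up to slack.
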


\begin{proof}
    In the completeness case, it is clear that $W_1$ accepts with probability $c_1=c$ upon receiving the desired state from the honest prover. 
    
    In the soundness case, suppose there exists a witness $\ket{\xi}$ that causes all SWAP tests in $W_1$ to accept with probability at least $1-1/T^{\beta}$, where $\beta =3$ (this choice is somewhat arbitrary, any $\beta > 2$ will work), we will show that the final decision measurement only accepts $\ket{\xi}$ with very small probability. Thus, $W_1$ accepts any state with probability at most $s_1 = 1-\frac{1}{T^\beta}$.
    
    We first show that such a witness $\ket{\xi}$ must be close to the expected form in \Cref{eq:rosgen-witness}, and this will imply the decision measurement rejects with probability polynomially close to 1. Indeed, \Cref{fact:swaptest} implies that, for each $t \in [T-1]$, the reduced states of $\ket{\xi}$ on the two `$\ket{\psi_t}$ registers' are $O(\frac{1}{T^{\beta/2}})$-close in the trace distance. Next, let $\xi'$ be resulting state conditioned on all SWAP tests in the first layer (\Cref{fig:rosgen}) succeeding. Performing the same argument on the layer of gates $U_t$ and second layer of SWAP tests, we obtain that, for each $t \in [T-1]$, the consecutive reduced states $\xi'_{t}$ and $U_{t}\xi'_{t-1}$ are $O(\frac{1}{T^{\beta/2}})$-close in the trace distance. On the other hand, we have that $\frac{1}{2}\|\xi' -\xi\|_1 \leq O(\frac{1}{T^{\beta/2}})$ due to gentle measurement lemma. Thus, the same statement holds for the consecutive reduced states of $\xi$. By triangle inequality we thus find
   \begin{align*}
       \frac{1}{2}\|\xi_T - W \xi_0 W^\dagger \| &\leq T\cdot O(\frac{1}{T^{\beta/2}}) + \sum_{t=1}^T \|\xi_t - U_t\xi_{t-1} U_t^\dagger \| \\
       &\leq 2T \cdot O(\frac{1}{T^{\beta/2}}) = O(\frac{1}{T^{\beta/2 - 1}}).
   \end{align*}
    Hence, the decision measurement on register $\xi_t$ accepts with probability at most $s + O(\frac{1}{T^{\beta/2 - 1}})= O(\frac{1}{T^{\beta/2 - 1}})$.
    Thus, $W_1$ accepts with probability at most $s_1= \max\{ 1-\frac{1}{T^\beta},  O(\frac{1}{T^{\beta/2 - 1}})\} = 1-\frac{1}{T^\beta}$ in the soundness case.
\end{proof}
We can also run the standard weak amplification procedure (\Cref{lem:amplify}) to boost the the completeness and soundness of $W_1$ to $1-2^{-\poly(n)}$ and $2^{-\poly(n)}$. The amplified circuit remains log-depth.

\end{document}